\title{The Point Processes of the GRW Theory\\ of Wave Function Collapse\footnote{A version of this work has been submitted as a Habilitation thesis to the Mathematics Institute of Eberhard-Karls-Universit\"at T\"ubingen, Germany. The main difference between the thesis and the present version is that the proof of Theorem \ref{thm:Kol} (a Kolmogorov extension theorem for POVMs) was included in the thesis but not here, as it has been submitted for publication separately \cite{Tum07c}.}}
\author{
Roderich Tumulka\footnote{Department of Mathematics,
	Rutgers University, 110 Frelinghuysen Road, Piscataway, 
	NJ 08854-8019, USA. E-mail:
     tumulka@math.rutgers.edu}
}
\date{October 31, 2007}
\newcommand{\Hilbert}{\mathscr{H}}
\newcommand{\conf}{\mathcal{Q}}
\newcommand{\Q}{\conf}
\renewcommand{\Re}{\mathrm{Re}}
\newcommand{\D}{\mathrm{d}}
\newcommand{\E}{\mathrm{e}}
\newcommand{\I}{\mathrm{i}}
\newcommand{\be}{\begin{equation}}
\newcommand{\ee}{\end{equation}}
\newcommand{\PPP}{\mathbb{P}}
\newcommand{\RRR}{\mathbb{R}}
\newcommand{\CCC}{\mathbb{C}}
\newcommand{\QQQ}{\mathbb{Q}}
\newcommand{ \ZZZ}{\mathbb{Z}}
\newcommand{\NNN}{\mathbb{N}}
\newcommand{\scp}[2]{\langle #1|#2 \rangle}
\newcommand{\Bscp}[2]{\Bigl\langle #1 \Big| #2 \Bigr\rangle}
\newcommand{\Cauchy}{\mathcal{C}}
\newcommand{\Hyper}{\mathcal{H}}
\newcommand{\hyper}{\mathbb{H}}
\newcommand{\Dirac}{\mathscr{D}}
\newcommand{\ini}{\mathcal{I}}
\newcommand{\future}{J^+}
\newcommand{\past}{J^-}
\newcommand{\Bdd}{\mathscr{B}}
\newcommand{\salg}{\mathcal{A}}
\newcommand{\Lab}{\mathscr{L}}
\newcommand{\law}{\mathscr{L}}
\newcommand{\povm}{G}
\newcommand{\Borel}{\mathcal{B}}
\newcommand{\Power}{\mathcal{P}}
\newcommand{\ceme}{\diamond}
\newcommand{\tdist}{\tau} 
\newcommand{\sdist}{\mathrm{dist}} 
\newcommand{\profile}{\ell}
\newtheorem{thm}{Theorem}
\newtheorem{cor}{Corollary}
\newtheorem{defn}{Definition}
\newtheorem{lemma}{Lemma}
\newtheorem{ass}{Assumption}
\newenvironment{proof}{\noindent\textit{Proof. }}{\hfill$\square$ \medskip }
\newenvironment{proofthm}[1]{\noindent\textit{Proof of Theorem \ref{#1}. }}{\hfill$\square$ \medskip }
\newcounter{ex}\setcounter{ex}{1}
\newenvironment{ex}{\bigskip\noindent\textbf{Example \theex\ }\addtocounter{ex}{1}}{\hfill$\square$ \bigskip}
\begin{document}
\maketitle
\begin{abstract}
The Ghirardi--Rimini--Weber (GRW) theory is a physical theory that, when combined with a suitable ontology, provides an explanation of quantum mechanics. The so-called collapse of the wave function is problematic in conventional quantum theory but not in the GRW theory, in which it is governed by a stochastic law. A possible ontology is the flash ontology, according to which matter consists of random points in space-time, called flashes. The joint distribution of these points, a point process in space-time, is the topic of this work. The mathematical results concern mainly the existence and uniqueness of this distribution for several variants of the theory. Particular attention is paid to the relativistic version of the GRW theory that I developed in 2004.
\medskip

\noindent 
 MSC: 81P05; 
 46N50; 
 83A05; 
 81Q99. 
 Key words: 
 quantum theory without observers;
 Ghirardi-Rimini-Weber (GRW) theory of spontaneous wave function collapse;
 relativistic Lorentz covariance;
 flash ontology; 
 Dirac equation;
 Dirac evolution between Cauchy surfaces and hyperboloids.
\end{abstract}

\tableofcontents

\section{Introduction}

This work concerns the foundations of quantum mechanics. 
The Ghirardi--Rimini--Weber (GRW) theory is a proposal for a precise definition of quantum mechanics, intended to replace the conventional rules of quantum mechanics (as formulated by, e.g., Dirac and von Neumann) and to overcome the certain vagueness and imprecision inherent in these rules. This vagueness and imprecision arise from the situation that these rules specify what a macroscopic observer will see when measuring a certain observable, but leave unspecified exactly which systems should be counted as macroscopic, or as observers, and exactly which physical processes should be counted as measurements. The GRW theory, as proposed in 1986 by Ghirardi, Rimini, and Weber \cite{GRW86} and Bell \cite{Bell87}, solves this problem for the entire realm of non-relativistic quantum mechanics, and a key role in this theory is played by a stochastic law according to which wave functions collapse at random times, rather than at the intervention of an observer. It is a ``quantum theory without observers'' \cite{Gol98}.

After the success of this approach with non-relativistic quantum mechanics, the question arises whether and how the GRW theory can be extended to quantum field theory, to relativistic quantum mechanics, and to relativistic quantum field theory. This question has been worked on intensely over the past 20 years, but not completely and finally answered. The first seriously relativistic theories of the GRW type, and in fact the first seriously relativistic quantum theories without observers, were developed in 2002 by Dowker and Henson \cite{Fay02} (on a discrete space-time) and in 2004 by myself \cite{Tum04} (on a flat or curved Lorentzian manifold). A major part of this work (Section~\ref{sec:defrGRWf}) consists of a study of the model I have proposed. This model, which I will abbreviate rGRWf for ``relativistic GRW theory with flash ontology,'' uses some elements that were suggested for this purpose already in 1987 by Bell \cite{Bell87}, in particular the ``flash ontology,'' which corresponds to a point process in space-time. Since the flash ontology is incompatible with the standard way of extending the GRW theory to quantum field theory---the CSL (continuous spontaneous localization) approach pioneered particularly by Pearle \cite{Pe89} and employing diffusion processes in Hilbert space---, I developed in \cite{Tum05} a different way of extending the GRW theory to quantum field theories, suitable for flashes. A key element of this extension is an abstract scheme generalizing the original GRW theory (which applies to non-relativistic quantum mechanics), in which the theory is defined by specifying the \emph{Hamiltonian operator} (as in conventional quantum theory) and the \emph{flash rate operators}. This scheme is directly applicable to quantum field theories. A major part of this work (Sections~\ref{sec:scheme} and \ref{sec:schemer}) consists of a description, further generalization and mathematical analysis of this scheme, including existence theorems providing exact conditions for the existence of the relevant point processes. The further generalization is necessary to include the process of the rGRWf theory. The goal of this work is to provide a firm mathematical basis for the GRW theories with flash ontology.

It lies in the nature of the topic that this work must be a mixture of mathematics, physics, and philosophy. The theorems and proofs I present appear here for the first time, while the physical (and philosophical) considerations I report about have been published before \cite{Fay02,Tum04, Tum05, Tum06,Tum07,AGTZ06}. 
The relevant mathematical considerations involve concepts and results from several fields, including stochastic processes; operators in Hilbert space; and differential geometry of Lorentzian manifolds. The main results of this work are existence proofs for the relevant point processes. An existence question arises in many physical theories and is often remarkably difficult. For example, the existence of Newtonian trajectories with Coulomb interaction (for almost all initial conditions) is still an open problem for more than 4 particles. For existence results about other quantum theories without observers, see \cite{bmex,GT05a,TT05}. A simple introduction to rGRWf I have given in \cite{Tum06}; discussions of rGRWf can also be found in \cite{ADLZ05,AGTZ06,Jad06,Mau05, Mau07,Ghi07}.

\subsection{Physical Motivation}

When the standard quantum formalism utilizes the concept of collapse of the wave function, it does so in a rather ill-defined way, introducing a collapse whenever ``an observer'' intervenes. This is replaced by a concept of objective collapse, or spontaneous collapse, in GRW-type theories. These theories replace the unitary Schr\"odinger evolution of the wave function by a nonlinear, stochastic evolution, so that the Schr\"odinger evolution remains a good approximation for microscopic systems while superpositions of macroscopically different states (such as Schr\"odinger's cat) quickly collapse into one of the contributions. The GRW theory \cite{GRW86,Bell87,BG03} is the simplest and best-known theory of this kind, another one the Continuous Spontaneous Localization (CSL) approach \cite{Pe89,BG03}. These theories, when combined with a suitable ontology, provide  paradox-free versions of quantum mechanics and possible explanations of the quantum formalism in terms of objective events, and thus ``quantum theories without observers.''

\label{sec:motivation}

Quantum theory is conventionally formulated as a \emph{positivistic theory}, i.e., as a set of rules predicting what an observer will see when performing an experiment (more specifically, predicting which are the possible outcomes of the experiment, and which are their probabilities), also called the \emph{quantum formalism}. Many physicists have felt it desirable to formulate quantum theory instead as a \emph{realistic theory}, i.e., one describing (a model of) reality, independently of the presence of observers; in other words, describing all events that actually happen. This idea was most prominently advocated by Einstein \cite{Ein49}, Bell \cite{Bell90}, Schr\"odinger \cite{Schr35}, de Broglie \cite{deB}, Bohm \cite{Bohm52}, and Popper \cite{Pop}. Realistic theories have come to be known as \emph{quantum theories without observers} (QTWO) \cite{Gol98}. Since in a QTWO also the observer and her experiments are contained as special cases of matter and events, the quantum formalism remains valid but is \emph{a theorem and not an axiom}, that is, a consequence of the QTWO and not its basic postulate. Conversely, a QTWO provides an \emph{explanation} of the quantum formalism, describing how and why the outcomes specified by the formalism come about with their respective probabilities.

There are two examples of QTWO that work in a satisfactory way (as pointed out by, e.g., Bell \cite{Bell86b}, Goldstein \cite{Gol98}, and Putnam \cite{putnam}): Bohmian mechanics \cite{Bohm52,Bell66,survey} and GRW theory \cite{GRW86,Bell87,BG03}, as well as variants of these two theories. (It may or may not be possible that also other approaches, such as the ``many worlds'' view or the ``decoherent histories'' program, can be developed into satisfactory QTWOs \cite{Gol98, AGTZ06}.) 

Among the variants of GRW theory (i.e., among the mathematical theories of spontaneous wave function collapse besides the original GRW model), the most notable is the continuous spontaneous localization (CSL) theory of Pearle \cite{Pe89}; similar models were considered by Di\'osi \cite{Dio88}, Belavkin \cite{Be89}, Gisin \cite{Gi89}, and Ghirardi, Pearle, and Rimini \cite{GPR90}. Aside from explicit mathematical models, the idea that the Schr\"odinger equation might have to be replaced by a nonlinear and stochastic evolution has also been advocated by such distinguished theoretical physicists as Penrose \cite{Pen04} and Leggett \cite{leggett}.

\subsection{A Philosophical Aspect}\label{sec:phil}

A crucial part of QTWOs is the so-called \emph{primitive ontology} \cite{AGTZ06}. This means variables describing the distribution of matter in space and time. Here are four examples of primitive ontology:
\begin{itemize}
\item \textit{Particle ontology.} Matter consists of point particles, mathematically represented by a location $Q_t$ in physical 3-space for every time $t$, or, equivalently, by a curve in space-time called the particle's world line. This is the primitive ontology both of Bohmian and classical mechanics. One should imagine that each electron or quark is one point, so that a macroscopic object consists of more than $10^{23}$ particles.
\item \textit{String ontology.} Matter consists of strings, mathematically described by a curve in physical 3-space (or possibly another dimension of physical space), or, equivalently, by a 2-surface in space-time called the world sheet. One should imagine that each electron consists of one or more strings.
\item \textit{Flash ontology.} Matter consists of discrete points in space-time, called world points or flashes. One should imagine that a solid object consists of more than $10^6$ flashes per cubic centimeter per second. More flashes means more matter.
\item \textit{Matter density ontology.} Matter is continuously distributed in space, mathematically described by a density function $m(q,t)$, where $q$ is the location in physical 3-space and $t$ the time.
\end{itemize}
A QTWO needs a primitive ontology to give physical meaning to the mathematical objects considered by the theory \cite{AGTZ06,Mau07}. The role of the wave function then is ``to tell the matter how to move'' \cite{AGTZ06}, that is, to govern the primitive ontology (in a stochastic way). The theory we are mainly considering here, rGRWf, uses the flash ontology, which was first proposed for the original (non-relativistic) GRW model by Bell \cite{Bell87} and adopted in \cite{kent, Gol98, Tum05}. 

\begin{figure}[ht]
\begin{center}
\includegraphics[width=.4 \textwidth]{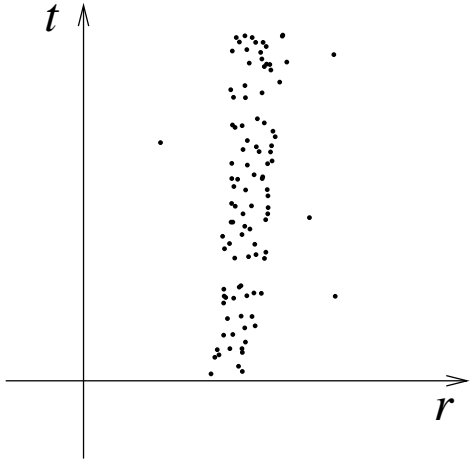}
\caption{A typical pattern of flashes in space-time ($r$ = space, $t$ = time), and thus a possible world according to the GRW theory with the flash ontology.}
\end{center}
\label{flashes}
\end{figure}

Interestingly, the (non-relativistic) GRW evolution of the wave function can reasonably be combined with the matter density ontology as well \cite{BGG95, Gol98, AGTZ06}; thus, there are two different GRW theories, called GRWm and GRWf, with the same wave function but different ontologies \cite{AGTZ06}. However, it is not known how GRWm could be made relativistic.

Likewise, it is not known how Bohmian mechanics could be made relativistic. More precisely, there does exist a natural and convincing way of defining Bohmian world lines on a relativistic space-time \cite{HBD,Tum06b}, but it presupposes the existence of a preferred slicing of space-time into spacelike 3-surfaces, called the \emph{time foliation}. The time foliation may itself be given by a Lorentz-invariant law, but still it seems against the spirit of relativity because it defines a notion of absolute simultaneity. This does not mean that this theory is wrong; it means that if it is right then we will have to adopt a different understanding of relativity. I have given an overview of recent research about Bohmian mechanics and relativity in \cite[Section~3.3]{Tum06b}.

We introduce some notation. Throughout this work, $\Hilbert$ will always be a separable complex Hilbert space. The adjoint of an operator $T$ on $\Hilbert$ is denoted $T^*$. The Borel $\sigma$-algebra of a topological space $X$ will be denoted $\Borel(X)$.

\section{Scheme of GRW Theories with Flash Ontology}
\label{sec:scheme}

This chapter is of a physical character. It provides an overview of GRW theories with flash ontology (hereafter, GRWf theories). The mathematical considerations in this chapter are not intended to be rigorous (unless when stated otherwise). For example, we will pretend that functions are differentiable or operators invertible whenever that is useful.

I describe a general scheme of GRWf theories (including, but more general than, the scheme described in \cite{Tum05}). We begin with a simple special case and increase generality step by step, finally arriving at the general version that contains also rGRWf. Given the scheme, a particular GRWf theory can be defined by specifying certain operators. This situation is roughly analogous to the general Schr\"odinger equation
\begin{equation}\label{Schr}
  i\hbar\frac{\D\psi_t}{\D t} = H \psi_t\,,
\end{equation}
which becomes a concrete evolution equation only after specifying the self-adjoint operator $H$, called the \emph{Hamiltonian}.

\subsection{The Simplest Case of GRWf}
\label{sec:simpleGRWf}

We take physical space to be $\RRR^3$ and the time axis to be $\RRR$. To specify the probabilistic law for the flash process, we specify the \emph{rate density} $r(q,t)$ at time $t \in\RRR$ for a flash to occur at $q\in\RRR^3$, which means, roughly speaking, that the probability of a flash in an infinitesimal volume $\D q$ around $q$ between $t$ and $t+\D t$, conditional on the flashes in the past of $t$, equals $r(q,t) \, \D q \, \D t$. The first basic equation of GRWf says that the flash rate density is given by
\begin{equation}\label{simpleflashrate}
  r(q,t) = \scp{\psi_t}{\Lambda(q)\, \psi_t}\,.
\end{equation}
Here $\Lambda(q)$ is a positive operator, called the \emph{flash rate operator}, which must be specified to define the theory, and $\psi_t \in \Hilbert$ is called the \emph{wave function} or \emph{state vector} at time $t$, which fulfills $\|\psi_t\|=1$ and evolves according to the following two evolution laws. When a flash occurs at time $T$ and location $Q$, the wave function changes discontinuously according to the second basic equation,
\begin{equation}\label{simplecollapse}
  \psi_{T+} = \frac{\Lambda(Q)^{1/2} \psi_{T-}}
  {\|\Lambda(Q)^{1/2} \psi_{T-}\|}\,.
\end{equation}
Here, $\psi_{T+} = \lim_{t\searrow T} \psi_t$ and $\psi_{T-} = \lim_{t\nearrow T} \psi_t$. 
This is called the \emph{collapse of the state vector at time $T$ and location $Q$}.
Between the flashes, the wave function evolves according to the Schr\"odinger equation \eqref{Schr}.

Once the operators $H$ and $\Lambda(q)$ are specified, the equations are intended to define the flash process
\begin{equation}\label{Fdef}
  F= \Bigl( (T_1,Q_1), (T_2,Q_2), \ldots\Bigr)\,,
\end{equation}
as follows: Choose, at an ``initial time'' $t_0$ the initial state vector $\psi_{t_0} \in\Hilbert$ with $\|\psi_{t_0}\|=1$, and evolve it using the Schr\"odinger equation \eqref{Schr} up to the time $T_1>t_0$ at which the first flash occurs; let $Q_1$ be the location of the first flash; collapse the state vector at time $T_1$ and location $Q_1$; continue with the collapsed state vector. (In the more general variants of the GRWf scheme, it can happen that the sequence $F$ ends after finitely many flashes if the rate is very low; in the simple variant we are presently considering, this does not happen, as we will see.)

\begin{ex}
The original 1986 GRW model \cite{GRW86,Bell87} is designed for non-relativistic quantum mechanics of $N$ particles; for $N=1$ it fits the above scheme as follows: $\Hilbert = L^2(\RRR^3)$; $H$ is the usual Hamiltonian of non-relativistic quantum mechanics, a self-adjoint extension of
\begin{equation}\label{HSchr1}
  H\psi = -\frac{\hbar^2}{2m} \nabla^2 \psi + V \psi
\end{equation}
for $\psi\in C_0^\infty(\RRR^3)$, where $m$ is the particle's mass and $V$ the potential; finally, the flash rate operators are multiplication operators by a Gaussian,
\begin{equation}\label{simpleGRWLambda}
  \Lambda(q) \, \psi(r) = \frac{\lambda}{(2\pi\sigma^2)^{3/2}}
  \E^{-(r-q)^2/2\sigma^2} \psi(r)\,,
\end{equation}
where $\lambda$ and $\sigma$ are new constants of nature with suggested values $\lambda \approx 10^{-15} \, \text{s}^{-1}$ and $\sigma \approx 10^{-7}\, \text{m}$. Since
\begin{equation}\label{LambdaR3I}
  \int_{\RRR^3} \Lambda(q) \, \D q = \lambda \, I\,,
\end{equation}
where $I$ is the identity operator on $\Hilbert$, the total flash rate
\begin{equation}
  r(\RRR^3,t) = \int_{\RRR^3} r(q,t) \, \D q = \lambda
\end{equation}
is independent of the state vector and constant in time. Thus, the flash times $T_1,T_2,\ldots$ form a Poisson process with intensity $\lambda$ (while the locations $Q_1,Q_2,\ldots$ do depend on $\psi$).
\end{ex}

\begin{ex}
A version of the GRW model advocated by Dove and Squires \cite{DS95} and myself \cite{Tum05} corresponding to non-relativistic quantum mechanics of $N$ \emph{identical} particles fits into the scheme as follows: $\Hilbert=S_\pm L^2(\RRR^{3})^{\otimes N}$ with $S_+$ the symmetrizer and $S_-$ the anti-symmetrizer, i.e., $\Hilbert$ is the space of symmetric (for bosons) respectively anti-symmetric (for fermions) $L^2$ functions on $\RRR^{3N}$; $H$ is the usual Hamiltonian, a self-adjoint extension of
\begin{equation}\label{Hidentical}
  H\psi = -\sum_{i=1}^N \frac{\hbar^2}{2m} \nabla_i^2\psi + V\psi\,,
\end{equation}
for $\psi\in C_0^\infty(\RRR^{3N}) \cap \Hilbert$; finally, the flash rate operators are
\begin{equation}\label{identicalGRWLambda}
  \Lambda(q) \, \psi(r_1,\ldots, r_N) = \frac{\lambda}{(2\pi\sigma^2)^{3/2}}
  \sum_{i=1}^N \E^{-(r_i-q)^2/2\sigma^2} \psi(r_1,\ldots, r_N)
\end{equation}
with the same constants as before. Then \eqref{LambdaR3I} holds with $N\lambda$ instead of $\lambda$, and hence the total flash rate is larger by a factor $N$,
\be\label{Nlambda}
r(\RRR^3,t) = N\lambda\,.
\ee
\end{ex}

The condition \eqref{LambdaR3I} plays a role to ensure the important property that \textit{the distribution of $F$ is given by a POVM}, i.e., there is a POVM (positive-operator-valued measure, see Section~\ref{sec:POVM}) $\povm(\cdot)$ on the history space $\Omega = (\RRR^4)^\NNN$, called the \emph{history POVM}, 
such that for $A\subseteq \Omega$
\begin{equation}\label{FPOVM}
  \PPP(F\in A) = \scp{\psi}{\povm(A) \, \psi}
\end{equation}
with $\psi = \psi_{t_0}$ the initial state vector.
(A physical consequence of this property is the impossibility of superluminal communication by means of entanglement.) We can come close to an explicit expression for the history POVM $\povm(\cdot)$ by providing an explicit expression for its marginal $\povm_n(\cdot)$ for the first $n$ flashes, which we obtain by a formal calculation \cite{Tum05} from \eqref{Schr}, \eqref{simpleflashrate}, \eqref{simplecollapse} and \eqref{LambdaR3I}, writing $X$ for the space-time point $(Q,T)$ (and $x=(q,t)$, $\D x = \D q\, \D t$):
\begin{align}\label{FPOVMn}
  \PPP(X_1 \in \D x_1,\ldots, X_n \in \D x_n) &=
  \scp{\psi}{\povm_n(\D x_1 \times \cdots \times \D x_n) \,\psi} =\\
  & =\scp{\psi}{L_n^* L_n \,\psi} \, \D x_1 \cdots \D x_n
\end{align}
with 
\begin{multline}\label{Lndef}
  L_n(x_1,\ldots,x_n) = 1_{t_0<t_1< \ldots < t_n} \,\E^{-\lambda(t_n-t_0)/2} \:\times \\
  \times \: \Lambda(q_n)^{1/2} \, \E^{-\I H(t_n-t_{n-1})/\hbar} 
  \Lambda(q_{n-1})^{1/2} \, \E^{-\I H(t_{n-1}-t_{n-2})/\hbar}
  \cdots \Lambda(q_1)^{1/2} \, \E^{-\I H(t_1-t_0)/\hbar} \,.
\end{multline}
Here, $1_{t_0<t_1< \ldots < t_n}$ means the characteristic function of the set $\{(x_1,\ldots,x_n) \in (\RRR^4)^n: x_k = (q_k,t_k), t_0<t_1< \ldots < t_n\}$.
These formulas we use for the rigorous definition of the GRWf flash process in Section~\ref{sec:simplestr}.

\subsection{Labeled Flashes}
\label{sec:labeled}

In some models we want the flashes to wear \emph{labels}, expressed by a mapping $\{X_1,X_2,\ldots\} \to \Lab$, where $\Lab$ is the set of all possible labels, which must be specified to define the theory. One can think of these labels as different types of flashes; for example,  electron flashes might be metaphysically different from muon flashes. (I like to imagine this situation as flashes of different color.) The set $\Lab$ can be finite or infinite. The flash process with labels can also be thought of as a point process in $\RRR^4 \times \Lab$ instead of $\RRR^4$. 

We write $I_n$ for the label of the $n$-th flash and $Z_n$ for the pair $(X_n,I_n)$ or the triple $(Q_n,T_n,I_n)$; thus $F= \bigl( Z_1,Z_2, \ldots \bigr)$. Similarly, we write $i$ for an element of $\Lab$, $z$ for the pair $(x,i)\in \RRR^4 \times \Lab$ or the triple $(q,t,i)$, and $\D z$ for a volume element in $\RRR^4 \times \Lab$, i.e., $\D z = \D x \times \{i\}$. We also write $F_n$ for the first $n$ flashes, $F_n= \bigl(Z_1,\ldots,Z_n\bigr)$, similarly $f_n=(z_1,\ldots,z_n)$, and $\D f_n = \D z_1\cdots \D z_n$.

To adapt the defining equations of GRWf, the rate density of flashes of type $i\in\Lab$ is
\begin{equation}\label{labelflashrate}
  r_i(q,t) = \scp{\psi_t}{\Lambda_i(q)\, \psi_t}\,,
\end{equation}
which means we assume separate flash rate operators for every type; the new collapse rule prescribes that if a flash of type $I$ occurs at time $T$ and location $Q$ then
\begin{equation}\label{labelcollapse}
  \psi_{T+} = \frac{\Lambda_I(Q)^{1/2} \psi_{T-}}
  {\|\Lambda_I(Q)^{1/2} \psi_{T-}\|}\,.
\end{equation}
Concerning the total flash rate operator, we assume, instead of \eqref{LambdaR3I},
\begin{equation}\label{labelLambdaR3I}
  \sum_{i\in\Lab} \int_{\RRR^3} \Lambda_i(q)\, \D q = \lambda I\,.
\end{equation}
(It should not lead to confusion that a capital $I$ is sometimes used for the identity operator and sometimes for a random label.)

As a consequence, the joint distribution of the first $n$ flashes together with their labels is
\begin{equation}\label{labelFPOVMn}
 \PPP(F_n \in \D f_n) = 
  \PPP(X_1 \in \D x_1, I_1 = i_1, \ldots, X_n \in \D x_n, I_n = i_n) =
  \scp{\psi}{L_n^* L_n \,\psi} \, \D x_1 \cdots \D x_n
\end{equation}
with 
\begin{multline}\label{labelLndef}
  L_n = L_n(x_1,i_1,\ldots, x_n,i_n) 
  = 1_{t_0<t_1< \ldots < t_n} \E^{-\lambda(t_n-t_0)/2} \:\times \\
  \times \: \Lambda_{i_n}(q_n)^{1/2} \, \E^{-\I H(t_n-t_{n-1})/\hbar} 
  \Lambda_{i_{n-1}}(q_{n-1})^{1/2} \, \E^{-\I H(t_{n-1}-t_{n-2})/\hbar}
  \cdots \Lambda_{i_1}(q_1)^{1/2} \, \E^{-\I H(t_1-t_0)/\hbar} \,.
\end{multline}

\begin{ex}
The original GRW model (corresponding to non-relativistic quantum mechanics of $N$ distinguishable particles) fits this scheme as follows: $\Hilbert= L^2(\RRR^{3N})$; $\Lab = \{1,\ldots,N\}$; $H$ is the usual Hamiltonian of non-relativistic quantum mechanics, a self-adjoint extension of
\begin{equation}\label{HSchrN}
  H\psi = -\sum_{i=1}^N \frac{\hbar^2}{2m_i} \nabla_i^2 \psi + V\psi
\end{equation}
for $\psi\in C_0^\infty(\RRR^{3N})$, where $m_i$ is the mass of particle $i$; finally, the flash rate operators are
\begin{equation}\label{GRWLambda}
  \Lambda_i(q) \, \psi(r_1,\ldots, r_N) = \frac{\lambda_i}{(2\pi\sigma^2)^{3/2}}
  \E^{-(r_i-q)^2/2\sigma^2} \psi(r_1,\ldots, r_N)\,.
\end{equation}
One easily checks \eqref{labelLambdaR3I} with $\lambda = \lambda_1 + \ldots + \lambda_N$.
\end{ex}

\subsection{Variable Total Flash Rate}

We now stop assuming that the total flash rate operator $\sum_i \int \D q\, \Lambda_i(q)$ is a multiple of the identity; that is, we drop \eqref{LambdaR3I} and \eqref{labelLambdaR3I}. As pointed out in \cite{Tum05}, this situation naturally arises for a GRWf process appropriate for quantum field theory (corresponding to a variable number of particles), as already suggested by the fact, see \eqref{Nlambda}, that the total flash rate is proportional to the number of particles. A stochastic wave function evolution very similar to the one discussed here was proposed by Blanchard and Jadczyk \cite{BJ95} as a model of a quantum system interacting with a classical one; see \cite{Jad06} for a discussion of the commonalities.

We can keep the same formulas, \eqref{labelflashrate} for the flash rate and \eqref{labelcollapse} for the collapse, but need to modify the Schr\"odinger equation \cite{Tum05,BJ95}:
\begin{equation}\label{totalevol}
 \I\hbar \frac{\D\psi_t}{\D t} = H\psi_t - \tfrac{\I\hbar}{2} \Lambda(\RRR^3) \, \psi_t 
 + \tfrac{\I\hbar}{2} \scp{\psi_t}{\Lambda(\RRR^3)\,\psi_t} \, \psi_t\,,
\end{equation}
where
\begin{equation}\label{totalLambdaint}
  \Lambda(\RRR^3) = \sum_{i\in\Lab} \int_{\RRR^3} \Lambda_i(q)\, \D q \,.
\end{equation}
Note that if, as assumed so far, $\Lambda(\RRR^3) = \lambda I$ then \eqref{totalevol} reduces to the Schr\"odinger equation \eqref{Schr}. Note further that \eqref{totalevol} formally preserves $\|\psi_t\|$ if $\|\psi\|=1$ initially:
\be
  \frac{\D}{\D t} \|\psi_t\|^2 = 2 \Re \Bscp{\psi_t}{\frac{\D\psi_t}{\D t}} =
\ee
\be
= 2\Re \Bigl(-\tfrac{\I}{\hbar} \scp{\psi_t}{H\psi_t} - 
  \tfrac{1}{2} \scp{\psi_t}{\Lambda(\RRR^3) \, \psi_t} +
  \tfrac{1}{2} \scp{\psi_t}{\Lambda(\RRR^3) \, \psi_t} \scp{\psi_t}{\psi_t} \Bigr) =
\ee
\begin{equation}
  = (\|\psi_t\|^2-1) \scp{\psi_t}{\Lambda(\RRR^3) \, \psi_t} = 0\,.
\end{equation}

Next, we want to obtain formulas analogous to \eqref{labelFPOVMn} and \eqref{labelLndef} for the distribution of the first $n$ flashes from the flash rate \eqref{labelflashrate}, the collapse law \eqref{labelcollapse} and the between-flashes evolution \eqref{totalevol}. However, since the total flash rate is not constant any more, it need not be bounded from below, and, as a consequence, it can have positive probability that only finitely many flashes occur. The appropriate history space space is therefore not $(\RRR^4)^\NNN$ but the space of all \emph{finite or infinite} sequences,
\begin{equation}\label{totalOmega}
  \Omega = \bigcup_{m=0}^\infty (\RRR^4)^m \cup (\RRR^4)^\NNN
\end{equation}
(where $(\RRR^4)^0$ has a single element, the empty sequence). Another method of representing finite-or-infinite sequences is based on introducing a formal symbol $\ceme$ (``cemetery state'') and writing a finite sequence in $\RRR^4$, such as $(x_1,\ldots,x_n)$, as the infinite sequence $(x_1,\ldots,x_n, \ceme, \ceme, \ldots)$ in $\RRR^4 \cup \{\ceme\}$. Then $\Omega$ can be understood as
\be\label{cemeOmega}
\Omega = \Bigl\{(z_1,z_2,\ldots) \in (\RRR^4 \cup\{\ceme\})^\NNN: z_n =\ceme \Rightarrow z_{n+1} = \ceme \Bigr\}\,,
\ee
the set of sequences for which $\ceme$ is ``absorbing'', i.e., the sequence cannot leave the cemetery state once it is reached. In this representation, the \emph{number of flashes} $\# F$ in a sequence $F=(z_1,z_2,\ldots) \in (\RRR^4 \cup\{\ceme\})^\NNN$ has to be defined as 
\be
\# F = \inf \{n\in\NNN: z_n = \ceme\}-1
\ee
(with the understanding $\inf \emptyset = \infty$).

By a formal computation, one obtains the following expression for the probability of existence of $n$ flashes and their joint distribution:
\begin{equation}\label{totalFPOVMn}
  \PPP(\# F \geq n, Z_1 \in \D z_1,\ldots, Z_n \in \D z_n) =
  \scp{\psi}{L_n^* L_n \,\psi} \, \D x_1 \cdots \D x_n
\end{equation}
with 
\begin{multline}\label{totalLndef}
  L_n = L_n(x_1,i_1,\ldots, x_n,i_n)  =  \\
  \Lambda_{i_n}(q_n)^{1/2} \, W_{t_n-t_{n-1}} \, 
  \Lambda_{i_{n-1}}(q_{n-1})^{1/2} \, W_{t_{n-1}-t_{n-2}}
  \cdots \Lambda_{i_1}(q_1)^{1/2} \, W_{t_1-t_0} \,,
\end{multline}
where
\begin{equation}\label{Wdef}
  W_t = \E^{-\frac{1}{2} \Lambda(\RRR^3)t - \frac{\I}{\hbar} Ht} 
  \text{ for }t\geq 0\,, \quad W_t = 0 \text{ for } t<0 \,.
\end{equation}

Another formal computation yields the following probability that the process stops after $n$ flashes:
\be
  \PPP(\# F = n, Z_1 \in \D z_1,\ldots, Z_n \in \D z_n) =
  \scp{\psi}{L_n^* \bigl(\lim_{t\to\infty} W_t^* W_t  \bigr) L_n \,\psi} 
  \, \D x_1 \cdots \D x_n\,.
\ee

\begin{ex}\label{ex:qft}
The following version of GRWf corresponding to a non-relativistic quantum field theory (i.e., quantum mechanics with a variable number of particles) I have proposed in \cite{Tum05}. The labels correspond to different particle species (electron, quark, \ldots); 
$\Hilbert$ is a product of spaces corresponding to the particle species,
\be\label{qftH}
\Hilbert = \bigotimes_{i\in\Lab} \Hilbert_i\,,
\ee
where $\Hilbert_i$ is a copy of the (bosonic or fermionic) Fock space, i.e.,
\begin{equation}\label{qftHi}
  \Hilbert_i = \bigoplus_{N=0}^\infty \Hilbert_i^{(N)} = 
  \bigoplus_{N=0}^\infty S_\pm L^2(\RRR^{3},\CCC^{2s_i+1})^{\otimes N}
\end{equation}
with $s_i\in\{0,\tfrac12,1,\tfrac32,\ldots\}$ the spin of species $i$;
a typical Hamiltonian consists of a contribution like \eqref{Hidentical} on every $\Hilbert_i^{(N)}$ plus contributions creating and annihilating particles (see, e.g., \cite{CDT05} for concrete examples); finally, the flash rate operators are given by \eqref{identicalGRWLambda} on every $\Hilbert_i^{(N)}$. As a consequence,
\be
\int \Lambda_i(q) \, \D q = \lambda \hat{N}_i\,,
\ee
where $\hat{N}_i$ is the particle number operator of species $i$,
\be
\hat{N}_i\psi = N_i\psi \quad\text{for}\quad \psi\in\Hilbert_i^{(N)}\otimes \bigotimes_{i'\neq i} \Hilbert_{i'}\,,
\ee
which is unbounded. Indeed, $\Lambda_i(q)$ is nothing but the particle number density operator $\hat{N}_i(q)$ of species $i$ (which actually is an operator-valued distribution) convolved with the Gaussian of width $\sigma$. Conversely, $\Lambda_i(q)$ could be \emph{defined} as $\hat{N}_i(q)$ convolved with the Gaussian of width $\sigma$, also if a given quantum field theory is not of the structure \eqref{qftH}--\eqref{qftHi}.
\end{ex}

\subsection{Time-Dependent Operators}

Suppose now that the relevant operators are explicitly time dependent: the Hamiltonian $H(t)$ and the flash rate operators $\Lambda_i(q,t)$. 

It is straightforward to adapt the basic equations of GRWf to this situation. We rewrite the flash rate density as
\begin{equation}\label{timeflashrate}
  r_i(q,t) = \scp{\psi_t}{\Lambda_i(q,t)\, \psi_t}\,,
\end{equation}
the collapse law as
\begin{equation}\label{timecollapse}
  \psi_{T+} = \frac{\Lambda_I(Q,T)^{1/2} \psi_{T-}}
  {\|\Lambda_I(Q,T)^{1/2} \psi_{T-}\|}\,, 
\end{equation}
and the between-flashes evolution law as
\begin{equation}\label{timeevol}
  \I\hbar \frac{\D\psi_t}{\D t} = H(t)\psi_t - \tfrac{\I\hbar}{2} \Lambda(\RRR^3,t) \, \psi_t 
  + \tfrac{\I\hbar}{2} \scp{\psi_t}{\Lambda(\RRR^3,t)\,\psi_t} \, \psi_t\,,
\end{equation}
where
\begin{equation}\label{timeLambdaint}
  \Lambda(\RRR^3,t) = \sum_{i\in\Lab} \int_{\RRR^3} \Lambda_i(q,t)\, \D q \,.
\end{equation}

These equations reduce to \eqref{labelflashrate}, \eqref{labelcollapse}, and \eqref{totalevol} if $H(t)$ and $\Lambda_i(q,t)$ are constant as functions of $t$. We also write $\Lambda(z)$ instead of $\Lambda_i(q,t)$, where $z=(q,t,i)$ is a labeled flash. From the above equations, we obtain by a formal computation in analogy to \eqref{totalFPOVMn} that
\begin{equation}\label{totalFPOVMnt}
  \PPP(\# F \geq n, Z_1 \in \D z_1, \ldots, Z_n \in \D z_n) =
  \scp{\psi}{L_n^* L_n \,\psi} \, \D x_1 \cdots \D x_n
\end{equation}
with 
\be\label{totalLndeft}
  L_n = L_n(z_1,\ldots, z_n)  =  
  \Lambda(z_n)^{1/2} \, W^{t_n}_{t_{n-1}} \, 
  \Lambda(z_{n-1})^{1/2} \, W^{t_{n-1}}_{t_{n-2}}
  \cdots \Lambda(z_1)^{1/2} \, W^{t_1}_{t_0} \,,
\ee
where $W_s^{t}$ is defined by
\begin{equation}\label{Wdeft}
  W_t^t = I\,, \quad \frac{\D W_s^{t}}{\D t} = 
  \Bigl(-\tfrac{1}{2} \Lambda(\RRR^3,t) - \tfrac{\I}{\hbar} H(t)\Bigr) W_s^t
\end{equation}
for $t\geq s$ and $W_s^t = 0$ for $t<s$. As a formal consequence of \eqref{Wdeft},
\be\label{W*Wt}
\frac{\D }{\D t} W_s^t{}^* W_s^t = - W_s^t{}^* \, \Lambda(\RRR^3,t) \, W_s^t
\ee
for $t\geq s$.

\subsubsection{``Gauge'' Freedom}
\label{sec:gauge1}

There remains a certain freedom in the choice of the operators $H(t)$ and $\Lambda(z)$ used to define the theory. A different choice $\tilde H(t)$ and $\tilde\Lambda(z)$ of Hamiltonian and flash rate operators can lead to the same history POVM $\povm(\cdot)$ as $H(t)$ and $\Lambda(z)$, and thus to the same probability distribution of the flashes. In this case, we regard $\tilde H(t)$ and $\tilde\Lambda(z)$ as \emph{physically equivalent} to $H(t)$ and $\Lambda(z)$, that is, as a different representation of the same physical theory. See \cite{AGTZ06} for a discussion of the concept of physical equivalence. 

For this conclusion it plays a role that we regard the flashes as the primitive ontology, and the theory as defined by defining the distribution of the flashes. Had we regarded the wave function as the primitive ontology, then a change in $H(t)$ would not have been admissible, as it leads to a different function for $\psi_t$. One can say that in GRWf theories we care about wave functions only insofar as we care about flashes, and that is why two wave functions, $\psi_t$ and $\tilde\psi_t$, arising from different choices of $H(t)$ and $\Lambda(z)$, can be regarded as just two representations of the same physical evolution, mathematically represented by the probability distribution $\PPP(\cdot) = \scp{\psi_0}{\povm(\cdot) \, \psi_0}$ of the flashes.

Note the similarity between the freedom about $H(t)$ and $\Lambda(z)$ and the gauge invariance of (classical) electrodynamics: Different choices of vector potentials $A_\mu$ are physically equivalent, i.e., different representations of the same reality, because we regard not $A_\mu$ but the field strength $F_{\mu\nu}$ as the primitive ontology. (Alternatively, one may regard only the particle trajectories as the primitive ontology, and these depend only on the field strength $F_{\mu\nu}$.) 

Returning to $H(t)$ and $\Lambda(z)$, here is a way of constructing $\tilde H(t)$ and $\tilde\Lambda(z)$ that lead to the same history POVM $\povm(\cdot)$. Let $U_s^t$ for $s,t\in \RRR$ be a family of unitary operators such that
\begin{equation}\label{Ucomposition}
U_t^t=I\,,\quad U_s^t\, U_r^s = U_r^t
\end{equation}
for all $r,s,t\in \RRR$. Let us assume $t_0=0$ for ease of notation. Note that all $U_s^t$ are determined by the subfamily $(U_0^t)_{t\in \RRR}$ because, by \eqref{Ucomposition}, $U_s^t = U_0^t (U_0^s)^{-1}$. Now set
\begin{equation}\label{tildeH}
\tilde H(t) = U_t^{0} \, H(t)\, U_{0}^t + \I\hbar \frac{\D U_t^0}{\D t} U_0^t
\end{equation}
and
\begin{equation}\label{tildeLambda}
\tilde\Lambda_i(q,t) = U_t^0 \, \Lambda_i(q,t) \, U_0^t\,.
\end{equation}
Then
\begin{equation}
\tilde\Lambda(\RRR^3,t) = U_t^0 \, \Lambda(\RRR^3,t)\, U_0^t\,,
\end{equation}
\begin{equation}\label{tildeW}
\tilde W_s^t = U_t^0\, W_s^t \, U_0^s\,,
\end{equation}
\begin{equation}\label{tildeL}
\tilde L_n = U_{t_n}^0 L_n\,,
\end{equation}
and thus
\begin{equation}
\tilde\povm_n(\cdot) = \povm_n(\cdot)\,,
\end{equation}
as we have claimed.

This can be understood in the following way. Imagine there is a separate Hilbert space $\Hilbert_t$ for every time $t$. Then there are many ways of identifying $\Hilbert_s$ with $\Hilbert_t$ for all $s,t\in\RRR$, each corresponding to a family of unitary isomorphisms $V_s^t: \Hilbert_s \to \Hilbert_t$ with $V_t^t= I$ and $V_s^t \, V_r^s=V_r^t$. Two such families $V,\hat V$ differ by a family of unitary operators $U_s^t = V_t^0 \,\hat V_s^t \, V_0^s$ on $\Hilbert_0$ satisfying \eqref{Ucomposition}, and thus, if we started with a tacit identification of all $\Hilbert_t$'s, every other way of identifying them is represented by a family $U_s^t$. Also in ordinary quantum mechanics different ways of identifying the $\Hilbert_t$'s are known: the Schr\"odinger picture and the Heisenberg picture. In the Heisenberg picture, $\Hilbert_s$ and $\Hilbert_t$ are identified along the unitary time evolution, so that $\psi_s$ and $\psi_t$ are identified as the same vector; in the Schr\"odinger picture, $\Hilbert_s$ and $\Hilbert_t$ are so identified that the position operators (represented by a projection-valued measure on $\RRR^3$) are time-independent. Also in GRWf, we can speak of a Schr\"odinger picture and a Heisenberg picture. In GRWf, a role similar to that of the position operators in ordinary quantum mechanics is played by the flash rate operators $\Lambda(q)$. If we assume them to be time-independent, as we did in Section~\ref{sec:simpleGRWf}, then this entails a particular way of identifying the $\Hilbert_t$'s; if we drop this assumption, as we do in this section, then other identifications are possible. 

\bigskip

\underline{\textit{Heisenberg picture:}}
The analog of the Heisenberg picture in GRWf
is characterized by the condition
\begin{equation}
\tilde H(t) = 0\,,
\end{equation}
so that the Hamiltonian contribution to the evolution of the state vector $\psi_t$ disappears. It can be obtained through the choice
\begin{equation}\label{UHeisenberg}
\frac{\D U_0^t}{\D t} =-\tfrac{\I}{\hbar} H(t) \, U_0^t\,.
\end{equation}
(Note, however, a fine conceptual difference from the Heisenberg picture in ordinary quantum mechanics: In ordinary quantum mechanics, it is the observables that evolve, while in GRWf, which is not defined in terms of observables, it is the flash rate operators.) In the Heisenberg picture, we have that
\begin{equation}\label{WHeisenberg}
\frac{\D W_s^t}{\D t} =  -\tfrac{1}{2} \Lambda(\RRR^3,t) \, W_s^t
\end{equation}
for $t\geq s$, with $W_t^t=I$ and $W_s^t=0$ for $t<s$. (One might be tempted to think that \eqref{WHeisenberg}, as it does not contain the skew-adjoint factor $\I H(t)$, implies that all $W_s^t$ are self-adjoint, but this is generically not the case; it is the case when all $\Lambda(\RRR^3,t)$ commute with each other.)

\bigskip

\underline{\textit{Square-root picture:}}
This ``gauge'' is characterized by the condition
\begin{equation}
\tilde W_0^t \geq 0\,.
\end{equation}
In fact, since in every ``gauge'' $\tilde W_0^{t*} \tilde W_0^t = W_0^{t*} W_0^t$ by \eqref{tildeW}, $\tilde W$ can be expressed through $W$ according to
\be\label{squareroot}
\tilde W_0^t = (W_0^{t*} W_0^t)^{1/2}\,.
\ee
This relation gives the ``square-root picture'' its name. 

This picture can be obtained through the choice\footnote{The expression \eqref{Utsqrt} is indeed rigorously defined and unitary if $W_0^t$ is bijective. To see that it is well-defined, note that if $W_0^t$ is bijective then so are $W_0^{t*}$ and $W_0^{t*} W_0^t$, and thus also $T:=(W_0^{t*} W_0^t)^{1/2}$ (as the bijectivity of $T^2$ implies that of $T$). In particular, $U_0^t$ is bijective as the product of the bijective operators $W_0^t$ and $T^{-1}$. To see that $U_0^t$ is unitary, note that its adjoint is its inverse, $U_0^{t*} U_0^t = T^{-1} T^2 T^{-1} = I$ (as $T$ and thus $T^{-1}$ are self-adjoint). Finally note that $U_0^0 = I$ by definition, and that \eqref{tildeW} yields \eqref{squareroot}.}
\begin{equation}\label{Utsqrt}
U_0^t = W_0^t \, (W_0^{t*} W_0^t)^{-1/2}\,.
\end{equation}

The advantage of the square-root picture is that $\tilde W_0^t$ can be easily computed by \eqref{squareroot} if $W_0^{t*} W_0^t$ is given. Since $\tilde W_s^t$ need not be positive (nor self-adjoint) for $s\neq 0$, the square-root picture only simplifies the rightmost term in \eqref{totalLndeft}. But this will be different in Section~\ref{sec:generalGRWf} when we allow flash rate operators to depend on previous flashes.

\subsection{General Scheme of GRWf Theories}
\label{sec:generalGRWf}

The scheme of GRWf we have developed so far is this: Given the operators $H(t)$ and $\Lambda_i(q,t)$, the corresponding GRWf theory is defined by \eqref{timeflashrate}--\eqref{timeLambdaint}. This scheme can be naturally generalized in two ways. 

\subsubsection{Nonpositive Collapse Operators}
First, instead of the positive operators $\Lambda_i(q,t)^{1/2}$ in the collapse law \eqref{timecollapse} we can put a collapse operator $C_i(q,t)$ which satisfies
\begin{equation}
C_i(q,t)^* \, C_i(q,t) = \Lambda_i(q,t)
\end{equation}
but is not necessarily positive, and not necessarily self-adjoint. That is, we replace \eqref{timecollapse} with
\begin{equation}\label{Ccollapse}
  \psi_{T+} = \frac{C_I(Q,T) \psi_{T-}}
  {\|C_I(Q,T) \psi_{T-}\|}\,. 
\end{equation}
We also write $C(z)$ instead of $C_i(q,t)$, where $z=(q,t,i)$ is a labeled flash.

\subsubsection{Past-Dependent Operators}
Second, we can allow that both the Hamiltonian $H$ and the collapse operator $C$ depend on the previous flashes,
\be
H=H(z_1,\ldots,z_n,t)=H(f_n,t) \,,
\ee
\be
C= C_i(z_1,\ldots,z_n,q,t) = C(f_n,z)\,.
\ee
We write $f_n := (z_1,\ldots,z_n)$ and $f_{n-1}=(z_1,\ldots,z_{n-1})$. Indeed, this situation occurs in the relativistic GRWf model, see Section~\ref{sec:defrGRWf}. 
The GRWf process can then be defined, instead of by \eqref{totalFPOVMnt}--\eqref{Wdeft}, by
\begin{equation}\label{totalFPOVMnp}
  \PPP(\# F \geq n, F_n \in \D f_n) =
  \scp{\psi}{L_n^* L_n \,\psi} \, \D x_1 \cdots \D x_n
\end{equation}
with 
\begin{equation}\label{totalLndefp}
  L_n = L_n(f_n)  =  
  C(f_{n}) \, W^{t_n}(f_{n-1}) \,
  L_{n-1}(f_{n-1}) \,,\quad
  L_0(\emptyset) = I\,,
\end{equation}
where $W^{t}(f_n)$ 
is defined by
\begin{equation}\label{Wdefp}
  W^{t_n}(f_n) = I\,, \quad \frac{\D W^{t}(f_n)}{\D t} = 
  \Bigl(-\tfrac{1}{2} \Lambda(f_n,\RRR^3,t) - \tfrac{\I}{\hbar} H(f_n,t)\Bigr)
  W^t(f_n)
\end{equation}
for $t\geq t_n$ and $W^t(f_n) = 0$ for $t<t_n$; here,
\begin{equation}\label{CLambdaint}
\Lambda(f_n,\RRR^3,t) = \sum_{i\in \Lab} \int_{\RRR^3} 
C_i(f_n,q,t)^* \, C_i(f_n,q,t) \, \D^3 q\,. 
\end{equation}
(It is unnecessary now to specify \emph{two} times for the $W$ operator, as in the notation $W_s^t$, because now $s=t_n$, where $t_n$ is the time of the last flash in $f_n$.)

\subsubsection{``Gauge'' Freedom Once More}
\label{sec:gauge2}

In addition to the gauge freedom described in Section~\ref{sec:gauge1}, there is another gauge freedom when the operators $H$ and $C$ can depend on the past flashes $f=(z_1,\ldots,z_n)$, and exploiting this freedom one can ensure that all $C$'s are positive operators.

Here is a way of constructing different operators $\tilde H(f,t)$ and $\tilde C(f,z)$ that lead to the same history POVM $\povm(\cdot)$. The construction is the same as in Section~\ref{sec:gauge1}, except that the unitaries $U_s^t$ are now allowed to depend on the past flashes $f$. That is, let $U_s^t(f)$ for $s,t\in \RRR, f\in \cup_{n=0}^\infty (\RRR^4\times\Lab)^n$ 
be an arbitrary family of unitary operators such that
\begin{equation}\label{Ucompositionp}
U_t^t(f) =I\,,\quad U_s^t(f)\, U_r^s(f) = U_r^t(f)
\end{equation}
for all $r,s,t\in \RRR$, and set (assuming $t_0=0$ for ease of notation)
\begin{equation}\label{tildeHp}
\tilde H(f,t) = U_t^{0}(f) \, H(f,t)\, U_{0}^t(f) + \I\hbar \frac{\D U_t^0(f)}{\D t} U_0^t(f)
\end{equation}
and
\begin{equation}\label{tildeCp}
\tilde C(f,z) = U_t^0(f,z) \, C(f,z) \, U_0^t(f)\,.
\end{equation}
It follows that 
\begin{equation}
\tilde\Lambda(f,z) = U_t^0(f) \, \Lambda(f,z) \, U_0^t(f) \,, 
\end{equation}
\begin{equation}
\tilde\Lambda(f,\RRR^3,t) = U_t^0(f) \, \Lambda(f,\RRR^3,t)\, U_0^t(f)\,,
\end{equation}
\begin{equation}\label{tildeWp}
\tilde W^t(f) = U_t^0(f)\, W^t(f) \, U_0^{t_n}(f)\,,
\end{equation}
\begin{equation}\label{tildeLp}
\tilde L_n(f) = U_{t_n}^0(f) \, L_n(f)\,,
\end{equation}
for $f=(z_1,\ldots,z_n)$, and thus
\begin{equation}
\tilde\povm_n(\cdot) = \povm_n(\cdot)\,,
\end{equation}
as we have claimed.

\bigskip

\underline{\textit{Heisenberg-plus picture:}} This is the special case 
characterized by the conditions
\begin{equation}\label{posHeisenberg}
\tilde H(f,t) = 0 \,, \quad \tilde C(f,z) \geq 0 \,,
\end{equation}
so that
\be
\tilde C(f,z) = \tilde \Lambda(f,z)^{1/2}\,.
\ee
(The tag ``plus'' indicates that the $\tilde C$ are positive). It can be obtained through the particular choice of $U_0^t(f)$ defined by
\begin{equation}\label{UpODE}
\frac{\D U_0^t(f)}{\D t} =-\tfrac{\I}{\hbar} H(t,f) \, U_0^t(f)
\end{equation}
for $t>t_n > \ldots> t_1>0$ and $f=(z_1,\ldots,z_n)$, $z_k=(q_k,t_k,i_k)$, with the initial condition $U_0^{t_n}(f)$ chosen so that
\begin{equation}\label{Upjump}
U_0^0(\emptyset) = I\,,\quad U^{t_n}_0(f)^* \, C(f) \, U_0^{t_n}(f_{n-1}) \geq 0 
\end{equation}
with $f_{n-1} = (z_1,\ldots,z_{n-1})$. Indeed, $U^{t_n}_0(f)$ is determined by \eqref{Upjump} from $C(f)$ and $U_0^{t_n}(f_{n-1})$, provided that $C(f):\Hilbert \to\Hilbert$ is bijective.\footnote{This follows from the fact that the operator $T:=C(f) \, U_0^{t_n}(f_{n-1})$, as it is bounded and bijective, possesses a unique polar decomposition \cite[Thm~12.35]{Rud}
 $T=UP$ 
as a product of a unitary $U$ and a bounded positive $P$. Now if $V$ is a unitary so that $VT$ is positive, then $VT = (T^*V^*VT)^{1/2}= (T^*T)^{1/2}=P$ and $V=U^*$. That is, $U^{t_n}_0(f) = U$.}

\bigskip

\underline{\textit{Square-root-plus picture:}}
This case is characterized by the conditions
\be
\tilde W^t(f) \geq 0 \,, \quad \tilde C(f,z) \geq 0
\ee
and can be obtained through the particular choice of $U_0^t(f)$ defined by two equations, \eqref{Upjump} and
\be\label{Upsqrt}
U_{t_n}^t(f) = W \, (W^*W)^{1/2}
\ee
with $W= W^t(f)$ and $n=\# f$.
From \eqref{Upjump} it follows by \eqref{tildeCp} that all $\tilde C$ are positive. From \eqref{Upsqrt}, as in the context of \eqref{Utsqrt}, it follows by \eqref{tildeWp} that
\be\label{squarerootp}
\tilde W^t(f) = U_0^{t_n}(f)^* \, (W^*W)^{1/2} \, U_0^{t_n}(f)\,,
\ee
the analog of \eqref{squareroot}. As a consequence, 
$\tilde W^t(f)$ is positive.

\subsubsection{Ways of Specifying the Theory}
\label{sec:specify}

The theory can be specified by specifying all the operators $H(f,t)$ and $C(f)$. If all the $C(f)$ are positive, one can instead specify the $\Lambda(f)$ (as then $C(f) = \Lambda(f)^{1/2}$). In the Heisenberg-plus picture, one has to specify \emph{only} the operators $\Lambda(f)$ since all $H(f,t)=0$. 

Let us return to the case of nonzero $H(f,t)$. One can specify, instead of $H(f,t)$, directly the 
$W^t(f)$ (in addition to the $C(f)$), provided they satisfy the following condition of consistency with the specified $C(f)$:
\be\label{Wcondition}
\frac{\D}{\D t} W^t(f)^* W^t(f) = - W^t(f)^* \Lambda(f,\RRR^3,t) \, W^t(f)\,.
\ee
To specify the $W$ instead of the $H$ operators is analogous to specifying, in ordinary quantum mechanics, the unitary time-evolution operators $U_s^t$ instead of the Hamiltonians $H(t)$.

The wave function $\psi_t$ at time $t$ can be expressed through the $W$ operators. It depends, of course, on the flashes $f$ between $t_0$ and $t$:
\be
\psi_t = \frac{W^t(f) \, L_{\# f}(f) \, \psi}{\bigl\|W^t(f) \, L_{\# f}(f) \, \psi\bigr\|}
\ee
with $L_n$ defined by \eqref{totalLndefp}.

\subsection{Flashes + POVM = GRWf}

We have described how theories of the GRWf type, specified in terms of the flash rate (and other) operators, give rise to a distribution of flashes given by a POVM. We now argue that essentially every theory with flash ontology in which the distribution of the flashes is given by a POVM, arises from the GRWf scheme for a suitable choice of flash rate operators, and is thus a collapse theory. (A rigorous discussion is provided in Section~\ref{sec:reconLambdar}.) In particular, this suggest that rGRWf can be expressed, in any coordinate system, in the GRWf scheme.

We assume that the history POVM $\povm(\cdot)$ is such that for every $n\in \NNN$ its marginal $\povm_n(\cdot)$ for the first $n$ flashes has a positive-operator-valued density function $E_n:(\RRR^4\times\Lab)^n \to \Bdd(\Hilbert)$ relative to the Lebesgue measure, i.e.,
\be
\povm_n(A) = \int_A E_n(f_n) \,\D f_n\,, 
\ee
where $f_n=(x_1,i_1,\ldots,x_n,i_n)$, $x_k \in \RRR^4$ and the notation $\D f_n$ means
\be
\int_A g(f_n) \,\D f_n = \sum_{i_1\ldots i_n \in \Lab} \int_{\RRR^{4N}} 1_{f_n\in A} \, g(f_n) \, \D x_1 \cdots \D x_n\,.
\ee
This assumption is fulfilled for the history POVM $\povm(\cdot)$ of GRWf with
\be
E_n(f_n) = L_n(f_n)^* \, L_n(f_n)\,.
\ee

\subsubsection{Reconstructing $\Lambda$}
\label{sec:reconLambda}

We now explain how to reconstruct the flash rate operators from the history POVM, i.e., how to extract $\Lambda(f)$ from the operator-valued functions $E_n$, first in the square-root-plus picture, and afterwards in the Heisenberg-plus picture. 

\bigskip

\underline{\textit{Square-root-plus picture:}} Set $L_0(\emptyset) = I$,
\be\label{Wreconempty}
W^t(f=\emptyset) = \povm(T_1>t)^{1/2}\,,
\ee
where
\be
\povm(T_1>t) =\sum_{i\in\Lab} \int_t^\infty \D s\int_{\RRR^3} \D q \, E_1(q,s,i)\,.
\ee
Now set
\be\label{Lambdareconempty}
\Lambda(f=\emptyset,q,t,i) = W^t(\emptyset)^{-1} \, E_1(q,t,i) \, W^t(\emptyset)^{-1}\,.
\ee
Continue inductively along the number $n$ of flashes, setting
\be\label{Lnrecon}
L_n(f_n) = \Lambda(f_n)^{1/2} \, W^{t_n}(f_{n-1}) \, L_{n-1}(f_{n-1})
\ee
and
\be\label{Wrecon}
W^t(f_n) = \Bigl(L_n^*(f_n)^{-1}\sum_{i\in\Lab} \int_{t}^\infty \D s\int_{\RRR^3} \D q \, E_{n+1}(f_n,q,s,i) \, L_n(f_n)^{-1}\Bigr)^{1/2}\,.
\ee
Now set
\be\label{Lambdarecon}
\Lambda(f_n,z) = W^t(f_n)^{-1} \, L_n^*(f_n)^{-1} \, E_{n+1}(f_n,z)\, 
L_n(f_n)^{-1} \, W^t(f_n)^{-1} \,.
\ee
It then follows formally that 
\be
 L_n^*(f_n) \, L_n(f_n) = E_n(f_n)\,.
\ee
Theorem~\ref{thm:recon} in Section~\ref{sec:reconLambdar} provides conditions under which this computation actually works.

\bigskip

\underline{\textit{Heisenberg-plus picture:}}  Set $L_0(\emptyset) = I$. 
We determine $\Lambda(f=\emptyset,q,t,i)$ for $t>0$ (and all $q\in\RRR^3,i\in\Lab$) by simultaneously solving
\begin{equation}
\frac{\D W^t(\emptyset)}{\D t} = -\tfrac{1}{2} \Lambda(\emptyset,\RRR^3,t) \, W^t(\emptyset) 
\end{equation}
with initial datum
\begin{equation}
W^{t_0}(\emptyset) = I\,,
\end{equation}
and
\begin{equation}
\Lambda_i(f=\emptyset,q,t) = W^t(\emptyset)^*{}^{-1}\,E_1(q,t,i) \, W^t(\emptyset)^{-1}\,.
\end{equation}
That is,
\begin{equation}
\frac{\D W^t(\emptyset)}{\D t} = -\tfrac{1}{2} W^t(\emptyset)^*{}^{-1}\int_{\RRR^3} \D q\, \sum_{i\in\Lab} E_1(q,t,i) \,.
\end{equation}

Now we proceed by induction along the number of flashes. Suppose that for all sequences of up to $n-1$ flashes, $f_{n-1}=(z_1,\ldots,z_{n-1})$, the operators $\Lambda_i(q,t,f_{n-1})$, 
$W^t(f_{n-1})$, and $L_{n-1}(f_{n-1})$ are known for all $i\in\Lab$, $q\in\RRR^3$, and $t\geq t_n$. For arbitrary $f_n=(z_1,\ldots,z_n)$, set
\begin{equation}
L_n(f_n) = \Lambda(f_n)^{1/2} \, W^{t_n}(f_{n-1}) \, L_{n-1}(f_{n-1})\,.
\end{equation}
Solve simultaneously
\begin{equation}
\frac{\D W^t(f_n)}{\D t} = -\tfrac{1}{2} \Lambda(f_n,\RRR^3,t) \, W^t(f_n) 
\end{equation}
with initial datum
\begin{equation}
W^{t_{n}}(f_n) = I\,,
\end{equation} 
and
\begin{equation}\label{LambdareconH}
\Lambda(f_n,z) = \Bigl(L_n^*(f_n) \, W^t(f_n)^*\Bigr)^{-1}\, E_{n+1}(f_n,z) \, \Bigl( W^t(f_n) \, L_n(f_n)\Bigr)^{-1}\,.
\end{equation}

Then \eqref{totalLndefp} and \eqref{Wdefp} are satisfied by construction, and $L_n^*(f_n) \, L_n(f_n) = E_n(f_n)$ by \eqref{LambdareconH}.

\section{Rigorous Treatment of the GRWf Scheme}
\label{sec:schemer}

In this chapter we repeat the considerations of Chapter 2 in a rigorous treatment; here we provide the exact conditions under which our constructions work and the point processes exist.

\subsection{Weak Integrals}

Let $\Bdd(\Hilbert)$ denote the space of bounded operators on the Hilbert space $\Hilbert$. We say that an operator-valued function $\Lambda:(M,\salg) \to \Bdd(\Hilbert)$ is \emph{weakly measurable} if for every $\psi\in\Hilbert$ the function $f_\psi: M \to \CCC$, defined by $f_\psi(q)=\scp{\psi}{\Lambda(q)\, \psi}$, is Borel measurable. In that case, also $q \mapsto \scp{\phi}{\Lambda(q) \, \psi}$ is Borel measurable because, using polarization,
\be\label{polarweakmeas}
\scp{\phi}{\Lambda(q) \, \psi} = \tfrac{1}{4} \Bigl( 
f_{\phi+\psi}(q) - f_{\phi-\psi}(q) -\I f_{\phi+\I\psi}(q) +\I f_{\phi-\I\psi}(q) \Bigr)\,.
\ee
Moreover, also the adjoint $q\mapsto \Lambda(q)^*$ is weakly measurable.

Let $\mu$ be a $\sigma$-finite measure on $(M,\salg)$. We understand the expression $\int \Lambda(q) \, \mu(\D q)$ as a \emph{weak integral} defined by
\begin{equation}\label{weakintegraldef}
  T=\int \Lambda(q)\, \mu(\D q) \quad :\Leftrightarrow \quad \forall \psi\in\Hilbert: \:
  \scp{\psi}{T \psi} = \int \scp{\psi}{\Lambda(q)\,\psi} \, \mu(\D q)\,.
\end{equation}
Throughout this paper, all integrals over operators are weak integrals.

(Another concept of integration of Banach-space-valued functions is the \emph{Bochner integral} \cite{Yosida}, which is not suitable for our purposes since relevant examples of flash rate operators $\Lambda(q)$, such as \eqref{simpleGRWLambda}, are weakly integrable but not Bochner integrable: Bochner integrability requires $\int \|\Lambda(q)\| \, \mu(\D q) < \infty$, while, for example, for the $\Lambda(q)$ of the original GRW model, given by \eqref{simpleGRWLambda}, $\|\Lambda(q)\|=\lambda/(2\pi\sigma^2)^{3/2}=\mathrm{const.}$ for all $q\in M=\RRR^3$, and $\mu$ is the Lebesgue measure, so that in fact $\int \|\Lambda(q) \| \, \mu(\D q) = \infty$.)

Note that $T$ need not exist (for example, $\Lambda(q) = I$ for all $q\in\RRR^3$), but if it exists then it is unique, as $T$ is determined by the values $\scp{\psi}{T\psi}$. Moreover, if $T$ exists then
\be\label{phiTpsi}
\scp{\phi}{T \, \psi} = \int \scp{\phi}{\Lambda (q) \, \psi} \, \mu(\D q)
\ee
by \eqref{polarweakmeas}; in particular, $q\mapsto \scp{\phi}{\Lambda(q)\, \psi}$ is (absolutely) integrable.
We can guarantee the existence of $T$ in a special case:

\begin{lemma}\label{lemma:weakint}
If $\Lambda:M \to \Bdd(\Hilbert)$ is weakly measurable and $\Lambda(q)$ is positive for every $q\in M$ then 
\be
S=\{\psi\in\Hilbert: \int \scp{\psi}{\Lambda(q)\,\psi} \, \mu(\D q) < \infty\}
\ee
is a subspace, and
\be\label{sesquiweakint}
B(\phi,\psi) = \int \scp{\phi}{\Lambda(q) \, \psi} \, \mu(\D q) \quad \forall \phi,\psi\in S
\ee
defines a positive Hermitian sesquilinear form on $S$. 
Moreover, if $B$ is bounded and $S=\Hilbert$ then, by the Riesz lemma, there is a positive operator $T \in\Bdd(\Hilbert)$ such that $B(\phi,\psi) = \scp{\phi}{T \psi}$.
\end{lemma}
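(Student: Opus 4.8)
The plan is to verify each assertion in turn, using only positivity of the integrand and elementary properties of the weak integral defined in \eqref{weakintegraldef}.

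First I would check that $S$ is a subspace. Closure under scalar multiplication is immediate since $\scp{c\psi}{\Lambda(q)\,(c\psi)} = |c|^2 \scp{\psi}{\Lambda(q)\,\psi}$. For closure under addition, the key observation is that for a positive operator $\Lambda(q)$ we may write $\Lambda(q) = \Lambda(q)^{1/2*}\Lambda(q)^{1/2}$, so $\scp{\psi}{\Lambda(q)\,\psi} = \|\Lambda(q)^{1/2}\psi\|^2$, and thus $q \mapsto \|\Lambda(q)^{1/2}\psi\|$ is an $L^2(\mu)$ function precisely when $\psi \in S$. The triangle inequality in $L^2(\mu)$, applied to $\|\Lambda(q)^{1/2}(\phi+\psi)\| \le \|\Lambda(q)^{1/2}\phi\| + \|\Lambda(q)^{1/2}\psi\|$ pointwise in $q$, then gives $\phi+\psi \in S$. (One must note in passing that $\Lambda(q)^{1/2}$ is again weakly measurable, which follows from the weak measurability of $\Lambda$ via, e.g., approximating the square root by polynomials in $\Lambda$, or by the spectral calculus; this is the one slightly technical point.)

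Next, on $S \times S$ the integrand $q \mapsto \scp{\phi}{\Lambda(q)\,\psi} = \scp{\Lambda(q)^{1/2}\phi}{\Lambda(q)^{1/2}\psi}$ is absolutely integrable by Cauchy--Schwarz pointwise in $q$ together with Cauchy--Schwarz in $L^2(\mu)$: $\int |\scp{\Lambda(q)^{1/2}\phi}{\Lambda(q)^{1/2}\psi}|\,\mu(\D q) \le \bigl(\int \|\Lambda(q)^{1/2}\phi\|^2 \mu(\D q)\bigr)^{1/2}\bigl(\int \|\Lambda(q)^{1/2}\psi\|^2 \mu(\D q)\bigr)^{1/2} < \infty$. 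Hence $B(\phi,\psi)$ is well-defined on $S$. Sesquilinearity is inherited from the sesquilinearity of each $\scp{\phi}{\Lambda(q)\,\psi}$ under the integral (using the absolute integrability just established to split sums), Hermitian symmetry follows from $\scp{\phi}{\Lambda(q)\,\psi} = \overline{\scp{\psi}{\Lambda(q)\,\phi}}$ since $\Lambda(q) = \Lambda(q)^*$, and positivity $B(\psi,\psi) = \int \|\Lambda(q)^{1/2}\psi\|^2\,\mu(\D q) \ge 0$ is clear.

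Finally, for the last sentence: if $B$ is bounded, meaning $|B(\phi,\psi)| \le c\|\phi\|\,\|\psi\|$, and $S = \Hilbert$, then for each fixed $\psi$ the map $\phi \mapsto \overline{B(\phi,\psi)}$ is a bounded conjugate-linear functional, so by the Riesz representation lemma there is a unique $T\psi \in \Hilbert$ with $B(\phi,\psi) = \scp{\phi}{T\psi}$ for all $\phi$; linearity of $\psi \mapsto T\psi$ and boundedness $\|T\| \le c$ follow in the standard way, and $T = T^* \ge 0$ because $B$ is Hermitian and positive. By \eqref{weakintegraldef} this $T$ is exactly $\int \Lambda(q)\,\mu(\D q)$. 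I expect the main (minor) obstacle to be the bookkeeping around weak measurability of $\Lambda(q)^{1/2}$; everything else is a routine application of the $L^2(\mu)$ Cauchy--Schwarz and triangle inequalities combined with Riesz representation. $\square$
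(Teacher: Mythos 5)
Your proof is correct and takes essentially the same route as the paper: positivity gives $\Lambda(q)^{1/2}$ with $\scp{\psi}{\Lambda(q)\,\psi}=\|\Lambda(q)^{1/2}\psi\|^2$, Cauchy--Schwarz in $\Hilbert$ and then in $L^2(M,\mu)$ handles the cross terms and well-definedness of $B$, and the Riesz lemma finishes. One remark: the weak measurability of $q\mapsto\Lambda(q)^{1/2}$ that you flag as the technical point is not actually needed here, since the only integrands involved, $\scp{\psi}{\Lambda(q)\,\psi}$ and $\scp{\phi}{\Lambda(q)\,\psi}$, are measurable directly from the weak measurability of $\Lambda$ (via polarization), so $\|\Lambda(q)^{1/2}\psi\|=\scp{\psi}{\Lambda(q)\,\psi}^{1/2}$ is a measurable scalar function without any operator-valued measurability of the square root; the paper instead makes explicit the small fact you use tacitly, namely that an everywhere-defined positive operator is self-adjoint, so that the square root exists.
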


\bigskip

\begin{proof}
If $P:\Hilbert \to\Hilbert$ is a positive operator then it is self-adjoint.\footnote{Because $\scp{\phi+\psi}{P(\phi+\psi)} = \scp{P(\phi+\psi)}{\phi+\psi}$ implies $\scp{\phi}{P\,\psi}+ \scp{\psi}{P \, \phi} = \scp{P \,\phi}{\psi} +\scp{P\, \psi}{\phi}$; call this equation 1; consider the same equation with $\I\psi$ instead of $\psi$, and call it equation 2; equation 1 minus $\I$ times equation 2 yields $\scp{\phi}{P\,\psi} = \scp{P \, \phi}{\psi}$.} Therefore, $P^{1/2}$ exists, and $\scp{\psi}{P \, \psi} = \|P^{1/2}\psi\|^2$. As a consequence, setting $P=\Lambda(q)$, if $\psi\in S$ then $q\mapsto \|\Lambda(q)^{1/2}\psi\|$ is a square-integrable function, and thus, if $\phi,\psi\in S$,
\be
\int \bigl|\scp{\phi}{\Lambda(q) \, \psi}\bigr| \, \mu(\D q) \leq
\int \|\Lambda(q)^{1/2}\phi\| \, \|\Lambda(q)^{1/2}\psi\| \, \mu(\D q) < \infty
\ee
by the Cauchy--Schwarz inequality in $\Hilbert$ and that in $L^2(M,\mu)$. This shows that $S$ is a subspace. 

For $\phi,\psi\in S$, set $Q(\psi)= \int \scp{\psi}{\Lambda(q)\,\psi} \, \mu(\D q)$ and 
\be
B(\phi,\psi) = \frac{1}{4} \Bigl(
 Q(\phi + \psi)-
 Q(\phi - \psi)+i
 Q(\phi -i \psi)-i
 Q(\phi +i \psi) \Bigr)
\ee
Then \eqref{sesquiweakint} follows; sesquilinearity and positivity (and, in particular, Hermitian symmetry) follow from \eqref{sesquiweakint}. 
\end{proof}

\begin{lemma}\label{lemma:denseint}
Let $S$ be a dense subspace of $\Hilbert$, $T\in\Bdd(\Hilbert)$, $\Lambda:M \to \Bdd(\Hilbert)$ weakly measurable and $\Lambda(q)$ positive for every $q\in M$. If the equation
\be\label{TintLambda}
\scp{\psi}{T\psi} = \int_M \scp{\psi}{\Lambda(q)\,\psi} \,\mu(\D q)
\ee 
is true for all $\psi\in S$ then it is true for all $\psi\in\Hilbert$. In other words, if \eqref{TintLambda} holds on $S$ then $T=\int \Lambda(q)\,\mu(\D q)$.
\end{lemma}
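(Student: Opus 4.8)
The plan is to extend the identity from the dense subspace $S$ to all of $\Hilbert$ by a limiting argument, using positivity of $\Lambda(q)$ together with Fatou's lemma to control the right-hand side. First I would observe that, since $\Lambda(q)$ is positive, the function $Q(\psi) := \int_M \scp{\psi}{\Lambda(q)\,\psi}\,\mu(\D q) \in [0,\infty]$ is well-defined for every $\psi\in\Hilbert$ (the integrand is nonnegative and measurable by weak measurability). The hypothesis says $\scp{\psi}{T\psi} = Q(\psi)$ for all $\psi\in S$; in particular $Q(\psi) = \scp{\psi}{T\psi} \le \|T\|\,\|\psi\|^2 < \infty$ on $S$, so in the language of Lemma~\ref{lemma:weakint} we have $S \subseteq S_\Lambda$ where $S_\Lambda$ is the subspace on which $Q$ is finite.

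Next, fix an arbitrary $\psi\in\Hilbert$ and choose $\psi_k \in S$ with $\psi_k \to \psi$ in norm. Since $T$ is bounded, $\scp{\psi_k}{T\psi_k} \to \scp{\psi}{T\psi}$. For the right-hand side I would like to pass to the limit inside the integral. The clean way: for each fixed $q$, the map $\phi \mapsto \scp{\phi}{\Lambda(q)\,\phi} = \|\Lambda(q)^{1/2}\phi\|^2$ is continuous (indeed $\|\Lambda(q)^{1/2}\| = \|\Lambda(q)\|^{1/2} < \infty$), so $\scp{\psi_k}{\Lambda(q)\,\psi_k} \to \scp{\psi}{\Lambda(q)\,\psi}$ pointwise in $q$. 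By Fatou's lemma,
\be
Q(\psi) = \int_M \scp{\psi}{\Lambda(q)\,\psi}\,\mu(\D q) \le \liminf_{k\to\infty} \int_M \scp{\psi_k}{\Lambda(q)\,\psi_k}\,\mu(\D q) = \liminf_{k\to\infty} \scp{\psi_k}{T\psi_k} = \scp{\psi}{T\psi}\,,
\ee
so $Q(\psi) < \infty$, i.e.\ $\psi\in S_\Lambda$, and hence $S_\Lambda = \Hilbert$. Thus the sesquilinear form $B$ of Lemma~\ref{lemma:weakint} is defined on all of $\Hilbert$; moreover it is bounded, since $B(\psi,\psi) = Q(\psi) \le \scp{\psi}{T\psi} \le \|T\|\,\|\psi\|^2$ and boundedness of a positive Hermitian form follows from boundedness on the diagonal via the polarization identity. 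By the Riesz representation step in Lemma~\ref{lemma:weakint} there is a positive $T' \in \Bdd(\Hilbert)$ with $\scp{\phi}{T'\psi} = B(\phi,\psi)$ for all $\phi,\psi$; by definition $T' = \int \Lambda(q)\,\mu(\D q)$.

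Finally I would upgrade the inequality to an equality. We have $\scp{\psi}{T'\psi} = Q(\psi) \le \scp{\psi}{T\psi}$ for every $\psi\in\Hilbert$, and equality holds on the dense subspace $S$. But $T$ and $T'$ are both bounded, so $\psi\mapsto \scp{\psi}{T\psi}$ and $\psi\mapsto\scp{\psi}{T'\psi}$ are both norm-continuous; agreeing on a dense set, they agree everywhere, whence $\scp{\psi}{T'\psi} = \scp{\psi}{T\psi}$ for all $\psi$, and therefore $T = T' = \int\Lambda(q)\,\mu(\D q)$. The main obstacle is the interchange of limit and integral, which is precisely why positivity of $\Lambda(q)$ is essential: it licenses Fatou's lemma and simultaneously gives the a~priori bound $Q(\psi_k) = \scp{\psi_k}{T\psi_k} \le \|T\|\,\|\psi_k\|^2$ that keeps the $\liminf$ finite; without positivity the form need not even be defined off $S$.
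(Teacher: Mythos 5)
Your proof is correct, and it takes a genuinely different route from the paper's. The paper argues directly at the level of $L^1(M,\mu)$: it shows that the functions $f_n(q)=\scp{\psi_n}{\Lambda(q)\,\psi_n}$ form a Cauchy sequence in $L^1$ (via the Cauchy--Schwarz inequality in $\Hilbert$ and in $L^2(M,\mu)$, together with the hypothesis on $S$), identifies the $L^1$-limit with the pointwise limit $q\mapsto\scp{\psi}{\Lambda(q)\,\psi}$, and thus obtains the equality $\int\scp{\psi}{\Lambda(q)\,\psi}\,\mu(\D q)=\lim_n\scp{\psi_n}{T\psi_n}=\scp{\psi}{T\psi}$ in one stroke, with no appeal to the Riesz representation. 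You instead settle for the one-sided bound $Q(\psi)\leq\scp{\psi}{T\psi}$ from Fatou, use it only to get $S_\Lambda=\Hilbert$ and boundedness of the form $B$, and then close the gap by invoking the Riesz step of Lemma~\ref{lemma:weakint} to produce $T'=\int\Lambda(q)\,\mu(\D q)\in\Bdd(\Hilbert)$ and comparing the two norm-continuous quadratic forms $\psi\mapsto\scp{\psi}{T\psi}$ and $\psi\mapsto\scp{\psi}{T'\psi}$, which agree on the dense set $S$ and hence everywhere. Your route is shorter and avoids the somewhat lengthy $L^1$-Cauchy estimate; the paper's route is self-contained (it does not lean on the Riesz part of Lemma~\ref{lemma:weakint}) and yields the slightly stronger by-product that $\scp{\psi_n}{\Lambda(\cdot)\,\psi_n}$ actually converges in $L^1$ to $\scp{\psi}{\Lambda(\cdot)\,\psi}$, not merely that the integrals converge. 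One cosmetic remark: for the boundedness of the positive Hermitian form $B$ from its diagonal bound, the cleanest reference is the Cauchy--Schwarz inequality for positive forms, $|B(\phi,\psi)|\leq B(\phi,\phi)^{1/2}B(\psi,\psi)^{1/2}\leq\|T\|\,\|\phi\|\,\|\psi\|$; polarization also works but gives a worse constant.
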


\begin{proof}
For arbitrary $\psi\in\Hilbert$, there is a sequence $(\psi_n)_{n\in\NNN}$ in $S$ with $\psi_n \to\psi$. Since $T$ is bounded, $\scp{\psi_n}{T\psi_n} \to \scp{\psi}{T\psi}$. What we have to show is
\be\label{zuzeigen}
\int_M \scp{\psi_n}{\Lambda(q)\,\psi_n} \,\mu(\D q) \to 
\int_M \scp{\psi}{\Lambda(q)\,\psi} \,\mu(\D q)\,.
\ee
For every $n\in\NNN$, define the function $f_n: M \to [0,\infty)$ by
\be
f_n(q) = \scp{\psi_n}{\Lambda(q) \, \psi_n}\,.
\ee
Let $[f_n]$ denote its equivalence class modulo changes on a $\mu$-null set. Since $\int f_n(q) \, \mu(\D q) = \scp{\psi_n}{T\psi_n}<\infty$, $[f_n] \in L^1(M,\mu)$. The sequence $([f_n])$ is a Cauchy sequence in $L^1$:
\be
\|f_n-f_m\|_1 = \int \Bigl|\scp{\psi_n-\psi_m+\psi_m}{\Lambda(q) (\psi_n-\psi_m+\psi_m)} - \scp{\psi_m}{\Lambda(q) \, \psi_m}  \Bigr|\,\mu(\D q)
\ee
\be
\leq \int \scp{\psi_n-\psi_m}{\Lambda(q) (\psi_n-\psi_m)}\, \mu(\D q) + \int 2\Bigl| \scp{\psi_n-\psi_m}{\Lambda(q) \, \psi_m}\Bigr|\, \mu(\D q)\leq
\ee
[using the Cauchy--Schwarz inequality for $\Hilbert$]
\be
\leq \scp{\psi_n-\psi_m}{T (\psi_n-\psi_m)}
 + \int 2\|\Lambda(q)^{1/2}(\psi_n-\psi_m)\|\,\|\Lambda(q)^{1/2} \psi_m\|\, \mu(\D q)
\ee
[using the Cauchy--Schwarz inequality for $L^2(M,\mu)$]
\be
\leq \|T\|\,\|\psi_n-\psi_m\|^2 + 2\Bigl( \int \|\Lambda(q)^{1/2}(\psi_n-\psi_m)\|^2 \mu(\D q)\Bigr)^{1/2} \Bigl(\int \|\Lambda(q)^{1/2} \psi_m\|^2\, \mu(\D q) \Bigr)^{1/2}
\ee
\be
= \|T\|\,\|\psi_n-\psi_m\|^2 + 2 \scp{\psi_n-\psi_m}{T(\psi_n-\psi_m)}^{1/2} 
\scp{\psi_m}{T\psi_m}^{1/2} 
\ee
\be
\leq \|T\|\,\|\psi_n-\psi_m\|^2 + 2 \|T\|^{1/2}\|\psi_n-\psi_m\|  \|T\|^{1/2}\|\psi_m\| 
\ee
\be
=\|T\| \Bigl(\|\psi_n-\psi_m\| + 2 \|\psi_m\| \Bigr) \|\psi_n-\psi_m\| \to 0 
\ee
as $n,m\to \infty$. Since $([f_n])$ is a Cauchy sequence in the Banach space $L^1(M,\mu)$, it converges, say $[f_n] \to [f]$, and $\scp{\psi_n}{T\psi_n} = \int f_n(q)\, \mu(\D q)\to \int f(q)\, \mu(\D q)$. 
On the other hand, since the $\Lambda(q)$ are bounded, the $f_n$ converge pointwise to $q\mapsto\scp{\psi}{\Lambda(q)\,\psi}$, and $f$ (the $L^1$ limit) must agree with the pointwise limit $\mu$-almost everywhere. Thus, $q\mapsto \scp{\psi}{\Lambda(q)\,\psi}$ is an $L^1$ function, and \eqref{zuzeigen} holds.
\end{proof}

\medskip

Below I collect some lemmas about weak measurability. Most of the following proofs I have learned from Reiner Sch\"atzle (T\"ubingen).

\begin{lemma}\label{lemma:matrixweakmeas}
Let $\{\phi_n: n\in\NNN\}$ be an orthonormal basis of $\Hilbert$. $q\mapsto \Lambda(q)$ is weakly measurable if and only if for all $n,m\in\NNN$, $q \mapsto \Lambda_{nm}(q) := \scp{\phi_n}{\Lambda(q) \, \phi_m}$ is measurable. 
\end{lemma}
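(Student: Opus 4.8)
The plan is to prove both directions separately, with the nontrivial direction being the ``if'' part. The ``only if'' direction is immediate: if $q\mapsto \Lambda(q)$ is weakly measurable, then by definition $q\mapsto \scp{\psi}{\Lambda(q)\,\psi}$ is measurable for every $\psi$, and by the polarization identity \eqref{polarweakmeas} (already recorded in the excerpt) the off-diagonal function $q\mapsto\scp{\phi}{\Lambda(q)\,\psi}$ is measurable for all $\phi,\psi$; specializing to $\phi=\phi_n$, $\psi=\phi_m$ gives measurability of each $\Lambda_{nm}$.

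For the ``if'' direction, suppose every $q\mapsto\Lambda_{nm}(q)$ is measurable. Fix $\psi\in\Hilbert$ and expand $\psi=\sum_m c_m \phi_m$ in the orthonormal basis. The idea is to recover $\scp{\psi}{\Lambda(q)\,\psi}$ as a pointwise limit of measurable functions built from the $\Lambda_{nm}$. Concretely, let $\psi^{(N)} = \sum_{m\leq N} c_m\phi_m$ be the partial sums; then
\be
\scp{\psi^{(N)}}{\Lambda(q)\,\psi^{(N)}} = \sum_{n,m\leq N} \overline{c_n}\, c_m\, \Lambda_{nm}(q)
\ee
is a finite sum of measurable functions, hence measurable for each $N$. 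Since $\Lambda(q)$ is bounded (it lies in $\Bdd(\Hilbert)$) and $\psi^{(N)}\to\psi$ in norm, we have $\scp{\psi^{(N)}}{\Lambda(q)\,\psi^{(N)}}\to\scp{\psi}{\Lambda(q)\,\psi}$ pointwise in $q$. A pointwise limit of measurable functions is measurable, so $q\mapsto\scp{\psi}{\Lambda(q)\,\psi}$ is measurable; as $\psi$ was arbitrary, $q\mapsto\Lambda(q)$ is weakly measurable.

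I expect the main subtlety to be the convergence step: one must be slightly careful that boundedness of each individual operator $\Lambda(q)$ is enough to pass from norm convergence $\psi^{(N)}\to\psi$ to convergence of the quadratic forms $\scp{\psi^{(N)}}{\Lambda(q)\,\psi^{(N)}}$, \emph{pointwise} in $q$ — which it is, since for fixed $q$ the bilinear form $(\phi,\chi)\mapsto\scp{\phi}{\Lambda(q)\,\chi}$ is jointly continuous, with no uniformity in $q$ required. No interchange of limit and integral is needed, so dominated convergence is not invoked; the measurability is purely a pointwise-limit statement, which is the cleanest way to organize the argument. One should also note that the statement is about measurability with respect to the given $\sigma$-algebra $\salg$ on $M$ (not necessarily Borel), but this changes nothing in the argument, since finite sums and pointwise limits preserve $\salg$-measurability for any $\sigma$-algebra.
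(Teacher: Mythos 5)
Your proof is correct and follows essentially the same route as the paper: the paper's proof also expands in the orthonormal basis, writing $\scp{\phi}{\Lambda(q)\,\psi} = \sum_{n,m}\scp{\phi}{\phi_n}\,\Lambda_{nm}(q)\,\scp{\phi_m}{\psi}$ with the series converging for every $q$ (by boundedness of $\Lambda(q)$), and invokes measurability of pointwise limits, together with the easy polarization argument for the converse. Your variant of working with the quadratic form $\scp{\psi^{(N)}}{\Lambda(q)\,\psi^{(N)}}$ via partial sums is just a mild repackaging of the same idea, and your observation that no uniformity in $q$ is needed is exactly the point.
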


\begin{proof}
The ``only if'' part is clear, and the ``if'' part follows from
\be
\scp{\phi}{\Lambda(q)\, \psi} = \sum_{n=1}^\infty \sum_{m=1}^\infty \scp{\phi}{\phi_n} \, \Lambda_{nm}(q) \, \scp{\phi_m}{\psi}\,,
\ee
where the series converges for every $q$, and the fact that the pointwise limit of measurable functions is measurable. 
\end{proof}

\begin{lemma}\label{lemma:Rint}
If $q\mapsto\Lambda(q)$ is weakly measurable and $R,S$, and $T=\int \Lambda(q) \, \mu(\D q)$ are bounded operators then $q\mapsto R\,\Lambda(q) \, S$ is weakly measurable, and
\be\label{Rint}
RTS = \int R\, \Lambda(q) \, S\, \mu(\D q)\,.
\ee
\end{lemma}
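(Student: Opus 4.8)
The plan is to reduce the claim to the already-established weak integral identity by testing against arbitrary vectors and then pushing the bounded operators $R$ and $S$ inside the scalar integral. First I would check weak measurability of $q\mapsto R\,\Lambda(q)\,S$: for fixed $\phi,\psi\in\Hilbert$ we have $\scp{\phi}{R\,\Lambda(q)\,S\,\psi} = \scp{R^*\phi}{\Lambda(q)\,(S\psi)}$, and by the polarization identity \eqref{polarweakmeas} this is a measurable function of $q$ since $q\mapsto\scp{R^*\phi}{\Lambda(q)\,S\psi}$ is a fixed linear combination of the maps $q\mapsto\scp{R^*\phi+\alpha S\psi}{\Lambda(q)(R^*\phi+\alpha S\psi)}$ for $\alpha\in\{1,-1,\I,-\I\}$, each of which is measurable by the weak measurability of $\Lambda$ (here using that $f_\psi$ being measurable for all $\psi$ already gives measurability of $q\mapsto\scp{\phi}{\Lambda(q)\psi}$ for all $\phi,\psi$, as noted just after the definition of weak measurability).

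Next I would establish \eqref{Rint} itself. By the definition \eqref{weakintegraldef} of the weak integral, it suffices to show that for every $\psi\in\Hilbert$,
\begin{equation}
\scp{\psi}{RTS\,\psi} = \int \scp{\psi}{R\,\Lambda(q)\,S\,\psi}\,\mu(\D q)\,.
\end{equation}
Write the left side as $\scp{R^*\psi}{T\,(S\psi)}$. Since $T=\int\Lambda(q)\,\mu(\D q)$ exists, the extended identity \eqref{phiTpsi} (valid whenever $T$ exists, via polarization) applies with $\phi=R^*\psi$ and the vector $S\psi$ in place of $\psi$, giving
\begin{equation}
\scp{R^*\psi}{T\,(S\psi)} = \int \scp{R^*\psi}{\Lambda(q)\,(S\psi)}\,\mu(\D q) = \int \scp{\psi}{R\,\Lambda(q)\,S\,\psi}\,\mu(\D q)\,,
\end{equation}
which is exactly what is needed. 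One should also note that the integrand on the right is absolutely integrable, which is part of what \eqref{phiTpsi} asserts, so the right-hand side is well-defined.

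I do not expect any serious obstacle here: the statement is essentially a bilinearity/compatibility property of the weak integral, and the only subtlety is that $R$ and $S$ act on different sides, so one must be careful to move $R$ across as $R^*$ on the bra and keep $S\psi$ as a single vector when invoking \eqref{phiTpsi}. The mild point worth stating explicitly is that \eqref{phiTpsi} requires $T$ to exist, which is given by hypothesis, so no extra existence argument for the right-hand integral is needed beyond what \eqref{phiTpsi} already provides. If one wanted to be maximally self-contained one could instead invoke Lemma~\ref{lemma:denseint} on a dense subspace, but that is unnecessary since $T$ is assumed bounded and to equal the weak integral on all of $\Hilbert$.
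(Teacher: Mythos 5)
Your proposal is correct and follows essentially the same route as the paper: the key step in both is to test against vectors, rewrite $\scp{\phi}{RTS\,\psi}$ as $\scp{R^*\phi}{T\,(S\psi)}$, and invoke \eqref{phiTpsi} with $\phi$ replaced by $R^*\phi$ and $\psi$ by $S\psi$, which also supplies the absolute integrability of the integrand. The only (inessential) difference is in the measurability step, where you use the polarization remark \eqref{polarweakmeas} directly while the paper expands matrix elements in an orthonormal basis; both are equally valid.
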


\begin{proof}
$q\mapsto R\, \Lambda(q)\, S$ is weakly measurable because, if $\{\phi_n:n\in\NNN\}$ is an orthonormal basis, $\scp{\phi_n}{R\, \Lambda(q) \,S\, \phi_m} = \sum_{k=1}^\infty \sum_{\ell=1}^\infty \scp{\phi_n}{R\,\phi_k} \scp{\phi_k}{\Lambda(q) \, \phi_\ell}\scp{\phi_\ell}{S\, \phi_m}$, as $R$ and $\Lambda(q)$ are bounded.

To check \eqref{Rint}, note that since $R$ is bounded, its adjoint $R^*$ is defined on all of $\Hilbert$ and is bounded too, so that $\scp{R^*\phi}{\Lambda(q) \,S\, \psi}$ exists for all $\phi,\psi$ and $q$, and is integrable by \eqref{phiTpsi} with $\phi$ replaced by $R^*\phi$ and $\psi$ by $S\,\psi$:
\be
\scp{R^*\phi}{T\,S\,\psi} =\int \scp{R^*\phi}{\Lambda(q) \, S\,\psi} \, \mu(\D q) = \int \scp{\phi}{R\,\Lambda(q)\,S \,\psi} \,\mu(\D q)\,.
\ee
The left hand side equals $\scp{\phi}{RTS\,\psi}$, and the right hand side $\scp{\phi}{\int R\Lambda(q)S\, \mu(\D q) \, \psi}$.
\end{proof}

\begin{lemma}\label{lemma:productweakmeas}
If $\Lambda,\Lambda': M \to \Bdd(\Hilbert)$ are both weakly measurable then so is their product, $q\mapsto \Lambda(q) \, \Lambda'(q)$.
\end{lemma}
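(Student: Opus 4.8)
The plan is to reduce the claim to the matrix-element criterion of Lemma~\ref{lemma:matrixweakmeas}. First observe that for every $q\in M$ the composition $\Lambda(q)\,\Lambda'(q)$ is again a bounded operator (with $\|\Lambda(q)\,\Lambda'(q)\|\leq \|\Lambda(q)\|\,\|\Lambda'(q)\|$), so that $q\mapsto \Lambda(q)\,\Lambda'(q)$ is an honest $\Bdd(\Hilbert)$-valued function and the assertion ``weakly measurable'' makes sense. Now fix an orthonormal basis $\{\phi_n:n\in\NNN\}$ of $\Hilbert$; by Lemma~\ref{lemma:matrixweakmeas} it suffices to show that $q\mapsto \scp{\phi_n}{\Lambda(q)\,\Lambda'(q)\,\phi_m}$ is measurable for all $n,m\in\NNN$.

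To this end I would expand $\Lambda'(q)\phi_m$ in the basis, $\Lambda'(q)\phi_m=\sum_{k=1}^\infty \scp{\phi_k}{\Lambda'(q)\,\phi_m}\,\phi_k$, the series converging in the norm of $\Hilbert$ since $\Lambda'(q)\phi_m\in\Hilbert$. Because $\Lambda(q)$ is bounded, hence continuous, it may be applied term by term, and since $\scp{\phi_n}{\cdot}$ is continuous the inner product may likewise be taken term by term, giving the pointwise-convergent (for each fixed $q$) series
\be
\scp{\phi_n}{\Lambda(q)\,\Lambda'(q)\,\phi_m} = \sum_{k=1}^\infty \scp{\phi_n}{\Lambda(q)\,\phi_k}\,\scp{\phi_k}{\Lambda'(q)\,\phi_m}\,.
\ee
Each summand is a product of two measurable functions of $q$: the first is measurable because $\Lambda$ is weakly measurable (using the polarization identity \eqref{polarweakmeas}), the second because $\Lambda'$ is. Hence the partial sums are measurable, and the pointwise limit of measurable functions is measurable, so $q\mapsto \scp{\phi_n}{\Lambda(q)\,\Lambda'(q)\,\phi_m}$ is measurable for all $n,m$. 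Lemma~\ref{lemma:matrixweakmeas} then yields the weak measurability of $q\mapsto \Lambda(q)\,\Lambda'(q)$.

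There is essentially no serious obstacle here; the one point deserving a word of care is the interchange of $\Lambda(q)$ (and of $\scp{\phi_n}{\cdot}$) with the infinite sum over $k$. This is licensed purely by the boundedness of $\Lambda(q)$ and requires no uniformity in $q$, since measurability is verified at each $q$ separately — which is exactly why weak measurability of the factors, rather than anything stronger, is enough.
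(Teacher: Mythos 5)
Your proof is correct and follows essentially the same route as the paper: expand the matrix element $\scp{\phi_n}{\Lambda(q)\,\Lambda'(q)\,\phi_m}$ as the series $\sum_k \scp{\phi_n}{\Lambda(q)\,\phi_k}\scp{\phi_k}{\Lambda'(q)\,\phi_m}$ (convergent pointwise by the boundedness of $\Lambda(q)$), note each summand is measurable, and conclude via measurability of sums, products, and pointwise limits together with Lemma~\ref{lemma:matrixweakmeas}. Your extra care about interchanging $\Lambda(q)$ and $\scp{\phi_n}{\cdot}$ with the sum is a welcome explicit justification of a step the paper leaves implicit.
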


\begin{proof}
For every $q$,
\be
\scp{\phi_n}{\Lambda(q)\, \Lambda'(q) \, \phi_m} = \sum_{\ell=1}^\infty \Lambda_{n\ell}(q) \, \Lambda_{\ell m}(q)
\ee
because $\Lambda(q)$ is bounded. The right hand side is a measurable function of $q$ because products, sums, and limits of measurable functions are measurable.
\end{proof}

\begin{lemma}\label{lemma:normweakmeas}
If $\Lambda: M \to \Bdd(\Hilbert)$ is weakly measurable and every $\Lambda(q)$ is self-adjoint then $q\mapsto \|\Lambda(q)\|$ is measurable.
\end{lemma}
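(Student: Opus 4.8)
The plan is to realize $q\mapsto\|\Lambda(q)\|$ as a countable supremum of functions that the hypothesis already makes measurable. Since $\Hilbert$ is separable, the closed unit ball $\{\psi\in\Hilbert:\|\psi\|\le1\}$ has a countable dense subset $D=\{\psi_k:k\in\NNN\}$. For each fixed $k$, the function $q\mapsto|\scp{\psi_k}{\Lambda(q)\,\psi_k}|$ is measurable by weak measurability of $\Lambda$, so $q\mapsto\sup_{k\in\NNN}|\scp{\psi_k}{\Lambda(q)\,\psi_k}|$ is measurable as a countable supremum of measurable functions. Hence it suffices to prove, for every $q\in M$,
\be
\|\Lambda(q)\|=\sup_{k\in\NNN}\bigl|\scp{\psi_k}{\Lambda(q)\,\psi_k}\bigr|\,.
\ee

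Two standard facts will give this. First, for a bounded \emph{self-adjoint} operator $A$ on $\Hilbert$ one has the numerical-radius identity $\|A\|=\sup_{\|\psi\|\le1}|\scp{\psi}{A\,\psi}|$; the inequality $\ge$ is Cauchy--Schwarz, and $\le$ follows from the polarization identity $\scp{\psi+\phi}{A(\psi+\phi)}-\scp{\psi-\phi}{A(\psi-\phi)}=4\,\Re\scp{\psi}{A\,\phi}$ combined with the parallelogram law and a choice of phase for $\phi$, self-adjointness entering through $\scp{\phi}{A\,\psi}=\overline{\scp{\psi}{A\,\phi}}$. Second, because $A$ is bounded, $\psi\mapsto\scp{\psi}{A\,\psi}$ is continuous on the bounded set $\{\|\psi\|\le1\}$ — indeed $|\scp{\psi}{A\,\psi}-\scp{\psi_k}{A\,\psi_k}|\le2\|A\|\,\|\psi-\psi_k\|$ there — so its supremum over the unit ball equals its supremum over the dense subset $D$. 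Applying both with $A=\Lambda(q)$ yields the displayed equality, and with it the lemma.

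I do not expect a serious obstacle here; the only point deserving attention is that self-adjointness of each $\Lambda(q)$ is what licenses the numerical-radius identity (which fails for general bounded operators). If one prefers, this use of the hypothesis can be bypassed: $q\mapsto\Lambda(q)^*$ is weakly measurable (as noted above) and hence, by Lemma~\ref{lemma:productweakmeas}, so is $q\mapsto\Lambda(q)^*\Lambda(q)$, which is positive; then $\|\Lambda(q)\|^2=\|\Lambda(q)^*\Lambda(q)\|$ together with the identity $\|P\|=\sup_{\|\psi\|\le1}\scp{\psi}{P\,\psi}$ for positive $P$ (and the same countable-dense-subset reduction) shows that $q\mapsto\|\Lambda(q)\|^2$, hence $q\mapsto\|\Lambda(q)\|$, is measurable. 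Either way, the essential ingredient is the separability of $\Hilbert$.
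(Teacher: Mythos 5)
Your argument is correct and is essentially the paper's own proof: both reduce the claim to the numerical-radius identity $\|A\|=\sup_{\|\psi\|\le 1}|\scp{\psi}{A\,\psi}|$ for bounded self-adjoint $A$ (which the paper simply cites from Rudin) and then replace the supremum over the unit ball/sphere by one over a countable dense subset via the continuity of $\psi\mapsto\scp{\psi}{A\,\psi}$, exploiting separability. Your closing remark that one could instead pass to $\Lambda(q)^*\Lambda(q)$ and drop the self-adjointness hypothesis is a valid (slightly more general) variant, but the main route matches the paper.
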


\begin{proof}
For bounded, self-adjoint $T$, it is known \cite[Thm.~12.25]{Rud} that
\be
\|T\| = \sup_{\|\psi\|=1} \bigl|\scp{\psi}{T\psi}\bigr|\,.
\ee
Let $S$ be any countable dense subset of the unit sphere of $\Hilbert$. Then
\be
\sup_{\|\psi\|=1} \bigl|\scp{\psi}{T\psi}\bigr| = \sup_{\psi \in S} \bigl|\scp{\psi}{T\psi}\bigr|\,.
\ee
The $\geq$ relation is clear, and for the $\leq$ relation consider any $\psi\in\Hilbert$ with $\|\psi\|=1$ and note that there is a sequence $(\psi_m)\subseteq S$ with $\psi_m\to\psi$ and therefore, by the boundedness of $T$, $\scp{\psi_m}{T\psi_m} \to\scp{\psi}{T\psi}$; as a consequence, for every $\varepsilon>0$,
\be
\bigl|\scp{\psi}{T\psi}\bigr|-\varepsilon \leq \bigl|\scp{\psi_m}{T\psi_m}\bigr|
\ee
for sufficiently large $m$. Thus,
\be
\|\Lambda(q)\| = \sup_{\psi\in S} \bigl|\scp{\psi}{\Lambda(q)\,\psi}\bigr|\,,
\ee
and the supremum of countably many measurable functions is measurable.
\end{proof}

\begin{lemma}\label{lemma:invweakmeas}
If $\Lambda: M \to \Bdd(\Hilbert)$ is weakly measurable and $\Lambda(q)$ is positive and bijective for every $q\in M$ then $q\mapsto \Lambda(q)^{-1}$ is weakly measurable.
\end{lemma}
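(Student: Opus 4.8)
The plan is to write $\Lambda(q)^{-1}$ as a pointwise (in $q$) limit of regularized inverses $(\Lambda(q)+\epsilon I)^{-1}$, each of which in turn admits a norm-convergent Neumann series whose terms are manifestly weakly measurable. Fix $\psi\in\Hilbert$; by the polarization identity \eqref{polarweakmeas} it is enough to show that $q\mapsto\scp{\psi}{\Lambda(q)^{-1}\psi}$ is measurable.

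First I would treat the regularized inverse for a fixed $\epsilon>0$. Each $\Lambda(q)$ is positive, hence self-adjoint, so Lemma~\ref{lemma:normweakmeas} gives that $q\mapsto\|\Lambda(q)\|$ is measurable; set $M(q)=\|\Lambda(q)\|+\epsilon$ and $B(q)=M(q)\,I-\Lambda(q)-\epsilon I=\|\Lambda(q)\|\,I-\Lambda(q)$. Then $B(q)$ is positive with $\|B(q)\|\le\|\Lambda(q)\|<M(q)$, so the Neumann series
\be
(\Lambda(q)+\epsilon I)^{-1}=\bigl(M(q)\,I-B(q)\bigr)^{-1}=\frac{1}{M(q)}\sum_{k=0}^\infty\Bigl(\frac{B(q)}{M(q)}\Bigr)^{k}
\ee
converges in operator norm. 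Now $q\mapsto B(q)$ is weakly measurable, since $\scp{\psi}{B(q)\psi}=\|\Lambda(q)\|\,\|\psi\|^2-\scp{\psi}{\Lambda(q)\psi}$ is a difference of measurable functions, and hence $q\mapsto B(q)^k$ is weakly measurable for every $k$ by iterating Lemma~\ref{lemma:productweakmeas}. Therefore
\be
\scp{\psi}{(\Lambda(q)+\epsilon I)^{-1}\psi}=\sum_{k=0}^\infty\frac{1}{M(q)^{k+1}}\,\scp{\psi}{B(q)^k\,\psi}
\ee
is, for each $q$, a convergent series of measurable functions of $q$, hence measurable.

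Next I would let $\epsilon\downarrow 0$ along $\epsilon=1/n$. Since $\Lambda(q)$ is bounded and bijective, the bounded inverse theorem gives that $\Lambda(q)^{-1}$ is bounded, so $0\notin\sigma(\Lambda(q))$ and positivity forces $\Lambda(q)\ge c(q)\,I$ for some $c(q)>0$. By the spectral calculus,
\be
\bigl\|(\Lambda(q)+\tfrac1n I)^{-1}-\Lambda(q)^{-1}\bigr\|=\bigl\|\tfrac1n(\Lambda(q)+\tfrac1n I)^{-1}\Lambda(q)^{-1}\bigr\|\le\frac{1/n}{c(q)+1/n}\,\|\Lambda(q)^{-1}\|\xrightarrow[n\to\infty]{}0\,,
\ee
so in particular $\scp{\psi}{(\Lambda(q)+\tfrac1n I)^{-1}\psi}\to\scp{\psi}{\Lambda(q)^{-1}\psi}$ for every $q$. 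Thus $q\mapsto\scp{\psi}{\Lambda(q)^{-1}\psi}$ is a pointwise limit of measurable functions, hence measurable; as $\psi$ was arbitrary, $q\mapsto\Lambda(q)^{-1}$ is weakly measurable.

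The one point to watch, and the step I would be most careful about, is that the lower bound $c(q)$ — and hence the speed of the convergence $(\Lambda(q)+\tfrac1n I)^{-1}\to\Lambda(q)^{-1}$ — genuinely depends on $q$: there is in general no uniform lower bound on the family $\{\Lambda(q)\}$, so one cannot expect norm convergence uniform in $q$, and correspondingly no single global Neumann series represents $\Lambda(q)^{-1}$ directly. This is harmless for the present purpose, since weak measurability only requires the limiting inner products to exist pointwise in $q$; the $\epsilon I$ regularization is precisely what supplies, for each fixed $\epsilon$, the uniform-in-$q$ spectral gap needed to run the Neumann series argument.
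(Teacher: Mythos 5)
Your proof is correct, but it organizes the argument differently from the paper. The paper handles the lack of uniform spectral bounds by partitioning the base space: it sets $A_n=\{q\in M:\ \sigma(\Lambda(q))\subseteq[\tfrac1n,n]\}$, proves that each $A_n$ is measurable by testing $\scp{\psi}{\Lambda(q)\,\psi}\in[\tfrac1n,n]$ over a countable dense subset of the unit sphere, and then on each $A_n$ expands $\Lambda(q)^{-1}$ directly as the Neumann series $\tfrac1n\sum_k\bigl(I-\tfrac1n\Lambda(q)\bigr)^k$, which converges in norm there; since $\bigcup_n A_n=M$, weak measurability follows piecewise. You instead regularize in the operator variable: for fixed $\epsilon>0$ you get a globally valid (pointwise in $q$, with $q$-dependent rate) Neumann representation of $(\Lambda(q)+\epsilon I)^{-1}$ using the measurability of $\|\Lambda(q)\|$ from Lemma~\ref{lemma:normweakmeas}, and then pass to the limit $\epsilon\downarrow 0$ via the pointwise spectral gap $\Lambda(q)\ge c(q)I$ supplied by bijectivity and the bounded inverse theorem. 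What each route buys: yours avoids having to prove measurability of the spectral-bound sets $A_n$ (at the cost of an extra limiting step and the norm-measurability lemma), while the paper's avoids any $\epsilon\to0$ limit by localizing in $q$. It is worth noting that your $\epsilon I$-regularization-plus-limit device is essentially the strategy the paper itself uses for the square root in Lemma~\ref{lemma:sqrtweakmeas}, so your proof transfers that template to the inverse; both arguments rest on the same ingredients (Neumann series, Lemma~\ref{lemma:productweakmeas}, closure of measurable functions under pointwise limits), and your concluding caveat that the convergence is only pointwise in $q$, which suffices for weak measurability, is exactly the right point to flag.
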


\begin{proof}
A positive operator $\Lambda(q)$ that is defined on all of $\Hilbert$ is self-adjoint, and if it is bounded and bijective then its spectrum must be contained in some interval $[a,b]$ with $0<a<b<\infty$. For every $n\in\NNN$ let $A_n\subseteq M$ be the set of those $q$ for which the spectrum of $\Lambda(q)$ is contained in $[1/n,n]$. To see that this set is measurable, choose any countable dense subset $S$ of the unit sphere of $\Hilbert$ and define
\be
A_n' := \bigl\{ q\in M: \scp{\psi}{\Lambda(q)\,\psi}\in [\tfrac1n,n] \forall \psi\in S \bigr\}\,.
\ee
This set is measurable because it is the countable intersection of the measurable sets $A'_n(\psi)=\{q\in M: \scp{\psi}{\Lambda(q)\,\psi}\in [\tfrac1n,n]\}$. But in fact, $A_n= A'_n$: $A_n\subseteq A'_n$ is clear, and if $q\in A'_n$ and $\psi\in\Hilbert$ has norm 1 then there is a sequence $(\psi_m)$ in $S$ with $\psi_m \to \psi$, and by the boundedness of $\Lambda(q)$ also $\scp{\psi_m}{\Lambda(q)\, \psi_m}\to\scp{\psi}{\Lambda(q)\,\psi}$, and therefore $\scp{\psi}{\Lambda(q)\,\psi}\in[\tfrac1n,n]$. Since $\cup_n A_n =M$, it suffices to show on $A_n$ that $q\mapsto \Lambda(q)^{-1}$ is weakly measurable.

For $q\in A_n$, consider $1/n$ times the Neumann series applied to $I-\tfrac{1}{n} \Lambda(q)$,
\be\label{Neumann}
\frac{1}{n}\sum_{k=0}^\infty \bigl(I-\tfrac{1}{n} \Lambda(q)\bigr)^k\,.
\ee
The series converges in norm because $\|I-\tfrac{1}{n} \Lambda(q)\|\leq 1-1/n^2$, and since, in case of convergence, $\sum T^k = (I-T)^{-1}$, \eqref{Neumann} is the inverse of $\Lambda(q)$. As a consequence, \eqref{Neumann} also converges weakly, and
\be
\scp{\psi}{\Lambda(q)^{-1}\,\psi} = \frac1n\sum_{k=0}^\infty \scp{\psi}{\bigl(I-\tfrac{1}{n} \Lambda(q)\bigr)^k\,\psi}\,.
\ee
Each term on the right hand side is a measurable function of $q\in A_n$ by Lemma~\ref{lemma:productweakmeas}, and thus so is the series. 
\end{proof}

\begin{lemma}\label{lemma:sqrtweakmeas}
If $\Lambda: M \to \Bdd(\Hilbert)$ is weakly measurable and $\Lambda(q) \geq 0$ for every $q\in M$ then $q\mapsto \Lambda(q)^{1/2}$ is weakly measurable.
\end{lemma}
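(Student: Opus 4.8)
The plan is to reduce the claim to the scalar fact that $t\mapsto\sqrt t$ can be uniformly approximated by polynomials on a bounded interval, together with the polynomial functional calculus for the bounded positive (hence self-adjoint) operators $\Lambda(q)$. The only complication is that the operator norm $\|\Lambda(q)\|$, and hence the length of the interval on which $\sqrt t$ must be approximated, varies with $q$; so the first move is to slice $M$ into countably many measurable pieces on which a single polynomial approximation works.

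Concretely, I would argue as follows. Since each $\Lambda(q)$ is positive it is self-adjoint, so by Lemma~\ref{lemma:normweakmeas} the function $q\mapsto\|\Lambda(q)\|$ is measurable, and therefore the sets $A_n:=\{q\in M:\|\Lambda(q)\|\le n\}$ ($n\in\NNN$) are measurable and cover $M$. It suffices to prove weak measurability of $q\mapsto\Lambda(q)^{1/2}$ on each $A_n$ separately, because then for every $\psi\in\Hilbert$ the function $q\mapsto\scp{\psi}{\Lambda(q)^{1/2}\psi}$ is measurable on each $A_n$ and hence on $M=\bigcup_n A_n$. Fix $n$; on the interval $[0,n]$ the continuous function $\sqrt t$ is, by the Weierstrass approximation theorem, the uniform limit of a sequence of polynomials $p_k$. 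For $q\in A_n$ the spectrum of $\Lambda(q)$ lies in $[0,n]$, so the spectral calculus gives
\be
\bigl\|p_k(\Lambda(q))-\Lambda(q)^{1/2}\bigr\|=\sup_{t\in\sigma(\Lambda(q))}\bigl|p_k(t)-\sqrt t\,\bigr|\le\sup_{t\in[0,n]}\bigl|p_k(t)-\sqrt t\,\bigr|\longrightarrow0\,.
\ee
Each $p_k(\Lambda(q))$ is a finite linear combination of powers $\Lambda(q)^j$ (with $\Lambda(q)^0=I$), and each $q\mapsto\Lambda(q)^j$ is weakly measurable by Lemma~\ref{lemma:productweakmeas}, while sums and scalar multiples of weakly measurable functions are plainly weakly measurable; hence $q\mapsto p_k(\Lambda(q))$ is weakly measurable on $A_n$. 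For fixed $\psi$, the norm convergence above forces $\scp{\psi}{p_k(\Lambda(q))\psi}\to\scp{\psi}{\Lambda(q)^{1/2}\psi}$ pointwise on $A_n$, so the right-hand side is a pointwise limit of measurable functions and thus measurable. This establishes weak measurability on $A_n$, and the proof is finished.

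The only genuinely delicate point — and the reason the argument is structured this way — is that no single polynomial approximates $\sqrt t$ uniformly on all of $[0,\infty)$, so one cannot bypass cutting $M$ along the values of $\|\Lambda(q)\|$; this is exactly why Lemma~\ref{lemma:normweakmeas} must be invoked first to make the pieces $A_n$ measurable. (If one preferred an explicit approximation, one could replace the appeal to Weierstrass by the elementary recursion $p_0=0$, $p_{k+1}(t)=p_k(t)+\tfrac12(t-p_k(t)^2)$, suitably rescaled to $[0,n]$, whose monotone uniform convergence to $\sqrt t$ is self-contained; the rest of the argument is unchanged.)
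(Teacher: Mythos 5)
Your proof is correct. It shares the paper's skeleton --- slice $M$ into the measurable sets $A_n=\{q:\|\Lambda(q)\|\le n\}$ via Lemma~\ref{lemma:normweakmeas}, approximate the square root by polynomials in $\Lambda(q)$ (which are weakly measurable by Lemma~\ref{lemma:productweakmeas}), and use that pointwise limits of measurable functions are measurable --- but your approximation step is genuinely different and somewhat cleaner. The paper expands $(1+t)^{1/2}$ in its binomial series, which converges in operator norm only for $\|T\|<1$; since $\sqrt{t}$ is not analytic at $t=0$, the series cannot be applied to $\tfrac1n\Lambda(q)$ directly, so the paper applies it to $\varepsilon I+\tfrac1n\Lambda(q)$ and must then perform a second limit $\varepsilon\to 0$, supported by the separate norm estimate $\|(\varepsilon I+S)^{1/2}-S^{1/2}\|\le\varepsilon^{1/2}$. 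You instead approximate $\sqrt{t}$ uniformly on $[0,n]$ by Weierstrass (or by the explicit Newton-type recursion you mention), which needs only continuity of $\sqrt{t}$ at $0$, not analyticity; combined with the continuous functional calculus bound $\|p_k(\Lambda(q))-\Lambda(q)^{1/2}\|\le\sup_{t\in[0,n]}|p_k(t)-\sqrt{t}|$, this gives a single limit and makes the $\varepsilon$-regularization and the continuity lemma \eqref{sqrtcont} unnecessary. Both routes invoke the functional calculus at the same spot (identifying the polynomial limit with the unique positive square root), so the logical prerequisites are the same; your version buys brevity, the paper's buys an explicit series and an explicit quantitative bound that it reuses nowhere else.
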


\begin{proof}
For $n\in\NNN$ set $A_n = \{q\in M: \|\Lambda(q)\|\leq n\}$. By Lemma~\ref{lemma:normweakmeas} this is a measurable set. Since $\cup_n A_n = M$, it suffices to show on $A_n$ that $\Lambda(q)^{1/2}$ is weakly measurable.  We use the Taylor expansion of the square root function $x\mapsto x^{1/2}$ around $x=1$,
\be
(1+t)^{1/2} = \sum_{k=0}^\infty \binom{1/2}{k} \, t^k\,,
\ee
where
\be
\binom{\alpha}{k} = \frac{\alpha(\alpha-1)\cdots(\alpha-k+1)}{k!}\,.
\ee
The series converges absolutely for $|t|<1$, and thus the corresponding operator series
\be\label{sqrtseries}
\sum_{k=0}^\infty \binom{1/2}{k} \, T^k
\ee
converges in norm for self-adjoint $T$ with $\|T\|<1$. In this case (in which $I+T\geq0$), we obtain from the functional calculus for self-adjoint operators that \eqref{sqrtseries} equals indeed $(I+T)^{1/2}$.

Now let $0<\varepsilon<1/2$ and $q\in A_n$, and set
\be
T= \tfrac{1}{n}\Lambda(q) - (1-\varepsilon)I\,,
\ee
so that $I+T = \varepsilon I + \tfrac{1}{n} \Lambda(q)$. Then $-(1-\varepsilon)I \leq T \leq \tfrac{1}{n}\Lambda(q) -\tfrac{1}{2} I \leq I-\tfrac{1}{2} I=\tfrac{1}{2}I$ and thus $\|T\| \leq 1-\varepsilon$. Thus,
\be
\Bigl(\varepsilon I + \tfrac{1}{n}\Lambda(q)\Bigr)^{1/2} = 
\sum_{k=0}^\infty \binom{1/2}{k} T^k = 
\sum_{k=0}^\infty \binom{1/2}{k} \Bigl( \tfrac{1}{n}\Lambda(q) - (1-\varepsilon)I \Bigr)^k\,.
\ee
From this we can conclude with Lemma~\ref{lemma:productweakmeas} that $q\mapsto\bigl(\varepsilon I + \tfrac{1}{n}\Lambda(q)\bigr)^{1/2}$ is weakly measurable. Since limits of measurable functions are measurable, it only remains to show that 
\be
\Bscp{\psi}{\bigl(\varepsilon I + \tfrac{1}{n}\Lambda(q)\bigr)^{1/2}\,\psi} \to 
\Bscp{\psi}{\tfrac{1}{\sqrt{n}}\Lambda(q)^{1/2} \,\psi}
\text{ as }\varepsilon\to 0\,. 
\ee
Indeed, for any positive bounded operator $S$, this convergence statement holds even in norm:
\be\label{sqrtcont}
\Bigl\|(\varepsilon I + S)^{1/2} - S^{1/2}\Bigr\| \to 0\text{ as }\varepsilon \to 0\,.
\ee
To see this, set $R_\pm = (\varepsilon I +S)^{1/2} \pm S^{1/2}$; note $R_+ \geq \varepsilon^{1/2} I$, so that $R_+$ is bijective and $\|R_+^{-1}\|\leq \varepsilon^{-1/2}$; note $R_+ R_- = \varepsilon I +S-S = \varepsilon I$ (since $(\varepsilon I+S)^{1/2}$ and $S^{1/2}$ commute because $\varepsilon I +S$ and $S$ commute); thus $R_- = \varepsilon R_+^{-1}$. As a consequence, $\|R_-\| = \varepsilon \|R_+^{-1}\| \leq \varepsilon^{1/2} \to 0$ as $\varepsilon \to 0$, which is \eqref{sqrtcont}.
\end{proof}

\subsection{POVMs}
\label{sec:POVM}

A relevant mathematical concept for GRW theories is that of POVM (positive-operator-valued measure). In this section, we recall the definition of POVM and a theorem about POVMs that we need, an analog of the Kolmogorov extension theorem \cite{Tum07c}. 
%

\begin{defn}
A \textbf{POVM} (positive operator valued measure) on the measurable space $(\Omega,\salg)$ acting on $\Hilbert$ is a mapping $\povm:\salg \to \Bdd(\Hilbert)$ from a $\sigma$-algebra $\salg$ on the set $\Omega$ such that
\begin{itemize}
\item[(i)] $\povm(\Omega) = I$, 
\item[(ii)] $\povm(A) \geq 0$ 
for every $A \in \salg$, and
\item[(iii)] (weak $\sigma$-additivity) for any sequence of pairwise disjoint sets $A_1, A_2, \ldots \in \salg$
\begin{equation}\label{sigmaadditive}
    \povm\Bigl( \bigcup_{i=1}^\infty A_i \Bigr) = \sum_{i=1}^\infty \povm(A_i)\,,
\end{equation}
where the sum on the right hand side converges weakly, i.e., $\sum_i \scp{\psi}{\povm(A_i)\, \psi}$ converges, for every $\psi\in\Hilbert$, to $\scp{\psi}{\povm(\cup_i A_i)\, \psi}$.
\end{itemize}
\end{defn}

If $\povm$ is a POVM on $(\Omega,\salg)$ and $\psi\in\Hilbert$ with $\|\psi\|=1$, then $A \mapsto \scp{\psi}{\povm(A)\, \psi}$ is a probability measure on $(\Omega,\salg)$. 





\bigskip

We quote a theorem that we need from \cite{Tum07c} (see there for the proof), an analog of the Kolmogorov extension theorem for POVMs. Recall that a \emph{Borel space} is a measurable space isomorphic to a Borel subset of $[0,1]$; in particular, any Polish space with its Borel $\sigma$-algebra is a Borel space \cite{Kal97}.

\begin{thm}\label{thm:Kol}
Let $(M,\salg)$ be a Borel space and $\povm_n(\cdot)$, for every $n\in\NNN$, a POVM on $(M^n,\salg^{\otimes n})$. If the family $\povm_n(\cdot)$ satisfies the consistency property
\begin{equation}\label{FPOVMconsistent}
  \povm_{n+1}(A \times M) = \povm_n(A) \quad \forall A\in\salg^{\otimes n}
\end{equation}
then there exists a unique POVM $\povm(\cdot)$ on $(M^\NNN, \salg^{\otimes \NNN})$ (where $\salg^{\otimes\NNN}$ is the $\sigma$-algebra generated by the cylinder sets) such that for all $n\in\NNN$ and all sets $A\in\salg^{\otimes n}$,
\begin{equation}\label{FPOVMcyl}
  \povm_n(A) = \povm(A \times M^\NNN) \,.
\end{equation}
Moreover, for every $\psi\in \Hilbert$ with $\|\psi\|=1$ there exists a unique probability measure $\mu^\psi$ on $(M^\NNN,\salg^{\otimes\NNN})$ such that for all $n\in\NNN$ and all sets $A\in\salg^{\otimes n}$, $\mu^\psi(A\times M^\NNN) = \scp{\psi}{\povm_n(A)\,\psi}$, and in fact $\mu^\psi(\cdot) = \scp{\psi}{\povm(\cdot)\,\psi}$.
\end{thm}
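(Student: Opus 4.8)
The plan is to reduce the operator-valued statement to the classical (scalar) Kolmogorov extension theorem and then to reassemble the operator-valued set function from its quadratic forms by a monotone-class argument. First I would fix a unit vector $\psi\in\Hilbert$: since each $\povm_n(\cdot)$ is a POVM, $A\mapsto\mu^\psi_n(A):=\scp{\psi}{\povm_n(A)\,\psi}$ is a probability measure on $(M^n,\salg^{\otimes n})$, and the hypothesis \eqref{FPOVMconsistent} says precisely that $(\mu^\psi_n)_{n\in\NNN}$ is a consistent (projective) family. Because $(M,\salg)$ is a Borel space, the classical Kolmogorov extension theorem (e.g.\ \cite{Kal97}) yields a unique probability measure $\mu^\psi$ on $(M^\NNN,\salg^{\otimes\NNN})$ with $\mu^\psi(A\times M^\NNN)=\mu^\psi_n(A)$ for all $n$ and $A\in\salg^{\otimes n}$; uniqueness of $\mu^\psi$ with this property is automatic, since the cylinder sets form a generating $\pi$-system. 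This will handle the ``moreover'' clause about $\mu^\psi$ once $\povm$ is constructed and shown to induce $\mu^\psi$.

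Next I would recover the operator $\povm(A)$ from the numbers $\mu^\psi(A)$. Let $\mathcal D$ be the collection of all $A\in\salg^{\otimes\NNN}$ admitting a bounded operator $\povm(A)$ with $0\le\povm(A)\le I$ and $\scp{\psi}{\povm(A)\,\psi}=\mu^\psi(A)$ for every unit $\psi$; such an operator is necessarily unique, since a bounded operator is determined by its quadratic form (polarization; cf.\ \eqref{polarweakmeas}), and the identity then extends to all $\psi\in\Hilbert$ by homogeneity. I claim $\mathcal D$ is a Dynkin system containing the $\pi$-system of cylinder sets, so that $\mathcal D=\salg^{\otimes\NNN}$ by Dynkin's $\pi$--$\lambda$ theorem. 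Indeed: for a cylinder $A=B\times M^\NNN$ with $B\in\salg^{\otimes n}$, put $\povm(A):=\povm_n(B)$, which is well defined by \eqref{FPOVMconsistent} and has the required quadratic form; in particular $M^\NNN\in\mathcal D$ with $\povm(M^\NNN)=\povm_1(M)=I$. If $A\in\mathcal D$, then $I-\povm(A)$ serves for $A^c$, being positive because $\scp{\psi}{(I-\povm(A))\psi}=1-\mu^\psi(A)=\mu^\psi(A^c)\ge0$. If $A_1,A_2,\ldots\in\mathcal D$ are pairwise disjoint, the partial sums $S_N:=\sum_{i=1}^N\povm(A_i)$ are self-adjoint, nondecreasing in $N$, and bounded above by $I$ (since $\scp{\psi}{S_N\psi}=\mu^\psi(\bigcup_{i\le N}A_i)\le1$), hence converge strongly to a self-adjoint $S$ with $0\le S\le I$ by the monotone convergence theorem for self-adjoint operators; $\sigma$-additivity of $\mu^\psi$ gives $\scp{\psi}{S\psi}=\mu^\psi(\bigcup_iA_i)$, so $\bigcup_iA_i\in\mathcal D$ with $\povm(\bigcup_iA_i)=S=\sum_i\povm(A_i)$ (strongly, hence weakly).

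Then I would collect the conclusions. The map $\povm:\salg^{\otimes\NNN}\to\Bdd(\Hilbert)$ just built satisfies $\povm(M^\NNN)=I$ and $\povm(A)\ge0$ by construction, and weak $\sigma$-additivity \eqref{sigmaadditive} is exactly what was shown in the disjoint-union step, together with $\sigma$-additivity of $\mu^\psi(\cdot)=\scp{\psi}{\povm(\cdot)\,\psi}$ (for general $\psi$ by homogeneity/polarization); so $\povm$ is a POVM, \eqref{FPOVMcyl} holds by the cylinder definition, and $\scp{\psi}{\povm(\cdot)\,\psi}=\mu^\psi(\cdot)$ by construction. Uniqueness of $\povm$ follows since any POVM $\povm'$ satisfying \eqref{FPOVMcyl} induces, for each $\psi$, a measure $\scp{\psi}{\povm'(\cdot)\,\psi}$ agreeing with $\scp{\psi}{\povm(\cdot)\,\psi}$ on the cylinder $\pi$-system, hence everywhere; thus $\povm'(A)=\povm(A)$ for all $A$. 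Uniqueness of $\mu^\psi$ is likewise by the $\pi$-system argument.

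The main obstacle is the first step: the classical Kolmogorov extension theorem is the substantive input and the only place the Borel-space hypothesis on $(M,\salg)$ enters (it is genuinely needed---the theorem fails for general measurable spaces---with the nontrivial point being inner regularity by compact sets). Given that, the operator-theoretic assembly is soft; the one spot requiring care is the monotone limit, where one must use strong convergence of a bounded nondecreasing sequence of self-adjoint operators, so that the pointwise-in-$\psi$ data genuinely defines a bounded operator and weak $\sigma$-additivity is inherited from the scalar measures.
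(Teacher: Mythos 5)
Your argument is correct: reducing to the scalar Kolmogorov extension theorem via the probability measures $\scp{\psi}{\povm_n(\cdot)\,\psi}$ (which is indeed the only place the Borel-space hypothesis is needed) and then reassembling the operators $\povm(A)$ from these quadratic forms by a Dynkin-system argument with the bounded monotone operator limit handles existence, the POVM properties, and both uniqueness claims. Note that the present paper does not reproduce a proof of Theorem~\ref{thm:Kol} (it is deferred to \cite{Tum07c}), but the strategy there is likewise to extend the scalar measures by the classical theorem and then reconstruct the operator-valued measure from them, so your route is essentially the intended one.
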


\subsection{The Simplest Case of GRWf}
\label{sec:simplestr}

Let $H$ be a (possibly unbounded) self-adjoint operator on the separable Hilbert space $\Hilbert$. Let $(\Q,\salg_\Q)$ be a Borel space and $\mu$ a $\sigma$-finite measure on $(\Q,\salg_\Q)$; $\Q$ plays the role of physical space, which in Section~\ref{sec:simpleGRWf} we took to be $\Q=\RRR^3$ with $\salg_\Q$ the Borel $\sigma$-algebra and $\mu$ the Lebesgue measure.

\begin{ass}\label{ass:totalrate}
For every $q\in \Q$, $\Lambda(q)$ is a bounded positive operator,  
$\Lambda : \Q \to \Bdd(\Hilbert)$ is weakly measurable, and
$$\int_{\Q} \Lambda(q) \, \mu(\D q) = \lambda \, I$$
for a constant $\lambda>0$.
\end{ass}

Let $\mu_\mathrm{Leb}$ denote the Lebesgue measure on $(\RRR,\Borel(\RRR))$.

\begin{defn}\label{defn:GRWfprocess}
Under Assumption~\ref{ass:totalrate}, a random variable
$$
  F=(X_1,X_2,\ldots) = \bigl((Q_1,T_1),(Q_2,T_2), \ldots\bigr)
$$
with values in $(\Omega,\salg) = \bigl( (\Q\times\RRR)^\NNN,(\salg_\Q \otimes \Borel(\RRR))^{\otimes \NNN} \bigr)$ is a \textbf{GRWf process} with Hamiltonian $H$, flash rate operators $\Lambda(q)$, initial time $t_0$ and initial state vector $\psi$ if for every $n\in\NNN$ the joint distribution of $X_1,\ldots, X_n$ is absolutely continuous relative to $(\mu \otimes \mu_\mathrm{Leb})^{\otimes n}$ on $(\Q\times \RRR)^n$ with density $\scp{\psi}{L_n^* L_n \,\psi}$, where $L_n(x_1,\ldots, x_n)$ is given by \eqref{Lndef}.
\end{defn}

\begin{thm}\label{thm:simplest}
Under Assumption~\ref{ass:totalrate}, there exists a GRWf process for every initial time $t_0$ and every initial state vector $\psi \in \Hilbert$ with $\|\psi\|=1$. Its distribution is unique and of the form $\scp{\psi}{\povm(\cdot)\, \psi}$ for a  suitable history POVM $\povm(\cdot)$ on $\bigl( (\Q\times \RRR)^\NNN,(\salg_\Q \otimes \Borel(\RRR))^{\otimes \NNN} \bigr)$.
\end{thm}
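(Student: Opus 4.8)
The plan is to build the first-$n$-flash POVMs $\povm_n(\cdot)$ on $(\Q\times\RRR)^n$ directly from the operators $L_n$ of \eqref{Lndef}, show that they form a consistent family in the sense of \eqref{FPOVMconsistent}, glue them via Theorem~\ref{thm:Kol} into a single history POVM $\povm(\cdot)$ on $(\Q\times\RRR)^\NNN$, and then read off both the existence of the process and the uniqueness of its law from the uniqueness clause of Theorem~\ref{thm:Kol}.

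First I would check that $(x_1,\ldots,x_n)\mapsto L_n(x_1,\ldots,x_n)$ is weakly measurable on $(\Q\times\RRR)^n$. The factor $\Lambda(q_k)^{1/2}$ is weakly measurable in $q_k$ by Lemma~\ref{lemma:sqrtweakmeas} (using Assumption~\ref{ass:totalrate}), the factor $\E^{-\I H(t_k-t_{k-1})/\hbar}$ is strongly continuous in $(t_{k-1},t_k)$ (Stone's theorem for the self-adjoint $H$) hence weakly measurable, and $1_{t_0<t_1<\cdots<t_n}\E^{-\lambda(t_n-t_0)/2}$ is a bounded measurable scalar; composing with coordinate projections and applying Lemma~\ref{lemma:productweakmeas} repeatedly shows that $L_n$, and hence (by taking adjoints and products again) $L_n^*L_n$, is weakly measurable and positive-operator-valued. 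Since $0\le\scp{\psi}{L_n^*L_n\,\psi}=\|L_n\psi\|^2$ and, as computed below, $\int_{(\Q\times\RRR)^n}\|L_n\psi\|^2\,\D(\mu\otimes\mu_\mathrm{Leb})^{\otimes n}=\|\psi\|^2<\infty$, for each measurable $A$ the sesquilinear form $(\phi,\psi)\mapsto\int_A\scp{\phi}{L_n^*L_n\,\psi}$ is bounded by $\|\phi\|\,\|\psi\|$ (Cauchy--Schwarz as in Lemma~\ref{lemma:weakint}), so Lemma~\ref{lemma:weakint} yields a bounded positive operator $\povm_n(A):=\int_A L_n^*L_n\,\D(\mu\otimes\mu_\mathrm{Leb})^{\otimes n}$; weak $\sigma$-additivity of $\povm_n$ is inherited from the countable additivity of the scalar set functions $A\mapsto\int_A\scp{\psi}{L_n^*L_n\,\psi}$.

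The core of the argument is the telescoping identity $\int_{\Q\times\RRR}L_n^*L_n\,\D(\mu\otimes\mu_\mathrm{Leb})=L_{n-1}^*L_{n-1}$ (with the convention $L_0=I$), which at once gives the normalization $\povm_n((\Q\times\RRR)^n)=I$ (iterating down to $L_0^*L_0=I$) and the consistency property \eqref{FPOVMconsistent}. Writing $L_n^*L_n=1_{t_0<t_1<\cdots<t_n}\E^{-\lambda(t_n-t_0)}\,R^*\Lambda(q_n)R$ with $R=R(q_1,\ldots,q_{n-1},t_1,\ldots,t_n)$ the remaining product (which begins with $\E^{-\I H(t_n-t_{n-1})/\hbar}$ on the left), integration over $q_n$ uses $\int_\Q\Lambda(q_n)\,\mu(\D q_n)=\lambda I$ together with Lemma~\ref{lemma:Rint} to produce $\lambda\,1_{t_0<t_1<\cdots<t_n}\E^{-\lambda(t_n-t_0)}\,R^*R$; the two innermost evolution factors cancel ($\E^{\I H(t_n-t_{n-1})/\hbar}\E^{-\I H(t_n-t_{n-1})/\hbar}=I$), so $R^*R$ no longer depends on $t_n$ or $q_n$ and equals $L_{n-1}^*L_{n-1}$ divided by its scalar prefactor $1_{t_0<t_1<\cdots<t_{n-1}}\E^{-\lambda(t_{n-1}-t_0)}$; then $\int_\RRR 1_{t_{n-1}<t_n}\E^{-\lambda(t_n-t_0)}\,\D t_n=\lambda^{-1}\E^{-\lambda(t_{n-1}-t_0)}$ cancels the extra factor of $\lambda$ and restores that prefactor, leaving $L_{n-1}^*L_{n-1}$. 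Every interchange of integrations here is justified by Tonelli on the $\sigma$-finite product space, applied to the non-negative integrands $\scp{\psi}{\cdot\,\psi}$.

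Finally, since $(\Q,\salg_\Q)$ is a Borel space and $\RRR$ is Polish, $(\Q\times\RRR,\salg_\Q\otimes\Borel(\RRR))$ is a Borel space, so Theorem~\ref{thm:Kol} applies to the consistent family $(\povm_n)$ and yields a unique history POVM $\povm(\cdot)$ on $((\Q\times\RRR)^\NNN,(\salg_\Q\otimes\Borel(\RRR))^{\otimes\NNN})$ with marginals $\povm_n$, together with, for each unit vector $\psi$, the unique probability measure $\mu^\psi=\scp{\psi}{\povm(\cdot)\,\psi}$ whose $n$-th marginal is $\scp{\psi}{\povm_n(\cdot)\,\psi}$. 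Taking $F$ to be the identity map on the probability space $((\Q\times\RRR)^\NNN,\mu^\psi)$ gives a GRWf process in the sense of Definition~\ref{defn:GRWfprocess}, because its $n$-flash marginal is $\scp{\psi}{\povm_n(\cdot)\,\psi}=\int_{(\cdot)}\scp{\psi}{L_n^*L_n\,\psi}\,\D(\mu\otimes\mu_\mathrm{Leb})^{\otimes n}$, i.e.\ it is absolutely continuous with the prescribed density. Conversely, any GRWf process has by definition exactly these $n$-flash marginals, hence exactly these cylinder-set probabilities, so its law coincides with $\mu^\psi$ by the uniqueness in Theorem~\ref{thm:Kol}; this establishes uniqueness of the distribution and its stated form $\scp{\psi}{\povm(\cdot)\,\psi}$. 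I expect the telescoping normalization computation---the careful bookkeeping of operator ordering and the valid application of Lemma~\ref{lemma:Rint} inside nested weak integrals---to be the step needing most care, while the measurability and gluing steps are routine given the lemmas and Theorem~\ref{thm:Kol}.
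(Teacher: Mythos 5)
Your proposal is correct and follows essentially the same route as the paper: it defines $\povm_n(A)=\int_A L_n^*L_n\,\D\tilde\mu^{\otimes n}$, establishes boundedness and positivity via Lemma~\ref{lemma:weakint}, proves the same telescoping identity $\int_{\Q\times\RRR}L_n^*L_n\,\D\tilde\mu=L_{n-1}^*L_{n-1}$ (the content of Lemma~\ref{lemma:Lnconsistent}, which the paper proves by sandwiching with $\psi$ and using unitarity where you cancel $\E^{\I H\cdot}\E^{-\I H\cdot}$ inside $R^*R$ -- the same computation), and then invokes Theorem~\ref{thm:Kol} for existence and uniqueness exactly as the paper does.
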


A crucial step towards proving Theorem~\ref{thm:simplest} is the following lemma.

\begin{lemma}\label{lemma:Lnconsistent}
Set $L_0 = I$. Under Assumption~\ref{ass:totalrate}, for all $n\in\NNN$, $L_n\in \Bdd(\Hilbert)$ is well defined, $(x_1,\ldots,x_n) \mapsto L_n^* L_n$ is weakly measurable, and
\begin{equation}\label{Lnconsistent}
  \int_{\RRR} \D t_n \int_\Q \mu(\D q_n) \, L_n^* L_n = L_{n-1}^* L_{n-1} \,.
\end{equation}
\end{lemma}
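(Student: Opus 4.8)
The plan is an induction on $n$ built on the recursive identity
\[
  L_n = 1_{t_{n-1}<t_n}\,\E^{-\lambda(t_n-t_{n-1})/2}\;\Lambda(q_n)^{1/2}\,\E^{-\I H(t_n-t_{n-1})/\hbar}\,L_{n-1}\,,
\]
which one reads off from \eqref{Lndef} by factoring $1_{t_0<\ldots<t_n}=1_{t_0<\ldots<t_{n-1}}\,1_{t_{n-1}<t_n}$ and $\E^{-\lambda(t_n-t_0)/2}=\E^{-\lambda(t_{n-1}-t_0)/2}\,\E^{-\lambda(t_n-t_{n-1})/2}$ and absorbing the first factor of each into $L_{n-1}$; the base case is the constant $L_0=I$. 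First I would dispatch well-definedness: for each fixed argument $\Lambda(q_n)$ is a bounded positive operator by Assumption~\ref{ass:totalrate}, hence self-adjoint and possessing a bounded square root, while $\E^{-\I H t/\hbar}$ is unitary because $H$ is self-adjoint (Stone's theorem); so $L_n$ is, by the recursion and the inductive hypothesis, a product of a scalar of modulus $\le 1$ with bounded operators, and thus lies in $\Bdd(\Hilbert)$.

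For weak measurability I would work on the product space $(\Q\times\RRR)^n$. Stone's theorem gives strong, hence weak, continuity of $t\mapsto\E^{-\I Ht/\hbar}$, so this factor is weakly measurable; $q\mapsto\Lambda(q)^{1/2}$ is weakly measurable by Lemma~\ref{lemma:sqrtweakmeas}, and composing with the (continuous, hence measurable) coordinate projections $(x_1,\ldots,x_n)\mapsto q_n$ and $(x_1,\ldots,x_n)\mapsto(t_{n-1},t_n)$ preserves weak measurability; $L_{n-1}$ is weakly measurable by the inductive hypothesis; and the scalar factor is plainly measurable. Two applications of Lemma~\ref{lemma:productweakmeas} then give weak measurability of $L_n$, and, since adjoints of weakly measurable operator-valued functions are again weakly measurable, one more application gives it for $L_n^*L_n$ (alternatively, $\scp{\psi}{L_n^*L_n\psi}=\|L_n\psi\|^2=\sum_k|\scp{\phi_k}{L_n\psi}|^2$ is a countable sum of measurable functions).

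The consistency relation \eqref{Lnconsistent} is then a direct computation of the quadratic form. Fix $\psi\in\Hilbert$ and $x_1,\ldots,x_{n-1}$, and write $\chi=L_{n-1}(x_1,\ldots,x_{n-1})\,\psi$. From the recursion, for $t_n>t_{n-1}$,
\[
  \scp{\psi}{L_n^*L_n\,\psi}=\|L_n\psi\|^2
  =\E^{-\lambda(t_n-t_{n-1})}\,\Bscp{\E^{-\I H(t_n-t_{n-1})/\hbar}\chi}{\Lambda(q_n)\,\E^{-\I H(t_n-t_{n-1})/\hbar}\chi}\,,
\]
and the left side is $0$ for $t_n\le t_{n-1}$. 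Integrating over $q_n$ against $\mu$ and using $\int_\Q\Lambda(q)\,\mu(\D q)=\lambda I$ in the weak sense of \eqref{weakintegraldef} replaces the bracket by $\lambda\,\|\E^{-\I H(t_n-t_{n-1})/\hbar}\chi\|^2=\lambda\,\|\chi\|^2$, the last step by unitarity of the propagator; integrating what remains over $t_n\in\RRR$ gives $\lambda\|\chi\|^2\int_{t_{n-1}}^\infty\E^{-\lambda(t_n-t_{n-1})}\,\D t_n=\|\chi\|^2=\scp{\psi}{L_{n-1}^*L_{n-1}\,\psi}$. Since the integrand is nonnegative and weakly measurable, Tonelli's theorem, the definition of the weak integral \eqref{weakintegraldef}, and Lemma~\ref{lemma:weakint} identify $\int_\RRR\D t_n\int_\Q\mu(\D q_n)\,L_n^*L_n$ as a well-defined positive bounded operator with quadratic form $\scp{\psi}{L_{n-1}^*L_{n-1}\,\psi}$, and uniqueness of weak integrals yields \eqref{Lnconsistent}.

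I expect the only genuine obstacle to be the measurability bookkeeping: being careful that $t\mapsto\E^{-\I Ht/\hbar}$ is weakly measurable (which rests on strong continuity of the unitary group generated by a possibly unbounded self-adjoint $H$), and that the product lemmas of the preceding subsection, stated for functions on a single space, may be applied to factors depending on different coordinates only after pulling them back along coordinate projections to $(\Q\times\RRR)^n$. The consistency computation itself is routine once the recursion is recorded; its one point of substance is that the unitarity of the Schr\"odinger propagators is precisely what lets Assumption~\ref{ass:totalrate} collapse the $q_n$-integral, after which the $t_n$-integral of $\lambda\E^{-\lambda(t_n-t_{n-1})}$ over $(t_{n-1},\infty)$ equals $1$.
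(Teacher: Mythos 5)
Your proposal is correct and follows essentially the same route as the paper's proof: the same recursive factorization of $L_n$ into $1_{t_{n-1}<t_n}\,\E^{-\lambda(t_n-t_{n-1})/2}\,\Lambda(q_n)^{1/2}\,\E^{-\I H(t_n-t_{n-1})/\hbar}\,L_{n-1}$, boundedness from positivity of $\Lambda(q)$ and unitarity of the propagator, measurability via Lemmas~\ref{lemma:sqrtweakmeas} and \ref{lemma:productweakmeas}, and the quadratic-form computation in which Assumption~\ref{ass:totalrate} collapses the $q_n$-integral and the exponential $t_n$-integral equals $1$. Your extra care in citing Tonelli and Lemma~\ref{lemma:weakint} to identify the weak integral as a bounded operator is a harmless refinement of what the paper does implicitly through \eqref{weakintegraldef}.
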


\begin{proof}
Since $H$ is self-adjoint, the expression $\E^{-\I Ht/\hbar}$ defines a unitary operator. Since $\Lambda(q)$ is positive and defined on all of $\Hilbert$, it is self-adjoint, and $\Lambda(q)^{1/2}$ exists and is a bounded operator. Thus, $L_n$ is well defined on all of $\Hilbert$ and a bounded operator. 

Moreover, $L_n^* \,L_n$ as a function
\be
(\Q\times\RRR)^{n} \ni (x_1,\ldots, x_n) \mapsto L_n^*(x_1,\ldots,x_n)\, L_n(x_1,\ldots,x_n) \in \Bdd(\Hilbert)
\ee
is weakly measurable: Every $\Lambda(q_k)$ is weakly measurable by definition, also as a function on $(\Q\times\RRR)^n$ that does not depend on $t_k$ and $x_\ell$ for $\ell\neq k$. By Lemma~\ref{lemma:sqrtweakmeas}, also $(x_1,\ldots,x_n) \mapsto \Lambda(q_k)^{1/2}$ is weakly measurable. The operator-valued function $t \mapsto \E^{-\I H t}$ is weakly measurable because $t \mapsto \scp{\phi}{\E^{-\I H t} \, \psi}$ is even continuous, as even $t \mapsto \E^{-\I H t} \, \psi$ is continuous for self-adjoint $H$ \cite{RS72}. Thus, also $(x_1,\ldots,x_n) \mapsto \E^{-\I H (t_{k+1}-t_k)/\hbar}$ is weakly measurable. The number-valued function $1_{t_0<t_1< \ldots < t_n} \, \E^{-\lambda(t_n-t_0)/2}$ is known to be measurable. By Lemma~\ref{lemma:productweakmeas}, the product \eqref{Lndef} is weakly measurable, and so is $(x_1,\ldots,x_n) \mapsto L_n^* L_n$.

Now note that the definition of $L_n$ can be written as
\begin{equation}
  L_n = 1_{t_{n-1}<t_n} \,\E^{-\lambda (t_n-t_{n-1})/2} \,\Lambda(q_n)^{1/2} 
  \, \E^{-\I H(t_n-t_{n-1})/\hbar} \, L_{n-1}\,,
\end{equation}
and thus, for any $\psi\in\Hilbert$,
$$
   \Bscp{\psi}{\int_{\RRR}\D t_n \int_\Q \mu(\D q_n) \, L_n^* L_n \,\psi} =
   \int_{\RRR}\D t_n \int_\Q \mu(\D q_n) \, \scp{\psi}{L_n^*\, L_n \, \psi} = 
$$
$$
   =\int_{\RRR} \D t_n \int_{\Q} \mu(\D q_n) \, 1_{t_{n-1}<t_n} \, \E^{-\lambda (t_n-t_{n-1})}
   \, \Bscp{\psi}{L_{n-1}^* \, \E^{\I H(t_n-t_{n-1})/\hbar} \, \Lambda(q_n) \,
   \E^{-\I H(t_n-t_{n-1})/\hbar} \, L_{n-1} \, \psi} =
$$
$$
   =\int \D t_n  \, 1_{t_{n-1}<t_n} \, \E^{-\lambda (t_n-t_{n-1})} \int \mu(\D q_n)
   \, \Bscp{\E^{-\I H(t_n-t_{n-1})/\hbar} \, L_{n-1} \, \psi}{\Lambda(q_n) \,
   \E^{-\I H(t_n-t_{n-1})/\hbar} \, L_{n-1} \, \psi} =
$$
$$
   =\int \D t_n  \, 1_{t_{n-1}<t_n} \, \E^{-\lambda (t_n-t_{n-1})} 
   \, \Bscp{\E^{-\I H(t_n-t_{n-1})/\hbar} \, L_{n-1} \, \psi}{\Bigl(\int \Lambda(q_n) \,\mu(\D q_n)\Bigr)
   \E^{-\I H(t_n-t_{n-1})/\hbar} \, L_{n-1} \, \psi} =
$$
[Assumption~\ref{ass:totalrate}]
$$
   =\int \D t_n  \, 1_{t_{n-1}<t_n} \, \E^{-\lambda (t_n-t_{n-1})}\, \lambda 
   \, \Bigl\|\E^{-\I H(t_n-t_{n-1})/\hbar} \, L_{n-1} \, \psi \Bigr\|^2 =
$$
$$
   =\int \D t_n  \, 1_{t_{n-1}<t_n} \, \E^{-\lambda (t_n-t_{n-1})}\, \lambda 
   \, \bigl\| L_{n-1} \, \psi \bigr\|^2 =
$$
$$
   = \bigl\| L_{n-1} \, \psi \bigr\|^2 \int_{t_{n-1}}^\infty \D t_n  \, 
   \E^{-\lambda (t_n-t_{n-1})}\, \lambda  =
$$
$$
   = \bigl\| L_{n-1} \, \psi \bigr\|^2  = \scp{\psi}{L_{n-1}^* \, L_{n-1}\, \psi}\,.
$$
This implies \eqref{Lnconsistent}.
\end{proof}

\bigskip

\begin{proofthm}{thm:simplest} 
We use Theorem~\ref{thm:Kol} for $(M,\salg)=(\Q\times\RRR,\salg\otimes\Borel(\RRR))$. We have to check that $L_n^* L_n$ is the density of a POVM $\povm_n(\cdot)$ satisfying the consistency property \eqref{FPOVMconsistent}.
Set, for all $A \in \salg_n :=(\salg_\Q \otimes \Borel(\RRR))^{\otimes n}$,
\begin{equation}
  \povm_n(A) = \int_A \tilde{\mu}^{\otimes n}(\D x_1\cdots \D x_n) \, L_n^* L_n
\end{equation}
with $\tilde{\mu}=\mu \otimes \mu_\mathrm{Leb}$.
This defines a POVM: For $A=(\Q\times\RRR)^n$, the right hand side is, by repeated application of \eqref{Lnconsistent}, the identity. To see that $\povm_n(A)$ is a well-defined bounded operator for all $A\in\salg_n$, we apply Lemma~\ref{lemma:weakint} to $M=(\Q\times\RRR)^n$, $\salg=\salg_n$, $\Lambda=L_n^* L_n \geq 0$: in our case $S=\Hilbert$ because for all $\psi\in\Hilbert$,
\be\label{simplepovmbdd1}
  \int_A \tilde{\mu}^{\otimes n}(\D x_1 \cdots \D x_n) \, \scp{\psi}{L_n^*\, L_n\,\psi} =
  \int_A \tilde{\mu}^{\otimes n}(\D x_1 \cdots \D x_n) \, \bigl\|L_n \psi\bigr\|^2 \leq
\ee
\be\label{simplepovmbdd2}
  \leq \int_{(\Q\times\RRR)^n} \tilde{\mu}^{\otimes n}(\D x_1 \cdots \D x_n)
   \, \bigl\|L_n \psi\bigr\|^2 = 
  \scp{\psi}{\povm_n\bigl((\Q\times\RRR)^n\bigr)\, \psi} = \|\psi\|^2\,.
\ee
According to Lemma~\ref{lemma:weakint}, the sesquilinear form \eqref{sesquiweakint} is defined on $\Hilbert\times\Hilbert$, and by \eqref{simplepovmbdd1}-\eqref{simplepovmbdd2} is bounded, thus defining a bounded operator $\povm_n(A)$. 
To see that $\povm_n(\cdot)$ is weakly $\sigma$-additive, just note that $\int_A \tilde{\mu}^{\otimes n}(\D x_1 \cdots \D x_n) \, \scp{\psi}{L_n^* \, L_n \, \psi}$ is $\sigma$-additive in $A$.

The consistency condition \eqref{FPOVMconsistent} follows from \eqref{Lnconsistent}. By Theorem~\ref{thm:Kol}, there is a unique POVM $\povm$ on $(\Q\times\RRR)^\NNN$ whose marginals are the $\povm_n$. Moreover, for every $\psi\in\Hilbert$ with $\|\psi\|=1$ there is a unique probability measure $\PPP^\psi$ on $(\Q\times\RRR)^\NNN$ that extends the distributions $\scp{\psi}{\povm_n(\cdot)\,\psi}$, and which is thus the distribution of the GRWf process with initial time $t_0$ and initial state vector $\psi$. Finally, $\PPP^\psi(\cdot) = \scp{\psi}{\povm(\cdot)\, \psi}$.
\end{proofthm}

\bigskip

The \emph{labeled GRWf processes} we considered in Section~\ref{sec:labeled} are included in Definition~\ref{defn:GRWfprocess}, and their existence is covered by Theorem~\ref{thm:simplest} by setting $\Q= \RRR^3 \times \Lab$, where $\Lab$ is a finite or countable set of labels, $\salg_\Q=\Borel(\RRR^3) \otimes \Power(\Lab)$, where $\Power(\Lab)$ is the power set of $\Lab$, and $\mu = \mu_\mathrm{Leb} \otimes \nu$, where $\mu_\mathrm{Leb}$ is the Lebesgue measure on $\RRR^3$ and $\nu$ the counting measure on $\Lab$. Thus, the labeled GRWf process is a point process in $\RRR^4 \times \Lab$, and the distribution of the first $n$ labeled flashes is given by \eqref{labelFPOVMn}. Assumption~\ref{ass:totalrate} requires, in the notation of Section~\ref{sec:labeled}, that every $\Lambda_i(q)$ is bounded, that $q\mapsto \Lambda_i(q)$ is weakly measurable for every $i\in\Lab$, and that \eqref{labelLambdaR3I} holds (where the series converges weakly if $\Lab$ is infinite).

\subsection{Time-Dependent Operators}

We now show the existence of a GRWf process for time-dependent $H(t)$ and $\Lambda(q,t)$ operators with a variable total flash rate (not requiring the normalization $\int \Lambda(q) \D q = \lambda \, I$), but only under rather restrictive assumptions, particularly that these operators are \emph{bounded}. In many of the physical applications it would be desirable to permit unbounded (self-adjoint, in particular densely defined) operators: first, the physical Hamiltonians $H(t)$ are (more often than not) unbounded, and second, the flash rate operators in quantum field theory, as described in Example~\ref{ex:qft}, are naturally unbounded.

As mentioned already in Section~\ref{sec:specify}, one can either ask about the construction of the process from given $H(t)$ and $\Lambda(z)$, or from given $W_s^t$ and $\Lambda(z)$. For our purposes it is useful to assume the second point of view first, and to turn to the construction of $W_s^t$ from $H(t)$ afterwards.

\subsubsection{Given $W$ and $\Lambda$}

Fix the initial time $t_0\in \RRR$. Suppose that we are given operators $W_s^t$ for every $s,t\geq t_0$ and $\Lambda(q,t)$ for every $t\geq t_0$ and $q\in\Q$, where $(\Q,\salg_\Q,)$ is again a Borel space and $\mu$ a $\sigma$-finite measure on $(\Q,\salg_\Q)$.

\begin{defn}\label{defn:GRWfprocesst}
Let $M= \Q\times \RRR \cup\{\ceme\}$ and $\salg_M = \salg_\Q \otimes \Borel(\RRR) \times \salg_\ceme$, where $\salg_\ceme = \bigl\{\emptyset,\{\ceme\}\bigr\}$. 
A random variable
$$
  F=(Z_1,Z_2,\ldots)
$$
with values in $\bigl( M^\NNN,\salg_M^{\otimes \NNN} \bigr)$ is a \textbf{GRWf process} with time-dependent flash rate operators $\Lambda(q,t)$, evolution operators $W_s^t$, initial time $t_0$, and initial state vector $\psi$ if $\ceme$ is absorbing and for every $n\in\NNN$ the joint distribution of $Z_1,\ldots, Z_n$ satisfies [the analogs of \eqref{totalFPOVMnt} and \eqref{totalLndeft}] 
\begin{equation}\label{Pdefr}
  \PPP\Bigl(\# F \geq n, (Z_1,\ldots, Z_n) \in A\Bigr) =
  \int_A \tilde{\mu}^{\otimes n} (\D z_1 \cdots \D z_n) \,\scp{\psi}{L_n^* L_n \,\psi} 
\end{equation}
for $A \in (\salg_\Q \otimes \Borel(\RRR))^{\otimes n}$, where $L_0=I$ and 
\be\label{Lndefr}
  L_n = L_n(z_1,\ldots, z_n)  =  
  \Lambda(z_n)^{1/2} \, W_{t_{n-1}}^{t_n} \, L_{n-1}(z_1,\ldots,z_{n-1})\,.
\ee
\end{defn}

\begin{ass}\label{ass:Lambda}
For every $q\in \Q$ and $t\geq t_0$, $\Lambda(q,t)$ is a bounded operator; $(q,t)\mapsto \Lambda(q,t)$ is weakly measurable; for every $t \geq t_0$,
\be
\Lambda(\Q,t) := \int_{\Q} \Lambda(q,t)\,\mu(\D q)
\ee
exists as a bounded operator.
\end{ass}

\begin{ass}\label{ass:W}
For every $s,t\geq t_0$, $W_s^t$ is a bounded operator; for $t<s$, $W_s^t =0$; the function $(s,t)\mapsto W_s^t$ is weakly measurable and satisfies the following weak version of \eqref{W*Wt}:
\be\label{weakW*W}
W_s^{t*} W_s^t - I = -\int_s^t \D t' \, W_s^{t'}{}^* \, \Lambda(\Q,t') \, W_s^{t'} \,.
\ee
\end{ass}

We remark that, as a consequence of the weak measurability of $(q,t)\mapsto \Lambda(q,t)$ and the existence of $\Lambda(\Q,t)$ as a bounded operator, $t\mapsto \Lambda(\Q,t)$ is weakly measurable.

\begin{thm}\label{thm:exist2}
Under Assumptions~\ref{ass:Lambda} and \ref{ass:W}, there exists a GRWf process for every initial time $t_0$ and every initial state vector $\psi \in \Hilbert$ with $\|\psi\|=1$ with flash rate operators $\Lambda(q,t)$ and evolution operators $W_s^t$. The distribution of the GRWf process is unique and of the form $\scp{\psi}{\povm(\cdot)\, \psi}$ for a  suitable history POVM $\povm(\cdot)$ on $\bigl( M^\NNN,\salg_M^{\otimes \NNN} \bigr)$.
\end{thm}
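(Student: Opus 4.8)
The plan is to reduce the assertion to the Kolmogorov extension theorem for POVMs, Theorem~\ref{thm:Kol}, applied to the Borel space $(M,\salg_M)$ of Definition~\ref{defn:GRWfprocesst}. First I would show, by induction on $n$, that each $L_n$ defined by \eqref{Lndefr} is a well-defined bounded operator, using that $\Lambda(z)$ is positive (so $\Lambda(z)^{1/2}$ exists and is bounded) and that each $W_s^t$ is bounded by Assumption~\ref{ass:W}, and that $(z_1,\dots,z_n)\mapsto L_n^*L_n$ is weakly measurable: $z\mapsto\Lambda(z)^{1/2}$ is weakly measurable by Lemma~\ref{lemma:sqrtweakmeas}, $(s,t)\mapsto W_s^t$ is weakly measurable by Assumption~\ref{ass:W}, these become weakly measurable functions on $M^n$ ignoring the irrelevant coordinates, and products (and adjoints) of weakly measurable operator functions are again weakly measurable by Lemma~\ref{lemma:productweakmeas}.

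The analytic core is the identity
\be\label{heart}
\int_{\Q\times\RRR} \tilde{\mu}(\D z_{n+1})\, L_{n+1}^* L_{n+1} = L_n^*\,(I-B_{t_n})\,L_n\,,
\ee
where $B_s:=\lim_{t\to\infty} W_s^{t*}W_s^t$ is the weak limit of the forms $\phi\mapsto\|W_s^t\phi\|^2$. This limit exists because \eqref{weakW*W} and the positivity of $\Lambda(\Q,t')$ make $t\mapsto W_s^{t*}W_s^t$ weakly nonincreasing and bounded below by $0$, so $0\le B_s\le I$; and $s\mapsto B_s$ is weakly measurable, being a pointwise (monotone) limit along integer $t\to\infty$ of the weakly measurable functions $s\mapsto W_s^{t*}W_s^t$. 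To prove \eqref{heart} I would write $L_{n+1}^*L_{n+1}=L_n^*\,W_{t_n}^{t_{n+1}*}\,\Lambda(z_{n+1})\,W_{t_n}^{t_{n+1}}\,L_n$ (using $\Lambda^{1/2}\Lambda^{1/2}=\Lambda$), integrate over $q_{n+1}$ first, using the existence of $\Lambda(\Q,t_{n+1})$ from Assumption~\ref{ass:Lambda} and Lemma~\ref{lemma:Rint} to pull $L_n$ and $W_{t_n}^{t_{n+1}}$ out, obtaining $L_n^*\,W_{t_n}^{t_{n+1}*}\,\Lambda(\Q,t_{n+1})\,W_{t_n}^{t_{n+1}}\,L_n$, then integrate over $t_{n+1}\in(t_n,\infty)$ (the range being forced by $W_s^t=0$ for $t<s$) and apply \eqref{weakW*W} with the upper limit sent to $\infty$. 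Iterating \eqref{heart}, with $a_k:=\int_{(\Q\times\RRR)^k}\tilde{\mu}^{\otimes k}\, L_k^*L_k$ and $b_k:=\int_{(\Q\times\RRR)^k}\tilde{\mu}^{\otimes k}\, L_k^* B_{t_k} L_k$, yields $a_0=I$, $a_{k+1}=a_k-b_k$, hence $\sum_{k=0}^{n-1} b_k + a_n = I$.

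Next I would assemble the consistent family of finite-dimensional POVMs on $M^n$. Declare $\povm_n$ to vanish on every set of sequences in which some $\ceme$ is followed by a non-$\ceme$ entry (such sequences admit no absorbing extension), and on the admissible pieces $(\Q\times\RRR)^k\times\{\ceme\}^{n-k}$ put
\be
\povm_n\bigl(A\times\{\ceme\}^{n-k}\bigr)=\int_A \tilde{\mu}^{\otimes k}(\D z_1\cdots\D z_k)\,L_k^* B_{t_k} L_k \ \ (k<n)\,,\qquad \povm_n(A)=\int_A \tilde{\mu}^{\otimes n}\,L_n^*L_n \ \ (k=n)\,.
\ee
Each $\povm_n$ is a POVM: positivity and weak $\sigma$-additivity follow from Lemma~\ref{lemma:weakint} and monotone convergence for the forms of the positive integrands $L_k^* B_{t_k} L_k$ and $L_n^*L_n$, while $\povm_n(M^n)=\sum_{k=0}^{n-1}b_k+a_n=I$ by the telescoping relation. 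The consistency property \eqref{FPOVMconsistent} I would check on these building blocks: for a block ending in at least one $\ceme$, the extra slot contributes only through $A\times\{\ceme\}^{n-k}\times\{\ceme\}$ (the part $\cdots\times\{\ceme\}\times(\Q\times\RRR)$ is inadmissible, hence null), reproducing the same $b_k$; for a full-length block $A\subseteq(\Q\times\RRR)^n$, one has $\povm_{n+1}(A\times(\Q\times\RRR))=\int_A\tilde{\mu}^{\otimes n}L_n^*(I-B_{t_n})L_n$ by \eqref{heart} and $\povm_{n+1}(A\times\{\ceme\})=\int_A\tilde{\mu}^{\otimes n}L_n^*B_{t_n}L_n$, whose sum is $\povm_n(A)$.

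Finally, Theorem~\ref{thm:Kol} delivers a unique POVM $\povm(\cdot)$ on $\bigl(M^\NNN,\salg_M^{\otimes\NNN}\bigr)$ with marginals $\povm_n$, and for each $\psi$ with $\|\psi\|=1$ a unique probability measure $\PPP^\psi=\scp{\psi}{\povm(\cdot)\,\psi}$. Since $\povm_n$ annihilates all non-absorbing cylinders, $\PPP^\psi$ is concentrated on the absorbing sequences, so the coordinate process $F=(Z_1,Z_2,\dots)$ has $\ceme$ absorbing, and for $A\in(\salg_\Q\otimes\Borel(\RRR))^{\otimes n}$ the cylinder $A\times M^\NNN$ is exactly the event $\{\#F\ge n,\ (Z_1,\dots,Z_n)\in A\}$, with probability $\scp{\psi}{\povm_n(A)\,\psi}=\int_A\tilde{\mu}^{\otimes n}\scp{\psi}{L_n^*L_n\,\psi}$, which is \eqref{Pdefr}; hence $F$ is a GRWf process. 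Its law is unique because \eqref{Pdefr} together with the absorbing property already determines $\scp{\psi}{\povm_n(\cdot)\,\psi}$ on all of $M^n$, so the uniqueness clause of Theorem~\ref{thm:Kol} applies. I expect the main obstacle to be \eqref{heart} together with the bookkeeping of the cemetery state: proving that the weak limit $B_s$ exists and is measurable, and arranging the finite-dimensional POVMs so that the operator mass ``lost to stopping'' at step $n$ is placed precisely on $\{\#F=n\}$, which is what makes the consistency relation hold.
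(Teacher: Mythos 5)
Your proposal is correct and follows essentially the same route as the paper: the same consistency identity (your \eqref{heart} is Lemma~\ref{lemma:Lnconsistentt}, with the weak limit $B_s$ established as in Lemma~\ref{lemma:limitW*W}), the same block decomposition of $M^n$ into $k$ flashes followed by cemetery states with mass $L_k^*B_{t_k}L_k$ respectively $L_n^*L_n$, and the same appeal to Theorem~\ref{thm:Kol} plus the subtraction argument for uniqueness. The only cosmetic difference is that you get measurability of $s\mapsto B_s$ from a monotone limit along integer times, whereas the paper reads it off the integral representation \eqref{limitW*Wint}; both are fine.
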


\begin{lemma}\label{lemma:limitW*W}
Under Assumptions~\ref{ass:Lambda} and \ref{ass:W}, there exists a unique positive operator $T_s\in\Bdd(\Hilbert)$, denoted $\lim\limits_{t\to\infty} W_s^{t*} W_s^t$ in the following, such that
\be\label{limitW*W}
\scp{\psi}{T_s \psi} = \lim_{t\to\infty}\scp{\psi}{W_s^{t*} W_s^t \, \psi}\,.
\ee
Indeed,
\be\label{limitW*Wint}
\lim\limits_{t\to\infty} W_s^{t*} W_s^t=T_s= I- \int_s^\infty \D t' \, W_s^{t'}{}^* \, \Lambda(\Q,t') \, W_s^{t'} \,.
\ee
Moreover, $s\mapsto T_s$ is weakly measurable.
\end{lemma}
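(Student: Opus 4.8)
The plan is, for each fixed $\psi\in\Hilbert$, to first deduce from the weak identity \eqref{weakW*W} that $t\mapsto\|W_s^t\psi\|^2$ is bounded and non-increasing on $[s,\infty)$, hence convergent; then to construct $T_s$ directly as $I$ minus a weak integral and to identify its quadratic form with that limit; and finally to obtain weak measurability in $s$ from Tonelli's theorem.

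First I would note that, since the flash rate operators $\Lambda(q,t)$ are positive, so is the weak integral $\Lambda(\Q,t)$, and therefore $h_s(t'):=\scp{W_s^{t'}\psi}{\Lambda(\Q,t')\,W_s^{t'}\psi}\geq0$ for all $t'\geq s$ (I suppress the dependence on the fixed $\psi$). Applying \eqref{weakW*W} to $\psi$ then reads $\|W_s^t\psi\|^2=\|\psi\|^2-\int_s^t h_s(t')\,\D t'$ for $t\geq s$, so $t\mapsto\|W_s^t\psi\|^2$ is non-increasing on $[s,\infty)$ with values in $[0,\|\psi\|^2]$, the non-negative integral $\int_s^t h_s$ increases in $t$ to a finite limit $\int_s^\infty h_s(t')\,\D t'\leq\|\psi\|^2$, and by monotone convergence $\lim_{t\to\infty}\scp{\psi}{W_s^{t*}W_s^t\,\psi}=\|\psi\|^2-\int_s^\infty h_s(t')\,\D t'$ exists.

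Next I would check that $t'\mapsto W_s^{t'}{}^*\,\Lambda(\Q,t')\,W_s^{t'}$ is weakly measurable --- the section $t'\mapsto W_s^{t'}$ inherits weak measurability from the weakly measurable $(s,t')\mapsto W_s^{t'}$, so does its adjoint, $t'\mapsto\Lambda(\Q,t')$ is weakly measurable by the remark after Assumption~\ref{ass:W}, and products of these are weakly measurable by Lemmas~\ref{lemma:productweakmeas} and \ref{lemma:Rint} --- and that each such operator is positive. Since $\int_s^\infty\scp{\psi}{W_s^{t'}{}^*\Lambda(\Q,t')W_s^{t'}\psi}\,\D t'=\int_s^\infty h_s(t')\,\D t'\leq\|\psi\|^2<\infty$ for every $\psi$, Lemma~\ref{lemma:weakint} applied with $M=[s,\infty)$ and Lebesgue measure gives $S=\Hilbert$ and a sesquilinear form bounded by $\|\psi\|^2$, so that $R_s:=\int_s^\infty W_s^{t'}{}^*\,\Lambda(\Q,t')\,W_s^{t'}\,\D t'$ exists as a bounded positive operator with $\|R_s\|\leq1$. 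Setting $T_s:=I-R_s\in\Bdd(\Hilbert)$, I get for every $\psi$ that $\scp{\psi}{T_s\psi}=\|\psi\|^2-\int_s^\infty h_s(t')\,\D t'=\lim_{t\to\infty}\scp{\psi}{W_s^{t*}W_s^t\,\psi}\geq0$, which is \eqref{limitW*W}, shows $T_s\geq0$, and together with the definition of $R_s$ yields \eqref{limitW*Wint}; uniqueness of $T_s$ follows from the fact, used repeatedly in this paper, that a bounded operator on a complex Hilbert space is determined by its quadratic form.

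Finally, for weak measurability of $s\mapsto T_s=I-R_s$ it suffices to show $s\mapsto\scp{\psi}{R_s\psi}$ is measurable for each $\psi$. Using $W_s^{t'}=0$ for $t'<s$, I may write $\scp{\psi}{R_s\psi}=\int_{t_0}^\infty\scp{W_s^{t'}\psi}{\Lambda(\Q,t')\,W_s^{t'}\psi}\,\D t'$, and the integrand, with $\{\phi_n\}_{n\in\NNN}$ an orthonormal basis of $\Hilbert$, equals $\sum_{n,m}\overline{\scp{\phi_n}{W_s^{t'}\psi}}\,\scp{\phi_n}{\Lambda(\Q,t')\phi_m}\,\scp{\phi_m}{W_s^{t'}\psi}$, a pointwise convergent series of functions jointly measurable in $(s,t')\in[t_0,\infty)^2$ --- the $W$-factors by Lemma~\ref{lemma:matrixweakmeas} applied to the weakly measurable $(s,t')\mapsto W_s^{t'}$, and the $\Lambda(\Q,\cdot)$-factor, which depends on $t'$ only, by weak measurability of $t'\mapsto\Lambda(\Q,t')$ together with polarization. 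Tonelli's theorem then makes $s\mapsto\scp{\psi}{R_s\psi}$ measurable, hence so is $s\mapsto\scp{\psi}{T_s\psi}$, i.e.\ $s\mapsto T_s$ is weakly measurable. I expect this last point to be the only real obstacle: weak measurability of $(s,t')\mapsto W_s^{t'}$ and of $t'\mapsto\Lambda(\Q,t')$ must be upgraded to genuine measurability, with respect to the product $\sigma$-algebra on $[t_0,\infty)^2$, of the scalar integrand before Tonelli is applicable, which is exactly what the orthonormal-basis expansion delivers.
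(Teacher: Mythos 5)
Your proof is correct, and at its core it is the paper's argument with the two halves taken in the opposite order. The paper fixes $s$, uses \eqref{weakW*W} to get $W_s^{t*}W_s^t\leq I$ and monotonicity in $t$, takes the limit of the quadratic forms, polarizes, and invokes the Riesz lemma to produce $T_s$, after which \eqref{limitW*Wint} is read off; you instead build the weak integral $R_s=\int_s^\infty \D t'\,W_s^{t'}{}^*\Lambda(\Q,t')\,W_s^{t'}$ first (via Lemma~\ref{lemma:weakint}, whose hypotheses you verify from the same monotonicity/boundedness consequence of \eqref{weakW*W} together with monotone convergence), set $T_s=I-R_s$, and then identify its quadratic form with the limit. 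Since Lemma~\ref{lemma:weakint} is itself Riesz-based, the two constructions are equivalent in substance; what your ordering buys is that \eqref{limitW*Wint} holds by definition rather than as an afterthought. The one place where you genuinely go beyond the paper is the weak measurability of $s\mapsto T_s$: the paper disposes of it with the remark that integrals are measurable functions of their boundaries, which is a little quick because $s$ enters not only as the lower limit but also inside the integrand through $W_s^{t'}$. Your treatment --- extending the integral to $[t_0,\infty)$ using $W_s^{t'}=0$ for $t'<s$, establishing joint measurability of the nonnegative scalar integrand in $(s,t')$ by an orthonormal-basis expansion (the matrix elements of $W_s^{t'}$ being jointly measurable by \eqref{polarweakmeas}/Lemma~\ref{lemma:matrixweakmeas} and those of $\Lambda(\Q,t')$ by the remark after Assumption~\ref{ass:W}), and then applying Tonelli --- supplies exactly the detail the paper leaves implicit, so this step is a genuine improvement in rigor rather than a deviation.
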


\begin{proof}
Keep $s\in \RRR$ fixed. Since $W_s^{t'}{}^* \, \Lambda(\Q,t) \, W_s^{t'}$ is a positive operator, so is its integral over $t'$, so that \eqref{weakW*W} implies $W_s^{t*} W_s^t \leq I$ and $W_s^{t_2*} W_s^{t_2} \leq W_s^{t_1}{}^* W_s^{t_1}$ for $t_1\leq t_2$. Therefore, $t\mapsto \scp{\psi}{W_s^{t*} W_s^t \, \psi}$ is a decreasing nonnegative function for every $\psi\in\Hilbert$ and thus possesses a limit $\alpha_\psi$ as $t\to\infty$. Define (polarization)
\be\label{alphapolar}
\alpha(\phi,\psi) = \frac14 \Bigl( \alpha_{\phi+\psi} - \alpha_{\phi-\psi} -\I \alpha_{\phi+\I\psi} + \I\alpha_{\phi-\I\psi}\Bigr)\,.
\ee
Then
\be\label{W*Wlimit}
\alpha(\phi,\psi) = \lim_{t\to\infty} \scp{\phi}{W_s^{t*} W_s^t \, \psi}
\ee
for all $\phi,\psi\in\Hilbert$ by \eqref{alphapolar} and the linearity of limits; in particular, the limit on the right hand side exists. It follows that $\alpha$ is a sesquilinear form $\Hilbert\times\Hilbert\to\CCC$, Hermitian, positive, and bounded (with $\|\alpha\| \leq 1$). By the Riesz lemma, there is a bounded positive operator $T_s$ such that $\alpha(\phi,\psi)=\scp{\phi}{T\psi}$. Now \eqref{W*Wlimit} implies \eqref{limitW*W} and \eqref{limitW*Wint}. From \eqref{limitW*Wint} we see that $T_s$ is weakly measurable, as integrals (such as $\int_s^\infty \D t' \, \scp{\psi}{W_s^{t'}{}^* \, \Lambda(\Q,t') \, W_s^{t'} \,\psi}$) are measurable functions of their boundaries.
\end{proof}

\begin{lemma}\label{lemma:Lnconsistentt}
Under Assumptions~\ref{ass:Lambda} and \ref{ass:W}, for all $n\in\NNN$, $L_n\in \Bdd(\Hilbert)$ is well defined, $(x_1,\ldots,x_n) \mapsto L_n^* L_n$ is weakly measurable, and
\begin{equation}\label{Lnconsistentt}
  \int_{\RRR}\D t_n \int_\Q  \mu(\D q_n) \, L_n^* L_n = 
  L_{n-1}^* \Bigl( I-\lim\limits_{t\to\infty} W_{t_{n-1}}^{t*} W_{t_{n-1}}^t \Bigr) \,L_{n-1}\,.
\end{equation}
\end{lemma}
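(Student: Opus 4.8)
The plan is to transcribe the proof of Lemma~\ref{lemma:Lnconsistent} into the variable-rate setting, the one substantive change being that the scalar integral $\int_{t_{n-1}}^\infty \lambda\,\E^{-\lambda(t_n-t_{n-1})}\,\D t_n = 1$ used there is now replaced by the operator identity \eqref{limitW*Wint} furnished by Lemma~\ref{lemma:limitW*W}; this is precisely why the right-hand side of \eqref{Lnconsistentt} carries an operator instead of the number $1$.

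I would first dispatch well-definedness and weak measurability by induction on $n$. For boundedness: each $\Lambda(q,t)$ is a bounded positive operator, hence self-adjoint, so $\Lambda(z_n)^{1/2}$ exists as a bounded operator; $W_{t_{n-1}}^{t_n}$ is bounded by Assumption~\ref{ass:W}; and $L_{n-1}\in\Bdd(\Hilbert)$ by the induction hypothesis; hence $L_n = \Lambda(z_n)^{1/2} W_{t_{n-1}}^{t_n} L_{n-1}$ is a bounded operator on all of $\Hilbert$. For weak measurability of $L_n^*L_n = L_{n-1}^* W_{t_{n-1}}^{t_n*}\, \Lambda(z_n)\, W_{t_{n-1}}^{t_n} L_{n-1}$ as a function of $(z_1,\dots,z_n)$: the factor $\Lambda(z_n)^{1/2}$ is weakly measurable by Lemma~\ref{lemma:sqrtweakmeas} composed with the coordinate projection onto $z_n$; $(t_{n-1},t_n)\mapsto W_{t_{n-1}}^{t_n}$ is weakly measurable by Assumption~\ref{ass:W}; $L_{n-1}$, and hence $L_{n-1}^*$, are weakly measurable by the induction hypothesis (adjoints of weakly measurable maps being weakly measurable); so the product is weakly measurable by repeated application of Lemma~\ref{lemma:productweakmeas}.

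The heart of the argument is the identity \eqref{Lnconsistentt}. I would fix $\psi\in\Hilbert$, write $\phi := L_{n-1}\psi$, and note that since $\scp{\psi}{L_n^*L_n\,\psi} = \|L_n\psi\|^2 \ge 0$, Tonelli's theorem lets me evaluate the iterated weak integral in the order $q_n$ then $t_n$. The $q_n$-integration is just the defining property \eqref{weakintegraldef} of the weak integral $\Lambda(\Q,t_n) = \int_\Q \Lambda(q,t_n)\,\mu(\D q)$ (which exists and is bounded by Assumption~\ref{ass:Lambda}): it turns $\int_\Q \mu(\D q_n)\,\scp{\psi}{L_n^*L_n\,\psi}$ into $\scp{\phi}{W_{t_{n-1}}^{t_n*}\,\Lambda(\Q,t_n)\,W_{t_{n-1}}^{t_n}\,\phi}$. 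Since $W_{t_{n-1}}^{t_n} = 0$ for $t_n < t_{n-1}$, the remaining $t_n$-integral runs over $[t_{n-1},\infty)$, and the weak form \eqref{limitW*Wint} of Lemma~\ref{lemma:limitW*W}, applied with $s = t_{n-1}$ to the vector $\phi$, evaluates it as $\|\phi\|^2 - \scp{\phi}{(\lim_{t\to\infty} W_{t_{n-1}}^{t*} W_{t_{n-1}}^t)\,\phi} = \scp{\psi}{L_{n-1}^*(I - \lim_{t\to\infty} W_{t_{n-1}}^{t*} W_{t_{n-1}}^t)\,L_{n-1}\,\psi}$. Finally, since this quantity is at most $\|L_{n-1}\psi\|^2$, hence finite, and the resulting sesquilinear form is bounded, Lemma~\ref{lemma:weakint} (applied to the positive weakly measurable map $(q_n,t_n)\mapsto L_n^*L_n$, with $z_1,\dots,z_{n-1}$ held fixed) guarantees that the weak integral on the left of \eqref{Lnconsistentt} exists as a bounded positive operator; having the same quadratic form as $L_{n-1}^*(I - \lim_{t\to\infty} W_{t_{n-1}}^{t*} W_{t_{n-1}}^t)\,L_{n-1}$, it is equal to it, which is \eqref{Lnconsistentt}.

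I do not anticipate a genuine obstacle: the computation is routine once Lemma~\ref{lemma:limitW*W} is in hand. The only points needing mild care are ordering the argument so that weak measurability of the integrands is established before the weak integrals are manipulated, and recognizing that the two successive integrations correspond respectively to the definition of $\Lambda(\Q,t_n)$ and to the content of Lemma~\ref{lemma:limitW*W}.
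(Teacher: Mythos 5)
Your proposal is correct and follows essentially the same route as the paper: boundedness and weak measurability via Lemma~\ref{lemma:sqrtweakmeas} and Lemma~\ref{lemma:productweakmeas}, then the quadratic-form computation in which the $q_n$-integral produces $\Lambda(\Q,t_n)$ by the definition of the weak integral and the $t_n$-integral over $[t_{n-1},\infty)$ is evaluated by \eqref{limitW*Wint} of Lemma~\ref{lemma:limitW*W}. The only cosmetic difference is your explicit appeal to Lemma~\ref{lemma:weakint} and Tonelli, which the paper leaves implicit since the scalar identity for all $\psi$ already identifies the weak integral with the bounded operator on the right-hand side.
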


\begin{proof}
$L_n$ is well defined on all of $\Hilbert$ and a bounded operator because the same is true of $W_s^t$ and $\Lambda(q)^{1/2}$. Moreover, $L_n^* \,L_n$ as a function
\be
M^n \ni (z_1,\ldots, z_n) \mapsto L_n^*(z_1,\ldots,z_n)\, L_n(z_1,\ldots,z_n) \in \Bdd(\Hilbert)
\ee
is weakly measurable: $z\mapsto \Lambda(z)$ is weakly measurable by Assumption~\ref{ass:Lambda}, $z\mapsto \Lambda(z)^{1/2}$ by Lemma~\ref{lemma:sqrtweakmeas}, $(s,t) \mapsto W_s^t$ by Assumption~\ref{ass:W}; now by Lemma~\ref{lemma:productweakmeas}, $L_n$ and $L_n^* L_n$ are weakly measurable.

By definition \eqref{Lndefr}, for any $\psi\in\Hilbert$,
$$
   \Bscp{\psi}{\Bigl(\int_{\RRR}\D t_n \int_\Q \mu(\D q_n) \, L_n^* L_n \Bigr)\psi} =
   \int_{\RRR} \D t_n \int_\Q \mu(\D q_n) \, \scp{\psi}{L_n^*\, L_n \, \psi} = 
$$
$$
   =\int_{\RRR} \D t_n \int_{\Q} \mu(\D q_n) \, 
   \Bscp{\psi}{L_{n-1}^* \, W_{t_{n-1}}^{t_n*} \, \Lambda(q_n,t_n) \,
   W_{t_{n-1}}^{t_n} \, L_{n-1} \, \psi} =
$$
$$
   =\int \D t_n \int \mu(\D q_n)
   \, \Bscp{W_{t_{n-1}}^{t_n} \, L_{n-1} \, \psi}{\Lambda(q_n,t_n) \,
   W_{t_{n-1}}^{t_n} \, L_{n-1} \, \psi} =
$$
$$
   =\int \D t_n   \, \Bscp{W_{t_{n-1}}^{t_n} \, L_{n-1} \, \psi}
   {\Bigl(\int \Lambda(q_n,t_n) \,\mu(\D q_n)\Bigr)
   W_{t_{n-1}}^{t_n} \, L_{n-1} \, \psi} =
$$
$$
   =\int \D t_n   
   \, \Bscp{W_{t_{n-1}}^{t_n} \, L_{n-1} \, \psi}{\Lambda(\Q,t_n)\,
   W_{t_{n-1}}^{t_n} \, L_{n-1} \, \psi} =
$$
$$
   =  \Bscp{ L_{n-1} \, \psi}
   {\Bigl(\int \D t_n \, W_{t_{n-1}}^{t_n*} \,\Lambda(\Q,t_n)\,
   W_{t_{n-1}}^{t_n}\Bigr) \, L_{n-1} \, \psi} =
$$
[by \eqref{limitW*Wint} and $W_{t_{n-1}}^{t_n} = 0$ for $t_n < t_{n-1}$]
$$
   = \Bscp{ L_{n-1} \, \psi}{\Bigl(I-\lim_{t\to\infty} W_{t_{n-1}}^{t*} W_{t_{n-1}}^t\Bigr) \, L_{n-1} \, \psi} =
$$
$$
  = \Bscp{\psi}{L^*_{n-1} \Bigl(I-\lim_{t\to\infty} W_{t_{n-1}}^{t*} W_{t_{n-1}}^t \Bigr) \,  L_{n-1} \,\psi}\,.
$$
This implies \eqref{Lnconsistentt}.
\end{proof}

\bigskip

\begin{proofthm}{thm:exist2}  
We proceed very much as in the proof of Theorem~\ref{thm:simplest}, and begin with defining the POVM $\povm_n(\cdot)$ on $M^n$ that will be the marginal of the history POVM $\povm(\cdot)$. Recall that $M=\Q\times\RRR\cup \{\ceme\}$. For $k=0,1,2,\ldots,n$, let
\be\label{Omegakndef}
\Omega_{kn} = \Bigl\{ (z_1,\ldots,z_n)\in M^n: z_1,\ldots,z_k \in\Q\times\RRR, z_{k+1} = \ldots = z_n = \ceme \Bigr\}\,.
\ee
(In particular, $\Omega_{nn} = (\Q\times\RRR)^n$.) We want that $\povm_n(\cdot)$ is concentrated on $\cup_{k=0}^n \Omega_{kn}$ (so that sequences in which $\ceme$ is followed by a flash do not occur). Consider an arbitrary $A \in \salg_M^{\otimes n}$; then $A\cap \Omega_{kn}$ is of the form ${A}_k\times \{\ceme\}^{n-k}$ for suitable ${A}_k \subseteq (\Q\times\RRR)^k$, indeed with ${A}_k \in (\salg_\Q\otimes \Borel(\RRR))^{\otimes k}$. Note $A_n = A \cap \Omega_{nn}$. Set
\begin{multline}\label{povmndeft}
  \povm_n(A) = \sum_{k=0}^{n-1} \int_{{A}_k} 
  \tilde{\mu}^{\otimes k}(\D x_1\cdots \D x_k) \, 
  L_k^*(x_1,\ldots,x_k) \,\bigl(\lim_{t\to\infty}W_{t_k}^{t*} W_{t_k}^t  \bigr) 
  \, L_k(x_1,\ldots,x_k) +\\
  + \int_{A_{n}} \tilde{\mu}^{\otimes n} (\D x_1 \cdots \D x_n) 
  \, L_n^*(x_1,\ldots,x_n) \, L_n(x_1,\ldots,x_n)\,.
\end{multline}

We show that this defines a POVM. We begin with the case $A= M^n$: Then $A_k = (\Q\times\RRR)^k$, and
\be
\int\limits_{A_n} \tilde{\mu}^{\otimes n} (\D x_1 \cdots \D x_n)\, L_n^* \, L_n =
\int\limits_{(\Q\times\RRR)^{n-1}} \tilde{\mu}^{\otimes (n-1)} (\D x_1 \cdots \D x_{n-1}) \int \D t_n \int\limits_\Q \mu(\D q_n) \, L_n^* \, L_n=
\ee
[by Lemma~\ref{lemma:Lnconsistentt}]
\be
= \int\limits_{(\Q\times\RRR)^{n-1}} \tilde{\mu}^{\otimes (n-1)} (\D x_1 \cdots \D x_{n-1}) \, L_{n-1}^* \,(I - \lim_{t\to\infty} W_{t_{n-1}}^{t*} W_{t_{n-1}}^t) \, L_{n-1}=
\ee
\begin{multline}
= \int\limits_{A_{n-1}} \tilde{\mu}^{\otimes (n-1)} (\D x_1 \cdots \D x_{n-1}) 
\, L_{n-1}^* \, L_{n-1} -\\
-\int\limits_{A_{n-1}} \tilde{\mu}^{\otimes (n-1)} (\D x_1 \cdots \D x_{n-1}) 
\, L_{n-1}^* \,\bigl(\lim_{t\to\infty} W_{t_{n-1}}^{t*} W_{t_{n-1}}^t\bigr) \, L_{n-1}\,.
\end{multline}
Iterating this calculation $n$ times, we obtain
\be
\int\limits_{A_n} \tilde{\mu}^{\otimes n} (\D x_1 \cdots \D x_n)\, L_n^* \, L_n =
I - \sum_{k=0}^{n-1} \int\limits_{A_k} \tilde{\mu}^{\otimes k}(\D x_1 \cdots \D x_k) \, L_k^* \, (\lim_{t\to\infty} W_{t_k}^{t*} W_{t_k}^t) \, L_k\,.
\ee
Together with \eqref{povmndeft}, it follows that $\povm_n(A)=\povm_n(M^n)=I$.

To see that $\povm_n(A)$ is a well-defined bounded operator for all $A\in\salg_M^{\otimes n}$, it suffices, by Lemma~\ref{lemma:weakint}, that $\scp{\psi}{\povm_n(A) \psi}\leq \|\psi\|^2$ when $\povm_n(A)$ is replaced with its definition, i.e., with the right hand side of \eqref{povmndeft}. And indeed,
\begin{multline}
  \scp{\psi}{\povm_n(A) \psi} = \sum_{k=0}^{n-1} \int\limits_{{A}_k} 
  \tilde{\mu}^{\otimes k}(\D x_1\cdots \D x_k) \, 
  \scp{\psi}{L_k^* \,\bigl(\lim_{t\to\infty}W_{t_k}^{t*} W_{t_k}^t  \bigr) 
  \, L_k \,\psi} +\\
  + \int\limits_{A_{n}} \tilde{\mu}^{\otimes n} (\D x_1 \cdots \D x_n) 
  \, \scp{\psi}{L_n^* \, L_n \,\psi}\leq
\end{multline}
[because the integrands are nonnegative by Lemma~\ref{lemma:limitW*W}]
\begin{multline}
  \leq \sum_{k=0}^{n-1} \int\limits_{(\Q\times\RRR)^k} 
  \tilde{\mu}^{\otimes k}(\D x_1\cdots \D x_k) \, 
  \scp{\psi}{L_k^* \,\bigl(\lim_{t\to\infty}W_{t_k}^{t*} W_{t_k}^t  \bigr) 
  \, L_k \,\psi} +\\
  + \int\limits_{(\Q\times\RRR)^n} \tilde{\mu}^{\otimes n} (\D x_1 \cdots \D x_n) 
  \, \scp{\psi}{L_n^* \, L_n \,\psi} = \scp{\psi}{\povm_n(M^n)\,\psi} = \|\psi\|^2\,.
\end{multline}
To see that $\povm_n(\cdot)$ is $\sigma$-additive in the weak sense, just note that $\int_A$ is $\sigma$-additive in $A$.

We check the consistency condition \eqref{FPOVMconsistent}:
\be
\povm_{n+1}(A\times M) = \povm_{n+1}(A\times (\Q\times\RRR)) + \povm_{n+1}(A \times \{\ceme\}) =
\ee
[by definition \eqref{povmndeft}]
\begin{multline}
  = \int\limits_{A_n\times (\Q\times\RRR)} \tilde{\mu}^{\otimes (n+1)}
  (\D x_1 \cdots \D x_{n+1}) \, L_{n+1}^* \, L_{n+1} +\\
  + \sum_{k=0}^{n} \int\limits_{{A}_k} 
  \tilde{\mu}^{\otimes k}(\D x_1\cdots \D x_k) \, 
  L_k^* \,\bigl(\lim_{t\to\infty}W_{t_k}^{t*} W_{t_k}^t  \bigr) \, L_k =
\end{multline}
[by \eqref{Lnconsistentt}]
\begin{multline}
  = \int\limits_{A_n} \tilde{\mu}^{\otimes n} (\D x_1 \cdots \D x_n) 
  \, L_{n}^* \,\bigl(I- \lim_{t\to\infty} W_{t_n}^{t*}  W_{t_n}^t \bigr)\, L_{n} +\\
  + \sum_{k=0}^{n} \int\limits_{{A}_k} 
  \tilde{\mu}^{\otimes k}(\D x_1\cdots \D x_k) \, 
  L_k^* \,\bigl(\lim_{t\to\infty}W_{t_k}^{t*} W_{t_k}^t  \bigr) \, L_k =
\end{multline}
\be
  = \int\limits_{A_n} \tilde{\mu}^{\otimes n} (\D x_1 \cdots \D x_n) 
  \, L_{n}^* \, L_{n} 
  + \sum_{k=0}^{n-1} \int\limits_{{A}_k} 
  \tilde{\mu}^{\otimes k}(\D x_1\cdots \D x_k) \, 
  L_k^* \,\bigl(\lim_{t\to\infty}W_{t_k}^{t*} W_{t_k}^t  \bigr) \, L_k =
\ee
\be
= \povm_n(A)\,.
\ee

By Theorem~\ref{thm:Kol}, there is a unique POVM $\povm(\cdot)$ on $M^\NNN$ whose marginals are the $\povm_n(\cdot)$. It is concentrated on the set $\Omega$ given by \eqref{cemeOmega} of sequences for which $\ceme$ is absorbing, because any other sequence, one with a space-time point after $\ceme$, would already for sufficiently large $n$ fail to be contained in any of $\Omega_{kn}$ as defined in \eqref{Omegakndef}. The history POVM $\povm(\cdot)$ is also concentrated on those sequences that are ordered by the time coordinates of the flashes, $T_1 < T_2 < \ldots$. 

Moreover, for every $\psi\in\Hilbert$ with $\|\psi\|=1$ there is a unique probability measure $\PPP^\psi$ on $M^\NNN$ that extends the distributions $\scp{\psi}{\povm_n(\cdot)\,\psi}$. Indeed, $\PPP^\psi(\cdot) = \scp{\psi}{\povm(\cdot)\, \psi}$. To see that it satisfies \eqref{Pdefr}, note that for an event $A$ concerning $Z_1,\ldots,Z_n$ and entailing that $\# F\geq n$, in other words for $A\subseteq (\Q\times\RRR)^n$ with $A\in (\salg_\Q \otimes \Borel(\RRR))^{\otimes n}$, we have that $A_n = A$ and thus
\be
\povm_n(A) =  \int_{A} \tilde{\mu}^{\otimes n} (\D x_1 \cdots \D x_n) 
  \, L_n^* \, L_n
\ee
by \eqref{povmndeft}. As a consequence, $\PPP^\psi$ defines a GRWf process.

To show that $\PPP^\psi$ is uniquely determined by \eqref{Pdefr}, we show that the joint distribution of the first $n$ components of $F$, $Z_k\in M$, must be $\scp{\psi}{\povm_n(\cdot) \, \psi}$. Indeed, from \eqref{Pdefr} it follows that, for $A\subseteq (\Q\times\RRR)^n$ with $A\in(\salg_\Q\otimes \Borel(\RRR))^{\otimes n}$,
\begin{multline}
\PPP\Bigl(\# F = n, (Z_1,\ldots, Z_n) \in A\Bigr) = 
\PPP\Bigl(\# F \geq n, (Z_1, \ldots, Z_n) \in A\Bigr) - \\ -
\PPP\Bigl(\# F \geq n+1, (Z_1, \ldots, Z_{n+1}) \in A\times (\Q\times\RRR) \Bigr)= 
\end{multline}
\be
=\int\limits_A \tilde{\mu}^{\otimes n}(\D z_1\cdots \D z_n)\, \scp{\psi}{L_n^*L_n\psi} - \int\limits_{A\times(\Q\times\RRR)} \tilde{\mu}^{\otimes (n+1)}(\D z_1 \cdots \D z_{n+1}) \, \scp{\psi}{L_{n+1}^* L_{n+1} \psi} =
\ee
[by \eqref{Lnconsistentt}]
\be
=\int\limits_A \tilde{\mu}^{\otimes n}(\D z_1\cdots \D z_n)\, \scp{\psi}{L_n^*L_n\psi} 
- \int\limits_{A} \tilde{\mu}^{\otimes n}(\D z_1 \cdots \D z_{n}) \, 
\scp{\psi}{L_{n}^* (I - \lim_{t\to\infty} W_{t_n}^{t*} W_{t_n}^t) L_{n} \psi} =
\ee
\be
= \int_{A} \tilde{\mu}^{\otimes n}(\D z_1 \cdots \D z_{n}) \, \scp{\psi}{L_{n}^* 
 (\lim_{t\to\infty} W_{t_n}^{t*} W_{t_n}^t) L_{n} \psi} \,.
\ee
This implies that for $A\subseteq M^n$ with $A\in \salg_M^{\otimes n}$ (using that $\ceme$ is absorbing) that
\be
\PPP(A) = \sum_{k=0}^{n-1}\PPP\Bigl(\# F =k, (Z_1,\ldots,Z_k) \in A_k\Bigr) + \PPP\Bigl(\# F\geq n, (Z_1,\ldots,Z_n) \in A_n\Bigr) =
\ee
\begin{multline}
=\sum_{k=0}^{n-1}
 \int_{A_k} \tilde{\mu}^{\otimes k}(\D z_1 \cdots \D z_k) \, \scp{\psi}{L_k^* 
 (\lim_{t\to\infty} W_{t_k}^{t*} W_{t_k}^t) L_k \psi} +\\ +
 \int_{A_n} \tilde{\mu}^{\otimes n} (\D z_1\cdots \D z_n) \, \scp{\psi}{L_n^* L_n \psi}
 = \scp{\psi}{\povm_n(A) \, \psi}\,.
\end{multline}
\end{proofthm}

\subsubsection{Given $H$ and $\Lambda$}
\label{sec:HLambda1}

Now suppose that we are given operators $H(t)$ for every $t\geq 0$ and $\Lambda(q,t)$ for every $t\geq t_0$ and $q\in\Q$, where $(\Q,\salg_\Q)$ is again a Borel space and $\mu$ a $\sigma$-finite measure on $(\Q,\salg_\Q)$. Our aim now is to construct the evolution operators $W_s^t$.

\begin{ass}\label{ass:dyson}
For every $t\geq t_0$, $H(t)$ is a bounded self-adjoint operator; $t\mapsto H(t)$ is weakly measurable. 
Moreover, for every $t\geq t_0$
\begin{equation}\label{Htbdd}
\int_{t_0}^t \|H(s)\| \, \D s< \infty\,, \quad \int_{t_0}^t \|\Lambda(\Q,s)\|\,\D s< \infty\,.
\end{equation}
\end{ass}

The functions $t\mapsto \|H(t)\|$ and $t\mapsto \|\Lambda(\Q,t)\|$ are measurable by Lemma~\ref{lemma:normweakmeas}.

As an abbreviation, set
\be
R_t = -\tfrac{1}{2}\Lambda(\Q,t) -\tfrac{\I}{\hbar} H(t)\,.
\ee
Note that $t\mapsto R_t$ is weakly measurable and $R_t$ is bounded with $\|R_t\| \leq \frac{1}{2} \|\Lambda(\Q,t)\| + \frac{1}{\hbar} \|H(t)\|$, so that $\int_{t_0}^t \|R_s\| \, \D s<\infty$. Now define $W_s^t$ by the \emph{Dyson series}
\begin{equation}\label{dyson}
W_s^t = I+\sum_{n=1}^\infty \int_s^t \D t_1 \int_{t_1}^t \D t_2 \cdots \int_{t_{n-1}}^t \D t_n \,  R_{t_n} \cdots R_{t_1}\,.
\end{equation}

\begin{lemma}\label{lemma:dyson}
Under Assumptions~\ref{ass:Lambda} and \ref{ass:dyson}, the Dyson series \eqref{dyson} is weakly convergent and defines a bounded operator $W_s^t$ on $\Hilbert$. The function $(s,t)\mapsto W_s^t$ is weakly measurable and satisfies the following weak version of \eqref{Wdeft}:
\be\label{weakWdeft}
  W_s^{t} - I =  \int_s^t \D t' \,
  \Bigl(-\tfrac{1}{2} \Lambda(\Q,t') - \tfrac{\I}{\hbar} H(t')\Bigr) W_s^{t'} \,,
\ee
as well as \eqref{weakW*W}. Thus, Assumption~\ref{ass:W} is fulfilled.
\end{lemma}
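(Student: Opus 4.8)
The plan is to handle the Dyson series \eqref{dyson} term by term. I would first read the $n$-th summand as a single weak integral $A_n(s,t):=\int_{\Delta_n(s,t)}R_{t_n}\cdots R_{t_1}\,\D t_1\cdots\D t_n$ over the simplex $\Delta_n(s,t):=\{s<t_1<\cdots<t_n<t\}\subseteq\RRR^n$, where $R_t=-\tfrac12\Lambda(\Q,t)-\tfrac{\I}{\hbar}H(t)$. The integrand $(t_1,\ldots,t_n)\mapsto R_{t_n}\cdots R_{t_1}$ is weakly measurable by a routine iteration of Lemma~\ref{lemma:productweakmeas} applied to the coordinatewise factors $(t_1,\ldots,t_n)\mapsto R_{t_k}$, and since $\|R_{t_n}\cdots R_{t_1}\|\le\prod_k\|R_{t_k}\|$ while $\int_s^t\|R_r\|\,\D r\le\int_{t_0}^t\bigl(\tfrac12\|\Lambda(\Q,r)\|+\tfrac1\hbar\|H(r)\|\bigr)\,\D r=:a<\infty$ by Assumption~\ref{ass:dyson}, the associated sesquilinear form $(\phi,\psi)\mapsto\int_{\Delta_n(s,t)}\scp{\phi}{R_{t_n}\cdots R_{t_1}\psi}\,\D t_1\cdots\D t_n$ is bounded by $\|\phi\|\,\|\psi\|\,a^n/n!$ — here I use that the $n!$ permuted simplices tile $[s,t]^n$ up to a null set and that $\prod_k\|R_{t_k}\|$ is symmetric under permutations. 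By the Riesz lemma for bounded sesquilinear forms, $A_n(s,t)\in\Bdd(\Hilbert)$ is well defined with $\|A_n(s,t)\|\le a^n/n!$, so $\sum_n\|A_n(s,t)\|\le\E^a<\infty$ and the series \eqref{dyson} converges in operator norm, a fortiori weakly, defining $W_s^t$ with $\|W_s^t\|\le\E^a$; for $t<s$ I set $W_s^t=0$ as required by Assumption~\ref{ass:W}. Weak measurability of $(s,t)\mapsto W_s^t$ then follows because $\scp{\phi}{A_n(s,t)\psi}=\int_{\RRR^n}1_{\{s<t_1<\cdots<t_n<t\}}\scp{\phi}{R_{t_n}\cdots R_{t_1}\psi}\,\D t_1\cdots\D t_n$ is measurable in $(s,t)$ by Fubini, and $\scp{\phi}{W_s^t\psi}$ is the pointwise limit of the partial sums.

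Next I would derive the weak integral equation \eqref{weakWdeft} by inserting the series for $W_s^{t'}$ into its right-hand side and re-indexing, in the spirit of the formal identity \eqref{Wdeft}. Staying at the level of matrix elements, $\int_s^t\scp{\psi}{R_{t'}W_s^{t'}\psi}\,\D t'=\int_s^t\scp{R_{t'}^*\psi}{W_s^{t'}\psi}\,\D t'=\sum_{n\ge0}\int_s^t\scp{R_{t'}^*\psi}{A_n(s,t')\psi}\,\D t'$, the interchange of $\sum$ and $\int$ being licensed by the bound $\sum_n\int_s^t\|R_{t'}\|\,(a^n/n!)\,\|\psi\|^2\,\D t'=a\,\E^a\,\|\psi\|^2<\infty$. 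The $n=0$ term equals $\scp{\psi}{A_1(s,t)\psi}$; for $n\ge1$ one unfolds $A_n(s,t')$ as a weak integral over $\Delta_n(s,t')$, pairs it against the \emph{fixed} vector $R_{t'}^*\psi$, and applies Fubini in the $n+1$ variables $(t_1,\ldots,t_n,t')$, justified by the same factorial bound, to recognize the outcome as $\scp{\psi}{A_{n+1}(s,t)\psi}$. Summing, the right-hand side of \eqref{weakWdeft} equals $\sum_{m\ge1}\scp{\psi}{A_m(s,t)\psi}=\scp{\psi}{(W_s^t-I)\psi}$ for every $\psi$; this is \eqref{weakWdeft}, and in particular $W_s^s=I$ (also visible directly since $\Delta_n(s,s)=\emptyset$).

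The last and most delicate point is \eqref{weakW*W}, the weak analogue of \eqref{W*Wt}, which I would deduce from \eqref{weakWdeft}. Fix $s$ and $\psi\in\Hilbert$ and set $g(t)=\|W_s^t\psi\|^2$. From \eqref{weakWdeft} one gets, for $u\le t$, $\|W_s^t\psi-W_s^u\psi\|=\sup_{\|\phi\|=1}\bigl|\int_u^t\scp{\phi}{R_rW_s^r\psi}\,\D r\bigr|\le\E^a\|\psi\|\int_u^t\|R_r\|\,\D r$; writing $g(t)-g(u)=2\Re\scp{W_s^u\psi}{W_s^t\psi-W_s^u\psi}+\|W_s^t\psi-W_s^u\psi\|^2$ then shows, using $r\mapsto\|R_r\|\in L^1$, that $g$ is absolutely continuous. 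To differentiate a.e.\ I would write $W_s^{t+h}\psi-W_s^t\psi=\int_t^{t+h}R_rW_s^r\psi\,\D r$ as a Bochner integral, legitimate because $\Hilbert$ is separable (so $r\mapsto R_rW_s^r\psi$ is strongly measurable by Pettis's theorem, and norm-integrable), and apply the vector-valued Lebesgue differentiation theorem to $r\mapsto R_rW_s^r\psi$ together with the scalar one to $r\mapsto\|R_rW_s^r\psi\|$, obtaining $g'(t)=2\Re\scp{W_s^t\psi}{R_tW_s^t\psi}$ for a.e.\ $t$. Since $H(t)$ and $\Lambda(\Q,t)$ are self-adjoint, $R_t+R_t^*=-\Lambda(\Q,t)$, hence $g'(t)=-\scp{W_s^t\psi}{\Lambda(\Q,t)W_s^t\psi}$ a.e.; integrating from $s$ to $t$ (with $g(s)=\|\psi\|^2$) yields $\scp{\psi}{W_s^{t*}W_s^t\psi}-\|\psi\|^2=-\int_s^t\scp{\psi}{W_s^{t'*}\Lambda(\Q,t')W_s^{t'}\psi}\,\D t'$. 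Since $\int_s^t\|\Lambda(\Q,t')\|\,\D t'<\infty$ by Assumption~\ref{ass:dyson}, the operator $\int_s^tW_s^{t'*}\Lambda(\Q,t')W_s^{t'}\,\D t'$ exists as a bounded operator (Lemma~\ref{lemma:weakint}), and since both sides are of the form $\scp{\psi}{T\psi}$ with $T$ bounded and agree for all $\psi$, the underlying operators coincide — this is \eqref{weakW*W}. Together with boundedness, the convention $W_s^t=0$ for $t<s$, and weak measurability, all requirements of Assumption~\ref{ass:W} are met, so Theorem~\ref{thm:exist2} applies. I expect the main obstacle to be precisely the rigor of this last step: keeping every manipulation inside the weak-integral calculus (or, equivalently, invoking separability of $\Hilbert$ to make the $\Hilbert$-valued integrals Bochner-legitimate) and justifying the a.e.\ differentiation of $g$.
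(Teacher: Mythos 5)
Your proof is correct, and for the first two claims it runs along the same lines as the paper: the $n!$-bound over the simplex giving (even norm-) convergence of the Dyson series, weak measurability via products and limits, and \eqref{weakWdeft} by inserting the series, exchanging $\sum$ and $\int$ by absolute convergence, and re-indexing (the paper moves $R_{t'}$ inside the weak integral via Lemma~\ref{lemma:Rint}, you pair against the fixed vector $R_{t'}^*\psi$ --- equivalent devices). Where you genuinely diverge is \eqref{weakW*W}: the paper proves it by a second, purely combinatorial manipulation of the \emph{double} Dyson series, writing $-\Lambda(\Q,t')=R_{t'}^*+R_{t'}$, expanding both $W_s^{t'}{}^*$ and $W_s^{t'}$, and re-indexing the resulting sums over simplices until the expression collapses to $W_s^{t*}W_s^t-I$; this stays entirely inside the weak-integral calculus and Fubini, at the price of a long bookkeeping computation. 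You instead derive \eqref{weakW*W} as a consequence of \eqref{weakWdeft}: upgrade the weak integral equation to a Bochner integral for $t\mapsto W_s^t\psi$ (legitimate, since the paper assumes $\Hilbert$ separable, so Pettis plus the $L^1$ bound on $\|R_\cdot\|$ apply), deduce absolute continuity of $g(t)=\|W_s^t\psi\|^2$, differentiate a.e.\ using $R_t+R_t^*=-\Lambda(\Q,t)$, and integrate back, finally passing from the quadratic-form identity to the operator identity via Lemma~\ref{lemma:weakint} and polarization. Your route is closer in spirit to the formal derivation of \eqref{W*Wt} from \eqref{Wdeft} and is shorter once the vector-valued integration facts are granted, but it imports more machinery (Pettis, Bochner, a.e.\ differentiation of vector-valued integrals); the paper's route is more elementary in its toolkit and self-contained with respect to its own lemmas. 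Both are sound, and your handling of the delicate points (the $O(h)$ cross term via Lebesgue points of $\|R_\cdot\|$, and the final boundedness of $\int_s^t W_s^{t'}{}^*\Lambda(\Q,t')W_s^{t'}\,\D t'$ from Assumption~\ref{ass:dyson}) is adequate.
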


\begin{proof}
To see that \eqref{dyson} is weakly convergent, note that
\begin{equation}
\sum_{n=1}^\infty \int_s^t \D t_1 \int_{t_1}^t \D t_2 \cdots \int_{t_{n-1}}^t \D t_n \, 
\Bigl| \scp{\psi}{R_{t_n} \cdots R_{t_1} \, \psi} \Bigr| \leq
\end{equation}
\begin{equation}
\leq \|\psi\|^2 \sum_{n=1}^\infty \int_s^t \D t_1 \int_{t_1}^t \D t_2 \cdots \int_{t_{n-1}}^t \D t_n \,  \|R_{t_n}\| \cdots \|R_{t_1}\| =
\end{equation}
\begin{equation}
= \|\psi\|^2 \sum_{n=1}^\infty \frac{1}{n!} \Bigl(\int_s^t \D t_1 \|R_{t_1}\| \Bigr)^n \leq \|\psi\|^2 \E^{\int_s^t \D t_1 \|R_{t_1}\|}< \infty\,.
\end{equation}
As a consequence, $\scp{\psi}{W_s^t \,\psi}$ is well defined and defines a bounded quadratic form and thus a bounded operator $W_s^t:\Hilbert\to\Hilbert$.

To see that $(s,t) \mapsto W_s^t$ is weakly measurable, note that (i)~$t'\mapsto R_{t'}$ is; (ii)~by Lemma~\ref{lemma:productweakmeas}, $(t_1,\ldots,t_n) \mapsto R_{t_n} \cdots R_{t_1}$ is; (iii)~integrals are measurable functions of their boundaries; and (iv)~limits of measurable function are measurable.

To check \eqref{weakWdeft}, note first that the domain of integration in $\RRR^n$ for the $n$-th term of \eqref{dyson} is characterized by $s\leq t_1 \leq \ldots \leq t_n \leq t$, and changing the order of integration (because of absolute weak convergence), \eqref{dyson} can be rewritten as
\be
W_s^t = I+\sum_{n=1}^\infty \int_s^t \D t_n \int_{s}^{t_n} \D t_{n-1} \cdots \int_{s}^{t_2} \D t_1 \,  R_{t_n} \cdots R_{t_1}\,.
\ee
As a consequence, the right hand side of \eqref{weakWdeft} is
\be
\int_s^t \D t' \, R_{t'} W_s^{t'} = \int_s^t \D t' \, R_{t'} + \int_s^t \D t' \, R_{t'} \sum_{n=1}^\infty  \int_s^{t'} \D t_n \int_{s}^{t_n} \D t_{n-1} \cdots \int_{s}^{t_2} \D t_1 \, R_{t_n} \cdots R_{t_1} =
\ee
[using \eqref{Rint}]
\be
= \int_s^t \D t' \, R_{t'} + \int_s^t \D t' \sum_{n=1}^\infty  \int_s^{t'} \D t_n \int_{s}^{t_n} \D t_{n-1} \cdots \int_{s}^{t_2} \D t_1 \, R_{t'} \, R_{t_n} \cdots R_{t_1} =
\ee
[$\int\D t'$ and $\sum_n$ can be exchanged because of absolute (weak) convergence]
\be
= \int_s^t \D t' \, R_{t'} + \sum_{n=1}^\infty   \int_s^t \D t' \int_s^{t'} \D t_n \int_{s}^{t_n} \D t_{n-1} \cdots \int_{s}^{t_2} \D t_1 \, R_{t'} \, R_{t_n} \cdots R_{t_1} =
\ee
[rename $t'\to t_{n+1}$]
\be
= \int_s^t \D t_1 \, R_{t_1} + \sum_{n=1}^\infty \int_s^t \D t_{n+1}  \int_s^{t_{n+1}} \D t_n \int_{s}^{t_n} \D t_{n-1} \cdots \int_{s}^{t_2} \D t_1 \,  R_{t_{n+1}} \, R_{t_n} \cdots R_{t_1}=
\ee
[$m:=n+1$]
\be
= \sum_{m=1}^\infty \int_s^t \D t_{m}  \int_s^{t_{m}} \D t_{m-1} \cdots \int_{s}^{t_2} \D t_1 \,  R_{t_{m}} \, R_{t_{m-1}} \cdots R_{t_1}= W_s^t - I\,.
\ee

To check \eqref{weakW*W}, we proceed in a similar way. To simplify notation, set $\tau=(t_1,\ldots,t_n)$, $R_\tau= R_{t_n} \cdots R_{t_1}$, and
\be
S_n(s,t)= \bigl\{(t_1,\ldots,t_n) \in \RRR^n: s\leq t_1 \leq \ldots \leq t_n \leq t\bigr\}\,.
\ee
For $n=0$, set
\be
R_\emptyset = I\quad \text{and} \quad \int_{S_0(s,t)} \D \tau \, f(\tau) = f(\emptyset)\,.
\ee
Then the Dyson series \eqref{dyson} can be written as
\be
W_s^t = \sum_{n=0}^\infty \int\limits_{S_n(s,t)} \D \tau \, R_\tau\,.
\ee

Now observe that the right hand side of \eqref{weakW*W} is
\be
-\int_s^t \D t'\, W_s^{t'}{}^* \Lambda(\Q,t') \, W_s^{t'} = \int_s^t \D t' \,
 W_s^{t'}{}^* (R_{t'}^* + R_{t'}) W_s^{t'} =
\ee
[using \eqref{Rint}; the ordering of summation and integration can be changed because of absolute (weak) convergence]
\be
= \int_s^t\D t' \sum_{n,n^*=0}^\infty \int\limits_{S_n(s,t')} \D \tau \int\limits_{S_{n^*}(s,t')} \D \tau^* \, R^*_{\tau^*}  (R^*_{t'} + R_{t'}) R_{\tau} =
\ee
[separating $R^*_{t'}$ and $R_{t'}$]
\begin{multline}
= \int_s^t\D t' \sum_{n,n^*=0}^\infty \int\limits_{S_n(s,t')} \D \tau \int\limits_{S_{n^*}(s,t')} \D \tau^* \, R^*_{\tau^*}  \, R^*_{t'} \, R_{\tau} +\\
+ \int_s^t\D t' \sum_{n,n^*=0}^\infty \int\limits_{S_n(s,t')} \D \tau \int\limits_{S_{n^*}(s,t')} \D \tau^* \, R^*_{\tau^*}  \, R_{t'} \, R_{\tau} =
\end{multline}
[changing the ordering of integration and summation, and setting $t_0=t^*_0=s$]
\begin{multline}
= \sum_{n,n^*=0}^\infty \int\limits_{S_n(s,t)} \D \tau \int\limits_{S_{n^*}(s,t)} \D \tau^* \int_s^t\D t' \, 1_{t_n\leq t'} \, 1_{t^*_{n^*} \leq t'}\, R^*_{\tau^*}  \, R^*_{t'} \, R_{\tau} +\\
+ \sum_{n,n^*=0}^\infty \int\limits_{S_n(s,t)} \D \tau \int\limits_{S_{n^*}(s,t)} \D \tau^*\int_s^t\D t' \, 1_{t_n\leq t'} \, 1_{t^*_{n^*}\leq t'} \, R^*_{\tau^*}  \, R_{t'} \, R_{\tau}  =
\end{multline}
[renaming either $t' \to t_{n+1}$ or $t' \to t^*_{n^*+1}$]
\begin{multline}
= \sum_{n,n^*=0}^\infty \int\limits_{S_n(s,t)} \D \tau \int\limits_{S_{n^*+1}(s,t)} \D \tau^* \, 1_{t_n\leq t^*_{n^*+1}} \, R^*_{\tau^*}  \, R_{\tau} +\\
+ \sum_{n,n^*=0}^\infty \int\limits_{S_{n+1}(s,t)} \D \tau \int\limits_{S_{n^*}(s,t)} \D \tau^* \, 1_{t^*_{n^*}\leq t_{n+1}} \, R^*_{\tau^*}  \, R_{\tau} =
\end{multline}
[renaming either $m^*=n^*+1$ and $m=n$, or $m^*=n^*$ and $m=n+1$]
\begin{multline}
= \sum_{m=0}^\infty \sum_{m^*=1}^\infty \int\limits_{S_m(s,t)} \D \tau \int\limits_{S_{m^*}(s,t)} \D \tau^* \, 1_{t_m\leq t^*_{m^*}} \, R^*_{\tau^*}  \, R_{\tau} +\\
+ \sum_{m=1}^\infty \sum_{m^*=0}^\infty \int\limits_{S_{m}(s,t)} \D \tau \int\limits_{S_{m^*}(s,t)} \D \tau^* \, 1_{t^*_{m^*}\leq t_{m}} \, R^*_{\tau^*}  \, R_{\tau} =
\end{multline}
[separating the terms with $m=0$ or $m^*=0$]
\begin{multline}
= \sum_{m,m^*=1}^\infty \int\limits_{S_m(s,t)} \D \tau \int\limits_{S_{m^*}(s,t)} \D \tau^* \, 1_{t_m\leq t^*_{m^*}} \, R^*_{\tau^*}  \, R_{\tau} 
+ \sum_{m^*=1}^\infty \int\limits_{S_{m^*}(s,t)} \D \tau^*  \, R^*_{\tau^*} +\\
+ \sum_{m=1}^\infty \sum_{m^*=1}^\infty \int\limits_{S_{m}(s,t)} \D \tau \int\limits_{S_{m^*}(s,t)} \D \tau^* \, 1_{t^*_{m^*}\leq t_{m}} \, R^*_{\tau^*}  \, R_{\tau}
+ \sum_{m=1}^\infty  \int\limits_{S_{m}(s,t)} \D \tau \, R_{\tau} =
\end{multline}
[combining the first and third term]
\be
= \sum_{m,m^*=1}^\infty \int\limits_{S_m(s,t)} \D \tau \int\limits_{S_{m^*}(s,t)} \D \tau^* \, R^*_{\tau^*}  \, R_{\tau}
+ \sum_{m^*=1}^\infty \int\limits_{S_{m^*}(s,t)} \D \tau^* \, R^*_{\tau^*} 
+ \sum_{m=1}^\infty \int\limits_{S_{m}(s,t)} \D \tau \, R_{\tau} =
\ee
\be
= -I + \sum_{m=0}^\infty \sum_{m^*=0}^\infty \int\limits_{S_m(s,t)} \D \tau \int\limits_{S_{m^*}(s,t)} \D \tau^* \, R^*_{\tau^*}  \, R_{\tau} = -I +W_s^{t*} W_s^t\,.
\ee
This shows \eqref{weakW*W}.
\end{proof}

\begin{cor}
Under Assumptions~\ref{ass:Lambda} and \ref{ass:dyson}, there exists, for every initial time $t_0$ and every initial state vector $\psi \in \Hilbert$ with $\|\psi\|=1$, a GRWf process with Hamiltonians $H(t)$ and flash rate operators $\Lambda(q,t)$, where $W_s^t$ is given by the Dyson series \eqref{dyson}. The distribution of the process is unique and of the form $\scp{\psi}{\povm(\cdot)\, \psi}$ for a  suitable history POVM $\povm(\cdot)$ on $\bigl( M^\NNN,\salg_M^{\otimes \NNN} \bigr)$.
\end{cor}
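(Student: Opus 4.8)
The plan is to derive this corollary by composing the two results just established. First I would invoke Lemma~\ref{lemma:dyson}: under Assumptions~\ref{ass:Lambda} and \ref{ass:dyson}, the Dyson series \eqref{dyson} converges weakly for all $s,t \geq t_0$, defines a bounded operator $W_s^t$ on $\Hilbert$, the family $(s,t)\mapsto W_s^t$ is weakly measurable, and (with the convention $W_s^t=0$ for $t<s$) it satisfies the weak identities \eqref{weakWdeft} and \eqref{weakW*W}. In other words, Lemma~\ref{lemma:dyson} already tells us that Assumption~\ref{ass:W} is fulfilled by these $W_s^t$.

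With that in hand, I would simply apply Theorem~\ref{thm:exist2} to the flash rate operators $\Lambda(q,t)$ (which satisfy Assumption~\ref{ass:Lambda} by hypothesis) together with the evolution operators $W_s^t$ given by \eqref{dyson} (which, by the previous step, satisfy Assumption~\ref{ass:W}). Theorem~\ref{thm:exist2} then yields, for every initial time $t_0$ and every $\psi\in\Hilbert$ with $\|\psi\|=1$, the existence of a GRWf process in the sense of Definition~\ref{defn:GRWfprocesst}, whose distribution is unique and of the form $\scp{\psi}{\povm(\cdot)\,\psi}$ for a history POVM $\povm(\cdot)$ on $\bigl(M^\NNN,\salg_M^{\otimes\NNN}\bigr)$. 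Since in this construction $W_s^t$ is exactly the Dyson series, and \eqref{weakWdeft} is the rigorous (weak, integrated) form of the defining equation \eqref{Wdeft} for the $W$ operators, this process is precisely the GRWf process associated with the Hamiltonians $H(t)$ and the flash rate operators $\Lambda(q,t)$.

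I do not expect any genuine obstacle at this stage: the analytic content --- weak convergence of the Dyson series, its weak differentiability \eqref{weakWdeft}, and the integral identity \eqref{weakW*W} relating $W_s^{t*}W_s^t$ to $\Lambda(\Q,t)$ --- was handled entirely in the proof of Lemma~\ref{lemma:dyson}, while the measure-theoretic construction of the point process and its history POVM (via the Kolmogorov-type extension Theorem~\ref{thm:Kol}) is contained in Theorem~\ref{thm:exist2}. The one point worth stating explicitly is the terminological convention that, in the rigorous setting of this section, a ``GRWf process with Hamiltonians $H(t)$'' means the GRWf process with the evolution operators $W_s^t$ defined by \eqref{dyson}; granting that reading, the corollary is immediate.
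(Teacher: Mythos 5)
Your proposal is correct and matches the paper's own proof, which likewise consists of noting that Lemma~\ref{lemma:dyson} verifies Assumption~\ref{ass:W} for the Dyson-series operators $W_s^t$ and then applying Theorem~\ref{thm:exist2}. Your additional remarks on terminology are consistent with the paper's intent; nothing further is needed.
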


\begin{proof}
By Lemma~\ref{lemma:dyson}, Assumption~\ref{ass:W} is fulfilled, and the statement follows from Theorem~\ref{thm:exist2}.
\end{proof}

\subsection{The General GRWf Scheme}

The methods developed in the previous section for time-dependent $H$ and $\Lambda$ operators cover also the general scheme, in which the operators may depend on previous flashes and the collapse operator $C$ is not necessarily the positive square root of $\Lambda$. Since the proofs are essentially the same, we formulate only the results.

\subsubsection{Given $W$ and $\Lambda$}

Fix the initial time $t_0\in \RRR$, let $(\Q,\salg_\Q)$ be a Borel space and $\mu$ a $\sigma$-finite measure on $(\Q,\salg_\Q)$. Let 
\be\label{Omegandefp}
\Omega := \bigcup_{n=0}^\infty \Omega^{(n)} := \bigcup_{n=0}^\infty
\Bigl\{ (z_1,\ldots,z_n) : z_k= (q_k,t_k) \in \Q\times\RRR\,,\: t_0 \leq t_1 \leq \ldots \leq t_n \Bigr\}\,.
\ee
For $f\in\Omega^{(n)}$ set $\# f := n$.
Suppose that for every sequence $f\in\Omega$
we are given operators $W^t(f)$ for every $t\geq t_{\# f}$ and $C(f,q,t)$ for every $t\geq t_{\# f}$ and $q\in\Q$.

\begin{defn}\label{defn:GRWfprocessp}
Let $M= \Q\times \RRR \cup\{\ceme\}$ and $\salg_M = \salg_\Q \otimes \Borel(\RRR) \times \salg_\ceme$, where $\salg_\ceme = \bigl\{\emptyset,\{\ceme\}\bigr\}$. 
A random variable
$$
  F=(Z_1,Z_2,\ldots)
$$
with values in $\bigl( M^\NNN,\salg_M^{\otimes \NNN} \bigr)$ is a \textbf{GRWf process} with past-dependent collapse operators $C(f,q,t)$, evolution operators 
$W^t(f)$, initial time $t_0$, and initial state vector $\psi$ if $\ceme$ is absorbing and for every $n\in\NNN$ the joint distribution of $Z_1,\ldots, Z_n$ satisfies
\begin{equation}\label{Pdefpr}
  \PPP\Bigl(\# F \geq n, (Z_1,\ldots, Z_n) \in A\Bigr) =
  \int_A \tilde{\mu}^{\otimes n} (\D z_1 \cdots \D z_n) \,\scp{\psi}{L_n^* L_n \,\psi} 
\end{equation}
for $A \in (\salg_\Q \otimes \Borel(\RRR))^{\otimes n}$, where $L_0=I$ and 
\be\label{Lndefpr}
  L_n = L_n(z_1,\ldots, z_n)  =  
  C(z_1,\ldots,z_n)\, W^{t_n}(z_1,\ldots,z_{n-1}) \, 
  L_{n-1}(z_1,\ldots,z_{n-1})\,.
\ee
\end{defn}

\begin{ass}\label{ass:C}
For every $f\in\Omega$, $q\in \Q$ and $t\geq t_{\# f}$, $C(f,q,t)$ is a bounded operator; $(f,q,t)\mapsto C(f,q,t)$ is weakly measurable; for every $t \geq t_{\# f}$,
\be
\Lambda(f,\Q,t) := \int_{\Q} C(f,q,t)^*\, C(f,q,t)\,\mu(\D q)
\ee
is a bounded operator.
\end{ass}

\begin{ass}\label{ass:Wp}
For every $f\in\Omega$ and $t\geq t_{\# f}$, 
$W^t(f)$ is a bounded operator; for $t<t_{\# f}$, 
$W^t(f) =0$; the function 
$(f,t)\mapsto W^t(f)$ is weakly measurable and satisfies
\be\label{weakW*Wp}
W^t(f)^* W^t(f) - I = -\int_{t_{\# f}}^t \D t' \, W^{t'}(f)^* \, \Lambda(f,\Q,t') \, W^{t'}(f) \,.
\ee
\end{ass}

We remark that, as a consequence of Lemma~\ref{lemma:productweakmeas}, of the weak measurability of $(f,q,t)\mapsto C(f,q,t)$ and of the existence of $\Lambda(f,\Q,t)$ as a bounded operator, $(f,t)\mapsto \Lambda(f,\Q,t)$ is weakly measurable.

\begin{thm}\label{thm:exist3}
Under Assumptions~\ref{ass:C} and \ref{ass:Wp}, there exists a GRWf process for every initial time $t_0$ and every initial state vector $\psi \in \Hilbert$ with $\|\psi\|=1$ with collapse operators $C(f,q,t)$ and evolution operators 
$W^t(f)$. The distribution of the GRWf process is unique and of the form $\scp{\psi}{\povm(\cdot)\, \psi}$ for a  suitable history POVM $\povm(\cdot)$ on $\bigl( M^\NNN,\salg_M^{\otimes \NNN} \bigr)$.
\end{thm}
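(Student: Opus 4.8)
The plan is to mimic the proof of Theorem~\ref{thm:exist2} line by line, carrying the past history $f$ along as an extra parameter throughout. First I would establish the past-dependent analogues of Lemmas~\ref{lemma:limitW*W} and~\ref{lemma:Lnconsistentt}. For the limit operator: fix $f\in\Omega$; since $\Lambda(f,\Q,t')=\int_\Q C(f,q,t')^*C(f,q,t')\,\mu(\D q)\geq0$, the operator $W^{t'}(f)^*\Lambda(f,\Q,t')W^{t'}(f)$ is positive, so Assumption~\ref{ass:Wp} gives $W^t(f)^*W^t(f)\leq I$ together with monotone decrease in $t$; polarization and the Riesz lemma then yield a positive operator $\lim_{t\to\infty}W^t(f)^*W^t(f)\leq I$ with the representation $I-\int_{t_{\#f}}^\infty\D t'\,W^{t'}(f)^*\Lambda(f,\Q,t')W^{t'}(f)$, and weak measurability in $f$ follows because the integrand is weakly measurable (via Lemma~\ref{lemma:productweakmeas}) and integrals depend measurably on integrand and domain. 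For the consistency lemma: $L_n$ is bounded because $C(f,z)$ and $W^t(f)$ are, and $f_n\mapsto L_n^*L_n$ is weakly measurable by induction, applying Lemma~\ref{lemma:productweakmeas} to the factors in \eqref{Lndefpr} (note the square-root lemma is \emph{not} needed here, as $L_n$ is built directly from $C$, not from $\Lambda^{1/2}$). Integrating over the last flash $z_n=(q_n,t_n)$, the $q_n$-integral collapses $C(f_{n-1},q_n,t_n)^*C(f_{n-1},q_n,t_n)$ to $\Lambda(f_{n-1},\Q,t_n)$, and the $t_n$-integral together with the representation above gives
\[
\int_\RRR\D t_n\int_\Q\mu(\D q_n)\,L_n^*L_n=L_{n-1}^*\Bigl(I-\lim_{t\to\infty}W^t(f_{n-1})^*W^t(f_{n-1})\Bigr)L_{n-1}\,,
\]
the exact analogue of \eqref{Lnconsistentt}.

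Second, I would define the candidate marginals $\povm_n(\cdot)$ on $M^n$ exactly as in \eqref{povmndeft}, with the operator-valued density on the stratum $\Omega_{kn}$ being $L_k^*(f_k)\,\bigl(\lim_{t\to\infty}W^t(f_k)^*W^t(f_k)\bigr)\,L_k(f_k)$ for $k<n$ and $L_n^*L_n$ for $k=n$; weak measurability of these densities has just been secured. Iterating the consistency relation $n$ times gives $\povm_n(M^n)=I$; Lemma~\ref{lemma:weakint} together with the bound $\scp{\psi}{\povm_n(A)\,\psi}\leq\scp{\psi}{\povm_n(M^n)\,\psi}=\|\psi\|^2$ (valid since all integrands are nonnegative) shows $\povm_n(A)$ is a well-defined bounded positive operator for every $A\in\salg_M^{\otimes n}$; weak $\sigma$-additivity is immediate from $\sigma$-additivity of $\int_A$. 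The consistency property $\povm_{n+1}(A\times M)=\povm_n(A)$ is verified exactly as in the proof of Theorem~\ref{thm:exist2}, by splitting $M=(\Q\times\RRR)\cup\{\ceme\}$ and applying the relation above to the $(\Q\times\RRR)$-part. The one structural remark is that $\Omega=\bigcup_n\Omega^{(n)}$, being a countable disjoint union of Borel subsets of the $(\Q\times\RRR)^n$, is again a Borel space, so the weak-measurability hypotheses on $C(f,\cdot,\cdot)$ and $W^\cdot(f)$ are meaningful and are preserved under the operations used.

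Third, Theorem~\ref{thm:Kol} applied to the Borel space $(M,\salg_M)$ produces a unique history POVM $\povm(\cdot)$ on $(M^\NNN,\salg_M^{\otimes\NNN})$ with the $\povm_n(\cdot)$ as marginals, and for each unit vector $\psi$ a unique probability measure $\PPP^\psi=\scp{\psi}{\povm(\cdot)\,\psi}$ extending the $\scp{\psi}{\povm_n(\cdot)\,\psi}$. One checks that $\povm$ is concentrated on the set of sequences in which $\ceme$ is absorbing and which are time-ordered, since any offending sequence eventually leaves every stratum $\Omega_{kn}$ (and $L_n=0$ unless $t_0\leq t_1\leq\cdots\leq t_n$, because $W^t(f)=0$ for $t<t_{\#f}$). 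Restricting to events $A\subseteq(\Q\times\RRR)^n$ one has $A_n=A$, so \eqref{povmndeft} reduces to $\povm_n(A)=\int_A\tilde\mu^{\otimes n}\,L_n^*L_n$ and hence \eqref{Pdefpr} holds; thus $\PPP^\psi$ defines a GRWf process in the sense of Definition~\ref{defn:GRWfprocessp}. Uniqueness follows as in the proof of Theorem~\ref{thm:exist2}: writing $\PPP(\#F=k,\cdots)=\PPP(\#F\geq k,\cdots)-\PPP(\#F\geq k{+}1,\cdots)$ and inserting \eqref{Pdefpr} together with the consistency relation forces the joint law of $(Z_1,\ldots,Z_n)$ to be $\scp{\psi}{\povm_n(\cdot)\,\psi}$.

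I do not expect a genuine obstacle: the argument is a transcription of the proof of Theorem~\ref{thm:exist2} with the parameter pair $(s,t)$ replaced by the history $f$ (with $s=t_{\#f}$ now implicit). The only places demanding care are (i)~confirming that the exchanges of $\int\D t_n$, $\int\mu(\D q_n)$, sums, and weak limits remain licensed, which they do because every quantity involved is nonnegative or absolutely (weakly) convergent, exactly as in Lemmas~\ref{lemma:limitW*W} and~\ref{lemma:Lnconsistentt}, and (ii)~keeping the measure-theoretic bookkeeping of the cemetery state and the strata $\Omega_{kn}$ straight. Neither goes beyond the routine.
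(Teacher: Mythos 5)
Your proposal is correct and matches the paper's intent exactly: the paper itself omits the proof of this theorem with the remark that it is ``essentially the same'' as that of Theorem~\ref{thm:exist2}, and your transcription---past-dependent analogues of Lemmas~\ref{lemma:limitW*W} and~\ref{lemma:Lnconsistentt}, the stratified definition of $\povm_n(\cdot)$ as in \eqref{povmndeft}, consistency, and Theorem~\ref{thm:Kol}---is precisely that argument carried out with $f$ as the extra parameter. Your side remarks (no need for the square-root lemma since $L_n$ is built from $C$ rather than $\Lambda^{1/2}$, and $L_n=0$ off the time-ordered set because $W^t(f)=0$ for $t<t_{\#f}$) are also correct.
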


\subsubsection{Given $H$ and $\Lambda$}

Now suppose that for every $f\in \Omega$, $t \geq t_{\# f}$ and $q\in\Q$ we are given operators $H(f,t)$ and $C(f,q,t)$.

\begin{ass}\label{ass:dysonp}
For every $f\in\Omega$ and $t\geq t_{\# f}$, $H(f,t)$ is a bounded self-adjoint operator; $(f,t)\mapsto H(f,t)$ is weakly measurable. 
Moreover, for every $t\geq t_{\# f}$
\begin{equation}\label{Htbddp}
\int_{t_{\# f}}^t \|H(f,s)\| \, \D s< \infty\,, \quad 
\int_{t_{\# f}}^t \|\Lambda(f,\Q,s)\|\,\D s< \infty\,.
\end{equation}
\end{ass}

The functions $(f,t)\mapsto \|H(f,t)\|$ and $(f,t)\mapsto \|\Lambda(f,\Q,t)\|$ are measurable, as pointed out in Section~\ref{sec:HLambda1}. Set
\be
R_t(f) = -\tfrac{1}{2}\Lambda(f,\Q,t) -\tfrac{\I}{\hbar} H(f,t)\,.
\ee
Then $(f,t)\mapsto R_t(f)$ is weakly measurable and $R_t(f)$ is bounded with $\|R_t(f)\| \leq \frac{1}{2} \|\Lambda(f,\Q,t)\| + \frac{1}{\hbar} \|H(f,t)\|$, so that $\int_{t_{\# f}}^t \|R_s(f)\| \, \D s<\infty$. Now define 
$W^t(f)$ by the appropriate Dyson series
\begin{equation}\label{dysonp}
W^t(f) = I+\sum_{n=1}^\infty \int_{t_{\# f}}^t \D s_1 \int_{s_1}^t \D s_2 \cdots \int_{s_{n-1}}^t \D s_n \,  R_{s_n}(f) \cdots R_{s_1}(f)\,.
\end{equation}

\begin{cor}
Under Assumptions~\ref{ass:C} and \ref{ass:dysonp}, there exists, for every initial time $t_0$ and every initial state vector $\psi \in \Hilbert$ with $\|\psi\|=1$, a GRWf process with past-dependent Hamiltonians $H(f,t)$ and collapse operators $C(f,q,t)$, where 
$W^t(f)$ is given by the Dyson series \eqref{dysonp}. The distribution of the process is unique and of the form $\scp{\psi}{\povm(\cdot)\, \psi}$ for a  suitable history POVM $\povm(\cdot)$ on $\bigl( M^\NNN,\salg_M^{\otimes \NNN} \bigr)$.
\end{cor}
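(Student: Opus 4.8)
The plan is to establish the past-dependent analog of Lemma~\ref{lemma:dyson}: under Assumptions~\ref{ass:C} and \ref{ass:dysonp}, the Dyson series \eqref{dysonp} converges weakly, defines a bounded operator $W^t(f)$ with $(f,t)\mapsto W^t(f)$ weakly measurable, and satisfies the weak integral version of \eqref{Wdefp} together with \eqref{weakW*Wp}; that is, Assumption~\ref{ass:Wp} holds. Once this is in hand the Corollary follows immediately from Theorem~\ref{thm:exist3}, exactly as the preceding Corollary followed from Lemma~\ref{lemma:dyson} and Theorem~\ref{thm:exist2}. The text has already done part of the setup: it has defined $R_t(f)=-\tfrac12\Lambda(f,\Q,t)-\tfrac{\I}{\hbar}H(f,t)$, noted that $(f,t)\mapsto R_t(f)$ is weakly measurable, and recorded the bound $\|R_t(f)\|\leq\tfrac12\|\Lambda(f,\Q,t)\|+\tfrac1\hbar\|H(f,t)\|$ together with $\int_{t_{\#f}}^t\|R_s(f)\|\,\D s<\infty$, using Assumption~\ref{ass:dysonp} and the measurability remarks transferred from Section~\ref{sec:HLambda1}.

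First I would fix $f\in\Omega$ with $\#f=n$ and $t\geq t_n$ and run the same term-by-term estimate as in Lemma~\ref{lemma:dyson}: the $k$-fold iterated integral over the simplex $t_n\leq s_1\leq\cdots\leq s_k\leq t$ contributes $\tfrac{1}{k!}\bigl(\int_{t_n}^t\|R_s(f)\|\,\D s\bigr)^k$, so $\sum_k\int\cdots\bigl|\scp{\psi}{R_{s_k}(f)\cdots R_{s_1}(f)\,\psi}\bigr|\leq\|\psi\|^2\exp\bigl(\int_{t_n}^t\|R_s(f)\|\,\D s\bigr)<\infty$. Hence \eqref{dysonp} is weakly absolutely convergent, $\scp{\psi}{W^t(f)\psi}$ is a bounded quadratic form, and $W^t(f)$ is a bounded operator (with $W^t(f)=0$ for $t<t_{\#f}$ by definition). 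For weak measurability of $(f,t)\mapsto W^t(f)$: by Assumption~\ref{ass:dysonp} and Lemma~\ref{lemma:productweakmeas} the map $(f,s_1,\ldots,s_k)\mapsto R_{s_k}(f)\cdots R_{s_1}(f)$ is weakly measurable for each $k$; the iterated integrals over the $f$-dependent simplex are jointly measurable in $f$ and $t$ (integrals are measurable functions of their boundaries and parameters); and the weak limit of the partial sums is weakly measurable. One also checks that the set $\{(f,t):t\geq t_{\#f}\}$ is Borel in $\bigl(\bigcup_n\Omega^{(n)}\bigr)\times\RRR$, so that the piecewise definition is weakly measurable.

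The two identities are obtained by transcribing the computations in the proof of Lemma~\ref{lemma:dyson} with $f$ carried along as a passive parameter. The weak version of \eqref{Wdefp}, i.e.\ $W^t(f)-I=\int_{t_{\#f}}^t\D t'\,R_{t'}(f)\,W^{t'}(f)$, comes from reordering the nested integrals of \eqref{dysonp}, peeling off $R_{t'}(f)$ on the left using \eqref{Rint}, and relabelling $t'$ as the new outermost time variable. Likewise \eqref{weakW*Wp} follows from the same double-simplex expansion of $W^t(f)^*W^t(f)$ used to derive \eqref{weakW*W}, together with $R_{t'}(f)^*+R_{t'}(f)=-\Lambda(f,\Q,t')$ and the index-renaming bookkeeping there. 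With Assumption~\ref{ass:Wp} verified, Theorem~\ref{thm:exist3} delivers the GRWf process, the uniqueness of its distribution, and the history POVM $\povm(\cdot)$. The only real obstacle is bookkeeping: ensuring that the joint weak measurability in $(f,t)$ survives integration over a simplex whose lower endpoint $t_{\#f}$ is itself a coordinate of $f$, and that Lemma~\ref{lemma:productweakmeas} and \eqref{Rint} are applied uniformly in $f$; no genuinely new analytic difficulty arises beyond what Lemma~\ref{lemma:dyson} already handled.
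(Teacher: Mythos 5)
Your proposal is correct and follows exactly the route the paper intends: the paper omits the argument with the remark that ``the proofs are essentially the same,'' meaning precisely your plan of transcribing Lemma~\ref{lemma:dyson} with the past flashes $f$ carried along as a parameter (using $\int_{t_{\# f}}^t\|R_s(f)\|\,\D s<\infty$, the identity $R_{t'}(f)^*+R_{t'}(f)=-\Lambda(f,\Q,t')$, and the joint weak measurability in $(f,t)$) to verify Assumption~\ref{ass:Wp}, and then invoking Theorem~\ref{thm:exist3}. No substantive difference from the paper's approach.
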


\subsection{Reconstructing $W$ and $\Lambda$}
\label{sec:reconLambdar}

We now make the considerations of Section~\ref{sec:gauge2} rigorous and show that the ``square-root-plus picture'' exists. For simplicity, we ignore the possibility that the sequence of flashes could stop, thus discarding the symbol $\ceme$ and assuming that $\povm(\cdot)$ is a POVM on $(M^\NNN,\salg_M^{\otimes \NNN})$ with $M= \Q\times \RRR$ and $\salg_M = \salg_\Q \otimes \Borel(\RRR)$. Define the marginal $\povm_n(\cdot)$ of $\povm(\cdot)$ by
\be
\povm_n(A) = \povm(A\times M^\NNN)
\ee
for all $A \in \salg_M^{\otimes n}$. Let
\be\label{timeorderedOmega}
\Omega = \Bigl\{ (z_1,z_2,\ldots) \in M^\NNN: z_k=(q_k,t_k) \in M, \: 
t_0\leq t_1\leq t_2 \leq \ldots\Bigr\}
\ee
be the set of time-ordered sequences of flashes, and
\be
\Omega^{(n)} = \Bigl\{ (z_1,\ldots,z_n) \in M^n: t_0\leq t_1\leq \ldots \leq t_n\Bigr\}
\ee
the set of length-$n$ time-ordered sequences as in \eqref{Omegandefp}.

\begin{ass}\label{ass:recon}
The POVM $\povm(\cdot)$ on $(M^\NNN,\salg_M^{\otimes \NNN})$ is such that
\begin{itemize}
\item each of its marginals $\povm_n(\cdot)$ possesses an operator-valued density function $E_n$, i.e., there is a weakly measurable $E_n: M^n \to \Bdd(\Hilbert)$ with
\be\label{povmdensityE}
\povm_n(A) = \int_A \tilde{\mu}^{\otimes n}(\D z_1\cdots \D z_n) \, E_n(z_1,\ldots,z_n)
\ee
for all $A\in\salg_M^{\otimes n}$; 
\item $\povm(\cdot)$ is concentrated on $\Omega$ as given by \eqref{timeorderedOmega}, i.e., $\povm(\Omega)=I$;
\item for all $f\in M^n$ and $t\geq t_{n}$,
\be\label{intEbdd}
\int_\Q \mu(\D q) \, E_{n+1}(f,q,t)
\ee
exists as a bounded operator; 
\item $E_n(f): \Hilbert\to\Hilbert$ is a bijective operator for all $f\in M^n$, and 
\be\label{inttinftyF}
\int_{t}^\infty \D s\int_{\Q} \mu(\D q) \, E_{n+1}(f,q,s)
\ee
is a bijective operator $\Hilbert\to\Hilbert$ for every $t\geq t_n$.
\end{itemize}
\end{ass}

\begin{thm}\label{thm:recon}
If a given POVM $\povm(\cdot)$ on $(M^\NNN,\salg_M^{\otimes \NNN})$ satisfies Assumption~\ref{ass:recon} then there exist positive operators $C(f)$ and 
$W^t(f)$ (square-root-plus picture), satisfying Assumptions~\ref{ass:C} and \ref{ass:Wp}, so that $\povm(\cdot)$ is the history POVM of the GRWf process associated with $C(f)$ and 
$W^t(f)$ by Theorem~\ref{thm:exist3}.
\end{thm}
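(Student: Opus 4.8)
The plan is to construct the positive operators $W^t(f)$ and $C(f)$ explicitly by the inductive \emph{square-root-plus} recipe of Section~\ref{sec:reconLambda} (equations \eqref{Wreconempty}, \eqref{Lambdareconempty}, \eqref{Lnrecon}, \eqref{Wrecon}, \eqref{Lambdarecon}, adapted to the present label-free setting), to verify that they satisfy Assumptions~\ref{ass:C} and~\ref{ass:Wp}, and finally to invoke Theorem~\ref{thm:exist3} and check that the history POVM of the GRWf process it produces equals the given $\povm(\cdot)$.

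Two facts about the densities are needed first. Since each $\povm_n(\cdot)$ is a POVM and $\povm_n(A)=\int_A\tilde\mu^{\otimes n}(\D z_1\cdots\D z_n)\,E_n$, for every $\psi$ the scalar $f\mapsto\scp{\psi}{E_n(f)\psi}$ is $\tilde\mu^{\otimes n}$-a.e.\ nonnegative; running over a countable dense set of $\psi$ and using weak measurability of $E_n$, there is a $\tilde\mu^{\otimes n}$-null set off which $E_n(f)\ge0$. Second, because $\povm(\cdot)$ is concentrated on the time-ordered set $\Omega$, so that $E_{n+1}$ vanishes a.e.\ off $\{t_{n+1}\ge t_n\}$, the consistency $\povm_{n+1}(A\times M)=\povm_n(A)$ gives, by a Fubini--Tonelli argument on the nonnegative scalars,
\be
\int_{t_n}^\infty\D s\int_\Q\mu(\D q)\,E_{n+1}(z_1,\ldots,z_n,q,s)=E_n(z_1,\ldots,z_n)
\ee
for a.e.\ time-ordered $(z_1,\ldots,z_n)$. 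On the (countable union of the) exceptional null sets I set $W^t(f):=I$ and $C(f,q,t):=0$, which changes none of the integrals below; off them I run the recursion \eqref{Wreconempty}--\eqref{Lambdarecon} starting from $L_0=I$. By Assumption~\ref{ass:recon}, $\int_t^\infty\D s\int_\Q\mu(\D q)\,E_{n+1}(f,q,s)$ and each $E_{n+1}(f,q,t)$ are bounded and bijective; since a positive bounded bijective operator has a bounded positive square root and a bounded inverse, and $T^2$ bijective forces a positive $T$ bijective, an induction in $n=\#f$ shows each $W^t(f)$ and $C(f,q,t)$ bounded, positive and bijective and each $L_n(f)$ bounded and bijective (with $W^t(f):=0$ for $t<t_{\#f}$). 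Weak measurability of $(f,t)\mapsto W^t(f)$, of $(f,q,t)\mapsto C(f,q,t)$ and of $f\mapsto L_n(f)$ follows from Lemmas~\ref{lemma:sqrtweakmeas}, \ref{lemma:invweakmeas}, \ref{lemma:productweakmeas}, \ref{lemma:matrixweakmeas} and~\ref{lemma:Rint} together with the weak measurability of $(f,t)\mapsto\int_t^\infty\D s\int_\Q\mu(\D q)\,E_{n+1}(f,q,s)$, whose matrix elements are, by \eqref{phiTpsi} and Tonelli, measurable in $(f,t)$.

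The heart of the proof is a short chain of identities. Substituting $C^*C=\Lambda$ and \eqref{Lambdarecon} into $L_{n+1}^*L_{n+1}=L_n^*\,W^t(f)^*\,C^*C\,W^t(f)\,L_n$, the factors $W^t(f)^*W^t(f)^{-1}=I$ and $W^t(f)^{-1}W^t(f)=I$ appear (recall $W^t(f)\ge0$), whence $L_{n+1}(f)^*L_{n+1}(f)=E_{n+1}(f)$; by induction $L_n(f)^*L_n(f)=E_n(f)$ for every $n$ and a.e.\ $f$. Combined with the displayed consistency identity this gives $\int_{t_n}^\infty\D s\int_\Q\mu(\D q)\,E_{n+1}(f,q,s)=L_n(f)^*L_n(f)$, so $W^{t_n}(f)=I$; and since $W^t(f)^2=L_n^*(f)^{-1}\bigl(\int_t^\infty\D s\int_\Q\mu(\D q)\,E_{n+1}(f,q,s)\bigr)L_n(f)^{-1}$ while the difference of the tail integrals at $t$ and $t_n$ is $-\int_{t_n}^t\D t'\int_\Q\mu(\D q)\,E_{n+1}(f,q,t')$, a computation using \eqref{Lambdarecon} once more (which yields $W^{t'}(f)^*\Lambda(f,\Q,t')W^{t'}(f)=L_n^*(f)^{-1}\bigl(\int_\Q\mu(\D q)\,E_{n+1}(f,q,t')\bigr)L_n(f)^{-1}$) gives exactly the identity \eqref{weakW*Wp}. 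Boundedness of $\Lambda(f,\Q,t)$ follows from the same formula together with Lemma~\ref{lemma:Rint} and Assumption~\ref{ass:recon}. Hence Assumptions~\ref{ass:C} and~\ref{ass:Wp} hold.

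Theorem~\ref{thm:exist3} then produces a GRWf process for $C(f)$, $W^t(f)$ with a unique history POVM $\tilde\povm(\cdot)$; letting $t\to\infty$ in \eqref{weakW*Wp} and using the identity above gives $\lim_{t\to\infty}W^t(f)^*W^t(f)=I-L_n^*(f)^{-1}\bigl(\int_{t_n}^\infty\D s\int_\Q\mu(\D q)\,E_{n+1}(f,q,s)\bigr)L_n(f)^{-1}=0$, so the stopping probability vanishes and $\tilde\povm(\cdot)$ is concentrated on $(\Q\times\RRR)^\NNN$, where its marginal on $(\Q\times\RRR)^n$ is $A\mapsto\int_A\tilde\mu^{\otimes n}(\D z_1\cdots\D z_n)\,L_n^*L_n=\int_A\tilde\mu^{\otimes n}(\D z_1\cdots\D z_n)\,E_n=\povm_n(A)$. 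Thus $\tilde\povm(\cdot)$ and $\povm(\cdot)$ have the same finite-dimensional marginals, so $\tilde\povm=\povm$ by the uniqueness clause of Theorem~\ref{thm:Kol}. The step I expect to be most delicate is the interplay between the ``everywhere'' hypotheses of Assumption~\ref{ass:recon} (bijectivity of $E_n(f)$ and of the tail integrals) and the merely a.e.\ validity of the positivity of $E_n$ and of the consistency identity: one must track how the exceptional null sets propagate through the induction on $n$ and confirm that the trivial conventions imposed there leave Assumptions~\ref{ass:C}, \ref{ass:Wp} and all the defining integrals intact, together with the routine but somewhat lengthy propagation of weak measurability through \eqref{Wreconempty}--\eqref{Lambdarecon}.
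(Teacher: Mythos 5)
Your proposal is correct and follows essentially the same route as the paper: the inductive square-root-plus reconstruction \eqref{Lambda1reconr}--\eqref{Lambdareconr} (the rigorous form of \eqref{Wreconempty}--\eqref{Lambdarecon}), the identity $L_n^*L_n=E_n$, and the verification of Assumptions~\ref{ass:C} and~\ref{ass:Wp} via the consistency relation \eqref{Enconsistent}, using the same measurability lemmas. The only differences are in bookkeeping: you neutralize the exceptional null sets by trivial conventions for $W$ and $C$ where the paper instead modifies $E_n$ on null sets so that positivity, consistency and time-ordering hold everywhere, and you make explicit the closing step (vanishing of $\lim_{t\to\infty}W^t(f)^*W^t(f)$, hence no stopping, plus the uniqueness clause of Theorem~\ref{thm:Kol}) that the paper leaves implicit after establishing $E_n=L_n^*L_n$.
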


I conjecture that the last item in Assumption~\ref{ass:recon} is stronger than necessary, in particular that $E_n(f)$ does not have to be a bijective operator. In particular, rGRWf possesses a positive-operator-valued density function $E_n(f)$ which is not bijective, and I conjecture that it fits the GRWf scheme nonetheless.

\begin{lemma}
If the POVM $\povm(\cdot)$ on $(M^\NNN,\salg_M^{\otimes \NNN})$ is such that each of its marginals $\povm_n(\cdot)$ possesses an operator-valued density function $E_n$ as in \eqref{povmdensityE} relative to $\tilde{\mu}^{\otimes n}$, then $E_n(f)\geq 0$ for $\tilde{\mu}^{\otimes n}$-almost all $f\in M^n$, and
\be\label{Enconsistent}
\int_M \tilde{\mu}(\D z) \, E_{n+1}(f,z) = E_n(f)
\ee
for $\tilde{\mu}^{\otimes n}$-almost all $f\in M^n$. If $\povm(\Omega)=I$ then $E_n(f)=0$ for $\tilde{\mu}^{\otimes n}$-almost all $f\in M^n\setminus \Omega^{(n)}$.
\end{lemma}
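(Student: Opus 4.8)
The plan is to reduce each of the three assertions, by pairing with vectors $\psi\in\Hilbert$, to a statement about scalar measures and their densities, and then to pass from ``for each $\psi$, for $\tilde\mu^{\otimes n}$-almost every $f$'' to ``for $\tilde\mu^{\otimes n}$-almost every $f$, for every $\psi$'' using separability of $\Hilbert$. Fix once and for all a countable subset $D\subseteq\Hilbert$ that is dense and closed under $(\QQQ+\I\QQQ)$-linear combinations; this exists because $\Hilbert$ is separable. Since every $E_n(f)$ is a bounded operator, the sesquilinear form $(\phi,\psi)\mapsto\scp{\phi}{E_n(f)\,\psi}$ is determined, via polarization and norm-continuity, by the values $\scp{\psi}{E_n(f)\,\psi}$ for $\psi\in D$; in particular $E_n(f)\geq 0$ (resp.\ $E_n(f)=0$) as soon as $\scp{\psi}{E_n(f)\,\psi}\geq 0$ (resp.\ $=0$) for every $\psi\in D$.

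For the positivity claim I would fix $\psi\in D$ and note that $A\mapsto\scp{\psi}{\povm_n(A)\,\psi}$ is a finite nonnegative measure on $(M^n,\salg_M^{\otimes n})$ which, by \eqref{povmdensityE} and the definition of the weak integral, has Radon--Nikodym density $f\mapsto\scp{\psi}{E_n(f)\,\psi}$ with respect to the $\sigma$-finite measure $\tilde\mu^{\otimes n}$. A density of a nonnegative measure is nonnegative almost everywhere, so $\scp{\psi}{E_n(f)\,\psi}\geq 0$ for $f$ outside a $\tilde\mu^{\otimes n}$-null set $N_\psi$; taking $N=\bigcup_{\psi\in D}N_\psi$, still null, gives $E_n(f)\geq 0$ for all $f\notin N$.

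For \eqref{Enconsistent} I would use that, from $\povm_n(A)=\povm(A\times M^\NNN)$, one has $\povm_{n+1}(A\times M)=\povm_n(A)$ for all $A\in\salg_M^{\otimes n}$. Writing out \eqref{povmdensityE} on both sides and applying Tonelli (legitimate since the integrand is nonnegative for $\tilde\mu^{\otimes(n+1)}$-a.e.\ argument by the previous step, and $\tilde\mu^{\otimes(n+1)}$ is $\sigma$-finite) gives, for each $\psi\in D$,
\be
\int_A \Bigl(\int_M \scp{\psi}{E_{n+1}(f,z)\,\psi}\,\tilde\mu(\D z)\Bigr)\,\tilde\mu^{\otimes n}(\D f)
= \int_A \scp{\psi}{E_n(f)\,\psi}\,\tilde\mu^{\otimes n}(\D f)
\ee
for every $A\in\salg_M^{\otimes n}$, whence the two integrands agree for $\tilde\mu^{\otimes n}$-a.e.\ $f$, say off a null set $N'_\psi$. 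Off $N\cup\bigcup_{\psi\in D}N'_\psi$ the quadratic forms $\psi\mapsto\int_M\scp{\psi}{E_{n+1}(f,z)\,\psi}\,\tilde\mu(\D z)$ and $\psi\mapsto\scp{\psi}{E_n(f)\,\psi}$ agree on $D$. Since for $\tilde\mu^{\otimes n}$-a.e.\ $f$ the section $z\mapsto E_{n+1}(f,z)$ is weakly measurable and (after redefinition on a $\tilde\mu$-null set, which leaves the weak integral unchanged) positive everywhere, Lemma~\ref{lemma:denseint}, applied with the dense subspace $D$, the bounded operator $E_n(f)$ and $\Lambda(z)=E_{n+1}(f,z)$, upgrades the agreement on $D$ to the operator identity \eqref{Enconsistent}.

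For the last claim I would first observe that $\povm(\Omega)=I$ forces $\povm_n(M^n\setminus\Omega^{(n)})=0$: a sequence whose first $n$ coordinates fail to be time-ordered is itself not time-ordered, so $(M^n\setminus\Omega^{(n)})\times M^\NNN\subseteq M^\NNN\setminus\Omega$, and monotonicity of the POVM gives $0\leq\povm_n(M^n\setminus\Omega^{(n)})\leq\povm(M^\NNN\setminus\Omega)=I-\povm(\Omega)=0$. Then for each $\psi\in D$ the identity $\int_{M^n\setminus\Omega^{(n)}}\scp{\psi}{E_n(f)\,\psi}\,\tilde\mu^{\otimes n}(\D f)=0$ with nonnegative integrand forces $\scp{\psi}{E_n(f)\,\psi}=0$ for $\tilde\mu^{\otimes n}$-a.e.\ $f\in M^n\setminus\Omega^{(n)}$, and the countable-dense-set-plus-polarization argument again yields $E_n(f)=0$ for $\tilde\mu^{\otimes n}$-a.e.\ $f\in M^n\setminus\Omega^{(n)}$. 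I expect the only genuine subtlety---hence the ``hard part''---to be the bookkeeping of the exceptional null sets in the measurability upgrade, i.e.\ ensuring they can be amalgamated into a single $\psi$-independent null set (which is exactly what the countable $(\QQQ+\I\QQQ)$-linear subspace $D$ achieves), together with checking that the hypotheses of Lemma~\ref{lemma:denseint} hold for the sections $z\mapsto E_{n+1}(f,z)$ at $\tilde\mu^{\otimes n}$-almost every $f$; everything else is routine measure theory.
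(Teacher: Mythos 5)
Your proposal is correct and follows essentially the same route as the paper: Radon--Nikodym uniqueness of densities for the quadratic forms $\scp{\psi}{\povm_n(\cdot)\psi}$, amalgamation of the exceptional null sets over a countable dense $(\QQQ+\I\QQQ)$-linear set, and an upgrade from that set to all of $\Hilbert$ via Lemma~\ref{lemma:denseint}, with the same treatment of the positivity and $\povm(\Omega)=I$ claims. The only cosmetic point is that Lemma~\ref{lemma:denseint} is stated for a dense \emph{subspace}, whereas your $D$ is only a $(\QQQ+\I\QQQ)$-linear dense set; either observe that the lemma's proof uses only density, or (as the paper does) first pass by polarization and linearity to the $\CCC$-span of $D$ before invoking the lemma.
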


\begin{proof}
We begin with showing that $E_n(f)\geq 0$ for $\tilde{\mu}^{\otimes n}$-almost all $f$. Let $S$ be a countable dense subset of $\Hilbert$, and for $\psi\in S$ let $A_\psi$ be the set of $f$ for which $\scp{\psi}{E_n(f) \, \psi}<0$. Since $f \mapsto \scp{\psi}{E_n(f) \,\psi}$ is a Radon--Nikodym density function of the measure $\scp{\psi}{\povm_n(\cdot) \,\psi}$ relative to $\tilde{\mu}^{\otimes n}$, it is nonnegative almost everywhere, i.e., $A_\psi$ is a null set. As a consequence, $A_S := \cup_{\psi \in S} A_\psi$ is a null set. Now for arbitrary $\psi \in \Hilbert$, there is a sequence $(\psi_m)_{m\in\NNN}$ in $S$ with $\psi_m\to\psi$ as $m\to \infty$, and hence $\scp{\psi_m}{E_n(f)\,\psi_m} \to\scp{\psi}{E_n(f) \, \psi}$. Since the limit cannot be negative if none of the members of the sequence is, $\scp{\psi}{E_n(f) \, \psi}\geq 0$ on $M^n \setminus A_S$, which is what we claimed. 

We turn to \eqref{Enconsistent}. There is no loss of generality in assuming that $E_n(f)\geq 0$ for \emph{all} (instead of almost all) $f\in M^n$ (and all $n$): a change of $(f,z)\mapsto E_{n+1}(f,z)$ on a $\tilde{\mu}^{\otimes(n+1)}$-null set entails that for $\tilde{\mu}^{\otimes n}$-almost all $f$, $z\mapsto E_{n+1}(f,z)$ changes only on a $\tilde{\mu}$-null set of $z$'s, so that the integral in \eqref{Enconsistent} is not affected. Let $\psi\in\Hilbert$ and consider the two functions
\be
g^\psi(f)= \int_M \tilde{\mu}(\D z) \,\scp{\psi}{ E_{n+1}(f,z) \,\psi}\,,\quad
h^\psi(f) = \scp{\psi}{E_n(f) \, \psi}\,.
\ee
By the Fubini--Tonelli theorem,
\be
\int_A \tilde{\mu}^{\otimes n} (\D f) \, g^\psi(f) = 
\int_{A\times M} \tilde{\mu}^{\otimes(n+1)}(\D f_{n+1}) \,\scp{\psi}{ E_{n+1}(f_{n+1}) \,\psi}
=\povm_{n+1}(A\times M)\,.
\ee
Since $\povm_{n+1}(A\times M)=\povm_n(A)$, $g^\psi$ is a density function of the measure $\scp{\psi}{\povm_n(\cdot) \,\psi}$ relative to $\tilde{\mu}^{\otimes n}$. Of course, $h^\psi$ is another density function of the same measure. By the Radon--Nikodym theorem, the density is unique up to changes on null sets, and thus 
\be\label{gfEnf}
g^\psi(f) = h^\psi(f)
\ee
for almost all $f$. 

We still have to show that a null set containing all $f$ for which \eqref{gfEnf} fails to hold can be chosen independently of $\psi$. 
To this end, let $S$ be a countable dense subset of $\Hilbert$; without loss of generality we assume that $S$ is a vector space over the complex rationals $\QQQ + \I\QQQ$. For $\psi \in S$ let $A_\psi$ be the set of those $f$ for which \eqref{gfEnf} fails to hold. We know that $A_\psi$ is a null set, and thus that $A_S = \cup_{\psi \in S} A_\psi$ is a null set. 
Fix $f\in M^n \setminus A_S$. We have that for all $\psi\in S$, $g^\psi(f) = h^\psi(f)$. By the vector space structure of $S$, if $\phi$ and $\psi$ are contained in $S$ then so are $\phi\pm\psi$ and $\phi\pm \I\psi$; using the polarization identity
\be
\scp{\phi}{T\psi}= \tfrac{1}{4} \Bigl( 
Q(\phi+\psi) - Q(\phi-\psi) -\I Q(\phi+\I\psi) +\I Q(\phi-\I\psi) \Bigr)
\ee
with $Q(\chi) = \scp{\chi}{T\chi}$, we obtain that
\be
\int_M \tilde{\mu}(\D z) \,\scp{\phi}{ E_{n+1}(f,z) \,\psi} = \scp{\phi}{E_n(f) \, \psi}
\ee
for all $\phi,\psi\in S$. By linearity in $\phi,\psi$ of each side, this is also true for all $\phi,\psi$ in the $\CCC$ vector space spanned by $S$. That is, $g^\psi(f) = h^\psi(f)$ for all $\psi$ from a dense subspace of $\Hilbert$, and hence, by Lemma~\ref{lemma:denseint}, for all $\psi\in\Hilbert$. That is, \eqref{Enconsistent} holds for all $f\in M^n \setminus A_S$.

Now suppose $\povm(\Omega)=I$. Then $\povm_n(\Omega^{(n)})=I$, or $\povm_n(A_n)=0$ for $A_n :=M^n \setminus \Omega^{(n)}$. Let $S$ be a countable dense subset of $\Hilbert$, and for $\psi\in S$ let $A_\psi$ be the set of those $f\in A_n$ for which $\scp{\psi}{E_n(f)\,\psi} \neq 0$. Since the integral of the nonnegative function $f \mapsto \scp{\psi}{E_n(f)\, \psi}$ over $A_n$ equals $\scp{\psi}{\povm_n(A_n)\,\psi}=0$, the function must vanish $\tilde{\mu}^{\otimes n}$-almost everywhere in $A_n$, and thus $\tilde{\mu}^{\otimes n}(A_\psi) = 0$. As a consequence, $A_S := \cup_{\psi \in S} A_\psi$ is a null set. Now for arbitrary $\psi \in \Hilbert$, there is a sequence $(\psi_m)_{m\in\NNN}$ in $S$ with $\psi_m\to\psi$ as $m\to \infty$, and hence (since $E_n(f)$ is bounded) $0=\scp{\psi_m}{E_n(f)\,\psi_m} \to \scp{\psi}{E_n(f)\,\psi}$ for every $f \in A_n\setminus A_S$, which is what we claimed.
\end{proof}

\bigskip

\begin{proofthm}{thm:recon}
There is no loss of generality in assuming that $E_n(f)\geq 0$ for \emph{all} (instead of almost all) $f\in M^n$ (and all $n$), that \eqref{Enconsistent} holds for \emph{all} $f\in M^n$, and that $E_n(f) = 0$ for \emph{all} $f \in M^n \setminus \Omega^{(n)}$: Inductively along $n$, we change $E_n(f)$ to zero if the given $E_n(f)$ was not positive or nonzero for $f\notin \Omega^{(n)}$; then we change $E_{n+1}(f,z)$ on a null set of $f$'s (and thus for a null set of pairs $(f,z)\in M^{n+1}$) so as to make \eqref{Enconsistent} true for \emph{all} $f\in M^n$ (which is clearly possible in a weakly measurable way).

For the reconstruction of the $W$ and $C$ operators we proceed along the lines of \eqref{Wreconempty}--\eqref{Lambdarecon}. 
Set $L_0 :=I$ and, for $t\geq t_0$,
\be
W^t(\emptyset) := \Bigl(\int_t^\infty \D s\int_\Q \mu(\D q) \, E_1(q,s)  \Bigr)^{1/2}\,.
\ee
By \eqref{inttinftyF}, the bracket is a well-defined and bijective operator, and must be positive because $E_1(q,s)\geq 0$. Thus, the square root exists and is positive; it is bijective, too, since if $T^2$ is bijective then so is $T$. For $t<t_0$ set 
$W^t(\emptyset)=0$. The function 
$t\mapsto W^t(\emptyset)$ is weakly measurable because integrals (such as $\int_t^\infty \D s \int_\Q \mu(\D q) \, \scp{\psi}{E_1(q,s)\,\psi}$) are measurable functions of their boundaries, and by Lemma~\ref{lemma:sqrtweakmeas} the root is measurable, too. Now set, for all $q\in\Q$ and $t\geq t_0$
\be\label{Lambda1reconr}
\Lambda(q,t) := W^t(\emptyset)^{-1} \, E_1(q,t) \, W^t(\emptyset)^{-1}\,.
\ee
This is well-defined and bijective (since $E_1(q,t)$ was assumed bijective); it is positive because $E_1(q,t)$ is positive and 
$W^t(\emptyset)$ is self-adjoint (and thus so is its inverse). It is weakly measurable as a function of $(q,t)$ because 
$W^t(\emptyset)$ is, by Lemma~\ref{lemma:invweakmeas} 
$W^t(\emptyset)^{-1}$ is, $E_1(q,t)$ is by assumption, and the product is by Lemma~\ref{lemma:productweakmeas}. Now set
\be
C(q,t) := \Lambda(q,t)^{1/2}\,.
\ee
It is clearly well-defined, bijective, positive, and weakly measurable as a function of $(q,t)$.

Our induction hypothesis asserts that 
$$
L_{n-1}(z_1,\ldots,z_{n-1}), 
\:\: W^{t_n}(z_1,\ldots,z_{n-1}),
\:\:\Lambda(z_1,\ldots,z_n),
\:\:\text{and }C(z_1,\ldots,z_n)
$$
are all well defined, bijective, positive except $L_{n-1}(z_1,\ldots,z_{n-1})$, and weakly measurable as a function of $(z_1,\ldots,z_n)\in \Omega^{(n)}$; the set $\Omega^{(n)}$ was defined in \eqref{Omegandefp}; furthermore, it is part of the induction hypothesis that $L^{-1}_{n-1}(z_1,\ldots,z_{n-1})$ is weakly measurable.

Now set, for $f_n=(z_1,\ldots,z_n)\in\Omega^{(n)}$ and $f_{n-1}=(z_1,\ldots,z_{n-1})$, 
\be\label{Lnreconr}
L_n(f_n) := C(f_n)\, W^{t_n}(f_{n-1})\, L_{n-1}(f_{n-1})\,.
\ee
By induction hypothesis, all factors are well defined, bijective, and weakly measurable as a function of $f_n$, and using Lemma~\ref{lemma:productweakmeas}, so is $L_n$; $L_n^{-1}$ is weakly measurable too, since $C(f_n)^{-1}$ and $W^{t_n}(f_{n-1})^{-1}$ are by Lemma~\ref{lemma:invweakmeas}, and $L_{n-1}(f_{n-1})^{-1}$ is by induction hypothesis. Set, for $t\geq t_n$,
\be\label{Wreconr}
W^t(f_n) := \Bigl( L_n^*(f_n)^{-1} \int_t^\infty \D s \int_\Q \mu(\D q) \, E_{n+1}(f_n,q,s) \, L_n(f_n)^{-1} \Bigr)^{1/2}\,.
\ee
Note that the adjoint of a bijective operator is bijective (because if $S$ is a left (right) inverse of $T$ then $S^*$ is a right (left) inverse of $T^*$), and the inverse of the adjoint is the adjoint of the inverse. That is why the bracket is a positive operator, so that the square root can be taken. By assumption, \eqref{inttinftyF} is bijective for $t\geq t_n$, and thus so is 
$W^t(f_n)$. 
We already know that $f_n \mapsto L_n(f_n)^{-1}$ is weakly measurable; so is the adjoint, and the middle integral is because $(f_n,q,s) \mapsto E_{n+1}(f_n,q,s)$ is by assumption. Thus, 
$(f_n,t)\mapsto W^t(f_n)$ is weakly measurable.

By the same arguments, with $z=(q,t)\in\Q\times\RRR$ and $t\geq t_n$, 
\be\label{Lambdareconr}
\Lambda(f_n,z) := W^t(f_n)^{-1} \, L_n^*(f_n)^{-1} \, E_{n+1}(f_n,z) \, L_n(f_n)^{-1} \, W^t(f_n)^{-1}
\ee
and
\be
C(f_n,z) = \Lambda(f_n,z)^{1/2}
\ee
are well defined, bijective, positive and weakly measurable as functions of $(f_n,z)$. This proves the induction hypothesis for $n+1$.

It now follows directly from \eqref{Lambdareconr}, \eqref{Lnreconr}, and \eqref{Lambda1reconr} that
\be\label{EnLn}
E_n(f_n) = L_n^*(f_n) \, L_n(f_n)
\ee
for $f_n \in \Omega^{(n)}$, and $E_n(f_n) = 0 = L_n^*(f_n) \, L_n(f_n)$ for $f_n \in M^n \setminus \Omega^{(n)}$.

To show that Assumption~\ref{ass:C} is fulfilled, it remains to check that $\Lambda(f,\Q,t)$ exists as a bounded operator. Indeed,
\be
\int_\Q \mu(\D q) \, \scp{\psi}{C(f,q,t)^* \, C(f,q,t)\,\psi} = 
\int_\Q \mu(\D q) \, \scp{\psi}{\Lambda(f,q,t) \,\psi} = 
\ee
[by the definition of $\Lambda(f,q,t)$]
\be
= \Bscp{L_n(f_n)^{-1} \, W^t(f_n)^{-1} \, \psi}{\Bigl(\int_\Q \mu(\D q)\, E_{n+1}(f_n,z) \Bigr) \, L_n(f_n)^{-1} \, W^t(f_n)^{-1} \,\psi}\leq
\ee
\be
\leq \Bigl\| \int_\Q \mu(\D q)\, E_{n+1}(f_n,z) \Bigr\|\, 
\|L_n(f_n)^{-1}\|^2 \, \|W^t(f_n)^{-1}\|^2\, \|\psi\|^2\,.
\ee
The operators in the norms are bounded because $L_n(f_n)^{-1}$ and 
$W^t(f_n)^{-1}$ are bijective, and \eqref{intEbdd} was assumed to be bounded.

To show that Assumption~\ref{ass:Wp} is fulfilled, it remains to check \eqref{weakW*Wp}.
\be
\int_{t_n}^t \D t' \, W^{t'}(f_n)^* \, \Lambda(f,\Q,t')\, W^{t'}(f_n) =
\ee
\be
= \int_{t_n}^t \D t'\int_\Q \mu(\D q) \, W^{t'}(f_n)^* \, \Lambda(f,q,t')\, W^{t'}(f_n) =
\ee
[by \eqref{Lambdareconr}]
\be
= \int_{t_n}^t \D t'\int_\Q \mu(\D q) \, L_n^*(f_n)^{-1} \, E_{n+1}(f_n,q,t') \, L_n(f_n)^{-1}=
\ee
[by Lemma \ref{lemma:Rint}]
\be
= L_n^*(f_n)^{-1} \,\Bigl(\int_{t_n}^t \D t'\int_\Q \mu(\D q) \, E_{n+1}(f_n,q,t')\Bigr) \, L_n(f_n)^{-1}\,,
\ee
while by \eqref{Wreconr}
\be
W^t(f_n)^* W^t(f_n) = L_n^*(f_n)^{-1} \Bigl(\int_t^\infty \D t' \int_\Q \mu(\D q) \, E_{n+1}(f_n,q,t')\Bigr) \, L_n(f_n)^{-1}\,.
\ee
Thus, the sum of the two equations is
\begin{multline}
W^t(f_n)^* W^t(f_n) + 
\int_{t_n}^t \D t' \, W^{t'}(f_n)^* \, \Lambda(f,\Q,t')\, W^{t'}(f_n) =\\
=L_n^*(f_n)^{-1} \Bigl(\int_{t_n}^\infty \D t' \int_\Q \mu(\D q) \, E_{n+1}(f_n,q,t')
\Bigr) \, L_n(f_n)^{-1} =
\end{multline}
[by \eqref{Enconsistent}]
\be
= L_n^*(f_n)^{-1} \, E_n(f_n) \, L_n(f_n)^{-1} = L_n^*(f_n)^{-1} \, L_n^*(f_n)\, L_n(f_n)  \, L_n(f_n)^{-1} = I
\ee
by \eqref{EnLn}.
\end{proofthm}

\section{Relativistic GRW Theory}\label{sec:defrGRWf}

We begin by introducing some terminology and notation. We generally intend that all manifolds, surfaces, and curves are $C^\infty$. 
A \emph{space-time} $(M,g)$ is a time-oriented Lorentzian 4-manifold (see, e.g., \cite{ON}). The simplest example is Minkowski space-time $\bigl(M=\RRR^4,g=\mathrm{diag}(1,-1,-1,-1)\bigr)$. A \emph{3-surface} is a 3-dimensional embedded submanifold (without boundary) of $M$ that is closed in the topology of $M$. A 3-surface $\Sigma$ is \emph{spacelike} if every nonzero tangent vector to $\Sigma$ is spacelike. Note that a spacelike 3-surface is a Riemannian manifold. If $\Sigma$ is a spacelike 3-surface and $x,y \in \Sigma$, the \emph{spacelike distance from $x$ to $y$ along $\Sigma$}, $\sdist_\Sigma(x,y)$, is the infimum of the Riemannian lengths of all curves in $\Sigma$ connecting $x$ to $y$. A curve in $M$ is \emph{timelike} if every nonzero tangent vector to the curve is timelike; we will always regard timelike curves as directed towards the future, i.e., we assume that the derivative relative to the curve parameter is future-pointing. A timelike curve is \emph{inextendible in $M$} if it is not a proper subset of a timelike curve in $M$. A curve in $M$ is \emph{causal} if every nonzero tangent vector to the curve is either timelike or lightlike; we also regard causal curves as directed towards the future. For every subset $A\subseteq M$, the \emph{(causal) future} of $A$ is the set
\begin{equation}
\future(A)=\{y\in M: \exists x\in A \: \exists \text{ a causal curve from $x$ to }y\}\,,
\end{equation}
and the \emph{(causal) past} of $A$ is
\begin{equation}
\past(A)=\{y\in M: \exists x\in A \: \exists \text{ a causal curve from $y$ to }x\}\,.
\end{equation}
For example, in Minkowski space-time
\be
\future(x) = \bigl\{y\in \RRR^4: (y^\mu-x^\mu)(y_\mu -x_\mu) \geq 0 , \: y^0-x^0\geq 0 \bigr\}\,.
\ee
(As usual, $y_\mu = g_{\mu\nu} y^\nu$, and we adopt the sum convention implying summation over indices that appear both upstairs and downstairs.)

For $y \in \future(x)$, the \emph{timelike distance of $y$
from $x$}, $\tdist(y,x)$, is the supremum of the lengths of all causal curves
connecting $x$ to $y$. For Minkowski space-time,
\begin{equation}
  \tdist(y,x) = \bigl((y^\mu-x^\mu)(y_\mu -x_\mu) \bigr)^{1/2}.
\end{equation}

\begin{ass}\label{ass:tdist}
$(M,g)$ is such that $\tdist(\cdot,x):\future(x) \to [0,\infty)$ is $C^\infty$ on the interior of $\future(x)$, and its derivative $\nabla_\mu\tdist$ vanishes nowhere. Furthermore, $\tdist(y,x) = 0$ if and only if $y\in\partial \future(x)$.
\end{ass}

For example, this is the case in Minkowski space-time. It is not the case in space-time manifolds with closed timelike curves, in which $\tdist$ may have nondifferentiable points. 

\begin{figure}[ht]
\begin{center}
\includegraphics[width=.4 \textwidth]{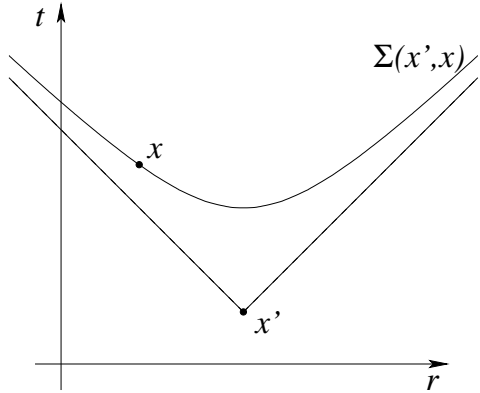}
\caption{The 3-surface $\Sigma(x',x) = \hyper(x',x)$ of constant timelike distance from $x'$ containing $x$, in Minkowski space-time.}
\end{center}
\label{figtwo}
\end{figure}

The \emph{future hyperboloid} based at a point $x$ and with distance parameter $s>0$ is the set
\begin{equation}
  \hyper_s(x) = \{y \in \future(x): \tdist(y,x) =s\}\,.
\end{equation}
If $x\in \future(x')$ then we write $\hyper(x,x') = \hyper_{\tdist(x,x')}(x')$ for the future hyperboloid based at $x'$ containing $x$. In Minkowski space-time, the future hyperboloids are 
\be\label{hyperMinkowski}
\hyper_s(x) = \Bigl\{(y^0,y^1,y^2,y^3)\in\RRR^4: 
y^0 = x^0 +\Bigl(s^2 + \sum_{i=1}^3 (y^i-x^i)^2\Bigr)^{1/2} \Bigr\}\,.
\ee
From Assumption~\ref{ass:tdist} it follows (by the implicit function theorem) that $\hyper_s(x)$ is an embedded submanifold, and thus a 3-surface; it is spacelike because $\nabla_\mu\tdist$ is timelike.

A \emph{Cauchy surface} in $M$ is a spacelike 3-surface that intersects every inextendible causal curve in $M$ exactly once.\footnote{O'Neill \cite{ON} defines a Cauchy surface as a \emph{subset} that intersects every inextendible \emph{timelike} curve in $M$ exactly once. That is different in two ways: it allows submanifolds that are not $C^\infty$, and it allows 3-surfaces possessing lightlike tangent vectors.} Let $\Cauchy$ be the set of all Cauchy surfaces in $M$, $\Hyper$ the set of all future hyperboloids in $M$. The future hyperboloids are not necessarily Cauchy surfaces. In Minkowski space-time, for example, they never are: Indeed, for given $x\in\RRR^4$, $t\mapsto y(t) = x+(t,\sqrt{1+t^2},0,0)$ is an inextendible timelike curve that does not intersect $\future(x)$, and in particular not the future hyperboloids. To see this, note first that its tangent vector $u^\mu = \D y^\mu/\D t = (1,t/\sqrt{1+t^2},0,0)$, is always timelike as $u^\mu \, u_\mu = 1-t^2/(1+t^2)>0$, and since $u^\mu$ is nonzero every other tangent vector is a multiple of $u^\mu$. It is inextendible because $y^0(t) \to \pm \infty$ as $t\to\pm \infty$, and it does not intersect $\future(x)$ because $(y^\mu(t)-x^\mu)(y_\mu(t)-x_\mu) = t^2-(1+t^2) = -1<0$.

As a consequence of its Lorentzian metric, $M$ is endowed with a natural $\sigma$-finite measure, which we denote $\D^4 x$. Similarly, every spacelike 3-surface $\Sigma$, being a Riemannian manifold, is endowed with a natural $\sigma$-finite measure, the Riemannian volume measure, which we denote $\D^3 x$ (it will always be clear which $\Sigma$ we refer to). For example, if the hyperboloid $\hyper_s(0)$ given by \eqref{hyperMinkowski} is coordinatized by $x^1,x^2,x^3$ then the measure $\D^3 x$ has density $1/\sqrt{1+r^2/s^2}$ in coordinates, i.e.,
\be\label{hyperdensity}
\int_{\hyper_s(0)} \D^3 x\, f(x) = \int_{\RRR^3} \D x^1\, \D x^2 \, \D x^3\, 
\frac{f\bigl(\sqrt{s^2+r^2}, x^1,x^2,x^3 \bigr)}{\sqrt{1+r^2/s^2}}\,,
\ee
where $r(x^1,x^2,x^3):=(\sum_{k=1}^3 (x^k)^2)^{1/2}$.

\begin{lemma}\label{lemma:D4D3}
(Coarea formula) Under Assumption~\ref{ass:tdist}, for any $x'\in M$ and any measurable $f:\future(x') \to [0,\infty)$,
\be
\int_{\future(x')} \D^4x \, f(x) = \int_0^\infty \D s \int_{\hyper_s(x')} \D^3 x\, f(x)\,.
\ee
\end{lemma}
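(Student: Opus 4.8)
The statement is the Lorentzian coarea formula: it disintegrates the natural $4$-volume $\D^4x$ on $\future(x')$ along the level sets of the function $u:=\tdist(\cdot,x')$, whose leaves $u^{-1}(s)$ are exactly the hyperboloids $\hyper_s(x')$. The plan is as follows. First I would discard the boundary: $\partial\future(x')$ is a Lipschitz topological hypersurface (a standard fact of causality theory; under Assumption~\ref{ass:tdist} it is precisely $\{y\in\future(x'):u(y)=0\}$), hence $\D^4x$-null, so the left-hand side equals $\int_U f\,\D^4x$ with $U:=\mathrm{int}\,\future(x')$. On $U$, Assumption~\ref{ass:tdist} guarantees that $u$ is $C^\infty$ with $\nabla u$ nowhere zero, and since $\nabla u$ is $g$-orthogonal to the spacelike leaves $\hyper_s(x')$ it is automatically timelike; thus $\|\nabla u\|_g:=\bigl(g^{\mu\nu}(\nabla_\mu u)(\nabla_\nu u)\bigr)^{1/2}$ is a well-defined positive smooth function on $U$, and I would reduce the lemma to proving the same identity but with the extra factor $\|\nabla u\|_g^{-1}$ inside the $\D^3x$-integral, and then showing separately that $\|\nabla u\|_g\equiv1$.

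For the first of these I would work locally. Using a partition of unity on $U$ and Tonelli's theorem (both sides are additive in $f\geq0$), it suffices to treat $f$ supported in a single chart, on which --- because $\D u\neq0$ --- I can take coordinates $(s,\xi^1,\xi^2,\xi^3)$ with $s=u$ and $\xi^1,\xi^2,\xi^3$ running along the leaves. Let $n$ be the future-pointing $g$-unit normal of the leaves, so $\nabla^\mu u=\|\nabla u\|_g\,n^\mu$; from $\D u(\partial_s)=1$ one gets $g(\partial_s,n)=\|\nabla u\|_g^{-1}$. Decomposing $\partial_s$ into its $n$-component and its leaf-tangential component and computing the $4\times4$ Gram determinant of $(\partial_s,\partial_{\xi^1},\partial_{\xi^2},\partial_{\xi^3})$ gives $|\det g_{ab}|=g(\partial_s,n)^2\det(g_{ij})$, where $(g_{ij})$ is the positive-definite metric induced on the leaf. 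Hence, in these coordinates, $\D^4x=\|\nabla u\|_g^{-1}\,\D s\,\D^3x$, and Tonelli yields
\be
\int f\,\D^4x=\int_0^\infty\D s\int_{\hyper_s(x')}\frac{f}{\|\nabla u\|_g}\,\D^3x\,,
\ee
with $s\mapsto\int_{\hyper_s(x')}(f/\|\nabla u\|_g)\,\D^3x$ measurable by the same computation; summing over the partition of unity gives this globally on $U$.

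It then remains to prove the eikonal identity $\|\nabla u\|_g\equiv1$ on $U$, i.e.\ $g^{\mu\nu}(\nabla_\mu\tdist)(\nabla_\nu\tdist)=1$ wherever $\tdist(\cdot,x')$ is differentiable on $\future(x')$; after that the integrand above collapses and the lemma is proved. Writing $\nabla^\mu u=\alpha\,n^\mu$ with $\alpha=\|\nabla u\|_g>0$: applying the reverse triangle inequality for $\tdist$ to a short future timelike geodesic segment issuing from $x$ in the direction $n$ shows $\alpha=\D u(n)\geq1$; conversely, along a maximal unit-speed timelike geodesic $\gamma$ from $x'$ to $x$ one has $\tdist(\gamma(t),x')=t$, so $1=\D u(\dot\gamma)=\alpha\,g(n,\dot\gamma)\geq\alpha$ by the reverse Cauchy--Schwarz inequality for future-pointing timelike vectors, whence $\alpha\leq1$ (see \cite{ON} for this circle of facts). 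Thus $\alpha=1$. For Minkowski space all of this can be bypassed, as the lemma then follows directly from the explicit formulas \eqref{hyperMinkowski}--\eqref{hyperdensity} by the substitution $y^0\mapsto s=((y^0)^2-\sum_{i=1}^3(y^i)^2)^{1/2}$ (taking $x'=0$). The hard part is precisely this last step: one must know that, under Assumption~\ref{ass:tdist} alone, a maximal geodesic from $x'$ realizing $\tdist$ exists at every point where $\tdist(\cdot,x')$ is differentiable --- which is what makes the eikonal equation available --- together with the routine but fiddly bookkeeping of patching the local coarea identity of the second step into a global one while keeping the $s$-integrand measurable; the Gram-determinant computation itself is elementary.
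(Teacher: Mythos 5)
Your proposal is correct in outline but takes a more hands-on route than the paper. The paper simply invokes the general coarea formula (Federer, Thm.~3.2.12) and disposes of the Jacobian factor with the one-line remark that an orthonormal basis of the tangent space to $\hyper_s(x')$ at $x$, together with the future-pointing unit normal, is an orthonormal basis of the tangent space to $M$ at $x$. You instead prove the disintegration directly, passing to coordinates adapted to the level sets of $u=\tdist(\cdot,x')$ and computing the lapse via a Gram determinant, and you make explicit the fact on which both arguments really rest: the Lorentzian eikonal equation $g^{\mu\nu}(\nabla_\mu\tdist)(\nabla_\nu\tdist)=1$, i.e.\ that $\nabla\tdist$ is the future-pointing unit normal to the hyperboloids. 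Your derivation of the bound $\geq 1$ from superadditivity (reverse triangle inequality) and of $\leq 1$ from reverse Cauchy--Schwarz along a maximizing geodesic is the standard argument, and your remark that in Minkowski space-time --- the only case in which the lemma is actually used later (Theorem~\ref{thm:Minkexist}) --- everything can be bypassed by the explicit substitution based on \eqref{hyperMinkowski}--\eqref{hyperdensity} settles that case completely. What your approach buys is self-containedness and a clear identification of where the geometry enters; what the paper's citation buys is that none of the bookkeeping (partition of unity, measurability in $s$, null boundary) needs to be redone.

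The one genuine soft spot is the one you flag yourself: the $\leq 1$ half of the eikonal equation. Superadditivity of $\tdist$ yields only the lower bound; the upper bound requires a maximizing causal geodesic from $x'$ to $x$ (or at least near-maximizers ending at $x$ in uniformly timelike directions), and the existence of such maximizers is \emph{not} a consequence of Assumption~\ref{ass:tdist}, which does not postulate global hyperbolicity. Near-maximizers alone do not obviously suffice, since they may approach $x$ along almost-null directions, in which case the $o(|x-z|)$ error from differentiability swamps the Lorentzian length of the chord from $z$ to $x$. So, as written, your argument proves the lemma under the additional (mild, and satisfied in the intended examples) hypothesis that $\tdist(x,x')$ is realized by a causal geodesic for every $x$ in the interior of $\future(x')$; to obtain the statement under Assumption~\ref{ass:tdist} alone you would have to either add that hypothesis or show that smoothness of $\tdist(\cdot,x')$ with nonvanishing gradient already forces the eikonal equation. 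In fairness, the paper's own one-line justification of the Jacobian factor leaves exactly this point implicit, so your write-up is, if anything, more explicit about where the real content lies.
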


\begin{proof}
The general coarea formula can be found as Theorem 3.2.12 in \cite{Fed}. (Actually, it is not necessary that $\tdist(\cdot,x')$ be $C^\infty$: we only need locally Lipschitz (which is true in any Lorentzian manifold), but would then have to say more about the definition of $\D^3x$.) The easiest way to see that the Jacobian factor is correct is by noting that an orthonormal basis of the tangent space in $x$ to $\hyper_s(x')$, together with the future-pointing unit normal in $x$ to $\hyper_s(x')$, forms an orthonormal basis of the tangent space in $x$ to $M$.
\end{proof}

\subsection{Abstract Definition of the Relativistic Flash Process}

We now present the abstract definition of the relativistic GRW flash process, or rGRWf process. It is abstract in the sense that it supposes certain operators as given, for which we provide concrete specification later in Section~\ref{sec:concreterGRWf}.

Suppose that for every $\Sigma \in \Cauchy \cup \Hyper$ we are given a Hilbert space $\Hilbert_\Sigma$, and we are given another Hilbert space $\Hilbert_0$. 
Suppose that we are given unitary time evolution operators, as in ordinary quantum mechanics, in the following sense: For every two $\Sigma,\Sigma' \in \Cauchy \cup \Hyper$ we are given a unitary isomorphism $U_\Sigma^{\Sigma'}: \Hilbert _\Sigma \to \Hilbert_{\Sigma'}$ such that 
\begin{equation}\label{Ucocycle}
  U_\Sigma^\Sigma = I_{\Sigma} \,, \quad
  U_{\Sigma'}^{\Sigma''} U_\Sigma^{\Sigma'} = U^{\Sigma''}_\Sigma \,,
\end{equation}
where $I_\Sigma$ denotes the identity operator on $\Hilbert_\Sigma$. (Our example will be the time evolution defined by the Dirac equation.)

The family $(U_\Sigma^{\Sigma'})$ can be represented in another way through a family of unitary isomorphisms  $U_\Sigma: \Hilbert_0 \to \Hilbert_\Sigma$. Indeed, if $(U_\Sigma^{\Sigma'})$ are given, choose an arbitrary $\Sigma_0\in\Cauchy \cup\Hyper$ and an arbitrary unitary isomorphism $U_{\Sigma_0}:\Hilbert_0 \to \Hilbert_{\Sigma_0}$, and set $U_\Sigma = U_{\Sigma_0}^\Sigma \, U_{\Sigma_0}$. Conversely, if a family $U_\Sigma: \Hilbert_0\to\Hilbert_\Sigma$ is given, define $U_\Sigma^{\Sigma'} = U_{\Sigma'}^{-1} \, U_\Sigma$, and \eqref{Ucocycle} is satisfied. (Note that identifying $\Hilbert_0$ with $\Hilbert_\Sigma$ by means of $U_\Sigma$ is nothing but the Heisenberg picture.)

Furthermore, for every $\Sigma \in \Hyper$ we are given an operator-valued function $\Lambda_\Sigma: \Sigma \to \Bdd(\Hilbert_\Sigma)$ such that every $\Lambda_\Sigma(x)$ is positive. Let $\lambda>0$ be a constant (the same as in Section~\ref{sec:simpleGRWf}).

\begin{ass}\label{ass:LambdaSigma}
$\Lambda_\Sigma: \Sigma \to \Bdd(\Hilbert_\Sigma)$ is weakly measurable, and
\begin{equation}\label{Lambdanorm}
  \int_\Sigma \D^3 x \, \Lambda_\Sigma(x) = \lambda\, I_\Sigma \,.
\end{equation}
In addition, on the set $\{(x,x')\in M^2: x\in \future(x')\}$ the function
\be\label{hypermeas}
(x,x') \mapsto U_{\hyper(x,x')}^{-1} \, \Lambda_{\hyper(x,x')}(x) \, U_{\hyper(x,x')} \in \Bdd(\Hilbert_0)
\ee
is weakly measurable.
\end{ass}

(For a concrete specification of $\Hilbert_\Sigma$, $U_\Sigma^{\Sigma'}$, and $\Lambda_\Sigma(x)$ see Section~\ref{sec:concreterGRWf}.)

Moreover, suppose we are given a finite label set $\Lab$; set $N:=\# \Lab$. For every $i\in\Lab$ we are given a point $X_{i,0}\in M$, called the \emph{seed flash} with label $i$. Finally, we are given a vector
\be
\psi \in \bigotimes_{i\in\Lab}\Hilbert_0
\ee
(i.e., the product of $N$ copies of $\Hilbert_0$) with $\|\psi\|=1$.

Let the history space be
\be
\Omega := M^{\Lab \times \NNN}
\ee
(corresponding to one sequence of flashes in $M$ for every label $i$) with $\sigma$-algebra
\be
\salg := \Borel(M)^{\otimes (\Lab\times\NNN)}\,.
\ee
For $x,x'\in M$ define the operator $K_{x'}(x) \in \Bdd(\Hilbert_0)$ by
\begin{equation}\label{coldef}
  K_{x'}(x) := 1_{x\in \future(x')} \:
  \E^{-\lambda \tdist(x,x')/2} \: 
  U^{-1}_{\Sigma} \:
  \Lambda_{\Sigma}(x)^{1/2} \: U_{\Sigma}\,,
\end{equation}
where $\Sigma=\hyper(x,x')$.
For any sequence $f=(x_0,x_1,x_2,\ldots,x_n)$ of space-time points, set
\begin{equation}
  K(f) := K_{x_{n-1}}(x_n) \cdots K_{x_1}(x_2) \, K_{x_0}(x_1)\,.
\end{equation}

\begin{defn}
Given the data just listed, 
an \textbf{rGRWf process} is a random variable 
$$
F = \bigl(X_{i,k}: i\in\Lab, k\in \NNN \bigr)
$$
with values in $(\Omega,\salg)$ 
such that for every choice of $\vec{n}=(n_i) \in \NNN^\Lab$ the joint distribution of the first $n_i$ flashes of type $i$ is
\be\label{rGRWfdistn}
\PPP\Bigl( X_{i,k} \in \D^4 x_{i,k}: i \in\Lab, k \leq n_i \Bigr) = 
  \Bigl\| \bigotimes_{i\in\Lab} K(f_i) \, \psi \Bigr\|^2 \, \D f
\end{equation}
with the notation $f_i=(x_{i,0},\ldots, x_{i,n_i})$ and
\begin{equation}\label{df}
  \D f = \prod_{i\in\Lab} \prod_{k=1}^{n_i} \D^4x_{i,k}\,.
\end{equation}
\end{defn}

\begin{thm}\label{thm:absexist}
Given the data listed above, and if Assumptions~\ref{ass:tdist} and \ref{ass:LambdaSigma} hold,
then there exists an rGRWf process and is unique in distribution. The distribution is $\scp{\psi}{\povm(\cdot)\,\psi}$ for a suitable history POVM $\povm(\cdot)$ on the history space $\Omega$.
\end{thm}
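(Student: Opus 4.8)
The plan is to reduce the assertion to the Kolmogorov extension theorem for POVMs (Theorem~\ref{thm:Kol}), applied with the Borel space $M^\Lab$ in place of $M$ (a finite power of the Polish, hence Borel, space $M$), and with the $\NNN$-index of $\Omega = M^{\Lab\times\NNN}\cong (M^\Lab)^\NNN$ regarded as labelling successive ``rounds'' of flashes. Everything hinges on the single-step normalization identity
\be\label{plan:Knorm}
\int_{\future(x')}\D^4 x\; K_{x'}(x)^*\,K_{x'}(x) = I_0\qquad(x'\in M)\,,
\ee
a weak integral, with $I_0$ the identity on $\Hilbert_0$. To prove \eqref{plan:Knorm} I would first use unitarity of $U_\Sigma$ and self-adjointness of $\Lambda_\Sigma(x)^{1/2}$ to get $K_{x'}(x)^*K_{x'}(x) = 1_{x\in\future(x')}\,\E^{-\lambda\tdist(x,x')}\,U_{\hyper(x,x')}^{-1}\Lambda_{\hyper(x,x')}(x)\,U_{\hyper(x,x')}$, and then invoke the coarea formula (Lemma~\ref{lemma:D4D3}) to rewrite $\int_{\future(x')}\D^4 x = \int_0^\infty\D s\int_{\hyper_s(x')}\D^3 x$. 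For fixed $s$ the operator $U_{\hyper_s(x')}$ is constant over $x\in\hyper_s(x')$, hence factors out of the $\D^3x$-integral, which by \eqref{Lambdanorm} of Assumption~\ref{ass:LambdaSigma} equals $\lambda I_{\hyper_s(x')}$; the surviving integral $\int_0^\infty\lambda\E^{-\lambda s}\D s$ equals $1$, giving \eqref{plan:Knorm}. This is the step where the Lorentzian geometry of $M$ — in particular that the unitary evolutions are constant along each hyperboloid $\hyper_s(x')$ — and the tuning of the exponential weight are genuinely used, and I expect it to be the main obstacle.

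The remaining ingredients are bookkeeping. Weak measurability of $(x,x')\mapsto K_{x'}(x)$ follows because the scalar factor $1_{x\in\future(x')}\,\E^{-\lambda\tdist(x,x')/2}$ is Borel on the Borel subset $\{(x,x')\in M^2: x\in\future(x')\}$ of $M^2$ (Assumption~\ref{ass:tdist} and the measurability of the causal structure underlying Assumption~\ref{ass:LambdaSigma}), and $(x,x')\mapsto U_{\hyper(x,x')}^{-1}\Lambda_{\hyper(x,x')}(x)^{1/2}U_{\hyper(x,x')} = \bigl(U_{\hyper(x,x')}^{-1}\Lambda_{\hyper(x,x')}(x)U_{\hyper(x,x')}\bigr)^{1/2}$ is weakly measurable by the weak measurability of \eqref{hypermeas} in Assumption~\ref{ass:LambdaSigma} together with Lemma~\ref{lemma:sqrtweakmeas}; Lemma~\ref{lemma:productweakmeas} then makes $K(f)$, $\bigotimes_{i\in\Lab}K(f_i)$, and $f\mapsto\bigotimes_{i\in\Lab}\bigl(K(f_i)^*K(f_i)\bigr)=\bigl(\bigotimes_i K(f_i)\bigr)^*\bigl(\bigotimes_i K(f_i)\bigr)$ weakly measurable. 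For each $n\in\NNN$ I would then define
\be\label{plan:povmn}
\povm_n(A) = \int_A\Bigl(\prod_{i\in\Lab}\prod_{k=1}^n\D^4 x_{i,k}\Bigr)\bigotimes_{i\in\Lab}\bigl(K(f_i)^*K(f_i)\bigr)\,,\qquad f_i=(x_{i,0},\ldots,x_{i,n})\,,
\ee
for $A\in(\Borel(M)^{\otimes\Lab})^{\otimes n}$. Positivity and weak $\sigma$-additivity in $A$ are clear; Lemma~\ref{lemma:weakint} together with the bound $\scp{\psi}{\povm_n(A)\psi}\le\scp{\psi}{\povm_n((M^\Lab)^n)\psi}$ shows each $\povm_n(A)$ is a well-defined bounded operator; and integrating out the $n$-th flash of each label in turn — each time using \eqref{plan:Knorm} on the corresponding tensor slot and so descending to $\povm_{n-1}$ — yields $\povm_n((M^\Lab)^n)=I$ and the consistency property $\povm_{n+1}(A\times M^\Lab)=\povm_n(A)$.

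By Theorem~\ref{thm:Kol} there is then a unique POVM $\povm(\cdot)$ on $(\Omega,\salg)$ with marginals $\povm_n$, and for each unit $\psi\in\bigotimes_{i\in\Lab}\Hilbert_0$ a unique probability measure $\scp{\psi}{\povm(\cdot)\psi}$ on $(\Omega,\salg)$; the coordinate random variable $F$ on the probability space $\bigl(\Omega,\salg,\scp{\psi}{\povm(\cdot)\psi}\bigr)$ is a candidate rGRWf process. To verify \eqref{rGRWfdistn} for an arbitrary $\vec n=(n_i)\in\NNN^\Lab$, I would start from $\povm_m$ with $m=\max_i n_i$ and integrate out, for each $i$, the flashes numbered $n_i+1,\ldots,m$ using \eqref{plan:Knorm} once more; the density left behind is exactly $\bigotimes_i\bigl(K(f_i)^*K(f_i)\bigr)$ with $f_i=(x_{i,0},\ldots,x_{i,n_i})$, so $\scp{\psi}{\povm(\cdot)\psi}$ assigns the first $n_i$ flashes of type $i$ the density $\|\bigotimes_i K(f_i)\,\psi\|^2$ with respect to $\D f$, which is \eqref{rGRWfdistn}. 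Uniqueness in distribution is then immediate: any rGRWf process has, by definition, the marginals \eqref{rGRWfdistn}, in particular — taking $\vec n=(m,\ldots,m)$ — the marginals $\scp{\psi}{\povm_m(\cdot)\psi}$ for every $m$, and by the uniqueness clause of Theorem~\ref{thm:Kol} these determine its law, which therefore coincides with $\scp{\psi}{\povm(\cdot)\psi}$.
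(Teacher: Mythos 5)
Your proposal is correct and follows essentially the same route as the paper: the normalization identity \eqref{plan:Knorm} is the paper's Lemma~\ref{lemma:KK1} (proved by the same coarea-formula and \eqref{Lambdanorm} computation), the marginal POVMs \eqref{plan:povmn} coincide with the paper's definition, and the consistency check, the application of Theorem~\ref{thm:Kol} to $M^\Lab$, the treatment of unequal $n_i$ via $m=\max_i n_i$, and the uniqueness argument all match. The only cosmetic difference is that the paper makes the tensor-slot integration explicit via a separate lemma (Lemma~\ref{lemma:oRint}), which you invoke implicitly.
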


\begin{lemma}\label{lemma:KK1}
Under Assumptions~\ref{ass:tdist} and \ref{ass:LambdaSigma},
$(x,x') \mapsto K_{x'}^*(x) \, K_{x'}(x)$ is weakly measurable, and
\begin{equation}\label{POVM}
  \int_M \D^4 x \: K_{x'}^*(x) \, K_{x'}(x) = I\,.
\end{equation}
\end{lemma}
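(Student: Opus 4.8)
The plan is to first simplify $K_{x'}^*(x)\,K_{x'}(x)$ into a manifestly positive form, then read off its weak measurability directly from Assumption~\ref{ass:LambdaSigma}, and finally establish \eqref{POVM} by pushing the space-time integral down to integrals over hyperboloids via the coarea formula and invoking the normalization \eqref{Lambdanorm}. For the first step I would exploit that each $U_\Sigma:\Hilbert_0\to\Hilbert_\Sigma$ is a unitary isomorphism, so $U_\Sigma^{-1}=U_\Sigma^*$ and the operator $A:=U_\Sigma^{-1}\,\Lambda_\Sigma(x)^{1/2}\,U_\Sigma$ appearing in \eqref{coldef} (with $\Sigma=\hyper(x,x')$) is positive on $\Hilbert_0$, hence self-adjoint, with $A^2=U_\Sigma^{-1}\,\Lambda_\Sigma(x)\,U_\Sigma$; since the remaining factor $1_{x\in\future(x')}\,\E^{-\lambda\tdist(x,x')/2}$ in \eqref{coldef} is a nonnegative real scalar, this gives
\be
K_{x'}^*(x)\,K_{x'}(x) = 1_{x\in\future(x')}\:\E^{-\lambda\tdist(x,x')}\:U_{\hyper(x,x')}^{-1}\,\Lambda_{\hyper(x,x')}(x)\,U_{\hyper(x,x')}\,.
\ee

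Weak measurability of $(x,x')\mapsto K_{x'}^*(x)\,K_{x'}(x)$ then follows at once: the operator-valued factor on the right is exactly the function \eqref{hypermeas}, which Assumption~\ref{ass:LambdaSigma} declares weakly measurable, while the scalar factor $1_{x\in\future(x')}\,\E^{-\lambda\tdist(x,x')}$ is Borel measurable on $M\times M$ (the set $\{(x,x'):x\in\future(x')\}$ being measurable — as it must be for \eqref{hypermeas} to make sense — and $\tdist$ continuous on it under Assumption~\ref{ass:tdist}). A measurable scalar multiple of a weakly measurable operator-valued function is weakly measurable, either straight from the definition or via Lemma~\ref{lemma:productweakmeas} applied to the scalar function times the identity.

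For \eqref{POVM} I would fix $x'\in M$ and $\psi\in\Hilbert_0$ and evaluate $\int_M\D^4x\,\scp{\psi}{K_{x'}^*(x)\,K_{x'}(x)\,\psi}$. The integrand equals $\|K_{x'}(x)\,\psi\|^2\ge 0$ and is supported on $\future(x')$, so the coarea formula (Lemma~\ref{lemma:D4D3}) applies; using that on $\hyper_s(x')$ one has $\tdist(x,x')=s$ and $\hyper(x,x')=\hyper_s(x')$, it gives
\begin{multline}
\int_{\future(x')} \D^4 x\,\bigl\|K_{x'}(x)\,\psi\bigr\|^2 = \\
\int_0^\infty \D s\,\E^{-\lambda s}\int_{\hyper_s(x')}\D^3 x\,\Bscp{\psi}{U_{\hyper_s(x')}^{-1}\,\Lambda_{\hyper_s(x')}(x)\,U_{\hyper_s(x')}\,\psi}\,.
\end{multline}
For each fixed $s$, Lemma~\ref{lemma:Rint} — applied with $R=U_{\hyper_s(x')}^{-1}$, $S=U_{\hyper_s(x')}$ and $\Lambda_{\hyper_s(x')}(\cdot)$, whose $\D^3x$-integral is the bounded operator $\lambda I_{\hyper_s(x')}$ by \eqref{Lambdanorm}, its proof using only boundedness of the factors and applying equally to the isomorphisms $U_{\hyper_s(x')}:\Hilbert_0\to\Hilbert_{\hyper_s(x')}$ — lets me pull the unitaries through the inner integral, so that it equals $\scp{\psi}{U_{\hyper_s(x')}^{-1}(\lambda I_{\hyper_s(x')})\,U_{\hyper_s(x')}\,\psi}=\lambda\|\psi\|^2$. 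Hence the whole expression is $\int_0^\infty\lambda\,\E^{-\lambda s}\,\D s\;\|\psi\|^2=\|\psi\|^2=\scp{\psi}{I\,\psi}$, and since $\psi$ was arbitrary the weak integral $\int_M\D^4x\,K_{x'}^*(x)\,K_{x'}(x)$ exists and equals $I$.

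I expect the main obstacle to be bookkeeping rather than anything conceptual: one has to be sure that $\tdist$ and the indicator of the causal future are jointly measurable (this is what Assumption~\ref{ass:tdist} provides) and that the interchange of the space-time integral with conjugation by $U$ is legitimate. The reason for invoking the coarea formula \emph{before} attempting to use \eqref{Lambdanorm} is exactly that it reduces the inner domain to a single hyperboloid $\Sigma$, on which $U_\Sigma$ is one fixed unitary, so that Lemma~\ref{lemma:Rint} applies and its boundedness hypothesis on $\int_\Sigma\Lambda_\Sigma(x)\,\D^3x$ is supplied for free by \eqref{Lambdanorm}. No new analysis beyond this is needed, since the geometric input (the coarea identity) is already established.
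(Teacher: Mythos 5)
Your proof is correct and follows essentially the same route as the paper's: rewrite $K_{x'}^*(x)\,K_{x'}(x)$ as the scalar factor $1_{x\in\future(x')}\,\E^{-\lambda\tdist(x,x')}$ times the function \eqref{hypermeas}, then evaluate the weak integral via the coarea formula (Lemma~\ref{lemma:D4D3}), Lemma~\ref{lemma:Rint}, and the normalization \eqref{Lambdanorm}. The only cosmetic difference is that you obtain weak measurability of $K_{x'}^*(x)\,K_{x'}(x)$ directly from \eqref{hypermeas} rather than first treating $K_{x'}(x)$ as a square root, which is fine since the lemma only asserts measurability of the product.
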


\begin{proof}
The function $M^2\ni(x,x')\mapsto K_{x'}(x) \in \Bdd(\Hilbert_0)$ is weakly measurable because it is, up to the measurable factor $1_{x\in \future(x')} \: \E^{-\lambda \tdist(x,x')/2}$, the square root of $(x,x') \mapsto U_{\hyper(x,x')}^{-1} \, \Lambda_{\hyper(x,x')}(x) \, U_{\hyper(x,x')}$, which is weakly measurable 
by Assumption~\ref{ass:LambdaSigma}. By the usual arguments, $M^2\ni(x,x')\mapsto K_{x'}^*(x) \, K_{x'}(x)$ is weakly measurable.

By definition \eqref{coldef}, with $\Sigma = \hyper(x,x')$ and $x\in\future(x')$,
\be
  e^{\lambda\tdist(x,x')}\,K_{x'}^*(x) \, K_{x'}(x) = (U^{-1}_{\Sigma} \:
  \Lambda_{\Sigma}(x)^{1/2} \: U_{\Sigma})^* 
  \:  U^{-1}_{\Sigma} \:
  \Lambda_{\Sigma}(x)^{1/2} \: U_{\Sigma} =
\ee
[because $U_{\Sigma}$ is unitary and $\Lambda_\Sigma(x)$ is self-adjoint] 
\be
  = U_{\Sigma}^{-1} \:
  \Lambda_{\Sigma}(x)^{1/2} \: U_{\Sigma} 
  \:  U^{-1}_{\Sigma} \:
  \Lambda_{\Sigma}(x)^{1/2} \: U_{\Sigma} = 
  U^{-1}_{\Sigma} \: \Lambda_{\Sigma}(x) \: U_{\Sigma} \,.
\ee
Thus,
\[
  \int_M \D^4 x \: K_{x'}^*(x) \, K_{x'}(x) =   \int_M \D^4 x \: 1_{x\in \future(x')} \:
  \E^{-\lambda \tdist(x,x')} \: 
  U^{-1}_{\hyper(x,x')} \:
  \Lambda_{\hyper(x,x')}(x) \: U_{\hyper(x,x')} = 
\]
[by Lemma~\ref{lemma:D4D3}]
\[
  =  \int_0^\infty \D s\: \E^{-\lambda s}  \int_{\hyper_s(x')} \D^3 x \:
  U^{-1}_{\hyper_s(x')} \:
  \Lambda_{\hyper_s(x')}(x) \: U_{\hyper_s(x')} = 
\]
[by Lemma~\ref{lemma:Rint}]
\[
  =  \int_0^\infty \D s\: \E^{-\lambda s} \:U^{-1}_{\hyper_s(x')} \: 
  \Bigl(\int_{\hyper_s(x')} \D^3 x \:
  \Lambda_{\hyper_s(x')}(x)\Bigr) \: U_{\hyper_s(x')} = 
\]
[by \eqref{Lambdanorm}]
\[
  =  \int_0^\infty \D s\: \E^{-\lambda s} \:U^{-1}_{\hyper_s(x')} \: \lambda \, 
  I_{\hyper_s(x')} \: U_{\hyper_s(x')} 
  =  \int_0^\infty \D s\: \E^{-\lambda s} \: \lambda \, I = I\,.
\]
\end{proof}

The following lemma is the analog of Lemma~\ref{lemma:Rint} for tensor products.

\begin{lemma}\label{lemma:oRint}
If $\Hilbert_1,\Hilbert_2,\Hilbert_3$ are separable Hilbert spaces, $q\mapsto\Lambda(q)\in\Bdd(\Hilbert_2)$ is weakly measurable, and $R \in \Bdd(\Hilbert_1)$, $S\in\Bdd(\Hilbert_3)$, and $T=\int \Lambda(q) \, \mu(\D q) \in\Bdd(\Hilbert_2)$ then $q\mapsto R\otimes \Lambda(q) \otimes S \in\Bdd(\Hilbert_1\otimes\Hilbert_2\otimes\Hilbert_3)$ is weakly measurable, and
\be\label{oRint}
R\otimes T \otimes S = \int R \otimes \Lambda(q) \otimes S \, \mu(\D q)\,.
\ee
\end{lemma}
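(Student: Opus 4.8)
The plan is to follow the proof of Lemma~\ref{lemma:Rint} as closely as possible. Two things must be shown: weak measurability of $q\mapsto R\otimes\Lambda(q)\otimes S$, and the integral identity. The genuinely new point, compared with Lemma~\ref{lemma:Rint}, is that ``pulling $R$ and $S$ out of the integral'' now involves summing over orthonormal bases of the possibly infinite-dimensional spaces $\Hilbert_1$ and $\Hilbert_3$, so the interchange of that sum with the $\mu$-integral is what needs care. I would isolate this difficulty by first reducing, via Lemma~\ref{lemma:Rint} itself, to the case $R=I_{\Hilbert_1}$, $S=I_{\Hilbert_3}$.

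For weak measurability, fix orthonormal bases $\{e_n\}$, $\{\phi_m\}$, $\{g_k\}$ of $\Hilbert_1,\Hilbert_2,\Hilbert_3$; then $\{e_n\otimes\phi_m\otimes g_k\}$ is an orthonormal basis of the tensor product and $\scp{e_n\otimes\phi_m\otimes g_k}{(R\otimes\Lambda(q)\otimes S)(e_{n'}\otimes\phi_{m'}\otimes g_{k'})}=\scp{e_n}{Re_{n'}}\,\scp{\phi_m}{\Lambda(q)\phi_{m'}}\,\scp{g_k}{Sg_{k'}}$ is measurable in $q$ by weak measurability of $\Lambda$ together with \eqref{polarweakmeas}, so Lemma~\ref{lemma:matrixweakmeas} gives the claim; the same computation shows $q\mapsto I\otimes\Lambda(q)\otimes I$ is weakly measurable. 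Granting for the moment the identity $I\otimes T\otimes I=\int I\otimes\Lambda(q)\otimes I\,\mu(\D q)$, I would apply Lemma~\ref{lemma:Rint} on $\Hilbert_1\otimes\Hilbert_2\otimes\Hilbert_3$ with left operator $R\otimes I\otimes I$ and right operator $I\otimes I\otimes S$ acting on the weakly measurable function $q\mapsto I\otimes\Lambda(q)\otimes I$, whose weak integral is $I\otimes T\otimes I$; this yields $R\otimes T\otimes S=(R\otimes I\otimes I)(I\otimes T\otimes I)(I\otimes I\otimes S)=\int(R\otimes I\otimes I)(I\otimes\Lambda(q)\otimes I)(I\otimes I\otimes S)\,\mu(\D q)=\int R\otimes\Lambda(q)\otimes S\,\mu(\D q)$, which is the assertion.

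It remains to prove $I\otimes T\otimes I=\int I\otimes\Lambda(q)\otimes I\,\mu(\D q)$. Fix $\Psi\in\Hilbert_1\otimes\Hilbert_2\otimes\Hilbert_3$ and write $\Psi=\sum_{n,k}e_n\otimes\psi_{nk}\otimes g_k$ with uniquely determined $\psi_{nk}\in\Hilbert_2$ satisfying $\sum_{n,k}\|\psi_{nk}\|^2=\|\Psi\|^2$. Since $I\otimes\Lambda(q)\otimes I$ and $I\otimes T\otimes I$ are bounded, a short orthogonality computation gives, for each $q$, $\scp{\Psi}{(I\otimes\Lambda(q)\otimes I)\Psi}=\sum_{n,k}\scp{\psi_{nk}}{\Lambda(q)\psi_{nk}}$, the series converging absolutely (dominated by $\|\Lambda(q)\|\,\|\Psi\|^2$), and likewise $\scp{\Psi}{(I\otimes T\otimes I)\Psi}=\sum_{n,k}\scp{\psi_{nk}}{T\psi_{nk}}$. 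By the definition \eqref{weakintegraldef} of the weak integral, $\int\scp{\psi_{nk}}{\Lambda(q)\psi_{nk}}\,\mu(\D q)=\scp{\psi_{nk}}{T\psi_{nk}}$ for each $n,k$, so the identity reduces to showing that the sum over $n,k$ may be interchanged with the $\mu$-integral. When every $\Lambda(q)$ is positive --- the case needed for the rGRWf process --- all integrands are $\geq 0$ and this is just Tonelli's theorem. For general $\Lambda$ one needs Fubini's theorem, i.e.\ $\sum_{n,k}\int|\scp{\psi_{nk}}{\Lambda(q)\psi_{nk}}|\,\mu(\D q)<\infty$.

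The hard part is therefore the a priori estimate $\int_M|\scp{\phi}{\Lambda(q)\psi}|\,\mu(\D q)\leq K\,\|\phi\|\,\|\psi\|$ for some constant $K$ and all $\phi,\psi\in\Hilbert_2$; with it the double sum above is $\leq K\sum_{n,k}\|\psi_{nk}\|^2=K\|\Psi\|^2<\infty$. The existence of such a $K$ is exactly what is encoded in the hypothesis that the weak integral $T$ exists: by \eqref{phiTpsi}, $q\mapsto\scp{\phi}{\Lambda(q)\psi}$ lies in $L^1(M,\mu)$ for every $\phi,\psi$. For fixed $\phi$ the linear map $\psi\mapsto\scp{\phi}{\Lambda(\cdot)\psi}\in L^1(M,\mu)$ has closed graph --- if $\psi_j\to\psi$ in $\Hilbert_2$ and $\scp{\phi}{\Lambda(\cdot)\psi_j}\to h$ in $L^1$, then $h=\scp{\phi}{\Lambda(\cdot)\psi}$ almost everywhere, because each $\Lambda(q)$ is bounded so the left-hand side converges pointwise --- hence is bounded; symmetrically the conjugate-linear map in $\phi$ is bounded for fixed $\psi$. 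The uniform boundedness principle then promotes these two families of bounds to the bilinear bound above. Everything else in the argument is routine bookkeeping with orthonormal expansions.
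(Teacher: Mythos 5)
Your proof is correct, and its skeleton is the same as the paper's: the paper's entire proof consists of applying Lemma~\ref{lemma:Rint} on $\Hilbert_1\otimes\Hilbert_2\otimes\Hilbert_3$ with $R\to R\otimes I\otimes I$, $S\to I\otimes I\otimes S$, $T\to I\otimes T\otimes I$, noting $(P\otimes I)(I\otimes Q)=P\otimes Q$ --- exactly your reduction. The difference is that the paper treats the middle identity $I\otimes T\otimes I=\int I\otimes\Lambda(q)\otimes I\,\mu(\D q)$ as immediate, whereas you recognize it as the actual content and prove it: expanding $\Psi=\sum_{n,k}e_n\otimes\psi_{nk}\otimes g_k$, reducing to the interchange of $\sum_{n,k}$ with $\int\mu(\D q)$, handling the positive case (the only one used later, e.g.\ in Lemma~\ref{lemma:KK1} and Theorem~\ref{thm:absexist}) by Tonelli, and the general case by a closed-graph plus uniform-boundedness argument yielding the bilinear bound $\int|\scp{\phi}{\Lambda(q)\,\psi}|\,\mu(\D q)\leq K\|\phi\|\,\|\psi\|$; that argument is sound (pointwise convergence from boundedness of each $\Lambda(q)$ closes the graph, and separate boundedness promotes to joint boundedness). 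An alternative, closer in spirit to the paper's toolkit, would be to verify the identity on finite sums of product vectors using \eqref{phiTpsi} and then invoke Lemma~\ref{lemma:denseint}, but that lemma requires positivity, so your uniform-boundedness route is what covers the lemma as stated, without a positivity hypothesis.
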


\begin{proof}
This is an immediate consequence of Lemma~\ref{lemma:Rint}: replace $R \to R\otimes I \otimes I$, $T \to I\otimes T \otimes I$, and $S \to I\otimes I\otimes S$, and note that $(P\otimes I) (I\otimes Q) = P\otimes Q$.
\end{proof}

\bigskip

\begin{proofthm}{thm:absexist}
For every $n\in\NNN$, we define a POVM $\povm_n(\cdot)$ on $\bigl((M^\Lab)^n, \Borel(M)^{\otimes \Lab \times n} \bigr)$ as follows:
\be\label{rpovmndef}
\povm_n(A) = \int_A  
\bigotimes_{i\in\Lab} K(f_i)^* \, K(f_i)\,\prod_{i\in\Lab}\prod_{k=1}^n \D^4x_{i,k} \,.
\ee

First, for $A=(M^\Lab)^n$, we obtain from $nN$-fold application of Lemma~\ref{lemma:KK1} (and Lemma~\ref{lemma:oRint}) that $\povm_n(A) = I$. For arbitrary $A \in \Borel(M)^{\otimes \Lab \times n}$, the existence and boundedness of $\povm_n(A)$ follows again from Lemma~\ref{lemma:weakint} if only the right hand side of \eqref{rpovmndef}, when sandwiched between $\psi$'s, remains $\leq \|\psi\|^2$. Indeed,
\be
\int_A \prod_{i\in\Lab}\prod_{k=1}^n \D^4x_{i,k} \, 
\scp{\psi}{\otimes_i K(f_i)^* \, K(f_i) \, \psi} \leq
\ee
[because the integrand is nonnegative]
\be
\leq\int_{(M^\Lab)^n} \prod_{i\in\Lab}\prod_{k=1}^n \D^4x_{i,k} \, 
\scp{\psi}{\otimes_i K(f_i)^* \, K(f_i) \, \psi} = \|\psi\|^2\,.
\ee
To see that $\povm_n(\cdot)$ is $\sigma$-additive, just note that $\int_A$ is $\sigma$-additive in $A$. Thus, $\povm_n(\cdot)$ is a POVM.

The consistency condition \eqref{FPOVMconsistent} follows from $N$-fold application of Lemma~\ref{lemma:KK1} (and Lemma~\ref{lemma:oRint}), namely to $K_{x_{i,n}}(x_{i,n+1})^* \, K_{x_{i,n}}(x_{i,n+1})$ for all $i\in\Lab$. Now Theorem~\ref{thm:Kol} provides a POVM $\povm(\cdot)$ on $(M^\Lab)^\NNN=M^{\Lab\times\NNN}$ whose marginals are the $\povm_n(\cdot)$. To see that $\scp{\psi}{\povm(\cdot)\,\psi}$ is the distribution of an rGRWf process, note that in case $n_i=n$ for all $i\in\Lab$, \eqref{rGRWfdistn} means
\be
\PPP\Bigl( X_{i,k} \in \D^4 x_{i,k}: i \in\Lab, k \leq n \Bigr) = \scp{\psi}{\povm_n(\D f) \, \psi}\,,
\ee
while \eqref{rGRWfdistn} for unequal $n_i$ follows from the case before by choosing $n$ large enough ($n=\max\{n_i:i\in\Lab\}$) and applying Lemma~\ref{lemma:KK1} to integrate out some of the $x_{i,k}$.

The uniqueness of the distribution $\PPP_\psi$ of the rGRWf process follows from Theorem~\ref{thm:Kol} because \eqref{rGRWfdistn} implies for the case in which all $n_i=n$ that the joint distribution of the first $n$ flashes of all labels is given by the POVM $\povm_n(\cdot)$, and then the uniqueness statement of Theorem~\ref{thm:Kol} provides the uniqueness of $\PPP_\psi$.
\end{proofthm}

\subsection{Concrete Specification}
\label{sec:concreterGRWf}

We now present an outline for defining $\Hilbert_\Sigma$, $U_\Sigma$, and $\Lambda_\Sigma(x)$. A rigorous definition will be presented in Section~\ref{sec:existthm} for Minkowski space-time.

Concretely, we intend to take $\Hilbert_\Sigma$ to be $L^2(\Dirac|_\Sigma)$, the space of square-integrable measurable sections of the vector bundle $\Dirac|_\Sigma$ modulo changes on null sets. The vector bundle $\Dirac$ is the bundle of Dirac spin spaces \cite{Dim82,PR84}, a complex bundle of rank 4 over $M$, endowed with a connection (whose curvature arises from the curvature of $M$).
We obtain the operators $U_\Sigma^{\Sigma'}$ by solving the Dirac equation
\be\label{Dirac}
-\I\hbar \gamma^\mu \bigl(\nabla_\mu-\tfrac{\I e}{\hbar}A_\mu\bigr)\psi = m \psi\,, 
\ee
where $\gamma^\mu$ are the Dirac matrices, $\nabla$ is the covariant derivative operator, $e\in\RRR$ is a constant, the charge parameter, $A_\mu$ is a 1-form, the electromagnetic vector potential, and $m>0$ is a constant, the mass parameter.

We take $\Lambda_\Sigma(x)$ to be the multiplication operator on $L^2(\Dirac|_\Sigma)$ by a function of the spacelike distance from $x$ along $\Sigma$,
\be\label{LambdaSigmadef}
\Lambda_\Sigma(x) \, \psi(y) = \lambda \, \mathcal{N}(y) \,  \profile\bigl(\sdist_\Sigma(x,y)\bigr) \, \psi(y)\,,
\ee
for all $y \in \Sigma$, where $\profile: [0,\infty) \to [0,1]$ is a fixed function that we call the \emph{profile function}, and
\be\label{Ndefprofile}
\mathcal{N}(y) = \Bigl(\int_\Sigma \D^3 x \, \profile\bigl(\sdist_\Sigma(x,y)\bigr) \Bigr)^{-1}\,.
\ee
The normalizing factor $\mathcal{N}$ is chosen as to ensure \eqref{Lambdanorm}. As an example, $\profile$ could be a Gaussian,
\begin{equation}\label{profileGauss}
  \profile(u) = \exp \Bigl( - \frac{u^2}{2\sigma^2} \Bigr)
\end{equation}
with $\sigma$ the same constant as in Section~\ref{sec:simpleGRWf}. It is sometimes useful to assume that $\profile$ has compact support $[0,\sigma]$; then it is clear that $\mathcal{N}(y)$ is finite.

From the concrete specification just given, 
it becomes clear that rGRWf is defined in a covariant way. No coordinate system on $M$ was ever chosen; the unitary evolution from one 3-surface to another is given by the Dirac equation; in contrast to Bohmian mechanics for relativistic space-time (as developed in \cite{HBD}), no ``time foliation'' (preferred foliation of $M$ into spacelike 3-surfaces) is assumed or constructed; more generally, no concept of simultaneity-at-a-distance is involved.

The reader should note that this is more than that the Poincar\'e group (the isometry group of Minkowski space-time) acts on the theory's solutions. In detail, let us call a theory \emph{weakly covariant} if the set of possible probability measures on the history of the primitive ontology (PO) is closed under the action of the Poincar\'e group. Indeed, rGRWf is weakly covariant, but the concept of weak covariance is too weak to capture the idea of a relativistic theory. As a simple example, we can turn non-relativistic classical mechanics (with instantaneous interaction-at-a-distance) into a weakly covariant theory in Minkowski space-time in the following way: (i)~postulate the existence of an additional physical object mathematically represented by a timelike vector field $n^\mu$ subject to the field equation $\partial_\nu n^\mu =0$ (which ensures $n^\mu$ is constant); (ii)~the vector field selects a Lorentz frame (whose time axis lies in the direction of $n^\mu$); (iii)~in this frame apply the non-relativistic equations. (Since probability plays no role here, we can think of the probability measure as a Dirac measure concentrated on a single history.) This theory is weakly covariant, as the transformed history simply has a different $n^\mu$ vector, even though in a world governed by this theory the Michelson--Morley experiment has a nonzero result and superluminal communication is possible. (On top of that, the definition of weak covariance is limited to special relativity, and it is not clear how to adapt it to curved space-time.)

Fay Dowker (personal communication, January 28, 2004) has proposed the following definition for the concept of a \emph{covariant law}: Suppose a law $\law$ is such that for every Cauchy surface $\Sigma$ in $M$ there is a set $\ini_\Sigma$ of possible initial data on $\Sigma$, and $\law$ associates with every $D\in\ini_\Sigma$ a probability measure $\PPP_D$ on the space of possible histories of the PO in $\future(\Sigma)$. Now call the law $\law$ \emph{strongly covariant} if for every two Cauchy surfaces $\Sigma_1,\Sigma_2$ with $\Sigma_2 \subseteq \future(\Sigma_1)$ and every $D_1 \in \ini_{\Sigma_1}$ there is a random variable $D_2$ with values in $\ini_{\Sigma_2}$ so that the history of the PO in $\future(\Sigma_2)$ can equally be regarded as generated by the initial datum $D_1$ on $\Sigma_1$ or $D_2$ on $\Sigma_2$; that is, the distribution $\PPP_{D_2}$ averaged over the distribution of $D_2$ agrees with $\PPP_{D_1}$ restricted to $\future(\Sigma_2)$.

This definition is fulfilled by the law of rGRWf (when suitably formulated, see \cite{Tum07,Tum06}), where the initial data on $\Sigma$ are the wave function on $\Sigma$ and the last flash of each label before $\Sigma$. This definition is intended to exclude that a theory presupposes or generates a foliation, or any other notion of simultaneity-at-a-distance.

\subsection{Existence Theorem in Minkowski Space-Time}
\label{sec:existthm}

Let $(M,g)$ be Minkowski space-time, $M=\RRR^4$, $g=\mathrm{diag}(1,-1,-1,-1)$. Then the Dirac bundle is trivial, $\Dirac = M \times \CCC^4$, and its connection is flat, so that we can replace the covariant derivative $\nabla_\mu$ by the partial derivative $\partial_\mu$.
For $\Sigma\in\Cauchy\cup\Hyper$, $\Hilbert_\Sigma:=L^2(\Dirac|_\Sigma)=L^2(\Sigma,\CCC^4,h,\D^3 x)$, which means the space of measurable functions $\psi:\Sigma \to \CCC^4$ (modulo changes on null sets) that are square-integrable in the sense
\be
\int_\Sigma \D^3x \, \psi^*(x) \, h(x)\, \psi(x)< \infty\,,
\ee
where $h:\Sigma \to \CCC^{4\times 4}$ is a measurable function into the positive definite Hermitian $4\times 4$ matrices that we define below. The scalar product in $L^2(\Sigma,\CCC^4,h,\D^3 x)$ is
\be\label{scpdef}
\scp{\phi}{\psi}_\Sigma = \int_\Sigma \D^3x \, \phi^*(x) \, h(x)\, \psi(x)\,.
\ee
It is clear that $\Hilbert_\Sigma$ is a Hilbert space. Here,
\be\label{hdef}
h(x) = \gamma^0 \, \gamma^\mu \, n_\mu(x)\,,
\ee
where $n^\mu(x)$ is the future-pointing unit normal on $\Sigma$ at $x\in\Sigma$, so normalized that $n^\mu(x) \, n_\mu(x) = 1$. Since $\Sigma$ is $C^\infty$, so are $n_\mu$ and $x\mapsto h(x)$. It is a known fact that the matrix $\gamma^0 \, \gamma^\mu\, n_\mu$ is positive definite for every timelike vector $n_\mu$. The scalar product \eqref{scpdef} can also be written
\be
\scp{\phi}{\psi}_\Sigma = \int_\Sigma \D^3 x\, \overline{\phi}(x) \, \gamma^\mu\, n_\mu(x) \, \psi(x)\,, 
\ee
where $\overline{\phi}(x) = \phi^*(x) \, \gamma^0$ (while $\phi^*$ means component-wise conjugation).  It is a known fact that $\phi \to \overline{\phi}$ is a Lorentz-invariant operation, while $\phi \to \phi^*$ is not. As a consequence, $\Hilbert_\Sigma$ and $\scp{\cdot}{\cdot}_\Sigma$ are defined in a Lorentz-invariant way.

\begin{ass}\label{ass:profile}
The profile function $\profile:[0,\infty) \to [0,1]$ is (Borel) measurable, and 
\be\label{profilefinite}
0< \int_0^\infty  \profile(u) \, e^{\kappa u} \, \D u < \infty
\ee
for every $\kappa>0$.
\end{ass}

This is true, for example, of the Gaussian \eqref{profileGauss}, and when $\profile$ has compact support (and is not almost-everywhere zero). 
The operators $\Lambda_\Sigma(x)$ are defined by \eqref{LambdaSigmadef}.

\begin{ass}\label{ass:Amu}
The 1-form $A:\RRR^4\to \RRR^4$ is time-independent in a suitable Lorentz frame, $C^\infty$, and satisfies
\begin{equation}\label{Amu}
\exists M,\xi>0: \: \forall x \in \RRR^3, \forall \mu:\: |A_\mu(x)| < M (|x|+1)^{-4-\xi}.
\end{equation}
\end{ass}

\begin{thm}\label{thm:Minkexist}
Under Assumptions~\ref{ass:profile} and \ref{ass:Amu} and with $\Hilbert_\Sigma$, $U_\Sigma$, and $\Lambda_\Sigma(x)$ as specified above and $\Lab$ any finite label set, the hypotheses of Theorem~\ref{thm:absexist} are fulfilled. As a consequence, an rGRWf process exists, it is unique in distribution, and the distribution is $\scp{\psi}{\povm(\cdot)\,\psi}$ for a certain POVM $\povm(\cdot)$.
\end{thm}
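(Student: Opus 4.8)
The plan is to verify, one by one, the four parts of Assumption~\ref{ass:LambdaSigma} and the single Assumption~\ref{ass:tdist}, using the concrete data $\Hilbert_\Sigma = L^2(\Sigma,\CCC^4,h,\D^3 x)$, the Dirac time evolution $U_\Sigma^{\Sigma'}$, and the multiplication operators $\Lambda_\Sigma(x)$ of \eqref{LambdaSigmadef}. Once these hypotheses are checked, Theorem~\ref{thm:absexist} gives existence, uniqueness in distribution, and the POVM form of the distribution, so there is nothing further to prove. I would organize the argument as: (a) Assumption~\ref{ass:tdist} for Minkowski space-time; (b) self-adjointness and unitarity of the Dirac evolution between the relevant $3$-surfaces (Cauchy surfaces and hyperboloids), i.e.\ that the $U_\Sigma^{\Sigma'}$ exist as unitary isomorphisms satisfying the cocycle relation \eqref{Ucocycle}; (c) positivity and the normalization \eqref{Lambdanorm} of $\Lambda_\Sigma(x)$; (d) the two weak-measurability claims in Assumption~\ref{ass:LambdaSigma}.

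Step (a) is immediate: in Minkowski space-time $\tdist(y,x) = ((y^\mu-x^\mu)(y_\mu-x_\mu))^{1/2}$ on $\future(x)$, which is $C^\infty$ on the open interior of $\future(x)$ (the complement of the light cone), has nonvanishing gradient there, and vanishes exactly on $\partial\future(x)$. Step (c) is also essentially bookkeeping: $\Lambda_\Sigma(x)$ is multiplication by the nonnegative function $y\mapsto \lambda\,\mathcal N(y)\,\profile(\sdist_\Sigma(x,y))$, hence a positive operator, and it is bounded because $\profile\le 1$ and $\mathcal N$ is (I would argue, using Assumption~\ref{ass:profile} and the geometry of hyperboloids, via \eqref{hyperdensity}) bounded on each $\Sigma\in\Cauchy\cup\Hyper$; the normalization \eqref{Lambdanorm} is exactly the choice of $\mathcal N$ in \eqref{Ndefprofile}, which makes $\int_\Sigma \D^3 x\,\Lambda_\Sigma(x)$ act as multiplication by $\lambda$, i.e.\ $\lambda I_\Sigma$, after using Fubini--Tonelli to interchange the integral over $x$ with the pointwise action at $y$. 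The finiteness of $\mathcal N(y)$ needs the lower bound in \eqref{profilefinite} (so the integral defining $\mathcal N^{-1}$ is strictly positive) together with either compact support of $\profile$ or the Gaussian decay; and one needs $\mathcal N$ bounded \emph{above} on $\Sigma$, which for hyperboloids uses the $e^{\kappa u}$ integrability in \eqref{profilefinite} because the Riemannian volume of a spacelike-distance ball on a hyperboloid grows at most exponentially in the radius.

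The genuinely substantial part is step (b), and within it the appearance of the exponential weight $e^{\kappa u}$: I expect this to be the main obstacle. For Cauchy surfaces the Dirac evolution is the standard story---the Dirac operator with the potential $A_\mu$ of Assumption~\ref{ass:Amu} is essentially self-adjoint (the decay \eqref{Amu} is more than enough), the evolution between two Cauchy surfaces is unitary with respect to the conserved current scalar product \eqref{scpdef}, and the cocycle property \eqref{Ucocycle} is automatic. The delicate point is evolving \emph{to} and \emph{between future hyperboloids}: one must show that the Cauchy data on a Cauchy surface determine, via finite propagation speed, a well-defined square-integrable restriction to any future hyperboloid, and that this restriction map is a unitary isomorphism onto $L^2(\Dirac|_\Sigma)$ for $\Sigma\in\Hyper$ (conservation of the Dirac current across any spacelike $3$-surface gives the isometry; surjectivity/invertibility comes from solving the Dirac equation with data prescribed on the hyperboloid). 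I would handle this by citing the standard hyperbolic-PDE machinery for the Dirac equation in Minkowski space-time (global well-posedness, finite propagation speed, conservation of the timelike component of $\overline\psi\gamma^\mu\psi$), noting that Assumption~\ref{ass:Amu} guarantees the time-independence (in a suitable frame) and smoothness needed to invoke it. Step (d) then follows from step (b): the map $(x,x')\mapsto U_{\hyper(x,x')}^{-1}\Lambda_{\hyper(x,x')}(x)U_{\hyper(x,x')}$ is weakly measurable because the hyperboloid $\hyper(x,x')$ depends smoothly on $(x,x')$ on $\{x\in\future(x')\}$ (by Assumption~\ref{ass:tdist}), the Dirac evolution depends continuously (in the strong operator topology) on the endpoint surface, and $\Lambda_\Sigma(x)$ is multiplication by a jointly measurable function of $(x,y)$ with $y$ ranging over the smoothly varying surface; one then uses Lemmas~\ref{lemma:productweakmeas} and \ref{lemma:sqrtweakmeas} exactly as in Lemma~\ref{lemma:KK1}. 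Collecting (a)--(d), Assumptions~\ref{ass:tdist} and \ref{ass:LambdaSigma} hold, and Theorem~\ref{thm:absexist} yields the conclusion. $\square$
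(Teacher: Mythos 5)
Your steps (a), (c), and (d) track the paper's proof closely (the paper even computes the hyperboloid integral explicitly as $4\pi s^2\int_0^\infty \D u\,\profile(u)\sinh(u/s)^2$, which is exactly your exponential-volume-growth observation, and notes that $\mathcal{N}$ is constant on each hyperboloid). The genuine gap is in step (b). You treat unitarity of the Dirac evolution onto future hyperboloids as standard hyperbolic-PDE bookkeeping: ``conservation of the Dirac current across any spacelike $3$-surface gives the isometry.'' But future hyperboloids are \emph{not} Cauchy surfaces --- in Minkowski space-time they never are (the paper exhibits an inextendible timelike curve that avoids $\future(x')$ entirely) --- so current conservation between $\Sigma_0$ and $\hyper_s(x')$ is not automatic: the flux of $j^\mu=\overline\psi\gamma^\mu\psi$ through a hyperboloid can a priori be strictly less than $\|\psi_0\|^2$, since probability can escape along nearly lightlike directions without ever crossing $\future(x')$. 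This is precisely why the paper devotes Lemma~\ref{lemma:unitary} to the point: it first proves only the inequality $\|\psi\|_\Sigma\le\|\psi_0\|$, via a probabilistic representation of the flux in terms of Bohmian trajectories (equation \eqref{rwl}) combined with the global existence theorem of \cite{TT05} to kill the exceptional set $B_0$, and then upgrades it to equality using the flux-across-surfaces theorem of D\"urr and Pickl \cite{DP03} together with monotonicity of $s\mapsto\|\psi\|^2_{\hyper_s(0)}$. That theorem is where the decay condition \eqref{Amu} and the time-independence of $A_\mu$ actually enter; your proposal misplaces their role, asserting the decay is merely ``more than enough'' for essential self-adjointness (for which Chernoff's theorem needs only smoothness and time-independence).

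A secondary, related weakness: your surjectivity argument ``solving the Dirac equation with data prescribed on the hyperboloid'' does not work directly, again because a hyperboloid is not a Cauchy surface, so arbitrary $L^2$ data on it do not determine a global solution by the standard theory. The paper instead shows the range of $U_{\Sigma_0}^\Sigma$ is closed and contains $C_0^\infty(\Sigma,\CCC^4)$, by choosing for each compactly supported datum an auxiliary Cauchy surface agreeing with $\Sigma$ on the support and invoking Dimock's theorem. Without an argument of the type in Lemma~\ref{lemma:unitary} (or some substitute establishing that no probability flux is lost before reaching each hyperboloid, and that the restriction map has dense range), your step (b) --- and hence the verification of Assumption~\ref{ass:LambdaSigma}, whose operators $U_{\hyper(x,x')}$ must be unitary --- is not established.
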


I conjecture that Assumption~\ref{ass:Amu} is stronger than necessary, in particular that $A_\mu$ does not have to be time-independent.

According to a theorem of Dimock \cite{Dim82}, the Dirac equation defines a unitary isomorphism  $U_\Sigma^{\Sigma'}:\Hilbert_\Sigma \to \Hilbert_{\Sigma'}$ for all Cauchy surfaces $\Sigma,\Sigma'$. The evolution from a Cauchy surface to a hyperboloid is provided by the following lemma. 

\begin{lemma}\label{lemma:unitary}
Under Assumption~\ref{ass:Amu}, the Dirac equation \eqref{Dirac} defines a unitary isomorphism $U_\Sigma^{\Sigma'}:\Hilbert_\Sigma \to \Hilbert_{\Sigma'}$ for all $\Sigma,\Sigma'\in \Cauchy \cup \Hyper$. 
\end{lemma}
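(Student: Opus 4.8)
The plan is to prove this in two stages and then glue with the cocycle relation \eqref{Ucocycle}. For two Cauchy surfaces $\Sigma,\Sigma'\in\Cauchy$ the operator $U_\Sigma^{\Sigma'}$ is supplied by Dimock's theorem \cite{Dim82}; Assumption~\ref{ass:Amu} (smoothness, the decay rate $(|x|+1)^{-4-\xi}$, time-independence in a suitable frame) is what makes its hypotheses applicable, since in that frame the Dirac Hamiltonian is essentially self-adjoint on $C_0^\infty$, so that \eqref{Dirac} is globally well posed: every datum in $\Hilbert_{\Sigma_0}$ generates a unique global weak solution $\psi$ (smooth if the datum is), and \eqref{Dirac}, being a symmetric hyperbolic system, has exact finite propagation speed $1$. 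Using the group law \eqref{Ucocycle}, it then suffices to construct a unitary $U_{\Sigma_0}^\Sigma:\Hilbert_{\Sigma_0}\to\Hilbert_\Sigma$ from one fixed Cauchy surface $\Sigma_0$ to an arbitrary future hyperboloid $\Sigma=\hyper_s(x')$; evolution between two hyperboloids, or from a hyperboloid back to a Cauchy surface, then follows by composition, and the resulting family automatically satisfies \eqref{Ucocycle}.

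To define $U_{\Sigma_0}^\Sigma$ I would set $U_{\Sigma_0}^\Sigma(\psi|_{\Sigma_0}):=\psi|_\Sigma$ for a global solution $\psi$, and first show this is a well-defined contraction by a bending argument. For large $R$ choose $\Sigma_R\in\Cauchy$ agreeing with $\hyper_s(x')$ on $\{r\le R\}$ and with a fixed flat Cauchy surface for $r$ large, interpolating through spacelike $3$-surfaces (possible since the hyperboloid's graph $y^0=x'^0+\sqrt{s^2+r^2}$ has everywhere spacelike slope, subluminal away from $\{r\le R\}$). Since the Dirac current density $\psi^*h\psi=\overline{\psi}\gamma^\mu n_\mu\psi$ is pointwise nonnegative on spacelike surfaces, $\int_{\Sigma\cap\{r\le R\}}\psi^*h\psi\le\|\psi|_{\Sigma_R}\|^2=\|\psi|_{\Sigma_0}\|^2$ by Dimock; letting $R\to\infty$ gives $\|\psi|_\Sigma\|\le\|\psi|_{\Sigma_0}\|<\infty$, so $\psi|_\Sigma\in\Hilbert_\Sigma$. (Once $U$ exists, the weak measurability of $(x,x')\mapsto U_{\hyper(x,x')}^{-1}\Lambda_{\hyper(x,x')}(x)\,U_{\hyper(x,x')}$ required by Assumption~\ref{ass:LambdaSigma} follows from that of the data together with the continuous dependence on $x,x'$.)

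The crux is the reverse inequality, i.e.\ that $U_{\Sigma_0}^\Sigma$ is isometric. In the splitting $\|\psi|_{\Sigma_0}\|^2=\|\psi|_{\Sigma_R}\|^2=\int_{\Sigma\cap\{r\le R\}}\psi^*h\psi+\int_{\Sigma_R\cap\{r>R\}}\psi^*h\psi$, it suffices that the outer flux tends to $0$. Taking the interpolation very flat, the outer part of $\Sigma_R$ lies, except for a shell of radial width $O(1/R)$ straddling the light cone of the origin, at large spacelike separation from the origin, so its flux is dominated by the tail of the $\Sigma_0$-datum and vanishes by finite propagation speed; the thin near-light-cone shell captures only a vanishing fraction of the ``luminal edge'' of $\psi$ because, for the \emph{massive} Dirac equation, that edge has $L^2$-mass density $O(t^{-3})$ spread over a radial window of width $\propto t$ (dispersive decay / phase-space localization; the short-range $A_\mu$ of Assumption~\ref{ass:Amu} is irrelevant here). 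Equivalently, one may argue that $\int_\Sigma\psi^*h\psi$ is the same for every spacelike $3$-surface from current conservation $\partial_\mu(\overline{\psi}\gamma^\mu\psi)=0$ and the divergence theorem, the same estimate being what kills the contribution ``at null infinity'' — for $m>0$ nothing is radiated there. I expect this flux estimate to be the main obstacle.

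It remains to show $U_{\Sigma_0}^\Sigma$ is onto. A datum $\chi$ on $\Sigma=\hyper_s(x')$ that is smooth and compactly supported inside the open region $\tdist(\cdot,x')>0$ generates, by solving \eqref{Dirac} in $\future(x')$ — for which $\hyper_s(x')$ is a Cauchy surface — a solution whose domain of influence stays a positive distance from $\partial\future(x')$; extending it by $0$ outside $\future(x')$ produces a genuine global solution with compactly supported $\Sigma_0$-datum and with $\Sigma$-restriction equal to $\chi$. Such $\chi$ are dense in $\Hilbert_\Sigma$, so, the range being closed by isometry, $U_{\Sigma_0}^\Sigma$ is unitary. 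Setting $U_\Sigma^{\Sigma'}:=U_{\Sigma_0}^{\Sigma'}\,(U_{\Sigma_0}^\Sigma)^{-1}$ (which on $\Cauchy$ coincides with Dimock's operators) gives the family asserted in the lemma, and \eqref{Ucocycle} is then immediate.
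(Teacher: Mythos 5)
Your skeleton (define $U_{\Sigma_0}^\Sigma$ by restriction of solutions on a dense set of smooth data, prove $\|\psi\|_\Sigma\le\|\psi_0\|$, then equality, then surjectivity, then extend by boundedness) matches the paper's, and your contraction step via bent Cauchy surfaces $\Sigma_R$ plus Dimock's unitarity and positivity of $\psi^*h\psi$ is a legitimate alternative to the paper's argument for the inequality (the paper gets it from positivity and conservation of the current via Bohmian trajectories). But the crux — the isometry — is exactly where your proposal has a genuine gap, and your aside that ``the short-range $A_\mu$ of Assumption~\ref{ass:Amu} is irrelevant here'' is the opposite of the truth. The statement that no $L^2$ mass escapes along the light cone is not a soft consequence of $m>0$: for the Dirac equation with an external potential it is a nontrivial scattering/propagation estimate, valid only for data in a suitable dense subspace, and the decay condition \eqref{Amu} is precisely what makes it available. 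The paper handles this by invoking the flux-across-surfaces theorem of D\"urr and Pickl, which under Assumption~\ref{ass:Amu} gives $\lim_{s\to\infty}\int_{\hyper_s(0)}\D^3x\,j^\mu n_\mu=1$ for $\psi_0$ in a dense subspace $S$ (this is what fixes $S$), and then uses monotonicity of $P_s=\|\psi\|^2_{\hyper_s(0)}$ in $s$ together with $P_s\le 1$ to conclude $P_s=1$ for all $s$. Your asserted $O(t^{-3})$ decay near the luminal edge is unproved in the presence of $A_\mu$ and is not stated for a specified class of data; as written, the key step is an assertion, not a proof.

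There is a second, fixable but real, error in your surjectivity argument: the backward domain of influence of a compact set $K\subset\hyper_s(x')$ does \emph{not} stay a positive distance from $\partial\future(x')$ — past-directed null rays from points of the hyperboloid hit the cone $\partial\future(x')$ at finite parameter — so the solution generated inside $\future(x')$ is in general nonzero arbitrarily close to (and on) the cone, and extending it by zero outside $\future(x')$ does not produce a solution of \eqref{Dirac}. The correct route (the paper's) is the same bending idea you already used: for $\chi\in C_0^\infty(\Sigma,\CCC^4)$ choose a Cauchy surface $\Sigma'$ of Minkowski space coinciding with $\Sigma$ on a neighborhood of $\mathrm{supp}\,\chi$, extend $\chi$ by zero on $\Sigma'$, solve globally by Dimock, and note that the restriction to $\Sigma$ is again $\chi$ because all points of the (achronal) hyperboloid outside the coincidence region are spacelike separated from $\mathrm{supp}\,\chi$; density of such $\chi$ plus closedness of the range of an isometry then gives ontoness.
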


\begin{proof}
First note that we assume $m>0$ in the Dirac equation (with $m=0$ this proof would not work).
Choose a Lorentz frame in which Assumption~\ref{ass:Amu} holds (allowing us to identify $M$ with $\RRR^4$), and let $\Sigma_0$ be the 3-surface defined by $t=0$. Since $\D^3x$ on $\Sigma_0$ is just the Lebesgue measure and $h=I$, we write $L^2(\RRR^3,\CCC^4)$ instead of $L^2(\Sigma_0,\CCC^4,h,\D^3 x)$. It suffices to define $U_{\Sigma_0}^\Sigma$ for all $\Sigma \in \Hyper$. We define the $U$ operator first on a dense subspace $S$ of $L^2(\RRR^3,\CCC^4)$, then show that it is bounded and take its bounded extension on all of $L^2(\RRR^3,\CCC^4)$; we leave $S$ to be chosen later but assume $S\subseteq C^\infty(\RRR^3,\CCC^4)$. We define the $U$ operators by solving the Dirac equation for $\psi_0 \in S$ to obtain $\psi: \RRR^4 \to \CCC^4$ on space-time and then restricting $\psi$ to $\Sigma$. By a result of Chernoff \cite{Chernoff}, for $C^\infty$ time-independent  $A_\mu$, the Dirac Hamiltonian is essentially self-adjoint on $C_0^\infty(\RRR^3,\CCC^4)$ (i.e., compactly supported functions), so that there is no ambiguity about $H$, and $\psi\in C^\infty(\RRR^4,\CCC^4)$ if $\psi_0\in C^\infty(\RRR^3,\CCC^4) \cap L^2(\RRR^3,\CCC^4)$. As a consequence, no ambiguity about changing $\psi$ on null sets arises, and $\psi_\Sigma := \psi|_\Sigma$ is well defined. By the linearity of the Dirac equation, $\psi_0 \to \psi_\Sigma$ is linear. We write $\|\psi\|_\Sigma$ for $\scp{\psi_\Sigma}{\psi_\Sigma}_\Sigma^{1/2}$.

We now show that
\be\label{psinormineq}
\|\psi\|_\Sigma \leq \|\psi_0\|\text{ for }\Sigma\in\Hyper\,.
\ee 
Without loss of generality, we assume $\|\psi_0\|=1$. Define the \emph{probability current vector field}\footnote{This is standard terminology. In rGRWf, of course, it does not signify the flow of probability. In Bohmian mechanics it does.}  $j:\RRR^4 \to\RRR^4$ by
\be
j^\mu = \overline{\psi} \, \gamma^\mu\, \psi
\ee
and note that, for any spacelike 3-surface $\Sigma$,
\be
\|\psi\|_\Sigma^2 = \int_\Sigma \D^3x\, j^\mu(x) \, n_\mu(x)
\ee
is the flux of $j$ across $\Sigma$. Some well-known properties of $j$: (i)~Since $h$ in \eqref{scpdef} is positive definite, $j$ has positive Lorentzian scalar product $j^\mu n_\mu$ with every future-pointing timelike $n^\mu$, and thus $j$ is future-pointing causal. (ii)~$j$ is divergence free, i.e., the continuity equation
\be
\partial_\mu j^\mu = 0
\ee
holds as a consequence of the Dirac equation. (iii)~Since $j^\mu(x) \, n_\mu(x) \geq 0$, $\D^3x \, j^\mu(x) \, n_\mu(x)$ is a $\sigma$-finite measure $\nu$ on $\Sigma$.

We use the following fact \cite{Tum01}: \emph{Given a future-pointing causal $C^\infty$ divergence-free vector field $j$ on $\RRR^4$ whose flux across $\Sigma_0=\{0\}\times \RRR^3$ is 1 and a spacelike 3-surface $\Sigma$, then for every measurable $A\subseteq \Sigma$,}
\be\label{rwl}
\PPP(L \cap A \neq \emptyset) = \nu(A\setminus B_0) \,,
\ee
\textit{where $L$ is the random Bohmian trajectory, i.e., integral curve of $j$, whose initial point has distribution $|\psi_0|^2 \D^3x$ on $\Sigma_0$; $\PPP(L \cap A \neq \emptyset)$ is the probability of the Bohmian trajectory intersecting $A$; and $B_0$ is the set of points $x\in\Sigma$ with $\psi(x)\neq 0$ which do not lie on any Bohmian trajectory starting on $\Sigma_0$.} Note that the hypotheses on $j$ are fulfilled in our case; note also that the Bohmian trajectories are causal since $j$ is, and thus a spacelike 3-surface intersects each Bohmian trajectory at most once. We thus obtain a stochastic interpretation of the flux across $A$ as the probability of the random curve $L$ intersecting $A$, but we need to get control of the set $B_0$. 

To this end, we use the global existence theorem of Teufel and myself \cite{TT05} for Bohmian trajectories, which implies the following: \emph{Given an electromagnetic potential $A_\mu$ on $\RRR^4$ that is time-independent and $C^\infty$, and an initial wave function $\psi_0 \in L^2(\RRR^3,\CCC^4)\cap C^\infty(\RRR^3,\CCC^4)$, then almost all Bohmian trajectories exist for all times, where ``almost all'' refers to the $|\psi_0|^2$ distribution over the initial point on $\{0\}\times\RRR^3$.} From this 
we get control of $B_0\subseteq \Sigma \in \Cauchy\cup\Hyper$, namely
\be\label{B0control}
\nu(B_0)=0\,.
\ee
For $\Sigma$ of the form $\Sigma_t := \{t\}\times\RRR^3$, this would be immediate from the global existence theorem, applied to $\Sigma_t$ as the initial time, noting that the $|\psi_t|^2$ distribution coincides with $\nu$. For $\Sigma\in\Hyper$, we have to do some work: Let  
$B_t$ be the set of points $x\in\Sigma$ such that $\psi(x) \neq 0$ (so that there exists a trajectory through $x$) and the trajectory through $x$ does not exist at time $t$, i.e., does not intersect $\Sigma_t := \{t\}\times\RRR^3$. 
Now we ask for the probability that the random trajectory $L$ starting on $\Sigma_0$ intersects $\Sigma$ in a point $x\in B_t$. In that event, $L$ has to coincide with the unique trajectory through $x$, which does not intersect $\Sigma_t$ and thus does not exist globally. By the global existence theorem, this probability is zero:
\be
0=\PPP(L\cap B_t \neq \emptyset) = \nu(B_t\setminus B_0)\,,
\ee
where the second equality is \eqref{rwl}.  
Now choose an arbitrary measurable set $A\subseteq \Sigma$ with $\nu(A)<\infty$ and consider $A\cap B_{t_1} \cap \ldots \cap B_{t_m}$ instead of $B_t$ and observe that
\be
\nu\bigl((A\cap B_{t_1} \cap \ldots \cap B_{t_m})\setminus B_0\bigr) 
\leq \nu(B_{t_1}\setminus B_0) =0\,.
\ee
Put differently,
\be
\nu\bigl(A\cap B_{t_1} \cap \ldots \cap B_{t_m}\bigr) = 
\nu\bigl(A\cap B_{t_1} \cap \ldots \cap B_{t_m}\cap B_0\bigr)\,.
\ee
By the same argument for any time $t_{m+1}$ instead of 0,
\be
\nu\bigl(A\cap B_{t_1} \cap \ldots \cap B_{t_m}\bigr) = 
\nu\bigl(A\cap B_{t_1} \cap \ldots \cap B_{t_m}\cap B_{t_{m+1}}\bigr)\,.
\ee
Setting $t_1=0$ and by induction along $m\in \NNN$,
\be
\nu(A\cap B_0) = \nu\bigl(A\cap B_0 \cap B_{t_1} \cap \ldots \cap B_{t_m}\bigr)\,.
\ee
Now consider an infinite sequence $(t_m)$ that is dense in $\RRR$ (say, an enumeration of $\QQQ$); then
\be
\nu(A\cap B_0) = \lim_{m\to\infty} 
\nu\bigl(A\cap B_0 \cap B_{t_1} \cap \ldots \cap B_{t_m}\bigr) =
\nu\Bigl(A\cap B_0 \cap \bigcap_{m=1}^\infty B_{t_m} \Bigr) = 0
\ee
because $\bigcap_m B_{t_m}=\emptyset$, as every trajectory exists for some time interval of positive length (and thus, e.g., at some rational time). Since $A$ was arbitrary with finite measure, $\nu(B_0)=0$, which is what we claimed in \eqref{B0control}.

As a consequence of \eqref{B0control}, we have from \eqref{rwl} that
\be
1\geq\PPP(L \cap \Sigma \neq \emptyset) = \nu(\Sigma)=\|\psi\|_\Sigma^2 \,,
\ee
which shows \eqref{psinormineq}.

Now we show that $\|\psi\|_\Sigma = \|\psi_0\|$ for $\Sigma \in\Hyper$.
For this we use the flux-across-surfaces theorem of D\"urr and Pickl \cite{DP03}, which implies the following: \emph{Under Assumption~\ref{ass:Amu}, for all $\psi_0$ with $\|\psi_0\|=1$ from a suitable dense subspace $S$ of $L^2(\RRR^3,\CCC^4)$ with $S\subseteq C^\infty(\RRR^3,\CCC^4)$ it is true that}
\begin{equation}\label{fast}
\lim_{s\to\infty} \int_{\hyper_s(0)} \D^3x \, j^\mu\, n_\mu = 1\,.
\end{equation}
This fixes the subspace $S$ (and this is where the condition \eqref{Amu} enters). We want to show that $P_s :=\|\psi\|_{\hyper_s(0)}^2 = 1$ for every $s>0$ and $\psi_0\in S$ with $\|\psi_0\|=1$. This quantity is the probability that the random Bohmian trajectory $L$ intersects $\hyper_s(0)$. In particular, it is decreasing in $s$,
\begin{equation}
P_{s_1} \geq P_{s_2} \quad\text{if }s_1\leq s_2\,.
\end{equation}
Now, according to \eqref{fast}, $P_s \to 1$ as $s\to\infty$ while $P_s \leq 1$, and thus $P_s =1$ for all $s>0$.

What we have obtained is that, for any $\Sigma\in\Hyper$, $U:=U_{\Sigma_0}^\Sigma: S \to \Hilbert_\Sigma$ is norm-preserving. It is therefore bounded and possesses a unique bounded extension $\tilde{U}$ to all of $\Hilbert_{\Sigma_0} = L^2(\RRR^3,\CCC^4)$. To see that $\tilde{U}$ is norm-preserving, too, consider a convergent sequence $\psi_n \to \psi$ with $\psi_n\in S$ and note that
\be
\|\tilde{U}\psi\|_\Sigma =
\|\lim_{n\to\infty} U \psi_n \|_\Sigma = \lim_{n\to\infty} \|U\psi_n\|_\Sigma = 
\lim_{n\to\infty} \|\psi_n\| = \|\lim_{n\to\infty}\psi_n\|=\|\psi\|\,.
\ee
In the following we write $U_{\Sigma_0}^{\Sigma}$ for $\tilde{U}$.

Now we show that $U:=U_{\Sigma_0}^{\Sigma}$ is onto. We first observe that the range of $U$ is a closed subspace because if, in $\Hilbert_\Sigma$, $\phi_n\to\phi$ and $\phi_n=U\psi_n$ then $(\phi_n)$ is a Cauchy sequence, and thus so is $(\psi_n)$, and thus $(\psi_n)$ converges, and $U\lim_n \psi_n = \lim_n U\psi_n= \lim_n \phi_n = \phi$, so that $\phi$ lies in the range. It remains to show that the range of $U$ is dense in $\Hilbert_\Sigma$: The range of $U$ contains $C_0^\infty(\Sigma,\CCC^4)$ (i.e., compactly supported) because for such a $\psi_\Sigma$ there exists a Cauchy surface $\Sigma'$ that has the support of $\psi_\Sigma$ in common with $\Sigma$. By Dimock's existence theorem, there is a unique $\psi:\RRR^4\to\CCC^4$, solving the Dirac equation, whose restriction to $\Sigma'$, and thus to $\Sigma$, is $\psi_\Sigma$. Set $\psi_0$ to be the restriction of $\psi$ to $\Sigma_0$.
\end{proof}

\bigskip

\begin{proofthm}{thm:Minkexist}
To begin with, Assumption~\ref{ass:tdist} is satisfied in Minkowski space-time, and the $U_\Sigma$ operators are provided by Lemma~\ref{lemma:unitary}. 
Now we show that the quantity $\mathcal{N}(y)$ is always well defined by \eqref{Ndefprofile}, which means that $\int_\Sigma \D^3x \, \profile\bigl(\sdist_\Sigma(x,y)\bigr)$ is finite and nonzero: It could only be zero if $\profile$ were zero almost everywhere, which is excluded by the positivity in \eqref{profilefinite}. To check that it is finite, we only need check that it is finite for $x'=0$ and $y=(s,0,0,0)$ since there is an isometry of Minkowski space carrying $\hyper_s(x')$ is into $\hyper_s(0)$ and $y$ into $(s,0,0,0)$. In particular, $\mathcal{N}(y)$ is actually independent of $y$ (in Minkowski space-time!). Now we calculate 
\be
\int_{\Sigma=\hyper_s(0)} \D^3x \, \profile\circ\sdist_\Sigma\bigl(x,(s,0,0,0)\bigr) =
\ee
[by \eqref{hyperdensity}] 
\be
=\int_{\RRR^3} \D x^1\, \D x^2 \, \D x^3\,  \frac{\profile\circ\sdist_\Sigma
\bigl((\sqrt{s^2+r^2}, x^1,x^2,x^3),(s,0,0,0) \bigr)}{\sqrt{1+r^2/s^2}}=
\ee
\be
=\int_{\RRR^3} \D x^1\, \D x^2 \, \D x^3\,  \frac{\profile\bigl(s\sinh^{-1}(r/s)\bigr)}
{\sqrt{1+r^2/s^2}}=
\ee
[where $\sinh^{-1}$ means the inverse function of $\sinh$]
\be
=\int_0^\infty \D r\,  \profile\bigl(s\sinh^{-1}(r/s)\bigr) \, \frac{4\pi r^2}
{\sqrt{1+r^2/s^2}}=
\ee
[substituting $r=s\sinh (u/s)$ so that $\D u = \D r/\sqrt{1+r^2/s^2}$]
\be
=4\pi s^2\int_0^\infty \D u\,  \profile(u) \,  \sinh(u/s)^2 \,.
\ee
Since $\profile$ is a bounded function, what is relevant for finiteness of this integral is the asymptotics for $u\to \infty$, where $\sinh \sim \tfrac12 \exp$ and thus $\sinh(u/s)^2 \sim \tfrac14 \exp(2u/s)$. Thus, the finiteness in \eqref{profilefinite} is (necessary and) sufficient for the finiteness of this integral for every $s>0$.

The operators $\Lambda_\Sigma(x)$, defined by \eqref{LambdaSigmadef}, are weakly measurable as a function of $x\in\Sigma=\hyper_s(x')$ whenever $(x,y)\mapsto \profile\circ\sdist(x,y)$ is measurable. This is satisfied since for future hyperboloids in Minkowski space-time, $(x,y)\mapsto \sdist(x,y)$ is a measurable (even $C^\infty)$ function, $\profile:[0,\infty)\to[0,1]$ is measurable by Assumption~\ref{ass:profile}, and $\mathcal{N}(y)$ is actually independent of $y$.

To check \eqref{Lambdanorm} for $\Sigma = \hyper_s(x')$ and arbitrary $\psi\in\Hilbert_\Sigma$,
\be
\int_{\Sigma} \D^3 x\, \scp{\psi}{\Lambda_\Sigma(x)\,\psi} =
\ee
\be
=\int_{\Sigma} \D^3 x\, \int_{\Sigma} \D^3y\, \overline{\psi}(y) \, 
\gamma^\mu \, n_\mu(y) \, \lambda \, \mathcal{N}(y) \,  \profile\bigl(\sdist_\Sigma(x,y)\bigr) \, \psi(y) =
\ee
[we can reorder the integrals because the integrand is nonnegative]
\be
= \lambda\int_{\Sigma} \D^3y\, \overline{\psi}(y) \, 
\gamma^\mu \, n_\mu(y) \,   \psi(y)\, \mathcal{N}(y)\int_{\Sigma} \D^3 x\,\profile\bigl(\sdist_\Sigma(x,y)\bigr)  =
\ee
\be
= \lambda\int_{\Sigma} \D^3y\, \overline{\psi}(y) \, 
\gamma^\mu \, n_\mu(y) \, \psi(y)=\lambda \, \scp{\psi}{\psi}\,.
\ee

We now show the measurability of \eqref{hypermeas}. To this end, we define, for every hyperboloid $\hyper_s(x)$, a diffeomorphism $\varphi_{s,x}:\hyper_s(x) \to\RRR^3$ by $\varphi_{s,x}(y) = (y^1-x^1,y^2-x^2,y^3-x^3)$. This induces a linear mapping $M_{s,x}:L^2(\RRR^3,\CCC^4) \to \Hilbert_{\hyper_s(x)}$ 
defined by $M_{s,x}\psi(y) = \psi\bigl(\varphi_{s,x}(y)\bigr)$; $M_{s,x}\psi$ is square-integrable because
\be
\|M_{s,x}\psi\|^2_{\hyper_s(x)} = \int_{\hyper_s(x)} \D^3y \, (M_{s,x}\psi)^*(y) \, \gamma^0\gamma^\mu\, n_\mu(y) \, (M_{s,x}\psi)(y)= 
\ee
\be
=\int_{\RRR^3}\D^3v\, \psi^*(v) \, \gamma^0\gamma^\mu\, (1,v/\sqrt{s^2+v^2})_\mu \, \psi(v) \leq \int_{\RRR^3}\D^3v\, |\psi(v)|^2 \, \sum_{\mu=0}^3 \|\gamma^0\gamma^\mu\|_{\CCC^4} < \infty \,,
\ee
which indeed implies $\|M_{s,x}\|\leq (\sum_\mu \|\gamma^0\gamma^\mu\|)^{1/2}$. Similarly, $M_{s,x}^{-1}\psi(v) = \psi(\varphi_{s,x}^{-1}(v))$ is a bounded operator. We check that $(x,x')\mapsto M_{\tau(x,x'),x'}^{-1} \, \Lambda_{\hyper(x,x')}(x) \, M_{\tau(x,x'),x'}$ is weakly measurable:
\be
\scp{\psi}{M_{\tau(x,x'),x'}^{-1} \, \Lambda_{\hyper(x,x')}(x) \, M_{\tau(x,x'),x'}\, \psi} =
\ee 
\be
=\int_{\RRR^3} \D^3 v\, \psi^*(v) \psi(v)\, \lambda\mathcal{N} \, 
\profile\bigl(\sdist(x,\varphi_{\tau(x,x'),x'}(v)\bigr)
\ee
which is measurable since the integrand is measurable in $(x,x',v)$. It remains to show that $(x,x')\mapsto M_{\tau(x,x'),x'}^{-1} \, U_{\Sigma_0}^{\hyper(x,x')}$ is weakly measurable. By a translation $x'\to 0$, it suffices to show that $s\mapsto \scp{\psi_0}{M_{s,0}^{-1} \, U_{\Sigma_0}^{\hyper_s(0)} \, \psi_0}$ is measurable for all $\psi_0\in L^2(\RRR^3,\CCC^4)$, which  follows  (since the operators are bounded) from the fact that $s\mapsto M_{s,0}^{-1} \, U_{\Sigma_0}^{\hyper_s(0)} \, \psi_0(v) = \psi(\varphi_{s,0}^{-1}(v))$ is continuous for all $v\in\RRR^3$ and $\psi_0\in S$, as then $\psi:\RRR^4\to\CCC^4$ is $C^\infty$.

Thus, Assumption~\ref{ass:LambdaSigma} is fulfilled, too, and Theorem~\ref{thm:absexist} applies.
\end{proofthm}

\section{Outlook}

\subsection{Nonlocality}
\label{sec:nonlocality}

Locality means that if two space-time regions $A$ and $B$ are spacelike separated then events in $A$ cannot influence those in $B$ or vice versa. Let me point out why rGRWf is a \emph{nonlocal} theory.

rGRWf specifies the joint distribution of flashes, some of which may occur in $A$ and some in $B$. The distribution of those in $A$, i.e., of how many flashes occur in $A$ and at which space-time points, is in general not independent of the flashes in $B$ (except in case the initial state vector factorizes):
\be
\PPP\bigl(F\cap A\in \cdot\big| F\cap B\bigr) \neq \PPP(F\cap A\in \cdot)\,.
\ee
But this is not yet an influence between $B$ and $A$: correlation is not causation. After all, the flashes in $A$ and those in $B$ may be correlated because of a common cause in the past. Taking this into account, the criterion for the absence of an influence between $A$ and $B$ is that $F\cap A$ and $F\cap B$ are \emph{conditionally independent}, given the history of their common past. And also this can fail in rGRWf:
\be
\PPP\Bigl(F\cap A\in\cdot\Big|F\cap B, F\cap \past(A)\cap \past(B)\Bigr)\neq
\PPP\Bigl(F\cap A\in\cdot  \Big| F\cap\past(A)\cap\past(B) \Bigr)\,.
\ee
Thus, rGRWf is nonlocal.

The nonlocality of rGRWf should be seen in connection with Bell's famous nonlocality argument \cite{Bell64,Bell87b}, according to which the laws of our universe must be nonlocal. The argument shows that every local theory entails that the predicted probabilities for certain experiments satisfy Bell's inequality, which however is violated according to the quantum formalism and in experiment (and according to rGRWf).

Many authors, beginning with Einstein, Podolsky and Rosen \cite{EPR35}, have expressed the view that locality follows from relativistic covariance. This view seems dubious given Bell's result that locality is wrong while relativity has been extraordinarily successful. More detailed arguments to the effect that nonlocality does not contradict relativity (or, in other words, that the concept of locality is not equivalent to that of relativistic covariance) have been given in \cite{Timbook,GT03}. The strongest argument to this effect that I see is, however, the existence of rGRWf, a nonlocal theory that is convincingly covariant. 

Indeed, the biggest hurdle on the way to a relativistic quantum theory without observer was to find a theory that is nonlocal yet covariant. Thus, this is perhaps the most remarkable aspect of rGRWf. So how does rGRWf accomplish this feat? How does it reconcile relativity and nonlocality? I think that the following point, which I have first described in \cite{Tum07}, is crucial: If space-time regions $A$ and $B$ are spacelike separated, then nonlocality means that events in $A$ can influence those in $B$ \emph{or vice versa}. Of course, an influence from $A$ to $B$ would mean an influence to the past in some Lorentz frames. In rGRWf, however, the words ``or vice versa'' are important, as in rGRWf there is no objective fact about whether the influence took place from $A$ to $B$ or from $B$ to $A$. The rGRWf laws simply prescribe the joint distribution of flashes in $A$ and $B$, but do not say that nature made the first random decision in $A$, which then influenced the flashes in $B$. There is no need for rGRWf to specify in which order to make random decisions. One can say that the direction of the influence depends on the chosen Lorentz frame. In a frame in which $A$ is earlier than $B$ one would conclude that the flashes in $A$ have influenced those in $B$, while in a frame in which $B$ is earlier than $A$ one would conclude the opposite. The following simple illustration of how an influence can fail to have a direction is due to Conway and Kochen \cite{CK05}. 

\begin{ex}
Consider a discrete space-time $M$ as depicted in Fig.~\ref{fig:CKspacetime}, which can be thought of as a subset of $1+1$-dimensional Minkowski space. In terms of a suitable time coordinate function $T$, all space-time points have positive integer values of $T$, and at time $T$ there exist $T$ space points. The PO is a field $\phi:M \to \{0,1\}$ subject to two laws: (i)~If $x$ is any point in $M$ and $y,z$ its two neighbors in the future then $\phi(x) + \phi(y) + \phi(z) \in\{0,2\}$. (ii)~Given all values of $\phi$ up to time $T'$, the random event $\phi(x)=1$ has conditional probability $1/2$ for any point $x$ with $T(x)>T'$.

\begin{figure}[ht]
\begin{center}
\includegraphics[width=.6 \textwidth]{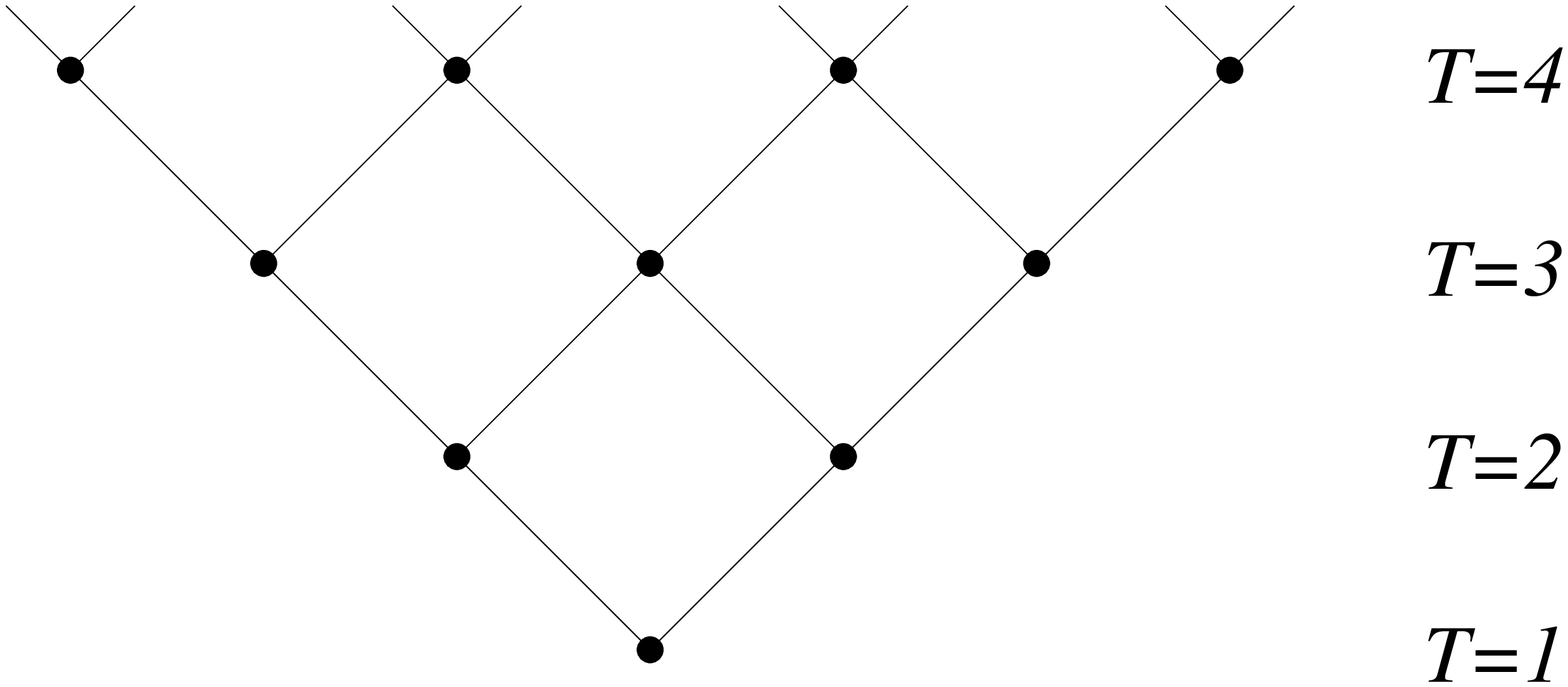}
\caption{The discrete space-time considered in the text, and the $T$ function on it. The bullets symbolize the space-time points, while the lines have no physical meaning and serve only for indicating how to continue the figure to infinity.}
\label{fig:CKspacetime}
\end{center}
\end{figure}

Let us generate a random space-time history according to these laws. On the one point $x$ with $T(x)=1$ we choose $\phi(x)$ at random according to (ii), with probability $1/2$ for $\phi(x)=1$. Then we can choose, for the left point $y$ with $T(y)=2$, the value $\phi(y)$, again with probability $1/2$ for $\phi(y)=1$. Then, by (i), for the right point $z$ with $T(z)=1$, the value $\phi(z)$ is determined by $\phi(x)$ and $\phi(y)$. Similarly, if we have chosen all $\phi$ values up to time $T'$ then any single $\phi$ value in the row $T'+1$ will determine all the other values in this row.

This model world is not meant to be relativistic, but it illustrates influences without direction: Suppose we simulate the model one time step after another, and suppose we have filled in the $\phi$ values up to time $T'$. Let $x$ be the leftmost point at time $T'+1$, and $y$ the rightmost one. Now we may throw a coin to choose $\phi(x)$, and then compute all the other $\phi$ values in that row. Or we may throw a coin for $\phi(y)$ and compute $\phi(x)$ from that. In one case there was an influence from $x$ to $y$, in the other from $y$ to $x$. But there is no objective direction of the influence in the model world. The theory specifies no such direction, and there is no need to specify it. For a physical theory it suffices to specify the joint probability distribution of the history of the PO. The direction of the influence lies only in the way we choose to look at, or simulate, the model world, like a choice of gauge or a choice of coordinates; it represents no objective fact in the world. The situation is the same as any other situation of simulating two dependent random variables $X,Y$ with known joint distribution: One could first simulate $X$ according to its known marginal distribution and then $Y$ according to its known conditional distribution given $X$, or vice versa, and none of these two orderings is more correct than the other.
\end{ex}

\subsection{Other Approaches to Relativistic Collapse Theories}\label{sec:other}

In this subsection, I mention the approaches to relativistic collapse theories other than rGRWf in the literature, and describe the differences.

A crucial part of the problem of specifying a relativistic collapse theory is to specify a law for the primitive ontology. The need for a clear specification of the primitive ontology has often not been sufficiently appreciated in the literature. Many authors have focused on the problem of specifying a Lorentz-invariant law that associates with every spacelike 3-surface $\Sigma$ in space-time a wave function $\psi_\Sigma$, in such a way that macroscopic superpositions collapse appropriately (e.g., \cite{Gi89,Pe90,GGP90, Pe99,NR}). But such a law is only half of what is needed for a relativistic collapse theory: the other half concerns the primitive ontology.

\medskip

\underline{\textit{Dowker and Henson}} \cite{Fay02} describe a collapse model on a lattice space-time $\ZZZ^2$ in $1+1$ dimension. This model has many traits in common with rGRWf (except that rGRWf lives on manifolds). In particular, it is relativistic in the appropriate lattice sense,  
and it defines a primitive ontology consisting of field values at the lattice sites (a primitive ontology not among the examples I listed in Section~\ref{sec:phil}). In contrast to rGRWf, this model incorporates interaction while rGRWf assumes non-interacting ``particles'' (of course, there are no particles in this theory, just flashes). An important future goal for rGRWf is the development of a version with interaction.

\medskip

\underline{\textit{Hellwig and Kraus}} \cite{HK70} worry about the relativistic invariance of wave function collapse in ordinary quantum mechanics and propose that wave functions collapse along the past light cone of the space-time point at which a measurement takes place. They assume as given the space-time points $X_1,\ldots,X_n$ at which measurements take place (some of which may be spacelike separated) and the observables $O_1,\ldots,O_n \in\Bdd(\Hilbert)$ measured there with results $R_1,\ldots,R_n \in \RRR$ and associate with every $x\in M$ a collapsed state vector $\psi_x\in \Hilbert$. In detail, they assume the Heisenberg picture in which the unitary evolution of the state vector disappears; let $P_k$, for $k=1,\ldots,n$, be the projection to the eigenspace of $O_k$ with eigenvalue $R_k$ and set
\be\label{HKrule}
\psi_x = \frac{\bigl(\prod_{k:X_k\in\past(x)} P_k\bigr) \psi}
{\Bigl\|\bigl(\prod_{k:X_k\in\past(x)} P_k\bigr) \psi\Bigr\|} \in \Hilbert\,,
\ee
where $\psi$ is the initial state vector, an empty product is understood as the identity operator, and the ordering in the product is such that whenever $X_k\in\future(X_\ell)$ then $P_k$ stands to the left of $P_\ell$. It is assumed that for spacelike separated $X_k$ and $X_\ell$, $O_k$ commutes with $O_\ell$, and thus $P_k$ with $P_\ell$. [I mention that in \cite{HK70}, the term $\mathrm{Tr}(QPW)$ in equations (3)--(5) should read $\mathrm{Tr}(QPWP)$.]

This rule involves a kind of retrocausation, as the decision, made by an observer at $X_k$, about which $O_k$ to measure influences the reality in the past, more precisely at those points $x$ that are spacelike separated from $X_k$ and that therefore are earlier than $X_k$ in some inertial frames. Even more problematic is that the use of the proposal of Hellwig and Kraus remains unclear, for two reasons. 

First, in ordinary quantum mechanics the formalism is usually supposed to specify the joint probability distribution of the results $R_k$, which follows from the conventional quantum formalism (with instantaneous collapse at every measurement)
\be\label{PRP}
\PPP(R_1=r_1,\ldots,R_n=r_n) = \Bigl\|\bigl(\prod_{k=1}^n P_k\bigr) \psi\Bigr\|^2
\ee
with $P_k$ the projection to the eigenspace with eigenvalue $r_k$, and the ordering of the factors in the product as before (whenever $X_k\in\future(X_\ell)$ then $P_k$ is left of $P_\ell$, while for spacelike separated $X_k$ and $X_\ell$, $P_k$ commutes with $P_\ell$). Formula \eqref{PRP} is manifestly Lorentz invariant, and since the measurement results constitute (in a vague and imprecise way) the primitive ontology of ordinary quantum mechanics it suffices that their distribution be specified by the laws of the theory in a Lorentz-invariant manner, making a rule like \eqref{HKrule} irrelevant. 

Second, instead of defining a state vector $\psi_x$ for every space-time point $x$ it seems more natural to define a state vector $\psi_\Sigma$ for every spacelike 3-surface $\Sigma$ (even for a single particle in the presence of collapses, be they due to flashes or to measurements). Indeed, such is the case in rGRWf (and in the model of Dowker and Henson \cite{Fay02}), so it certainly does not conflict with relativistic invariance (as Hellwig and Kraus seem to think). The notion of a state vector $\psi_\Sigma$ for every surface $\Sigma$ is, of course, much older; it is used by Tomonaga and Schwinger in the 1940's, and implicit in the derivation of \eqref{PRP}. If $\psi$ is admitted to depend on $\Sigma$ then the apparent conflict between instantaneous collapse and relativity evaporates: it is then completely consistent that $\psi$ collapses instantaneously (on all of 3-space) \emph{in every Lorentz frame} because the collapse is associated with some space-time point $X$, and $\psi_\Sigma$ is a \emph{collapsed} state vector on every spacelike 3-surface $\Sigma$ with $X\in\past(\Sigma)$ but \emph{uncollapsed} on every $\Sigma$ with $X \in\future(\Sigma)$. In contrast, for the primitive ontology at a space-time point $x$ it would not make sense to depend on a 3-surface $\Sigma$.

\medskip

\underline{\textit{Dove and Squires}} \cite{dovepaper,dovethesis} essentially reiterate the ideas of Hellwig and Kraus in the context of a GRW theory with flash ontology. They propose a Lorentz-invariant rule for collapsing the wave function \emph{given} the flashes, but no law for the flashes given the initial wave function. That is, what they provide is, at best, a part of a collapse theory. Furthermore, their proposal is based on the misconception that they have to define the value $\psi(x)$ of the wave function for every space-time point $x$ (if the system consists of a single particle, $N=1$). I have discussed this already above in the context of Hellwig and Kraus's proposal.

\medskip

\underline{\textit{Blanchard and Jadczyk}} \cite{BJ96} start from the consideration of a system of quantum particles continuously observed by detectors of limited efficiency, which manage only every now and then to detect a particle. This consideration is related to GRW theory as the detection events are points in space-time, and are reasonably modeled in a stochastic way by a point process in space-time whose distribution may coincide with that of a GRWf process. To obtain a relativistic version of this model, one might try to analyze the behavior of detectors consisting of relativistic particles, but Blanchard and Jadczyk instead try to guess relativistic equations. What they guess is not related to rGRWf, and in fact does not answer the question of the probability distribution of the detection events. They consider a wave function $\Psi_\tau$ on space-time that, instead of being a solution to the Dirac equation, \emph{evolves}. That is, the wave function is not a function on space-time but a one-parameter family of functions on space-time, where the parameter $\tau$ is a pseudo-time, anyway a fifth coordinate (in addition to the four space-time coordinates). I do not see why a theory based on such a wave function should lead to any predictions related to those of quantum mechanics. In Blanchard and Jadczyk's model of detection, they propose a stochastic rule for a random $\tau$ value associated with the detection event, but no rule for a random space-time point. Moreover, this rule is not Lorentz invariant but assumes a preferred frame, which they call the rest frame of the detector. That may seem natural when modeling a detector, but it would not be admissible for a relativistic theory of flashes.

\medskip

\underline{\textit{Ruschhaupt}} \cite{Ru02} continues where Blanchard and Jadczyk have stopped. His contribution is to associate a space-time point with the detection event as follows: he assumes that a world line $s\mapsto x(s)$ of the detector is given, parameterized with proper time, and when Blanchard and Jadczyk's rule generates a random value $\tau$ of the pseudo-time, Ruschhaupt inserts this value into $x(\cdot)$ to obtain a random space-time point $x(\tau)$. Since the world line $x(\cdot)$ is given, this model, unlike rGRWf, does not qualify as a fundamental theory. On top of that, I see no reason why the predictions of this model should be related to those of quantum mechanics.

\medskip

\underline{\textit{Conway and Kochen}} \cite{CK05} claim to have shown that relativistic GRW theories are impossible. rGRWf is a counterexample to their claim; the model of Dowker and Henson \cite{Fay02} is another counterexample. I have given a detailed evaluation of their arguments in \cite{Tum07}; see \cite{BG06} for a further critique, and \cite{CK07} for Conway and Kochen's reply to \cite{BG06} and \cite{Tum07}. Here is a summary of \cite{Tum07}: Conway and Kochen claim that the impossibility of relativistic GRW theories is a corollary of a physical statement they derive in \cite{CK05} and call the ``free will theorem''; it is intended to exclude deterministic theories of quantum mechanics. The proof of the free will theorem contains a logical gap in the sense that it uses a hypothesis that is stronger than formulated in the statement of the ``theorem.'' The weaker version of the hypothesis (``FIN'' or ``effective locality'') is, in fact, fulfilled by rGRWf, while the stronger one is violated. The stronger version is equivalent to locality (in the sense of Einstein, Podolsky, Rosen, and Bell \cite{Bell87b}, and in the sense of Section~\ref{sec:nonlocality} above), which was shown by Bell in 1964 \cite{Bell64} to conflict with certain probability distributions predicted by quantum mechanics and afterwards confirmed in experiment. Thus, EPRB locality is wrong in our world, making a theorem assuming it useless. (However, the Conway--Kochen proof could be turned around into a disproof of EPRB locality, assuming determinism \cite{BG06}.) Moreover, Conway and Kochen's argument from the free will theorem to the impossibility of relativistic GRW theories supposes that every stochastic theory is equivalent to a deterministic one (by making all random decisions at the initial time), which in this case is incorrect in a relevant way because the probability distribution in rGRWf depends on the external field $A_\mu$, which observers are free to influence at later times.

\subsection{The Value of a Precise Definition}
\label{sec:gooddef}

In the introduction I said that the GRW theory provides a precise definition of quantum mechanics. As always with precise definitions, it is easy to find many physicists who will honestly declare that they don't need such a definition for their work. So I should give an example of what such a definition is good for.

The example consists of a simple physical statement that one would like to prove, and a simple proof based on GRW theory (with flash ontology) as a precisely defined theory. (By the way, this simple proof appears here for the first time in print.) However, from the rules of ordinary quantum mechanics it is impossible to get anywhere near a proof. The statement is this: 
\be\label{mainthmpovm}
\begin{array}{c}
\text{For every conceivable experiment that one could carry out on a physical}\\
\text{system there is a POVM $E(\cdot)$ so that the probability distribution of the}\\
\text{result $R$ is $\scp{\psi}{E(\cdot) \, \psi}$, where $\psi$ is the system's wave function.}
\end{array}
\ee
Below we show that this is true in a (hypothetical) world governed by GRWf, for any choice of Hamiltonian and flash rate operators (while $E(\cdot)$ depends on this choice, of course); we will translate the physical statement \eqref{mainthmpovm} into a mathematical one and give a proof. 

What is the status of \eqref{mainthmpovm} in ordinary quantum mechanics? There, one introduces as an \emph{axiom} (rather than theorem) that observables correspond to self-adjoint operators, and specifies the distribution of the result if an observable is measured, and a formula for the subsequent collapse of the wave function. But it is well known that not every conceivable experiment is the measurement of an observable: Self-adjoint operators correspond to \emph{projection-valued measures} (PVMs), which are POVMs $P(\cdot)$ such that $P(A)$ is a projection for every measurable set $A$; it is easy to name experiments whose POVMs $E(\cdot)$ are not a PVMs, for example a cascade of several measurements corresponding to non-commuting operators, or a ``time-of-arrival measurement'' observing the time a detector clicks. Thus, the usual axioms of quantum mechanics do not exhaust all conceivable experiments. One is tempted to introduce \eqref{mainthmpovm} as a further axiom. 

Let us return to GRWf theories. To translate \eqref{mainthmpovm} into a mathematical statement, we note that the result of an experiment will be read off from the arrangement of matter in space and time, that is, from the primitive ontology. Thus, the result $R$ is a function of the random pattern of flashes $F$, $R=\zeta(F)$. (Note that we do not model a class of experiments, but claim that any experiment deserving the name must be of this form.) We assume that $\zeta$ is a measurable function from the appropriate history space $\Omega$ (such as $M^\NNN$) to the value space $V$ of the experiment. We also assume that the experiment begins at time $t_0$, that the Hilbert space is $\Hilbert = \Hilbert_\mathrm{sys}\otimes \Hilbert_\mathrm{env}$, where $\Hilbert_\mathrm{sys}$ is the Hilbert space of the system and $\Hilbert_\mathrm{env}$ that of its environment, and that the wave function at time $t_0$ is a product, $\Psi_{t_0} = \psi \otimes \phi$ (which expresses that the system and apparatus are initially independent and justifies saying that the system has wave function $\psi$). Finally, the distribution of the GRWf process is given by a history POVM $\povm(\cdot)$ on the appropriate history space. Now, the physical statement \eqref{mainthmpovm} reduces to the following mathematical statement (which is mathematically not deep):

\begin{thm}
Let $\Hilbert = \Hilbert_\mathrm{sys}\otimes \Hilbert_\mathrm{env}$ be a separable Hilbert space, $\povm(\cdot)$ a POVM on $(\Omega,\salg_\Omega)$ acting on $\Hilbert$, $\phi$ a fixed vector in $\Hilbert_\mathrm{env}$ with $\|\phi\|=1$, and $\zeta: (\Omega,\salg_\Omega) \to (V,\salg_V)$ a measurable function. For every $\psi\in\Hilbert_\mathrm{sys}$ with $\|\psi\|=1$, let $\Psi_{t_0} = \psi \otimes \phi$, $F_\psi$ be a random variable in $\Omega$ with distribution $\scp{\Psi_{t_0}}{\povm(\cdot) \, \Psi_{t_0}}$, and $R_\psi=\zeta(F_\psi)$. Then there is a POVM $E(\cdot)$ on $(V,\salg_V)$ acting on $\Hilbert_\mathrm{sys}$ so that the distribution of $R_\psi$ is $\scp{\psi}{E(\cdot) \, \psi}$.
\end{thm}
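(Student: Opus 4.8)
The plan is to build $E(\cdot)$ in two moves: push $\povm(\cdot)$ forward through $\zeta$, and then ``compress'' the resulting POVM from $\Hilbert$ down to $\Hilbert_\mathrm{sys}$ by taking the partial inner product with $\phi$ in the environment factor. First I would set $\povm^\zeta(B) := \povm\bigl(\zeta^{-1}(B)\bigr)$ for $B\in\salg_V$; since $\zeta$ is measurable this is well defined, and $\povm^\zeta(\cdot)$ is a POVM on $(V,\salg_V)$ acting on $\Hilbert$: property (i) holds because $\zeta^{-1}(V)=\Omega$ and $\povm(\Omega)=I$, positivity is inherited, and $\zeta^{-1}$ commutes with countable unions and preserves disjointness, which gives weak $\sigma$-additivity.

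Next I would define, for each $B\in\salg_V$, the quadratic form $\psi\mapsto Q_B(\psi):=\scp{\psi\otimes\phi}{\povm^\zeta(B)\,(\psi\otimes\phi)}$ on $\Hilbert_\mathrm{sys}$. Because $0\le\povm^\zeta(B)\le I$ and $\|\phi\|=1$, this form is nonnegative and bounded by $\|\psi\|^2$, and (by polarization, as in the proof of Lemma~\ref{lemma:weakint}) it comes from a bounded Hermitian sesquilinear form; by the Riesz lemma there is then a unique $E(B)\in\Bdd(\Hilbert_\mathrm{sys})$ with $\scp{\psi}{E(B)\,\psi}=Q_B(\psi)$, and $0\le E(B)\le I_\mathrm{sys}$. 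The POVM axioms for $E(\cdot)$ follow readily: $E(B)\ge 0$ since $Q_B\ge 0$; $E(V)=I_\mathrm{sys}$ since $Q_V(\psi)=\scp{\psi\otimes\phi}{\psi\otimes\phi}=\|\psi\|^2$; and weak $\sigma$-additivity of $E$ is inherited from that of $\povm^\zeta$, because for pairwise disjoint $B_i$ one has $\sum_i\scp{\psi}{E(B_i)\,\psi}=\sum_i Q_{B_i}(\psi)=Q_{\cup_i B_i}(\psi)=\scp{\psi}{E(\cup_i B_i)\,\psi}$.

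Finally, for unit $\psi\in\Hilbert_\mathrm{sys}$ and $B\in\salg_V$ I would compute
\[
\PPP(R_\psi\in B)=\PPP\bigl(F_\psi\in\zeta^{-1}(B)\bigr)=\scp{\Psi_{t_0}}{\povm\bigl(\zeta^{-1}(B)\bigr)\,\Psi_{t_0}}=\scp{\psi\otimes\phi}{\povm^\zeta(B)\,(\psi\otimes\phi)}=\scp{\psi}{E(B)\,\psi},
\]
which is exactly the asserted identity. There is no real obstacle here; the only point that deserves attention is that $E(\cdot)$ must be constructed using \emph{only} the action of $\povm^\zeta$ on product vectors $\psi\otimes\phi$ — the distribution of $F_\psi$ is prescribed only for such initial states — which is precisely what the compression in the second step achieves, and that for infinite-dimensional $\Hilbert_\mathrm{env}$ one should obtain $E(B)$ via the Riesz lemma applied to $Q_B$ rather than by formally writing a ``partial trace'' type expression.
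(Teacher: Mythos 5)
Your proposal is correct and follows essentially the same route as the paper: push $\povm(\cdot)$ forward through $\zeta^{-1}$ and compress to $\Hilbert_\mathrm{sys}$ using the fixed environment vector $\phi$. The only difference is cosmetic: the paper defines $E(A)$ explicitly as $\psi\mapsto L_\phi\,\povm\bigl(\zeta^{-1}(A)\bigr)(\psi\otimes\phi)$ with $L_\phi$ the partial scalar product (the adjoint of $\psi\mapsto\psi\otimes\phi$), whereas you obtain the same operator via the bounded quadratic form $Q_B$ and the Riesz lemma.
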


\begin{proof}
For $A\subseteq V$ with $A\in\salg_V$,
\be
\PPP(R\in A) = \PPP\bigl(F\in \zeta^{-1}(A)\bigr) = 
\scp{\Psi_{t_0}}{\povm\bigl(\zeta^{-1}(A)\bigr) \,\Psi_{t_0}} =
\ee
\be
= \scp{\psi\otimes \phi}{\povm\bigl(\zeta^{-1}(A)\bigr) \,\psi\otimes\phi} = 
\scp{\psi}{E(A)\,\psi}_\mathrm{sys}\,,
\ee
where $\scp{\cdot}{\cdot}_\mathrm{sys}$ denotes the scalar product in $\Hilbert_\mathrm{sys}$, and $E(A):\Hilbert_\mathrm{sys}\to\Hilbert_\mathrm{sys}$ is defined by first mapping $\psi \mapsto \povm\bigl(\zeta^{-1}(A)\bigr) \, \psi\otimes \phi$ and then taking the partial scalar product with $\phi$. The partial scalar product with $\phi$ is the adjoint of $\psi \mapsto \psi\otimes \phi$, indeed the unique bounded linear mapping $L_\phi:\Hilbert_\mathrm{sys}\otimes \Hilbert_\mathrm{env} \to \Hilbert_\mathrm{sys}$ such that
\be\label{partialscp}
L_\phi(\psi\otimes \chi) = \scp{\phi}{\chi}_\mathrm{env} \,\psi\,.
\ee
It has $\|L_\phi\| = \|\phi\|$ and satisfies
\be\label{partialscpscp}
\scp{\psi}{L_\phi\,\Psi}_\mathrm{sys} = \scp{\psi\otimes\phi}{\Psi} \,.
\ee

We check that $E(\cdot)$ is a POVM: For $A=V$ (the entire space), $\zeta^{-1}(V)=\Omega$ and $\povm\bigl(\zeta^{-1}(V)\bigr) = I$, and $E(V)=I$ by \eqref{partialscp}. For every $A$, $E(A)$ is clearly well defined and bounded, and positive by \eqref{partialscpscp}. The weak $\sigma$-additivity follows from that of $\povm(\cdot)$.
\end{proof}

[There does exist, though, another argument yielding \eqref{mainthmpovm}, due to D\"urr et al.~\cite{DGZ04}. It constitutes a proof of \eqref{mainthmpovm} from \emph{Bohmian mechanics}, another proposal for the precise definition of quantum mechanics; but on the basis of ordinary quantum mechanics it remains incomplete. Here is an outline of the argument: Suppose that the experiment begins at time $t_0$ and ends at $t_1$; that, as before, $\Hilbert = \Hilbert_\mathrm{sys}\otimes \Hilbert_\mathrm{env}$ and $\Psi_{t_0} = \psi \otimes \phi$; that the time evolution of the wave function is given by a unitary operator $U_{t_0}^{t_1}$, so that $\Psi_{t_1} = U^{t_1}_{t_0} \, \Psi_{t_0}$. Now assume Born's rule, according to which the probability distribution of the configuration $Q$ at time $t_1$ is $\scp{\Psi_{t_1}}{P(\cdot)\,\Psi_{t_1}}$ for a suitable PVM $P(\cdot)$ on configuration space $\Q$ acting on $\Hilbert$, the ``configuration PVM.'' Finally, assume that $R$ is a function of $Q$, $R=\zeta(Q)$. (Here is where the argument works in Bohmian mechanics but not really in ordinary quantum mechanics, as one assumes that the configuration is part of the primitive ontology.) Then
\be
\PPP(R\in A) = \scp{\psi\otimes\phi}{U_{t_0}^{t_1*} \,P\bigl(\zeta^{-1}(A) \bigr) \, U_{t_0}^{t_1} \, \psi\otimes\phi}=\scp{\psi}{E(A)\,\psi}\,,
\ee
and $E(\cdot)$ is a POVM.]

\bigskip

To sum up, the value of a precise definition of a physical theory is much the same as the value of a precise definition of a mathematical concept: It allows us to provide \emph{proofs} for statements that we are interested in. Without the precise definition, many of these statements remain mere \emph{guesses or intuitions}. And often, the clarity afforded by this precision helps us make new discoveries.

\bigskip
\bigskip

\noindent\textit{Acknowledgments.} I thank Valia Allori (Rutgers University), Angelo Bassi (LMU M\"un\-chen), Fay Dowker (Imperial College London), Detlef D\"urr (LMU M\"unchen), GianCarlo Ghirardi (ICTP Trieste), Sheldon Goldstein (Rutgers University), Frank Loose (T\"ubingen), Tim Maudlin (Rutgers University), Rainer Nagel (T\"ubingen), Travis Norsen (Marlboro College), Philip Pearle (Hamilton College), Peter Pickl (Wien), Reiner Sch\"atzle (T\"ubingen), Luca Tenuta (T\"ubingen), Stefan Teufel (T\"ubingen), Jakob Wachsmuth (T\"ubingen), and Nino Zangh\`\i\ (Genova) for helpful discussions at various times on various topics related to this work.

\end{document}